\newtheorem{theorem}{Theorem}[subsection]
\newtheorem{definition}[theorem]{Definition}
\newtheorem{definition-lemma}[theorem]{Definition/Lemma}
\newtheorem{definition-explanation}[theorem]{Definition/Explanation}
\newtheorem{explanation-definition}[theorem]{Explanation/Definition}
\newtheorem{definition-fact}[theorem]{Definition/Fact}
\newtheorem{definition-notation}[theorem]{Definition/Notation}
\newtheorem{definition-conjecture}[theorem]{Definition/Conjecture}
\newtheorem{lemma}[theorem]{Lemma}
\newtheorem{lemma-definition}[theorem]{Lemma/Definition}
\newtheorem{proposition}[theorem]{Proposition}
\newtheorem{remark}[theorem]{\it Remark}
\newtheorem{remark-notation}[theorem]{\it Remark/Notation}
\newtheorem{application-lemma}[theorem]{Application/Lemma}
\newtheorem{ansatz}[theorem]{Ansatz}
\newtheorem{ansatz-definition}[theorem]{Ansatz/Definition}
\newtheorem{example}[theorem]{Example}
\newtheorem{example-definition}[theorem]{Example/Definition}
\newtheorem{definition-prototype}[theorem]{Definition-Prototype}
\numberwithin{equation}{subsection}
\newtheorem{stheorem}{Theorem}[section]
\newtheorem{sdefinition-lemma}[stheorem]{Definition/Lemma}
\newtheorem{sdefinition-explanation}[stheorem]{Definition/Explanation}
\newtheorem{sexplanation-definition}[stheorem]{Explanation/Definition}
\newtheorem{sdefinition-fact}[stheorem]{Definition/Fact}
\newtheorem{sdefinition-notation}[stheorem]{Definition/Notation}
\newtheorem{sdefinition-conjecture}[stheorem]{Definition/Conjecture}
\newtheorem{slemma-definition}[stheorem]{Lemma/Definition}
\newtheorem{sremark}[stheorem]{\it Remark}
\newtheorem{sremark-notation}[stheorem]{\it Remark/Notation}
\newtheorem{sapplication-lemma}[stheorem]{Application/Lemma}
\newtheorem{sansatz-definition}[stheorem]{Ansatz/Definition}
\newtheorem{sexample-definition}[stheorem]{Example/Definition}
\newtheorem{sdefinition-prototype}[stheorem]{Definition-Prototype}
\newtheorem{sstheorem}{Theorem}[subsubsection]
\newtheorem{ssdefinition}[sstheorem]{Definition}
\newtheorem{ssdefinition-lemma}[sstheorem]{Definition/Lemma}
\newtheorem{ssdefinition-explanation}[sstheorem]{Definition/Explanation}
\newtheorem{ssexplanation-definition}[sstheorem]{Explanation/Definition}
\newtheorem{ssdefinition-fact}[sstheorem]{Definition/Fact}
\newtheorem{ssdefinition-notation}[sstheorem]{Definition/Notation}
\newtheorem{ssdefinition-conjecture}[sstheorem]{Definition/Conjecture}
\newtheorem{sslemma}[sstheorem]{Lemma}
\newtheorem{sslemma-definition}[sstheorem]{Lemma/Definition}
\newtheorem{ssremark}[sstheorem]{\it Remark}
\newtheorem{ssremark-notation}[sstheorem]{\it Remark/Notation}
\newtheorem{ssapplication-lemma}[sstheorem]{Application/Lemma}
\newtheorem{ssansatz}[sstheorem]{Ansatz}
\newtheorem{ssansatz-definition}[theorem]{Ansatz/Definition}
\newtheorem{ssexample}[sstheorem]{Example}
\newtheorem{ssexample-definition}[sstheorem]{Example/Definition}
\newtheorem{ssdefinition-prototype}[sstheorem]{Definition-Prototype}
\newcommand{\Ad}{\mbox{\it Ad}}
\newcommand{\Aut}{\mbox{\it Aut}\,}
\newcommand{\BI}{\mbox{\it\scriptsize BI}\,}
\newcommand{\BT}{B\hspace{-.5ex}T\,}
\newcommand{\ComboMinor}{\mbox{\it ComboMinor}}
\newcommand{\Comm}{\mbox{\it Comm}\,}
\newcommand{\CSWZ}{\mbox{\scriptsize\it CS/WZ}\,}
 \newcommand{\tinyCSWZ}{\mbox{\tiny\it CS/WZ}\,}
\newcommand{\DBI}{\mbox{\it\scriptsize DBI}\,}
 \newcommand{\tinyDBI}{\mbox{\it\tiny DBI}\,}
\newcommand{\Der}{\mbox{\it Der}\,}
\newcommand{\Det}{\mbox{\it Det}\,}
\newcommand{\EM}{\mbox{\it\scriptsize EM}\,}
\newcommand{\End}{\mbox{\it End}\,}
\newcommand{\GL}{\mbox{\it GL}}
\newcommand{\Id}{\mbox{\it Id}\,}
\newcommand{\Imaginary}{\mbox{\it Im}\,}
\newcommand{\Image}{\mbox{\it Im}\,}
\newcommand{\Innsheaf}{\mbox{\it ${\cal I}\!$nn}\,}
\newcommand{\Log}{\mbox{\it Log}\,} 
\newcommand{\Map}{\mbox{\it Map}\,}
\newcommand{\Minor}{\mbox{\it Minor}}
\newcommand{\tinyNG}{\mbox{\it\tiny NG}\,}
\newcommand{\NL}{N\hspace{-.84ex}L}
\newcommand{\tinyPolyakov}{\mbox{\it\tiny Polyakov}\,}
\newcommand{\Ptn}{\mbox{\it Ptn}\,}
\newcommand{\Real}{\mbox{\it Re}\,}
\newcommand{\SAd}{\mbox{\it SAd}\,}
\newcommand{\Space}{\mbox{\it Space}\,}
\newcommand{\Spec}{\mbox{\it Spec}\,}
\newcommand{\STr}{\mbox{\it STr}\,}
\newcommand{\Supp}{\mbox{\it Supp}\,}
 \newcommand{\scriptsizeSupp}{\mbox{\scriptsize\it Supp}\,}
\newcommand{\Sym}{\mbox{\it Sym}}
 \newcommand{\scriptsizeSym}{\mbox{\scriptsize\it Sym}}
 \newcommand{\tinySym}{\mbox{\tiny\it Sym}}
\newcommand{\SymDet}{\mbox{\it SymDet}\,}
\newcommand{\Tr}{\mbox{\it Tr}\,}
\newcommand{\determinant}{\mbox{\it det}\,}
\newcommand{\dimm}{\mbox{\it dim}\,}
\newcommand{\pr}{\mbox{\it pr}}
\newcommand{\redscriptsize}{\mbox{\scriptsize\rm red}\,}
\newcommand{\transpose}{\top}
\newcommand{\boldA}{\mbox{\boldmath $A$}}
\newcommand{\boldM}{\mbox{\boldmath $M$}}
\newcommand{\boldd}{\mbox{\boldmath $d$}}
  \newcommand{\scriptsizeboldd}{\mbox{\scriptsize\boldmath $d$}}
  \newcommand{\tinyboldd}{\mbox{\tiny\boldmath $d$}}  
\newcommand{\boldt}{\mbox{\boldmath $t$}}
  \newcommand{\scriptsizeboldt}{\mbox{\scriptsize\boldmath $t$}}
  \newcommand{\tinyboldt}{\mbox{\tiny\boldmath $t$}}
\newcommand{\boldx}{\mbox{\boldmath $x$}}
  \newcommand{\scriptsizeboldx}{\mbox{\scriptsize\boldmath $x$}}
\newcommand{\boldy}{\mbox{\boldmath $y$}}
  \newcommand{\scriptsizeboldy}{\mbox{\scriptsize\boldmath $y$}}
\newcommand{\boldalpha}{\mbox{\boldmath $\alpha$}}
  \newcommand{\scriptsizeboldalpha}{\mbox{\scriptsize\boldmath $\alpha$}}
  \newcommand{\tinyboldalpha}{\mbox{\tiny\boldmath $\alpha$}}
\newcommand{\boldxi}{\mbox{\boldmath $\xi$}}
\newcommand{\longrightaarrow}{\longrightarrow\hspace{-3ex}\longrightarrow}
\newcommand{\odotwedge}{\stackrel{\mbox{\tiny $\odot$}}{\wedge}}
\newcommand{\partialboldalpha}{\partial^{\mbox{\boldmath \scriptsize $\alpha$}}}
\newcommand{\LARGEdot}{\mbox{\LARGE $\cdot$}}
  \newcommand{\Largedot}{\mbox{\Large $\cdot$}}
\newcommand{\partialdot}{\partial_{\mbox{\LARGE $\cdot$}}}
\newcommand{\tinybullet}{\raisebox{.2ex}{\tiny $\bullet$}}	
\begin{document}

\enlargethispage{24cm}

\begin{titlepage}

$ $

\vspace{-1.5cm} 

\noindent\hspace{-1cm}
\parbox{6cm}{\small April 2016}\
   \hspace{7cm}\
   \parbox[t]{6cm}{yymm.nnnnn [hep-th] \\
                D(13.1): Dirac-Born-Infeld  
				}

\vspace{2cm}

\centerline{\large\bf
   Dynamics of D-branes$\;$ I. The non-Abelian Dirac-Born-Infeld action,}
\vspace{1ex}
\centerline{\large\bf
  its first variation, and the equations of motion for D-branes}   
\vspace{1ex}   
\centerline{\large\bf 
 --- with remarks on the non-Abelian Chern-Simons/Wess-Zumino term}

\bigskip

\vspace{2.4em}

\centerline{\large
  Chien-Hao Liu   
            \hspace{1ex} and \hspace{1ex}
  Shing-Tung Yau
}

\vspace{2em}

\begin{quotation}
\centerline{\bf Abstract}

\vspace{0.3cm}

\baselineskip 12pt  
{\small
 In earlier works, 
  D(1) (arXiv:0709.1515 [math.AG]),
  D(11.1) (arXiv:1406.0929 [math.DG]), 
  D(11.2) (arXiv:1412.0771 [hep-th]), and 
  D(11.3.1) (arXiv:1508.02347 [math.DG]), 
 we have explained,  and shown by feature stringy examples, 
 why a D-brane in string theory, when treated as a fundamental dynamical object,
 can be described by 
 a map $\varphi$ from an Azumaya/matrix manifold $X^{\!A\!z}$ (cf.\ the D-brane world-volume) 
   with a fundamental module with a connection $(E,\nabla)$ (cf.\ the Chan-Paton bundle)
   to the target space-time $Y$.
 In this sequel,
 we construct a non-Abelian Dirac-Born-Infeld action functional
  $S_{\DBI}^{(\Phi, g, B)}(\varphi,\nabla)$ for such pairs $(\varphi,\nabla)$
  when the target space-time $Y$ is equipped with a background
   (dilaton, metric, $B$)-field $(\Phi, g,B)$ from closed strings.
 We next develop a technical tool needed to study variations of this action
   and apply it to derive the first variation $\delta S_{\DBI}^{(\Phi,g,B)}/\delta(\varphi,\nabla)$
   of $S_{\DBI}^{(\Phi,g,B)}$ with respect to $(\varphi,\nabla)$.
 The equations of motion that govern the dynamics of D-branes then follow.
 A complete action for a D-brane world-volume must include
  also the Chern-Simons/Wess-Zumino term $S_{\CSWZ}^{(C)}(\varphi,\nabla)$
  that governs how the D-brane world-volume couples with the Ramond-Ramond fields $C$ on $Y$.
 In this work, a version $S^{(C,B)}_{\CSWZ}(\varphi,\nabla)$
  of non-Abelian Chern-Simons/Wess-Zumino action functional for $(\varphi,\nabla)$
  that follows the same guide with which we construct $S^{(\Phi,g,B)}_{\DBI}(\varphi,\nabla)$
  is constructed for lower-dimensional D-branes
  (i.e.\ D(-1)-, D0-, D1-, D2-branes).
 Its first variation $\delta S^{(C,B)}_{\CSWZ}(\varphi,\nabla)/\delta(\varphi,\nabla)$
  is derived and
  its contribution to the equations of motion for $(\varphi, \nabla)$ follows.
 For D-branes of dimension $\ge 3$, an anomaly issue needs to be understood in the current context.  
 The current notes lay down a foundation toward the dynamics of D-branes along the line of this D-project.
 } 
\end{quotation}

\bigskip

\baselineskip 12pt
{\footnotesize
\noindent
{\bf Key words:} \parbox[t]{14cm}{D-brane,
      Azumaya/matrix manifold, Dirac-Born-Infeld action, Chern-Simons/Wess-Zumino term,
	  ring-homomorphism, higher-order derivation, first variation, Euler-Lagrange equation of motion
 }} 

 \bigskip

\noindent {\small MSC number 2010: 81T30, 35J20; 16S50, 14A22, 35R01.
} 

\bigskip

\baselineskip 10pt
{\scriptsize
\noindent{\bf Acknowledgements.}
We thank
 Andrew Strominger, Cumrun Vafa 
   for lectures/discussions that influence our understanding of strings, branes, and gravity.
C.-H.L.\ thanks in addition
 Joseph Polchinski 
   for his works that motivated the whole project and for consultation on a sign in his book; 
 Harald Dorn, Pei-Ming Ho, Robert Myers, Li-Sheng Tseng 
   for communications/discussions during the brewing years on Dirac-Born-Infeld action $S_{\tinyDBI}$;
 several authors (cf.\ Sec.\ 1) whose works influenced his understanding of $S_{\tinyDBI}$;  
 Tristan Collins
   for his topic course, spring 2016, discussions on analytical techniques beyond the current work, and literature guide;   
 Cumrun Vafa
   for consultation on other themes in string theory; 
 Yng-Ing Lee, Siu-Cheong Lau
   for discussions on the symplectic feature of D-branes in the past few years; 
 Daniel Cristofaro-Gardiner, Andrew Strominger 
   for other topic courses, spring 2016;
 Yu-Wei Fan, Michael Rios, Peter Smillie, Hiro Lee Tanaka, Chenglong Yu 
  for discussions on other related themes/potential applications;  
 musicians and the production team of `Les Mis\'{e}rables' 10th anniversary concert
   for the music that accompanies the typing of the notes; 
 Ann Willman for hospitality, late summer 2015;   
 Ling-Miao Chou
   for discussions on electrodynamics, comments that improve the illustrations, and moral support.
The project is supported by NSF grants DMS-9803347 and DMS-0074329.
} 

\end{titlepage}

\newpage

\enlargethispage{24cm}
\begin{titlepage}

%

\centerline{\small\it
 Chien-Hao Liu dedicates the current notes to {\it Ling-Miao Chou}}
\centerline{\small\it for her tremendous love that makes this work/project possible,}
\centerline{\small\it
 and to his parents-in-law Mr.\ \& Mrs.\ Shih-Chuan Chow (1919--2011) and Min-Chih Liu,}
\centerline{\small\it
 who handed over him their most precious gem$^{\ast}$.}

\vspace{3em}

\baselineskip 11pt

\hspace{-1.5cm}
{\footnotesize
 \noindent
 $^{\ast}$(From C.H.L.)\hspace{1em} {\sc On the Road}\\
 
 \hspace{-1.5cm}
 \parbox{17.6cm}
{\scriptsize
 \noindent
 When I first met my then father-in-law-to-be, he had already retired for a decade.
 Like many Chinese around his generation,
  he grew up in a chaotic China and
       went to schools between civil wars and the invasion of Japan.
 After his graduation from Wuhan University
    through a government aid, majoring in electrical engineering,
  he served three years at the power plant at Yibin, Sichuan.
 After World War II,  
 he got scholarships first from the then Central Government of China and later from the United Aid to China Fund, London,
  to work study in Scotland, England.
 That was a time when a trip between Sichuan, China, and Southampton, England, took about a month, including a stop at India,
  mixing airplanes, railroads, and ships.
 Seven years afterwards, when he was about to go home,
  he faced one of the most difficult decisions he had to make: Mainland China or Taiwan, 
  since China had been divided into two political entities after the four years civil war following the surrender of Japan.  
 Whatever reason behind his decision,
  he chose the latter and
  became a member of the team that constructed the modern power plants and power system at Taiwan.
 Only decades later in late 1980s 
    when the Taiwan government and the Mainland China government 
      started to build up a reconciling atmosphere,
   he got a chance to go back to see his siblings again, though his parents had long passed away.
 That is a tragedy many in his generation underwent.
 Also like many in his generation who had seen and experienced in person enough sufferings 
   but  had the luck to get to finish high-level education,
  he took his service and devotion to the common good of his country as a responsibility without leaving a name behind.
  This is a very brief story of Mr.\ Chow,
   who started his own family quite late and, for that reason, 
     held his three children, with Ling-Miao the youngest and only daughter, very dearly.
 Like Prof.\ Raoul Bott,    
  he maintained a curious mind even to his senior years 
    and did a few calculus exercises(!!!) daily to keep his mind active and alert.
   
 \hspace{1em}
 Fast forward to October 2015, completely unforeseeable while in the early summer that year,
 I was once again on the road, repeating the same route from Austin, Texas, to Boston, Massachusetts,
  I'd followed more than a decade ago with Ling-Miao.
 This is the fourth time  I made such a trip on the road across the United States
   and, as in the case of my father-in-law,
   each such trip marks a huge transition in my study and life.
 The first time was in early 1990s from Princeton, New Jersey, to Berkeley, California.
 I felt like heading toward a new world.
  That was the summer a year after I had met Ling-Miao for the first time unexpectedly in a trip to Ohio
     without knowing that she would be influencing my life forever.
 The second time was in mid-1990s from Coral Gable, Florida, to Austin, Texas.
 That marked the official end of my student years.
 The third time was at the turn of the millennium from Austin, Texas,
  to Cambridge, Massachusetts.
 That is the only road-trip I ever made with Ling-Miao.
 The fall before that trip I was once again in the job market.
  I got a surprise contact near the Thanksgiving holiday from Prof.\ Brian Greene at Columbia
    concerning a position in his group based on an earlier work I had done
 (`{\it On the isolated singularity of a $7$-space obtained by rolling Calabi-Yau threefolds
               through extremal transitions}', arXiv:hep-th/9801175v1;
			  revised in v2 with Volker Braun in Candelas' group then at U.T.\ Austin)
			and the recommendation from Prof.'s Orlando Alvarez, Jacques Distler, and Daniel Freed.
 One or two weeks later disappointing follow-up news came from Prof.\ Greene:
   That position he had kindly intended to offer requires a U.S.\ Citizenship, which I hadn't acquired at that time.
 Ordinarily, an apology as a formality is more than enough to end this contact
   since it is not his fault at all that it didn't work out.
 Yet, this is what Prof.$\,$Greene showed his nobleness and generosity:
 Rather than just ending with that, he recommended me further to Prof.$\,$Yau at Harvard.
 Thus, thanks to this unexpected twist of events and Prof.$\,$Yau's acceptance,
  I came to Harvard in pure luck as an even bigger surprise.
 
 \hspace{1em}
 The two leading institutes at the Boston area together provide a unique soil for a curious and absorbing mind:
  the frontier research and related basic and/or topic courses
    in algebraic geometry, differential geometry, and symplectic geometry on the mathematics side
   and in quantum field theory and string theory --- particularly its various geometry-related aspects ---
          on the physics side.
 The soil is further enriched through the catalyzing effect of the mathematics-physics intertwining atmosphere
    in Yau's group and regular group meetings,
  though I had to admit that it's not very easy to catch up and keep my head above water
     in such an intense environment at the beginning
   and for a while I almost got drowned.
 It is only after the birth of this project at the end of 2006
  that all this unusual, purely accidental luck given to me acquired its meaning.
 In retrospect, a project like this is very unlikely
  if not in such an intense and encompassing geometry-physics soil like the Boston area.
 
 \hspace{1em}
 This work adds another special mark to the timeline of this project.
 Though clearly not in its final form (cf.\ Remark 3.2.4),
  for the first time since the beginning of this D-project
  the dynamics of D-branes is addressed along the line of the project in a most natural and geometric way.
 This brings the study of D-branes to the same starting point as that for the fundamental string:
	namely,

 \bigskip
 
 \centerline{
 {\it
 \begin
 {tabular}{|l||l|}\hline
   \hspace{4.2em}{\bf string theory}\rule{0ex}{1.2em}
       & \hspace{10em}{\bf D-brane theory}\\[.6ex] \hline\hline
     $\begin{array}{l}
	    \mbox{string world-sheet}: \\
 		 \hspace{3em}
		 \mbox{$2$-manifold $\,\Sigma$}
	   \end{array}$
       &  $\begin{array}{l}
	           \mbox{D-brane world-volume}:\rule{0ex}{1.2em}\\
			      \hspace{3em}
	              \mbox{Azumaya/matrix manifold}\\ \hspace{4em} \mbox{with a fundamental module with a connection}\\
				  \hspace{4em}(X^{\!A\!z},E,\nabla)
			   \end{array}$	  				   \\[3.8ex] \hline
	 $\begin{array}{l}
	    \mbox{string moving in space-time $Y$}:\rule{0ex}{1.2em} \\
 		 \hspace{3em}
		 \mbox{differentiable map $f:\Sigma \rightarrow Y$}
	   \end{array}$
       &  $\begin{array}{l}
	           \mbox{D-brane moving in space-time $Y$}:\rule{0ex}{1.2em}\\
			      \hspace{3em}
	              \mbox{differentiable map $\varphi:(X^{\!A\!z},E,\nabla)\rightarrow Y$}
			   \end{array}$	  				   \\[2.4ex] \hline		
       $\begin{array}{c}\mbox{Nambu-Goto action $\:S_{\mbox{\tiny Nambu-Goto}}\:$
	                                          for $f$'s}\end{array}$\rule{0ex}{1.2em}
       &  $\begin{array}{c}\rule{0ex}{1.2em}
	           \mbox{Dirac-Born-Infeld action $\:S_{\mbox{\tiny Dirac-Born-Infeld}}\:$
			                                  for $(\varphi,\nabla)$'s}\end{array}$
                                                                     	   \\[1ex] \hline
 \end{tabular}
  }
  }

 \bigskip

 \noindent
 While there are a long list of people I need to say thanks to
  (in particular, Shiraz Minwalla and Mihnea Popa, cf.\ D(6) Dedication),
 I would not be able to survive a decade's brewing to reach D(1), 2007, and then
   another seven years' brewing to reach D(11.1), 2014, without Ling-Miao.
 In a broad sense, this piece --- indeed the whole D-project --- is also her creation through her love.
 I thus dedicate this special mark of the project to her.
 Many challenging mathematical and physical issues of D-branes along the line remain ahead,
  some of them look beyond reach at the moment of writing,
  and I am still on the road for this journey that fills with unknowns.
 
 } } 

\end{titlepage}


\newpage
$ $

\vspace{-3em}

\centerline{\sc
 Dynamics of D-branes, I: The Dirac-Born-Infeld Action
 } %

\vspace{2em}


\begin{flushleft}
{\Large\bf 0. Introduction and outline}
\end{flushleft}
In earlier works,
  [L-Y1] (arXiv:0709.1515 [math.AG], D(1)),
  [L-Y4] (arXiv:1406.0929 [math.DG], D(11.1)),
  [L-Y5] (arXiv:1412.0771 [hep-th], D(11.2)), and
  [L-Y6] (arXiv:1508.02347 [math.DG], D(11.3.1)),
 we have explained,  and shown by feature stringy examples,
 why a D-brane in string theory, when treated as a fundamental dynamical object,
 can be described by
 a map $\varphi$ from an Azumaya/matrix manifold $X^{\!A\!z}$, served as the D-brane world-volume, 
   with a fundamental module with a connection $(E,\nabla)$, served as the Chan-Paton bundle,
   to the target space-time $Y$.
 In this sequel,
 we construct a non-Abelian Dirac-Born-Infeld action functional
  $S_{\DBI}^{(\Phi, g, B)}(\varphi,\nabla)$ for such pairs $(\varphi,\nabla)$
  when the target space-time $Y$ is equipped with a background
   (dilaton, metric, $B$)-field $(\Phi, g,B)$ from closed strings; (cf.\ Sec.\ 2 \& Sec.\ 3).
We next develop a technical tool needed to study variations of this action (cf.\ Sec.\ 4)
   and apply it to derive the first variation $\delta S_{\DBI}^{(\Phi,g,B)}/\delta(\varphi,\nabla)$
   of $S_{\DBI}^{(\Phi,g,B)}$ with respect to $(\varphi,\nabla)$ (cf.\ Sec.\ 5) .
 The equations of motion that govern the dynamics of D-branes then follow.

 A complete action for a D-brane world-volume must include
  also the Chern-Simons/Wess-Zumino term $S_{\CSWZ}^{(C)}(\varphi,\nabla)$
  that governs how the D-brane world-volume couples with the Ramond-Ramond fields $C$ on $Y$.
 In the current notes, a version $S^{(C,B)}_{\CSWZ}(\varphi,\nabla)$
  of non-Abelian Chern-Simons/Wess-Zumino action functional for $(\varphi,\nabla)$
  that follows the same guide with which we construct $S^{(\Phi,g,B)}_{\DBI}(\varphi,\nabla)$
  is constructed for lower-dimensional D-branes
  (i.e.\ D(-1)-, D0-, D1-, D2-branes).
 Its first variation $\delta S^{(C,B)}_{\CSWZ}(\varphi,\nabla)/\delta(\varphi,\nabla)$
  is derived and
  its contribution to the equations of motion for $(\varphi, \nabla)$ follows; (cf.\ Sec.\ 6).
 For D-branes of dimension $\ge 3$, an anomaly issue needs to be understood in the current context.  
The current notes lay down a foundation toward the dynamics of D-branes along the line of this D-project.

 Some highlights of the history of
   how the Born-Infeld action and the Dirac-Born-Infeld action
     (cf.\ time-ordered:
	           [Mie] (1912) of Gustav Mie,
			   [Bo] (1934) of Max Born,
			   [B-I] (1934) of Born and Leopold Infeld,
			   [Di] (1962) of Paul Dirac)
   arise from open string theory
 and a list of issues one needs to resolve to convert such an action to that for coincident D-branes
 are given in Sec.~1.
They serve as a guide for our discussion.

\bigskip
\bigskip

\noindent
{\bf Convention.}
 References for standard notations, terminology, operations and facts are\\
  (1) string theory: [B-B-S], [G-S-W], [Po3];  \hspace{2em}
  (2) D-branes: [Joh],  [Po3];  \\
  (3) algebraic geometry: [Ha];  \hspace{7.8em}
  (4) $C^{\infty}$-algebraic geometry: [Joy].
 \begin{itemize}
  \item[$\cdot$]
   For clarity, the {\it real line} as a real $1$-dimensional manifold is denoted by ${\Bbb R}^1$,
    while the {\it field of real numbers} is denoted by ${\Bbb R}$.
   Similarly, the {\it complex line} as a complex $1$-dimensional manifold is denoted by ${\Bbb C}^1$,
    while the {\it field of complex numbers} is denoted by ${\Bbb C}$.
	
  \item[$\cdot$]	
  The inclusion `${\Bbb R}\subset{\Bbb C}$' is referred to the {\it field extension
   of ${\Bbb R}$ to ${\Bbb C}$} by adding $\sqrt{-1}$, unless otherwise noted.

 \item[$\cdot$]	
  The {\it real $n$-dimensional vector spaces} ${\Bbb R}^{\oplus n}$
      vs.\ the {\it real $n$-manifold} $\,{\Bbb R}^n$; \\
  similarly, the {\it complex $r$-dimensional vector space ${\Bbb C}^{\oplus r}$}
     vs.\ the {\it complex $r$-fold} $\,{\Bbb C}^r$.
   
 \item[$\cdot$]
  All manifolds are paracompact, Hausdorff, and admitting a (locally finite) partition of unity.
  We adopt the {\it index convention for tensors} from differential geometry.
   In particular, the tuple coordinate functions on an $n$-manifold is denoted by, for example,
   $(y^1,\,\cdots\,y^n)$.
  However, no up-low index summation convention is used.
   
    
  \item[$\cdot$]
  For the current notes, `{\it differentiable}', `{\it smooth}', and $C^{\infty}$ are taken as synonyms.
  
  \item[$\cdot$]
   For a smooth manifold $X$, $C^{\infty}(X):=$ the ring of smooth functions on $X$.\\
   For a vector bundle $E$ over $X$, $C^{\infty}(E):=$ the $C^{\infty}(X)$-module of smooth sections of $E$.

 \item[$\cdot$]
  {\it wedge product} convention:
   For $\alpha\in C^{\infty}(\bigwedge^pX)$, $\beta \in C^{\infty}(\bigwedge^qM)$,
	 \begin{eqnarray*}
	  \lefteqn{(\alpha\wedge\beta)(v_1,\,\cdots\,, v_{p+q})}\\[1.2ex]
	  && :=\;
	    \sum_{\mbox{\tiny
		                 $\begin{array}{c}
		                    \sigma\in\tinySym_{p+q}; \\
		                    \sigma(1)<\sigma(2)<\,\cdots\,<\sigma(p), \\
						    \sigma(p+1)<\,\sigma(p+2)<\,\cdots\,<\sigma(p+q)
						  \end{array}$}
						 }
			(-1)^{\sigma}\,
			       \alpha(v_{\sigma(1)},\,\cdots\,, v_{\sigma(p)})\,
			       \beta(v_{\sigma(p+1),\,\cdots\,, v_{\sigma(p+q)}})\,.
     \end{eqnarray*}				
   For example,
    $dx^1\wedge dx^2 \wedge \,\cdots\,\wedge dx^l
	  := \sum_{\sigma\in\scriptsizeSym_l}
	         (-1)^{\sigma}
			  dx^{\sigma(1)}\otimes dx^{\sigma(2)}\otimes\,\cdots\,
			   \otimes dx^{\sigma(l)}$.
   
  \item[$\cdot$]
   {\it a}{\sl lgebra} $A_{\varphi}$, sheaf  of algebras ${\cal A}_{\varphi}\,$
     vs.\ {\it connection $1$-form} $A_{\mu}$.
 
  \item[$\cdot$]
   {\it d}{\sl egree} $d$ vs.\ exterior {\it d}{\sl ifferential} $d$;\hspace{1ex}
   {\it d}{\sl iagonal} matrix $D$ vs.\ covariant {\it d}{\sl erivative} $D$.
  
  \item[$\cdot$]
   {\it m}{\sl atrix} $m$ vs.\ manifold of dimension $m$.
   
  \item[$\cdot$]
   the Regge slope $\alpha^{\prime}$ vs.\ dummy labelling index $\alpha$.

  \item[$\cdot$]
   {\it s}{\sl ection} $s$ of a fiber bundle vs.\ dummy labelling index $s$ vs.\ coordinate $s$.

  \item[$\cdot$]
   {\it Chan-Paton bundle} $E$  vs.\
    the combined $2$-tensor $g+B=:\sum_{i,j}E_{ij}dy^i\otimes dy^j$
	  from the metric tensor and the $B$-field.

  \item[$\cdot$]
   {\it r}{\sl ing} $R$
    vs.\ $k$-th {\it r}{\sl emainder} $R[k]$
    vs.\ {\it R}{\sl iemann} curvature tensor $R_{ijkl}$
    vs.\ index $(\,\cdot\,)^R$ for {\it r}{\sl ight} factor or component.
   
  \item[$\cdot$]
   $\Spec R $ ($:=\{\mbox{prime ideals of $R$}\}$)
         of a commutative Noetherian ring $R$  in algebraic geometry\\
   vs.\ $\Spec R$ of a $C^k$-ring $R$
  ($:=\Spec^{\Bbb R}R :=\{\mbox{$C^k$-ring homomorphisms $R\rightarrow {\Bbb R}$}\}$).

  \item[$\cdot$]
  {\it morphism} between schemes in algebraic geometry
    vs.\ {\it $C^k$-map} between $C^k$-manifolds or $C^k$-schemes
         	in differential topology and geometry or $C^k$-algebraic geometry.
   
  \item[$\cdot$]
   The `{\it support}' $\Supp({\cal F})$
    of a quasi-coherent sheaf ${\cal F}$ on a scheme $Y$ in algebraic geometry
     	or on a $C^k$-scheme in $C^k$-algebraic geometry
    means the {\it scheme-theoretical support} of ${\cal F}$
   unless otherwise noted;
    ${\cal I}_Z$ denotes the {\it ideal sheaf} of
    a (resp.\ $C^k$-)subscheme of $Z$ of a (resp.\ $C^k$-)scheme $Y$;
    $l({\cal F})$ denotes the {\it length} of a coherent sheaf ${\cal F}$ of dimension $0$.

  \item[$\cdot$]
   {\it coordinate-function index}, e.g.\ $(y^1,\,\cdots\,,\, y^n)$ for a real manifold
      vs.\  the {\it exponent of a power},
	  e.g.\  $a_0y^r+a_1y^{r-1}+\,\cdots\,+a_{r-1}y+a_r\in {\Bbb R}[y]$.
	     	
 

   
   
  \item[$\cdot$]
   The current Notes D(13.1) continues the study in
	  \begin{itemize}
	   \item[]  \hspace{-2em} [L-Y4]\hspace{1em}\parbox[t]{34em}{{\it
    	   D-branes and Azumaya/matrix noncommutative differential geometry,
        I: D-branes as fundamental objects in string theory  and differentiable maps
         from Azumaya/matrix manifolds with a fundamental module to real manifolds},
         arXiv:1406.0929 [math.DG]. (D(11.1))
		 }\\ 
		
	   \item[]  \hspace{-2em} [L-Y6]\hspace{1em}\parbox[t]{34em}{{\it
	      Further studies on the notion of differentiable maps from Azumaya/matrix manifolds,
		  I. The smooth case},
		 arXiv:1508.02347 [math.DG]. (D(11.3.1))	
		 }
	  \end{itemize}  	
   Notations and conventions follow these earlier works when applicable.
 \end{itemize}

\bigskip
\bigskip
   
\begin{flushleft}
{\bf Outline}
\end{flushleft}
\nopagebreak
{\small
\baselineskip 12pt  
\begin{itemize}
 \item[0.]
  Introduction.
   
 \item[1.]
 Coincident D-branes and issues on non-Abelian Dirac-Born-Infeld action
  \vspace{-.6ex}
  \begin{itemize}
   \item[\LARGE $\cdot$]
    Open-strings, background gauge fields, and the Born-Infeld action: The pre-D-brane era

   \item[\LARGE $\cdot$]
    D-branes, Dirac-Born-Infeld action, and its non-Abelian generalization
  
   \item[\LARGE $\cdot$]
    The construction of a non-Abelian Dirac-Born-Infeld action and its consequences
  \end{itemize}
  
 \item[2.]
 Differentiable maps from an Azumaya/matrix manifold with a fundamental module with\\ a connection
   \vspace{-.6ex}
   \begin{itemize}
    \item[2.1]
	  Differentiable maps from an Azumaya/matrix manifold with a fundamental module

    \item[2.2]
      Compatibility between the map $\varphi$ and the connection $\nabla$ from the open-string aspect	
	
	\item[2.3]
      Self-adjoint/Hermitian maps from an Azumaya/matrix manifold with a Hermitian\\ fundamental module
   \end{itemize}
   
 \item[3.]
 The Dirac-Born-Infeld action for differentiable maps from Azumaya/matrix manifolds
  \vspace{-.6ex}
  \begin{itemize}
	\item[3.1]
	 The resolution of issues toward defining the Dirac-Born-Infeld action
	  \begin{itemize}
	   \item[3.1.1]	
	   The pull-back of tensors from the target space via commutative surrogates
	
	   \item[3.1.2]
	    D-brane world-volume with constant induced-metric signature
	
	   \item[3.1.3]
	    From determinant {\it Det} to symmetrized determinant {\it SymDet}
		
	   \item[3.1.4]
        Square roots of sections of
		  $(\bigwedge^mT^{\ast}X)^{\otimes 2}\otimes_{\Bbb R}\End_{\Bbb C}(E)$
		
       \item[3.1.5]
	    The factor from the dilaton field $\Phi$ on the target space(-time)
	
	   \item[3.1.6]
        Reality of the trace
	  \end{itemize}
 	  \vspace{0ex}
	
    \item[3.2]	
     The Dirac-Born-Infeld action for admissible pairs $(\varphi, \nabla)$	
 %
  \end{itemize}
  
 \item[4.]
  Variations of $\varphi^{\sharp}$ in terms of variations of local generators
  \vspace{-.6ex}
  \begin{itemize}
	\item[4.1]
     $\varphi^{\sharp}_{\mbox{\boldmath\scriptsize $t$}}(f)$
	    in terms of
		$(\varphi^{\sharp}_{\mbox{\boldmath\scriptsize $t$}}(y^1),\,\cdots\,,\,
		      \varphi^{\sharp}_{\mbox{\boldmath\scriptsize $t$}}(y^1))$
        via Generalized Division Lemma

    \item[4.2]
     $\frac{\partial^{|\mbox{\boldmath\tiny $\alpha$}|}}
		           {\partial\mbox{\boldmath\scriptsize $t$}^{\mbox{\boldmath\tiny $\alpha$}}}
		              (\varphi^{\sharp}_{\mbox{\boldmath\scriptsize $t$}}(f) )$
		   in terms of
          $\left( \frac{\partial^{|\mbox{\boldmath\tiny $\alpha$}_1|}}
		           {\partial\mbox{\boldmath\scriptsize $t$}^{\mbox{\boldmath\tiny $\alpha$}_1}}
		              (\varphi^{\sharp}_{\mbox{\boldmath\scriptsize $t$}}(y^1) )\,,\;
					  \cdots\,,\;
         \frac{\partial^{|\mbox{\boldmath\tiny $\alpha$}_n|}}
		           {\partial\mbox{\boldmath\scriptsize $t$}^{\mbox{\boldmath\tiny $\alpha$}_n}}
		              (\varphi^{\sharp}_{\mbox{\boldmath\scriptsize $t$}}(y^n) )\right)$'s,
	   $\mbox{\boldmath $\alpha$}_1+\cdots+\mbox{\boldmath $\alpha$}_n
	       = \mbox{\boldmath $\alpha$}$		
	  \begin{itemize}
	   \item[4.2.1]	
	    Preparatory: Chain rule vs. Leibniz rule, and the increase of complexity
	
	   \item[4.2.2]
	    The case of first-order derivations
	
	   \item[4.2.3]
	   Generalization to derivations of any order	
	  \end{itemize}
 	  \vspace{0ex}		
  \end{itemize}
  
 \item[5.]
 The first variation of the Dirac-Born-Infeld action and the equations of motion for D-branes
  \vspace{-.6ex}
  \begin{itemize}   	
   \item[5.1]
    Remark on deformation problems in $C^{\infty}$-algebraic geometry
   
   \item[5.2]
    The first variation of the Dirac-Born-Infeld action
	
   \item[5.3]
    The equations of motion for D-branes
 %
  \end{itemize}
     
 \item[6.]
 Remarks on the Chern-Simons/Wess-Zumino term
  \vspace{-.6ex}
  \begin{itemize}	
   \item[6.1]
    Resolution of issues in the Chern-Simons/Wess-Zumino term
	
   \item[6.2]	
	The first variation and the contribution to the equations of motion	
	\begin{itemize}
	 \item[6.2.1]
      D$(-1)$-brane world-point $(m=0)$

     \item[6.2.2]
      D-particle world-line $(m=1)$

     \item[6.2.3]
	  D-string world-sheet $(m=2)$

     \item[6.2.4]
	  D-membrane world-volume $(m=3)$	
	\end{itemize}
  \end{itemize}	
  
 %
 %
 %
 
\end{itemize}
} 

\newpage

\section{Coincident D-branes and issues on non-Abelian Dirac-Born-\\ Infeld action}

\begin{flushleft}
{\bf Open-strings, background gauge fields, and the Born-Infeld action:\\ The pre-D-brane era}
\end{flushleft}
Consider an open string moving in  a space-time $Y$ with
 a background dilaton field $\Phi$,
 a background metric $g$, and a background $B$-field $B$, from the closed-string sector,
 and a $U(1)$ gauge field $A$, from the open-string sector.
Then, similar to the governing equations for $(\Phi,g,B)$
 from the conformal-anomaly-free conditions for the $2d$ field theory on the closed-string world-sheet,
the dynamics of $A$ is governed by the conformal-anomaly-free conditions on $2d$ field theory with boundary
 on the open-string world-sheet.
It turns out that, at least for an appropriate lowest order approximation,
 this system of differential equations can be derived from the variation of the Lagrangian density\footnote{Here,
                                                                                                  some mild natural, well-accepted updates 
																								     and change of notations
																								   are made to the quoted original works
																								   in order to fit in better the current notes.
                                                                                                  Our apology to the original authors.} 
$$
  {\cal S}^{(\Phi,g,B)}_{\BI}(\nabla)\;
    =\;   -\,T\,e^{-\Phi}\sqrt{-\,\Det(g+B+2\pi\alpha^{\prime}F_{\nabla})\,}\,.
$$
Here,
 $T$ is a physical constant,
 $F_{\nabla}$ is the curvature of $\nabla$, and
 the determinant $\Det$ is for the $2$-tensor $g+B+2\pi\alpha^{\prime}F_{\nabla}$ on $Y$.
(The fields $(\Phi,g,B)$ are governed by another Lagrangian density ${\cal S}(\Phi,g,B)$,
     we will completely omit as they are irrelevant to our discussion.)
Furthermore, through the coupling to the Chan-Paton index, the above (Abelian) Born-Infeld action
 is expected to be generalizable to a non-Abelian Born-Infeld action
 $$
  {\cal S}^{(\Phi,g,B)}_{\BI}(\nabla)\;
    =\;    -\,T\cdot \STr\!\left( e^{-\Phi}\sqrt{-\,\Det(g+B+2\pi\alpha^{\prime}F_{\nabla})\,}
	                  \right)\,.
 $$
Here, the $\STr$ is the `{\it symmetrized trace}',
 with
 \begin{itemize}
  \item[\LARGE $\cdot$]
   the {\it trace-part} $\Tr$ in $\STr$
    serving as a perturbative method to understand $\sqrt{-\Det(\,\cdots\,)}$
    in terms of an expansion at $g+B$ through powers of $\alpha^{\prime}$
    via the Taylor series of the analytic formula
    $$
     (\Det(1+x))^{\beta}\;
       =\; \exp\left(\beta   \Tr(\Log(1+x))\right)
    $$
   after some natural manipulation of the expression of ${\cal S}_{\BI}^{(\Phi,g,B)}(\nabla)$;
   this takes care of the meaning of $\sqrt{-\Det(\,\cdots\,)}$ for a Lie-algebra valued $2$-tensor
   $(\,\cdots\,)$ that appears;
  
  \item[\LARGE $\cdot$]     	
   the {\it symmetrized}-part $S$ in $\STr$ taking in addition into account the fact that
   the Lie algebra involved now is non-Abelian:
   $$
    \STr(m_1\,\cdots\,m_l)\;
     :=\; \mbox{\Large $\frac{1}{l!}$}
            \sum_{\sigma\in\scriptsizeSym_l}\Tr(m_{\sigma(1)}\,\cdots\,m_{\sigma(l)})
    $$
    for $m_1,\,\cdots\,,m_l$ in the Lie algebra in matrix form;
   not only that this is a very natural thing to do to deal with the noncommutativity issue here, 
    one anticipates that this symmetrization is also required to fit the perturbative result
	better with the open-string consideration.
 \end{itemize}
Readers are referred to, for example,\footnote{(From C.H.L.)
                                           These are some works that particularly influencd my thought
											  during my brewing years 2007-2015 on this topic.
										   There is no intention at all to make a complete survey on this topic here.
										   Readers should consult references therein and key-word search
											 for a more comprehensive understanding of this part of the history of the development.
										   Similarly, for the literature on its upgrade: the Dirac-Born-Infeld action.}
 (time-ordered)
 [F-T] (1985) of {\it Efim Fradkin} and {\it  Arkady Tseytlin},
 [D-O] (1986) of {\it Harald Dorn} and {\it Hans-Jorg Otto},
 [A-C-N-Y] (1987) of
   {\it Ahmed Abouelsaood}, {\it Curtis Callan, Jr.}, {\it Chiara Nappi}, and {\it Scott Yost}
   (preprint: 1986),
 [A-N] (1990) of {\it Philip Argyres} and {\it Nappi} (preprint: 1989)
  in the second half of 1980s for more details, further references, and issues that remain.
   
In the language of nowadays,
all these studies are in the special case
 where the space-time is filled with a D-brane-world-volume
 so that the end-point of open strings can move freely anywhere in the space-time.
(In the notation of the next theme, this means that $X=Y$,
     with the background fields $(\Phi,g,B)$ living on the space-time $Y$ and
	  $\nabla$ living on the Chan-Paton bundle $E$ over the D-brane world-volume $X$.)
Which we now turn to.

\bigskip

\begin{flushleft}
{\bf D-branes, Dirac-Born-Infeld action, and its non-Abelian generalization}
\end{flushleft}
A {\it D-brane}, in full name: {\it Dirichlet brane} , in string theory is by definition
(i.e.\ by the very word `Dirichlet') a boundary condition for the end-points of open strings.
{From} the viewpoint of the field theory on the open-string world-sheet aspect,
 it is a boundary state in the $2$-dimensional conformal field theory with boundary.
{From} the viewpoint of open-string target-space(-time) $Y$,
 its world-volume is a cycle or a union of submanifolds $X$ in $Y$
  with a Chan-Paton bundle with a $U(1)$-connection $(E,\nabla)$, suported on $X$,
  that carries the Chan-Paton index for the end-points of oriented open strings,
  cf.\ {\sc Figure}~1-1.
%

\begin{figure}[htbp]
 \bigskip
  \centering
  \includegraphics[width=0.80\textwidth]{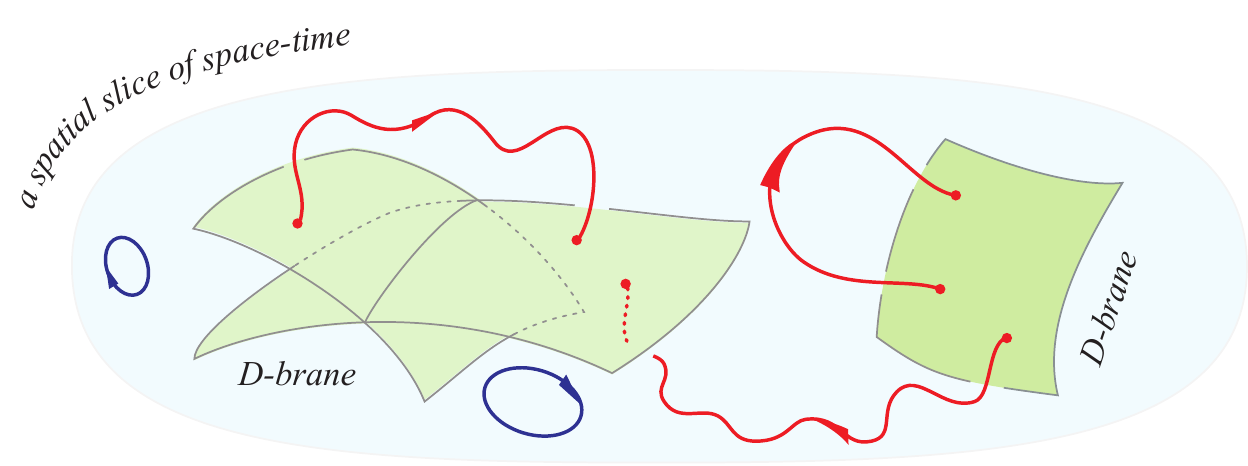}
 
  \bigskip
  \bigskip
 \centerline{\parbox{13cm}{\small\baselineskip 12pt
  {\sc Figure}~1-1.
  D-branes as boundary conditions for open strings in space-time.
  This gives rise to interactions of D-brane world-volumes with open strings.
  Properties of D-branes,
     including the quantum field theory on their world-volume and deformations of such,
   are governed by open strings via this interaction.
  Both oriented open (resp.\ closed) strings and a D-brane configuration are shown.
  }}
\end{figure}

\noindent
In the region of Wilson's theory-space of string theory where the D-brane tension is small,
 D-branes stand in an equal footing with strings as fundamental objects.
In this region, they are soft and can move around and vibrate, just like a fundamental string can,
  in the space-time $Y$.
Thus, a D-brane world-volume in this case is better described as a map $f: X\rightarrow Y$.
Such non-solitonic aspect was already taken in the original works,
  [P-C] (1989) of {\it Joseph Polchinski} and {\it Yunhai Cai}  and
  [D-L-P]  (1989) of {\it Jin Dai},  {\it Robert Leigh}, and {\it Joseph Polchinski},
 that introduced the notion of D-branes to string theory.
With the earlieir works that relate the Born-Infeld action $S_{\BI}^{(\Phi,g,B)}(\nabla)$
 to the dynamics of $\nabla$ that couple to the open-string
  in the special space-time-filling-D-brane case where $X=Y$,
 it is immediately realized by
 {\it Leigh} [Le] (1989) that
 the action for such a simple D-brane $(f,\nabla)$ is given the Dirac-Born-Infeld action density
  $$
  {\cal S}^{(\Phi,g,B)}_{\DBI}(f, \nabla)\;
     =\;   -\,T_{m-1}\,e^{-\Phi}\sqrt{-\,\Det(f^{\ast}(g+B)+2\pi\alpha^{\prime}F_{\nabla})\,}
  $$
  so that the resulting system of equations of motion for a simple D-brane $(f,\nabla)$
   coincides with the system of conformal-anomaly-free constraint equations on $(f,\nabla)$
   from the aspect of $2d$ boundary conformal field theory on the open-string world-sheet.
Here, $m=\dimm X$,
           $m-1$ is the dimension of the D-brane   and
		   $T_{m-1}$ is the D$(m-1)$-brane tension.
While this is a very natural generalization of the earlier work,
  it is worth emphasizing that
  \begin{itemize}
   \item[\LARGE $\cdot$]
   {\it This is an action for the {\it pair} $(f,\nabla)$}.
    It governs
	 not only
	   how the gauge field $\nabla$ lives on the Chan-Paton bundle $E$ on the D-brane world-volume $X$
	 but also
	  {\it how the D-brane world-volume sits in the space-time $Y$, i.e.\ the map $f:X\rightarrow Y$.}
  \end{itemize}

Fast forward now to  year 1995,
 {\it Polchinski} [Po1] (1995)
 realized that D-branes serve not only as general boundary conditions for open strings but,
 in a superstring theory, can also couiple to the Ramond-Ramond fields created by closed superstrings,
 and, hence, serve as the source for such fields.
This shifted the focus of string theory from strings to the various higher-dimensional extended objects: branes.

Something novel and mysterious at the first sight happens
 when a collection of D-branes in space-time coincide:
 (cf.\ [Wi] (1995) of {\it Edward Witten}  and
           [Po3] (1996) and [Po4] (1998) of {\it Polchinski})
\begin{itemize}
 \item[$\cdot$] \parbox[t]{37.4em}{\it
  {\bf [enhancement of scalar field on D-brane world-volume]}\hspace{1em}
   When a collection of D-branes in space-time coincide,
    the open-string-induced massless spectrum on the world-volume of the D-brane
	 is enhanced.
  In particular,
    the gauge field is enhanced to one with a larger gauge group  and
    the scalar field that describes the deformations of the brane in space-time
     is enhanced to one that is matrix-valued.}
\end{itemize}
Cf.\ {\sc Figure}~1-2.
%

\begin{figure}[htbp]
 \bigskip
  \centering
  \includegraphics[width=0.80\textwidth]{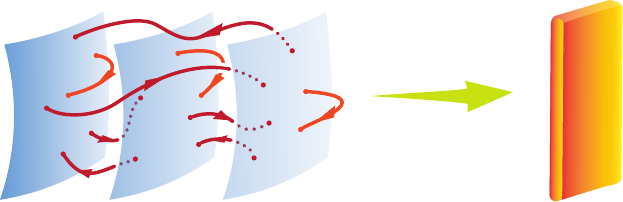}
 
  \bigskip
  \bigskip
 \centerline{\parbox{13cm}{\small\baselineskip 12pt
  {\sc Figure}~1-2.
 When a collection of D-branes coincide, the massless spectrum on the common world-volume are enhanced.
 Not only that the gauge field becomes non-Abelian, the scalar field is also enhanced and becomes non-Abelian.
  }}
\end{figure}

\noindent
This leads to the following key guiding questions:
 \begin{itemize}
  \item[\bf Q1.] \parbox[t]{37.4em}{\raggedright\it {\bf [D-brane]}\hspace{1.1ex}
   What is a D-brane as a fundamental object (as opposed to a solitonic\\ $\hspace{12.6ex}$ object)
   in string theory?}
   
  \item[\bf Q2.] \parbox[t]{37.4em}{\it {\bf [dynamics]}
   What rules govern its dynamics?}
 \end{itemize}
In other words, what is the intrinsic definition of D-branes
 so that by itself it can produce the properties of D-branes
 that are consistent with, governed by, or originally produced by open strings as well?

Leaving the first question --- which is even more fundamental of the two --- tentatively aside,
with this new motivation
 further attempts to generalize the Born-Infeld or Dirac-Born-Infeld in the Abelian case (i.e.\ for a simple D-brane)
  to a non-Abelian case (i.e.\ for coincident D-branes) followed suit immediately:
 for example,
 [Dor1] (1996) and [Dor2] (1997) of {\it Dorn},
 [Ts1] (1997) and [Ts2] (1999) of {\it Tseytlin},
 [B-dR-S1] (2000) and [B-dR-S2] (2000) of
  {\it Eric Bergshoeff}, {\it Mees de Roo}, and {\it Alexander Sevrin},
 [Schw] (2001) of {\it John Schwarz},
 [My] (2001) of {\it Robert Myers},
  and
 the thesis [S\'{e}] (2005) of {\it Emmanuel S\'{e}ri\'{e}}.
Readers are referred to these works for details, further references, and issues that remain.

\bigskip

\begin{flushleft}
{\bf The construction of a non-Abelian Dirac-Born-Infeld action and its consequences}
\end{flushleft}
Back to the two guiding questions, for the first that concerns the intrinsic nature of D-branes,
in the work [H-W: Sec.\ 5] (1996) of {\it Pei-Ming Ho} and {\it Yong-Shi Wu}
they realized that (assuming that the Chan-Paton bundle is a trivialized trivial complex vector bundle of rank $r$)
 a D-brane world-volume $X$ carries a full {\it matrix-ring structure} $M_{r\times r}(L^2(X))$,
 where $L^2(X)$ is the Hilbert space of square-integrable functions on $X$.
A decade afterwards, in late 2006 their important observation was re-picked up
 by the first author of the current notes when he re-thought of the lecture [Po2] of {\it Polchinski}
 from the viewpoint of {\it Grothendieck}'s Modern Algebraic Geometry.
Such input from algebraic geometry (cf.\ [Ha] (1977) and [Joy] (2010))
 gave rise to to a proto-typical definition of D-branes
 ([L-Y1] (D(1), 2007; [L-Y4] (D(11.1), 2014; [L-Y6] (D(11.3.1), 2015)):

\bigskip

\begin{sansatz-definition} {\bf [D-brane: prototypical]}$\;$ {\rm
 Let
  $X$ be the world-volume of a D-brane (i.e.\ a $C^{\infty}$-manifold),
  $E$ be the Chan-Paton bundle on $X$ (i.e.\ a complex vector bundle of rank $r$) on $X$.
 Then a {\it D-brane moving in a space-time} $Y$ is modelled on a `{\it map}'
   $$
     \varphi\,:\; (X^{\!A\!z},E; \nabla)\; \longrightarrow\; Y\,,
   $$
   where
	\begin{itemize}
	 \item[\LARGE $\cdot$]
	  $(X^{\!A\!z},E):= (X, C^{\infty}(\End_{\Bbb C}(E)),E)$
	   is an {\it Azumaya/matrix manifold with a fundamenatl module}
	   whose underlying topology is identical to the manifold $X$ but
	   whose function-ring is given by the {\it endomorphism-algebra}
	     $C^{\infty}(\End_{\Bbb C}(E))$ over ${\Bbb C}$,
		
     \item[\LARGE $\cdot$]
      $\nabla$ is a connection on $E$.	
	\end{itemize}
 Here, the notion of a `map' is {\it defined contravariantly by a ring-homomorphism}
  $$
    \varphi^{\sharp}\;:\; C^{\infty}(Y)\;
	  \longrightarrow\; C^{\infty}(\End_{\Bbb C}(E))
  $$
  over the canonical inclusion ${\Bbb R}\subset {\Bbb C}$.
 Cf.\ {\sc Figure}~1-3.
}\end{sansatz-definition}
%

\begin{figure}[htbp]
 \bigskip
  \centering
  \includegraphics[width=0.80\textwidth]{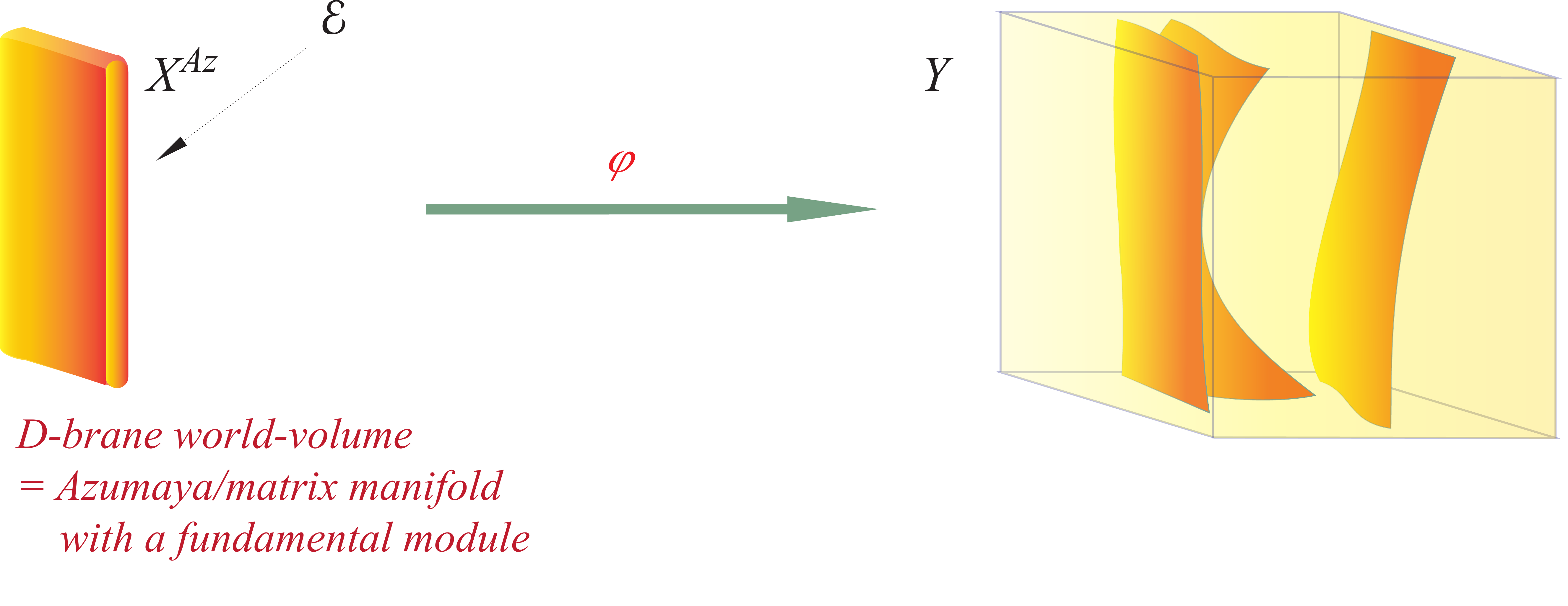}
 
  \bigskip
  \bigskip
 \centerline{\parbox{13cm}{\small\baselineskip 12pt
  {\sc Figure}~1-3.
 As a dynamical object in string theory, D-brane moving in a space-time $Y$ can be described
 as a map $\varphi$ from an Azumaya/matrix manifold $X^{\!A\!z}$
 with a fundamental module $E$ with a connection $\nabla$ to $Y$.
  }}
\end{figure}

\bigskip

\noindent
Readers are referred to Sec.~2.1 of the current notes for a terse review of the part/notation we need
 and to
    [L-Y1] (D(1)), [L-L-L-Y] (D(2)), [L-Y3] (D(6)),
	[L-Y4] (D(11.1)), [L-Y5] (D(11.2)), [L-Y6] (D(11.3.1))
 for further details, examples, and
  the justification by comparing to various D-brany phenomena in string theory
   that this definition is workable and does capture some major features of D-branes.

Once having a proto-typical answer to Question 1, we now turn to Question 2.
Taking the lesson from string-theorists
  that the dynamics of D-branes should be governed (at least at the lowest level)
  by a generalization of the Dirac-Born-Infeld action in the abelian case
  (i.e.\ the case where $E$ is a complex line bundle over $X$),
let us write down first a formal but natural expression
 for the Dirac-Born-Infeld action on the pairs $(\varphi,\nabla)\,$:
 $$
   S_{\DBI}^{(\Phi,g,B)}(\varphi,\nabla)\;
    \stackrel{\mbox{\tiny formally }}{=}\;
     -T_{m-1}\,\int_X \Tr\left(\rule{0ex}{1em}\right.
	                       e^{-\varphi^{\ast}(\Phi)}
						    \sqrt{-\,
						     \Det_X(
							  \varphi^{\ast}(g+B)\,
						       +\, 2\pi\alpha^{\prime}F_{\nabla}
							    )\,}
	                          \left.\rule{0ex}{1em}\right)\,.
 $$
 Here, $m=\dimm X$ and $T_{m-1}$ is the tension fo the $D(m-1)$-brane.
We now list all the issues that need to be resolved to make sense, or interpret correctly,
 of this formal expression:
\begin{itemize}		
 \item[(1)]  [{\it map $\varphi$}]\\
  We have settled down the notion of $\varphi$ purely algebro-geometrically,
    without having anything to do with the connection $\nabla$.
  Once the connection is brought into play,
  {\it is there a constraint on the pair $(\varphi,\nabla)$
   that comes from string-theoretical consideration?}
  Furthermore, {\it when $E$ is Hermitian and $\nabla$ is unitary,
   is there a class of maps $\varphi$ that stand out from others
   due to such additional structure on $(E,\nabla)$?}
  
 \item[(2)]  [{\it push-pull of tensor under $\varphi$}\hspace{.1ex}]\\
  The notion of {\it push-pulls under a differentiable map $\varphi$}
     from an Azumaya/matrix manifold with a fundamental module to a real manifold;
  cf.\ $\varphi^{\ast}(g+B)$.	
	
 \item[(3)] [{\it determinant of $2$-tensor on $X$}\hspace{.1ex}]\\
  The notion of {\it determinant} $\Det_X(\,\cdots\,)$ in the current context
   once Issues (1) and (2) are resolved.
   
 \item[(4)] [{\it square root of matrix section}\hspace{.1ex}]\\
 Can we take a {\it square root of a matrix}?
  When the answer is Yes, is there a unique square root?
  If not, which one to choose?
 Extension of this to {\it matrix sections}\hspace{.1ex}?

 \item[(5)]  [{\it dilaton-field factor}\hspace{.2ex}]\\
   How does {\it the factor $e^{-\varphi^{\sharp}(\Phi)}$}
    influence the interpretation of the formal expression?
	
 \item[(6)] [{\it real-valuedness}\hspace{.1ex}]\\
  Is the expression {\it real}$^{\,}$?
  
 \item[(7)] [{\it consistency with open string theory}\hspace{.2ex}]\\
  {\it How does it fit with open string theory?}
\end{itemize}

In this work, we will
 answer to and resolve Issues (1) - (6) above  in a way
    that is physically meaningful and mathematically natural     and
 construct a Dirac-Born-Infeld action $S_{\DBI}$ for $(\varphi, \nabla)\,$
 (Sec.\ 2 and Sec.\ 3).
We then develop a necessary tool (Sec.\ 4) to  carry out the first variation formula of $S_{\DBI}$
 with respect to $(\varphi,\nabla)\,$ (Sec.\ 5).
In view of Polchinski's realization ([Po1]) that
  D-brane world-volume serves as the source for Ramond-Ramond fields in superstring theory,
 the Chern-Simons/Wess-Zumino action $S_{\CSWZ}$ for D-branes is also an indispensable part
 to understand the dynamics of D-branes.
With the same essence as for the construction of $S_{\DBI}(\varphi,\nabla)$,
we
 construct the Chern-Simons/Wess-Zumino action $S_{\CSWZ}(\varphi,\nabla)$  in Sec.\ 6
    for lower-dimensional D-branes, in which cases anomaly issues do not occur,
 derive their first variation formula and, hence,
 obtain their contribution to the equations of motions for D-branes.
  
\bigskip

\begin{sremark} {\rm [\hspace{.1ex}}effect
              of $B$-field to fundamental module $E${\rm\hspace{.1ex}]}$\;$ {\rm
 The presence of a $B$-field on the space-time $Y$ can have a non-trivial twisting effect to the Chan-Paton bundle
  $E$ on the D-brane world-volume $X$,
  rendering it no longer an honest vector bundle but, rather, a `twisted vector bundle' ([Wi]).
 For better focus, we omit this effect in the current notes.
 Reader are referred to [Wi] and, e.g., [Kap] for more details and references,
  and to [L-Y2] (D(5)) for details on how this effect is taken into account in our setting.
 This twisting effect can always be added back to our presentation whenever in need. 
}\end{sremark}

%
%
%
%
%

\bigskip

\section{Differentiable maps from an Azumaya/matrix manifold with\\ a fundamental module with a connection}

Recall from Sec.\ 1 the first issue we need to understand
 before we can construct the Dirac-Born-Infeld action for D-branes in our setting:
  \begin{itemize}
   \item[(1)]  [{\it map $\varphi$}]\\
    We have settled down the notion of $\varphi$ purely algebro-geometrically,
     without having anything to do with the connection $\nabla$.
   Once the connection is brought into play,
   {\it is there a constraint on the pair $(\varphi,\nabla)$
    that comes from string-theoretical consideration?}
   Furthermore, {\it when $E$ is Hermitian and $\nabla$ is unitary,
    is there a class of maps $\varphi$ that stand out from others
    due to such additional structure on $(E,\nabla)$?}
  \end{itemize}
In this section, we
 first review very tersely the part of [L-Y4] (D(11.1)) and [L-Y6] (D(11.3.1))
   that is needed for the current notes (Sec.\ 2.1),
 then address a compatibility issue from the open-string aspect
   between our notion of maps $\varphi$ from a matrix manifold to a space-time
    and the connection $\nabla$ on the fundamental module associated to that matrix manifold (Sec.\ 2.2),
  	and
 finally bring out the additional notion of {\it self-adjoint map} in our context (Sec.\ 2.3).
This notion is what we interpret the (seemingly Lie-algebra-valued) scalar field on the common world-volume of
 coincidnet D-branes as in [Po2], [Po3], and [Wi] when the connection on the Chan-Paton bundle is unitary.

\bigskip

\subsection{Differentiable maps from an Azumaya/matrix manifold with a fundamental module}

\begin{definition} {\bf [map from Azumaya/matrix manifold]}$\;$ {\rm
 Let
  $X$ be a (real, smooth) manifold,
  $E$ be a complex vector bundle of rank $r$ over $X$,  and
  $(X^{\!A\!z}, E) := (X, C^{\infty}(\End_{\Bbb C}(E)), E)$
     be the associated Azumaya/matrix manifold with a fundamental module.
 A {\it map} (synonymously, {\it differentiable map}, {\it smooth map})
   $$
     \varphi\,:\, (X^{\!A\!z},E)\; \longrightarrow\; Y
   $$
   from $(X^{\!A\!z}, E)$ to a (real, smooth) manifold $Y$
   is defined by a ring-homomorphism
   $$
     \varphi^{\sharp}\;:\; C^{\infty}(Y)\;\longrightarrow\; C^{\infty}(\End_{\Bbb C}(E))\,.
   $$
}\end{definition}
 
\medskip

\begin{definition} {\bf [push-forward $\varphi_{\ast}{\cal E}$]}$\;$ {\rm
 Let ${\cal E}$ be the sheaf of (smooth) sections of $E$;
   it is canonically identical to the sheaf on $X$ from localizations of $C^{\infty}(E)$.
 The ring-homomorphism
   $\varphi^{\sharp}:C^{\infty}\rightarrow C^{\infty}(\End_{\Bbb C})$
  renders $C^{\infty}(E)$ a $C^{\infty}(Y)$-module.
 This defines a sheaf on $Y$, denoted by $\varphi_{\ast}{\cal E}$,
  and is called the {\it push-forward} of ${\cal E}$ under $\varphi$.
  It is an ${\cal O}_Y^{\,\Bbb C}$-module.
}\end{definition}
 
\medskip
 
\begin{proposition}  {\bf [basic properties of $\varphi^{\sharp}$]}$\;$
 $(1)$ {\rm [}\hspace{.1ex}{\sl
        realness nature of $\varphi^{\sharp}$}\hspace{.1ex}{\rm ]}\hspace{1em}
 For any $f\in C^{\infty}(Y)$ and $x\in X$ (an ${\Bbb R}$-point),
  the eigenvalues of
    $\varphi^{\sharp}(f)|_x\in \End_{\Bbb C}(E|_x)\simeq M_{r\times r}({\Bbb C})$
 are all real.\\  {\rm (Cf.\ [L-Y4: Sec.\ 3] (D(11.1).)}
 
 $(2)$ {\rm [}\hspace{.1ex}{\sl
       canonical lifting to $C^{\infty}(X\times Y)$}\hspace{.1ex}{\rm ]}\hspace{1em}
 The ring-homomorphism
  $$
    \xymatrix{
	  C^{\infty}(\End_{\Bbb C}(E))
	     && C^{\infty}(Y)\ar[ll]_-{\varphi^{\sharp}}
	  } 	
  $$
  extends canonically to a commutative diagram of ring-homomorphisms
  (over ${\Bbb R}$ or ${\Bbb R}\subset{\Bbb C}$, whichever is applicable)
  $$
    \xymatrix{
	  \;C^{\infty}(\End_{\Bbb C}(E))\;		
	     && \;\;C^{\infty}(Y)\;\;
		           \ar[ll]_-{\varphi^{\sharp}}
				   \ar@{_{(}->}[d]^-{pr^{\sharp}_{Y}}   \\
	  \;\;C^{\infty}(X)\rule{0ex}{1em}\;
	       \ar@{^{(}->}[rr]_-{pr^{\sharp}_X}      \ar@{^{(}->}[u]
	     && C^{\infty}(X\times Y)
		          \ar[llu]_-{\tilde{\varphi}^{\sharp}}\;,
    }
  $$
  where
    $\pr_X:X\times Y \rightarrow X$ and $\pr_Y:X\times Y\rightarrow Y$ are the projection maps,  and
	$C^{\infty}(X)\hookrightarrow C^{\infty}(\End_{\Bbb C}(E))$
	   follows from the inclusion of the center $C^{\infty}(X)^{\Bbb C}$ of
	   $C^{\infty}(\End_{\Bbb C}(E))$.\\
 {\rm (Cf.\ [L-Y6: Theorem 3.1.1] (D(11.3.1)).)}
\end{proposition}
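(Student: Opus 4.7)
My plan is to handle the two parts separately, with Part (2) requiring most of the work.

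For Part (1), I would argue by contradiction via invertibility. Fix $x \in X$, put $A := \varphi^{\sharp}(f)|_{x} \in M_{r \times r}({\Bbb C})$, and suppose $\lambda = \alpha + i\beta$ is an eigenvalue of $A$ with $\beta \neq 0$. The function $g := (f-\alpha)^{2} + \beta^{2} \in C^{\infty}(Y)$ satisfies $g \geq \beta^{2} > 0$ everywhere on $Y$, so $1/g \in C^{\infty}(Y)$, and applying $\varphi^{\sharp}$ yields $\varphi^{\sharp}(g)\cdot\varphi^{\sharp}(1/g) = I$; in particular $\varphi^{\sharp}(g)|_{x} = (A-\alpha I)^{2}+\beta^{2}I$ is invertible. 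But $i\beta$ is an eigenvalue of $A - \alpha I$, forcing $0 = (i\beta)^{2}+\beta^{2}$ to be an eigenvalue of $\varphi^{\sharp}(g)|_{x}$, a contradiction. Hence every eigenvalue is real.

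For Part (2), the algebraic scaffolding is immediate: $\varphi^{\sharp}(C^{\infty}(Y))$ is commutative (image of a commutative ring under a ring-homomorphism) and $C^{\infty}(X)\hookrightarrow C^{\infty}(\End_{\Bbb C}(E))$ lands in the center, so the subring $R \subset C^{\infty}(\End_{\Bbb C}(E))$ these two generate is commutative, and $h \otimes f \mapsto h\cdot\varphi^{\sharp}(f)$ defines an ${\Bbb R}$-algebra homomorphism $\psi\colon C^{\infty}(X)\otimes_{\Bbb R} C^{\infty}(Y) \to R$ that makes the diagram commute on the algebraic tensor product. The substantive task is extending $\psi$ to the $C^{\infty}$-ring $C^{\infty}(X\times Y)$, which is strictly larger than the algebraic tensor product. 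I would do this via smooth functional calculus, exploiting Part (1): for each $x \in X$, the commuting family $\{\varphi^{\sharp}(f)|_{x} : f \in C^{\infty}(Y)\}$ consists of matrices with real spectrum and generates a finite-dimensional commutative ${\Bbb R}$-subalgebra of $M_{r \times r}({\Bbb C})$ of dimension $\leq r$, representable as a fat $0$-dimensional subscheme of $Y$ of length $\leq r$. For $F \in C^{\infty}(X\times Y)$, I would then define $\tilde{\varphi}^{\sharp}(F)(x)$ by a Taylor expansion of $F(x,\cdot)$ in local $y$-coordinates about the spectral support of this fat point; the expansion truncates at order $r$ because the nilpotent part of a commuting family in $M_{r\times r}({\Bbb C})$ annihilates $r$-fold products. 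Patching with a partition of unity on $Y$ and checking compatibility across charts yields $\tilde{\varphi}^{\sharp}$, and the ring-homomorphism property reduces to the compatibility of smooth functional calculus with pointwise multiplication in $C^{\infty}(X\times Y)$.

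The main obstacle will be showing that this fiberwise prescription is independent of the local-coordinate, bump-function, Taylor-center, and partition-of-unity choices, and that $\tilde{\varphi}^{\sharp}(F)$ is smooth in $x$. The crucial technical input is the generalized division lemma of Section~4, which represents the $y$-Taylor remainder of $F$ as a smooth function on $X\times Y$ with the required order of vanishing controlled uniformly in $x$; this upgrades the scalar Milnor-type fact that every ${\Bbb R}$-algebra homomorphism out of $C^{\infty}(Y)$ respects smooth functional calculus to the parametrized, matrix-valued setting needed here.
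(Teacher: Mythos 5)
Part (1) is correct and is exactly the argument from [L-Y4: Sec.~3]: for $g=(f-\alpha)^2+\beta^2$ the bound $g\geq\beta^2>0$ gives $1/g\in C^\infty(Y)$, so $\varphi^\sharp(g)|_x$ is invertible, whereas $0=(i\beta)^2+\beta^2$ would lie in its spectrum if $\alpha+i\beta$ were an eigenvalue of $\varphi^\sharp(f)|_x$.

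For Part (2) you correctly isolate the algebraic-tensor-product part as immediate, the extension to $C^\infty(X\times Y)$ as the real content, and the Generalized Division Lemma \`{a} la Malgrange as the decisive input; this is precisely the technique behind [L-Y6: Theorem~3.1.1], reviewed in Sec.~4.1 of the present notes. Two details need repair, though. First, the commutative subalgebra of $M_{r\times r}(\mathbb{C})$ generated by $\{\varphi^\sharp(f)|_x : f\in C^\infty(Y)\}$ can have dimension strictly greater than $r$ (with $r=4$, the four matrices with a single $1$ in the upper-right $2\times 2$ block commute, satisfy $N_iN_j=0$, have real spectrum, and together with the identity span a $5$-dimensional commutative subalgebra); what is actually used is that each generator $\varphi^\sharp(y^i)|_x$ satisfies its own degree-$r$ characteristic polynomial, so the Malgrange remainder has multi-degree at most $(r-1,\ldots,r-1)$ in $(y^1,\ldots,y^n)$, not total degree at most $r-1$. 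Second, and more substantively, defining $\tilde\varphi^\sharp(F)(x)$ by Taylor-expanding $F(x,\cdot)$ about the spectral support and truncating is problematic as stated: eigenvalues collide and separate as $x$ moves, so the expansion center varies non-smoothly and smoothness of the resulting section is not evident. The division formulation --- dividing $F$ locally by the monic polynomials $\chi^{(i)}=\det(y^i\cdot\mathrm{Id}-\varphi^\sharp(y^i))$ and substituting $y^i\mapsto\varphi^\sharp(y^i)$ into the remainder --- is not cosmetically but essentially different from pointwise Taylor expansion: Malgrange's theorem furnishes remainder coefficients that depend smoothly on $x$, and this is exactly where the smoothness of $\tilde\varphi^\sharp(F)$ comes from. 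Since you do invoke the lemma at the close, your plan lands where the paper's does; the points above are the repairs needed to get there cleanly.
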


\medskip

\begin{definition} {\bf [graph of $\varphi$]}$\;$ {\rm
 The above diagram of ring-homomofphisms defines a commutative diagram of maps
   $$
    \xymatrix{
	  \;(X^{\!A\!z},E)\;
	           \ar[rr]^-{\varphi_T}  \ar[rrd]^-{\tilde{\varphi}}  \ar@{->>}[d]
	       &&   \; Y\;    \\
	    \;X\;
		   &&   \;\;X\times Y
		                 \ar@{->>}[ll]^-{pr_X}
                         \ar@{->>}[u]_-{pr_Y}						 \,.
     }
   $$
 The push-forward $\tilde{\varphi}_{\ast}{\cal E}=:\tilde{\cal E}_{\varphi}$
   of ${\cal E}$ under $\tilde{\varphi}$
   is called the {\it graph} of $\varphi$.
  It is an ${\cal O}_{X\times Y}^{\,\Bbb C}$-module.
 Its {\it $C^{\infty}$-scheme-theoretical support}
  is denoted by $\Supp(\tilde{\cal E}_{\varphi})$.
}\end{definition}

\medskip

\begin{definition} {\bf [surrogate of $X^{\!A\!z}$ specified by $\varphi$]}$\;$ {\rm
 The image
   $$
     A_{\varphi}\;
	  :=\;  \Image\tilde{\varphi}^{\sharp}\;
      :=\;  \tilde{\varphi}(C^{\infty}(X\times Y))\;
	  =\; C^{\infty}(X)\langle\Image\varphi^{\sharp}\rangle\;\;
	  \subset\;  C^{\infty}(\End_{\Bbb C}(E))
   $$
   of $\tilde{\varphi}:C^{\infty}(X\times Y)\rightarrow C^{\infty}(\End_{\Bbb C}(E))$
   is a commutative $C^{\infty}(X)$-subalgebra of $C^{\infty}(\End_{\Bbb C}(E))$
 that is locally algebraically finite over $C^{\infty}(X)$.
 It defines a $C^{\infty}$-scheme
   $$
      X_{\varphi}\;:=\; \Spec^{\Bbb R}(A_{\varphi})\,,
   $$
  which is called the {\it surrogate} of $X^{\!A\!z}$ specified by $\varphi$.
 $X_{\varphi}$ is finite over $X$ and,
   by construction,
   it admits a canonical embedding
   $\tilde{f}_{\varphi}:X_{\varphi}\rightarrow X\times Y$
   into $X\times Y$ as a $C^{\infty}$-subscheme.
 The image is identical to $\Supp(\tilde{\cal E}_{\varphi})$.
  Cf.\ {\sc Figure} 2-1-1.
%
%
 \begin{figure} [htbp]
  \bigskip
  \centering

  \includegraphics[width=0.40\textwidth]{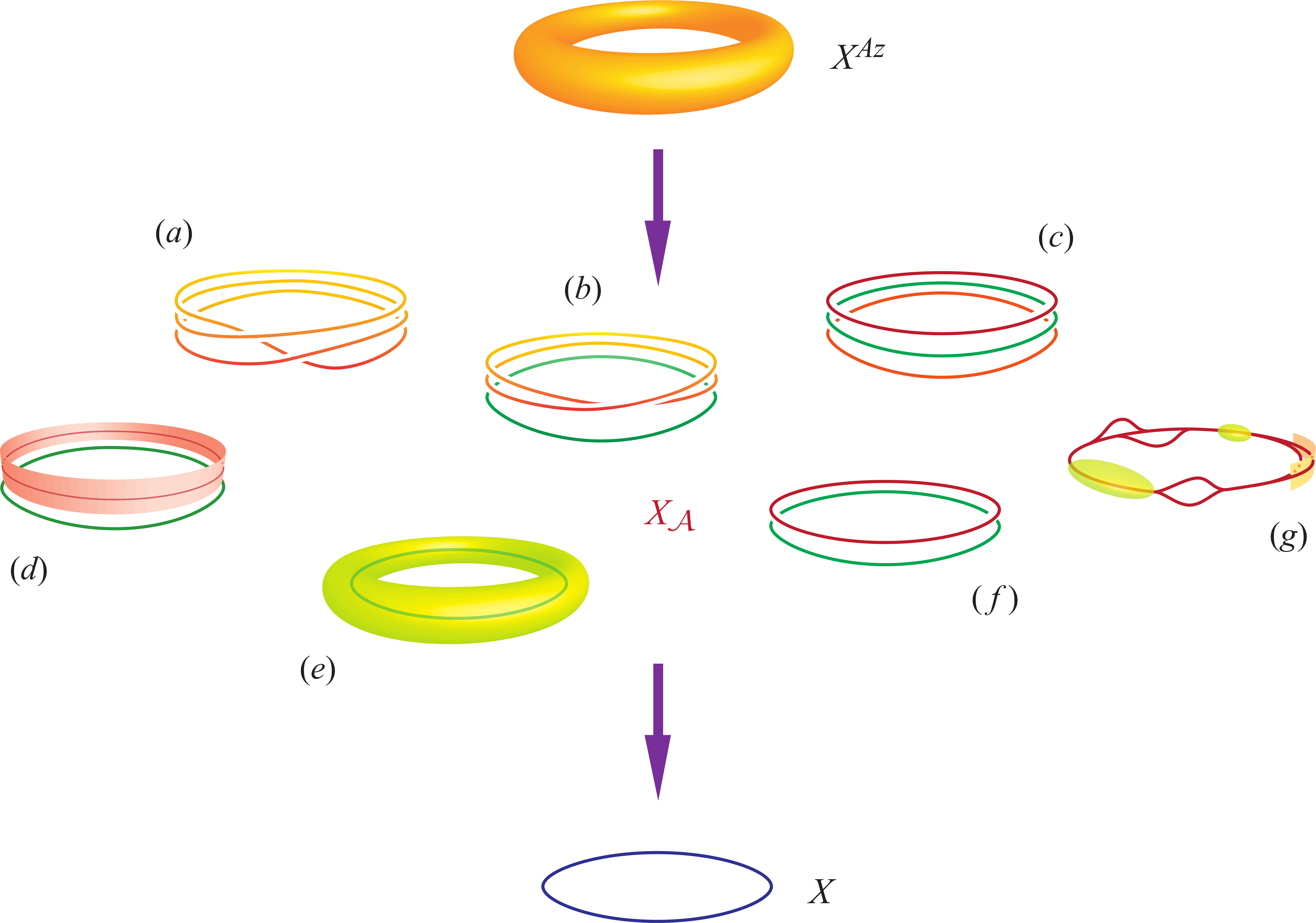}

  \bigskip
  \bigskip
  \centerline{\parbox{13cm}{\small\baselineskip 12pt
   {\sc Figure}~2-1-1.
   The $C^{\infty}$-scheme $X_{A}:=\Spec^{\Bbb R}A$
     associated to a commutative $C^{\infty}(X)$-subalgebra
	   $C^{\infty}(X)\subset A\subset C^{\infty}(\End_{\Bbb C}(E))$
	 can be thought of as interpolating between the commutative $X$ and the noncommutative $X^{\!A\!z}$
     by the built-in dominant maps
	 $$	
	    X^{\!A\!z}\; \longrightaarrow\; X_A\; \longrightaarrow\; X\,.
	 $$
   The abundance of such objects under $X^{\!A\!z}$	indicates a very rich geometric structure
     the Azumaya/matrix manifold $X^{\!A\!z}$ contains.
   In the Figure, seven surrogates $(a)$ -- $(g)$ of an Azumaya/matrix string $S^{1, A\!z}$	
     are indicated.
    They include
	  short-string sets: ($c$) and ($f$),
	  a long string ($a$),
	  a fuzzy string ($e$),
	  and various mixtures: ($b$), ($d$), ($g$).
   	In particular,
	  given a map $\varphi:(X^{\!A\!z},E)\rightarrow Y$,
	  the surrogate of $X^{\!A\!z}$ specified by $\varphi$ can be used to help capture $\varphi$ itself
	    and serve as a $\varphi$-specified medium between $X$ and $Y$.
       }}
  \bigskip
 \end{figure}	
}\end{definition}

\bigskip

One can summarize all the objects introduced into the following two diagrams
  that refine the contravariant pair of diagrams in Proposition~2.1.3 (2) and Definition~2.1.4
  respectively:	
  $$
   \xymatrix{
     C^{\infty}(E)\\
	& C^{\infty}(\End_{\Bbb C}(E))  \ar@{~>}[lu] \\
    &     A_{\varphi}\rule{0ex}{1em}
				      \ar@{^{(}->}[u]  \ar@{~>}@/^2ex/[luu]
                   				&&& C^{\infty}(Y)\ar[lllu]_-{\varphi^{\sharp}}
								                           \ar[lll]^-{f_{\varphi}^{\sharp}}
														   \ar@{_{(}->}[d]^-{pr_Y^{\sharp}}\\
    &  \;\;C^{\infty}(X)\;\; \rule{0ex}{3ex}  \ar@{^{(}->}[u]_-{\pi_{\varphi}^{\sharp}}
	                                                                 \ar@{~>}@/^5ex/[luuu]
																	 \ar@{^{(}->}[rrr]_-{pr_X^{\sharp}}
         &&&   C^{\infty}(X\times Y)\ar@{->>}[ulll]^-{\tilde{f}_{\varphi}^{\sharp}}
   }
  $$
  and
  $$
   \xymatrix{
    & {\cal E} \ar@{.>}[rd]     \ar@{.>}@/_1ex/[rdd]    \ar@{.>}@/_2ex/[rddd]   \\
    &  & X^{\!A\!z}\ar[rrrd]^-{\varphi}\ar@{->>}[d]^-{\sigma_{\varphi}}   \\
    &  & X_{\varphi}\ar[rrr]_-{f_{\varphi}}
	                                   \ar@{_{(}->} [rrrd]_{\tilde{f}_{\varphi}}
                                	   \ar@{->>}[d]^-{\pi_{\varphi}} &&& Y \\
	&  & X     &&& X\times Y \ar@{->>}[u]_-{pr_Y} \ar@{->>}[lll]^-{pr_X}  &,
    }
  $$
 with the built-in isomorphisms
 $$
   {\cal E}
      \simeq {\pi_{\varphi}}_{\ast}(_{{\cal O}_{X_{\varphi}}}{\cal E})
	  \simeq (\pi_{\varphi}\circ \sigma_{\varphi})_{\ast}
	          (_{{\cal O}^{A\!z}_X}{\cal E})
		\hspace{2em}\mbox{and}\hspace{2em}
    _{{\cal O}_{X_{\varphi}}}{\cal E}
	  \simeq {\sigma_{\varphi}}_{\ast}(_{{\cal O}^{A\!z}_X}{\cal E})\,.	
 $$
   
 \bigskip
 
At this point, readers may feel that such a notion of maps is too abstract to perceive.
Recall then that in the ordinary differential topology or geometry,
 a map $f$ from a manifold $X$ to ${\Bbb R}^n$, with coordinates $(y^1,\,\cdots\,,\, y^n)$,
  is determined by specifying its projection to each coordinate, i.e.\
  $f=(f^1,\,\cdots\,,\, f^n):X\rightarrow {\Bbb R}^n$. Each $f^i$ is now in $C^{\infty}(X)$.
Thus,
 in terms of function-rings, this means that $f$ is determiend by the $n$-tuple
  $(f^{\sharp}(y^1),\,\cdots\,,\, f^{\sharp}(y^n))$, which is exactly the $n$-tuple
  $(f^1,\,\cdots\,,\,f^n)$ above.
The following proposition from [L-Y6] (D(11.3.1)) says that
 a very  similar statement holds for our noton of maps from Azumaya/matrix manifolds:

\bigskip

\begin{proposition} {\bf [map from Azumaya/matrix manifold to ${\Bbb R}^n$]}$\;$
 {\rm ([L-Y6: Theorem 3.2.1] (D(11.3.1)).)}
 Let
  $X$ be a smooth manifold  and
  $E$ be a complex smooth vector bundle of rank $r$ on $X$.
 Let
  $(y^1,\,\cdots\,, y^n)$ be a global coordinate system on ${\Bbb R}^n$,
  as a smooth manifold, and
  $$
    \eta\;:\; y^i\; \longmapsto\; m_i\,\in\, C^{\infty}(\End_{\Bbb C}(E))\,,\;\;
	i\;=\;1,\,\ldots\,,n\,,
  $$
 be an assignment  such that
  \begin{itemize}
   \item[$(1)$]
     $\;m_im_j\;=\;m_jm_i$, for all $i,\,j\,$;

   \item[$(2)$]
    for every $p\in X$,
	 the eigenvalues of the restriction
	   $m_i(p)\in \End_{\Bbb C}(E|_p)\simeq M_{r\times r}({\Bbb C})$
	   are all real.
 \end{itemize}
 Then,
  $\eta$ extends to a unique ring-homomorphism
  $$
    \varphi_{\eta}^{\sharp}\;:\;
	 C^{\infty}({\Bbb R}^n)\; \longrightarrow\; C^{\infty}(\End_{\Bbb C}(E))
  $$
  over ${\Bbb R}\subset{\Bbb C}$ and, hence,
  defines a map $\varphi_{\eta}:(X^{\!A\!z},E)\rightarrow {\Bbb R}^n$.
\end{proposition}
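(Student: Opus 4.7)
The plan is to build $\varphi_\eta^\sharp$ by a \emph{smooth multi-variable functional calculus} on the commuting tuple $(m_1,\ldots,m_n)$ and to verify uniqueness by reducing, via the $C^\infty$-ring structure of $C^\infty({\Bbb R}^n)$, to the polynomial case where commutativity alone forces the formula. First I would reduce to a local question on $X$ by choosing an open $U\subset X$ on which $E$ is trivialized, so that $C^\infty(\End_{\Bbb C}(E)|_U)\simeq M_{r\times r}(C^\infty(U,{\Bbb C}))$ and each $m_i$ becomes an $r\times r$ matrix of smooth complex functions on $U$.

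Next, at a fixed point $p\in U$ the matrices $m_1(p),\ldots,m_n(p)$ pairwise commute by hypothesis (1), so they are simultaneously triangularizable over ${\Bbb C}$ and admit a simultaneous Jordan--Chevalley decomposition. Hypothesis (2) pins down that each joint generalized eigenvalue $\lambda=(\lambda_1,\ldots,\lambda_n)$ lies in ${\Bbb R}^n$. Thus ${\Bbb C}^{\oplus r}=\bigoplus_\lambda V_\lambda(p)$ with projections $P_\lambda(p)$, and on $V_\lambda(p)$ one has $m_i=\lambda_i\cdot\Id+N_i^\lambda(p)$ with $(N_i^\lambda)$ a commuting tuple of nilpotents of order $\le r$. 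For $f\in C^\infty({\Bbb R}^n)$ I would then define
$$
  \varphi_\eta^\sharp(f)(p)\;
     :=\; \sum_{\lambda}\,P_\lambda(p)\cdot
           \sum_{|\boldsymbol{\alpha}|\le r-1}
              \frac{1}{\boldsymbol{\alpha}!}\,
              \partial^{\boldsymbol{\alpha}}f(\lambda)\,
              (N^\lambda)^{\boldsymbol{\alpha}}(p),
$$
the inner sum being finite by nilpotency. Because $f$ is only required to be $C^\infty$ on ${\Bbb R}^n$, the realness condition in (2) is exactly what makes $\partial^{\boldsymbol{\alpha}}f(\lambda)$ meaningful. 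That this assignment is additive and multiplicative (i.e.\ a ring-homomorphism over ${\Bbb R}\subset{\Bbb C}$) would be verified by the Leibniz rule applied to $\partial^{\boldsymbol{\alpha}}(fg)(\lambda)$ combined with the commutation of the $N_i^\lambda$ and orthogonality of the $P_\lambda$.

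The hard part will be Step 5: showing that $p\mapsto \varphi_\eta^\sharp(f)(p)$ is \emph{smooth}, not merely continuous. Individual eigenvalues $\lambda(p)$ and projections $P_\lambda(p)$ can cross and fail to be smooth, yet the above combination is canonical and should be smooth. My proposed tool is a Helffer--Sjöstrand-type representation: choose an almost-analytic extension $\tilde f$ of $f$ to a neighborhood of the joint real spectrum in ${\Bbb C}^n$ with $\bar\partial\tilde f$ vanishing to high order on ${\Bbb R}^n$, and write $\varphi_\eta^\sharp(f)$ as a Cauchy-type contour/area integral in resolvents $(z_i-m_i)^{-1}$. Since the resolvents depend smoothly on $p$ off the spectrum and the vanishing of $\bar\partial\tilde f$ is arranged to absorb the singularity at the spectrum up to any desired order, smoothness in $p$ with bounds in terms of finitely many derivatives of $f$ follows. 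An alternative (which I would keep as backup) is an inductive Hadamard-lemma argument: write $f(y)=f(c)+\sum_i(y^i-c^i)g_i(y)$ for a locally chosen $c\in{\Bbb R}^n$, verify smoothness for polynomials in the $m_i$ by direct computation, and pass to the general case using that the $g_i$ are again in $C^\infty({\Bbb R}^n)$.

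Finally, for uniqueness, any ring-homomorphism $\psi:C^\infty({\Bbb R}^n)\to C^\infty(\End_{\Bbb C}(E))$ extending $\eta$ must agree with $\varphi_\eta^\sharp$ on the polynomial subring ${\Bbb R}[y^1,\ldots,y^n]$ purely by the commutativity of the $m_i$. To upgrade polynomial agreement to agreement on all of $C^\infty({\Bbb R}^n)$ I would invoke the (now standard) fact that a ring-homomorphism between $C^\infty$-rings whose source is finitely generated (as a $C^\infty$-ring) and which agrees on the generators is automatically the unique $C^\infty$-ring map; applied to the image $C^\infty(X)\langle m_1,\ldots,m_n\rangle\subset C^\infty(\End_{\Bbb C}(E))$, which is a $C^\infty$-subring because of (1) and (2), this forces $\psi=\varphi_\eta^\sharp$. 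This both closes the proof and identifies the image of $\varphi_\eta^\sharp$ with $A_{\varphi_\eta}$ in the notation of Definition~2.1.5.
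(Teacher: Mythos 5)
Your route is genuinely different from the one the paper rests on, so let me compare. The paper's proof (here only cited; the mechanism is reviewed in Sec.\ 4.1) runs through the Malgrange Division Theorem: regard $f\in C^{\infty}({\Bbb R}^n)$ as a function on $X\times{\Bbb R}^n$ and divide it by the characteristic polynomials $\chi^{(i)}=\det\bigl(y^i\cdot\Id-m_i\bigr)$, obtaining a remainder $R_f$ which is a polynomial in $y^1,\ldots,y^n$ of multi-degree $\le(r-1,\ldots,r-1)$ whose coefficients are \emph{smooth} functions on $X$. One then sets $\varphi^{\sharp}_{\eta}(f):=R_f(m_1,\ldots,m_n)$; smoothness in $p$ is automatic because the coefficients are smooth by Malgrange, and uniqueness of any extending ring-homomorphism is forced by Cayley--Hamilton, since every such homomorphism kills $\chi^{(i)}$. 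Your proposal instead constructs $\varphi^{\sharp}_{\eta}(f)$ pointwise from the joint Jordan decomposition of the commuting tuple and then confronts smoothness in $p$ head-on. The pointwise Taylor formula you write is correct, and you are right that the crux is proving $p\mapsto\varphi^{\sharp}_{\eta}(f)(p)$ is smooth when eigenvalues cross and projections fail to be locally defined.

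Two concrete concerns with your plan for that crux. The multi-variable Helffer--Sj\"{o}strand representation you invoke is textbook for a single self-adjoint operator but is not a drop-in tool for a commuting tuple of matrices that are merely real-spectrum, non-semisimple, and non-normal: one would have to set up a resolvent-type kernel for the tuple (a Taylor/Vasilescu-flavoured functional calculus adapted to almost-analytic extensions), with the blow-up of $\prod_i(z_i-m_i)^{-1}$ near the joint real spectrum controlled by the nilpotency $\le r$ so that a high-order-vanishing $\bar\partial\tilde f$ absorbs it. This is plausible, but it is a nontrivial construction you would have to carry out, whereas the Malgrange route avoids resolvents entirely. More seriously, your fallback---iterating the Hadamard lemma at a locally fixed centre $c$---does not close the gap: once the joint eigenvalues of $(m_1(p),\ldots,m_n(p))$ vary and collide as $p$ moves, Taylor expansion at a fixed $c$ gives no control; what is actually required is division by the \emph{smoothly varying} Weierstrass polynomials $\chi^{(i)}(y;p)$ with smooth remainder coefficients, and that is precisely the Malgrange Division Theorem. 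So the backup is not an alternative to the paper's argument but a restatement of it. Finally, for uniqueness, note that the codomain $C^{\infty}(\End_{\Bbb C}(E))$ is noncommutative and not itself a $C^{\infty}$-ring; before invoking rigidity of $C^{\infty}$-ring homomorphisms out of $C^{\infty}({\Bbb R}^n)$ you must first argue that every extending ring-homomorphism lands in the commutative subalgebra $C^{\infty}(X)\langle m_1,\ldots,m_n\rangle$ and that this subalgebra is a $C^{\infty}$-ring --- the latter is essentially the existence statement itself, so the logic needs to be ordered carefully. In the Malgrange approach this is moot: uniqueness falls out directly from the Cayley--Hamilton relations.
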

  
\bigskip

This is a consequence of the Malgrange Division Theorem ([Mal]; also [Br\"{o}], [Mat1], [Mat2], [Ni]).
Note that Conditions (1) and (2) in Proposition~2.1.6
  are necessary conditions for $\eta$ to be extendable to a full ring-homomorphism.
The proposition says that they are also sufficient and the extension is unique.
Due to its importance as a technical tool for our study later,
 we will highlight its proof in Sec.\ 4.1 in the form we need and then
 generalize it to a similar statement for derivatives
  $\partial^{\scriptsizeboldalpha}\varphi^{\sharp}$ of $\varphi^{\sharp}$
  to all orders $|\boldalpha|$ in Sec.\ 4.2.
It is through this proposition that we can almost visualize $\varphi^{\sharp}$ as we would for $f$.
  
\bigskip

\begin{example} {\bf [D$0$-brane on ${\Bbb R}^2$, deformation, Higgsing/un-Higgsing]}$\;$ {\rm
 D$0$-branes (or D$(-1)$-brane world-points) on ${\Bbb R}^2$
  can be described by maps from Azumaya/matrix points
  $\varphi:(p^{A\!z},{\Bbb C}^{\oplus r})\rightarrow {\Bbb R}^2$,
  defined by ring-homomorphisms
   $$
     \varphi^{\sharp}\,:\; C^{\infty}({\Bbb R}^2)\; \longrightarrow\;
	    M_{r\times r}({\Bbb C})\,.
   $$
 The latter is determined by the value
   $(m^1,m^2)
       :=(\varphi^{\sharp}(y^1),\varphi^{\sharp}(y^2))
       \in M_{r\times r}({\Bbb C})\times M_{r\times r}({\Bbb C})$
   of $\varphi^{\sharp}$ on the coordinates $(y^1,y^2)$ of ${\Bbb R}^2$.
 Any pair $(m^1, m^2)$ of $r\times r$ matrices (with entries in ${\Bbb C}$)
  that commute and with each matrix having only real eigenvalues defines a $\varphi$.
 Deformations of $\varphi$  may create
   various Higgsing/un-Higgsing phenomena of D0-branes on ${\Bbb R}^2$.
 Cf.\ {\sc Figure} 2-1-2.
%

 \begin{figure} [htbp]
  \bigskip
  \centering

  \includegraphics[width=0.60\textwidth]{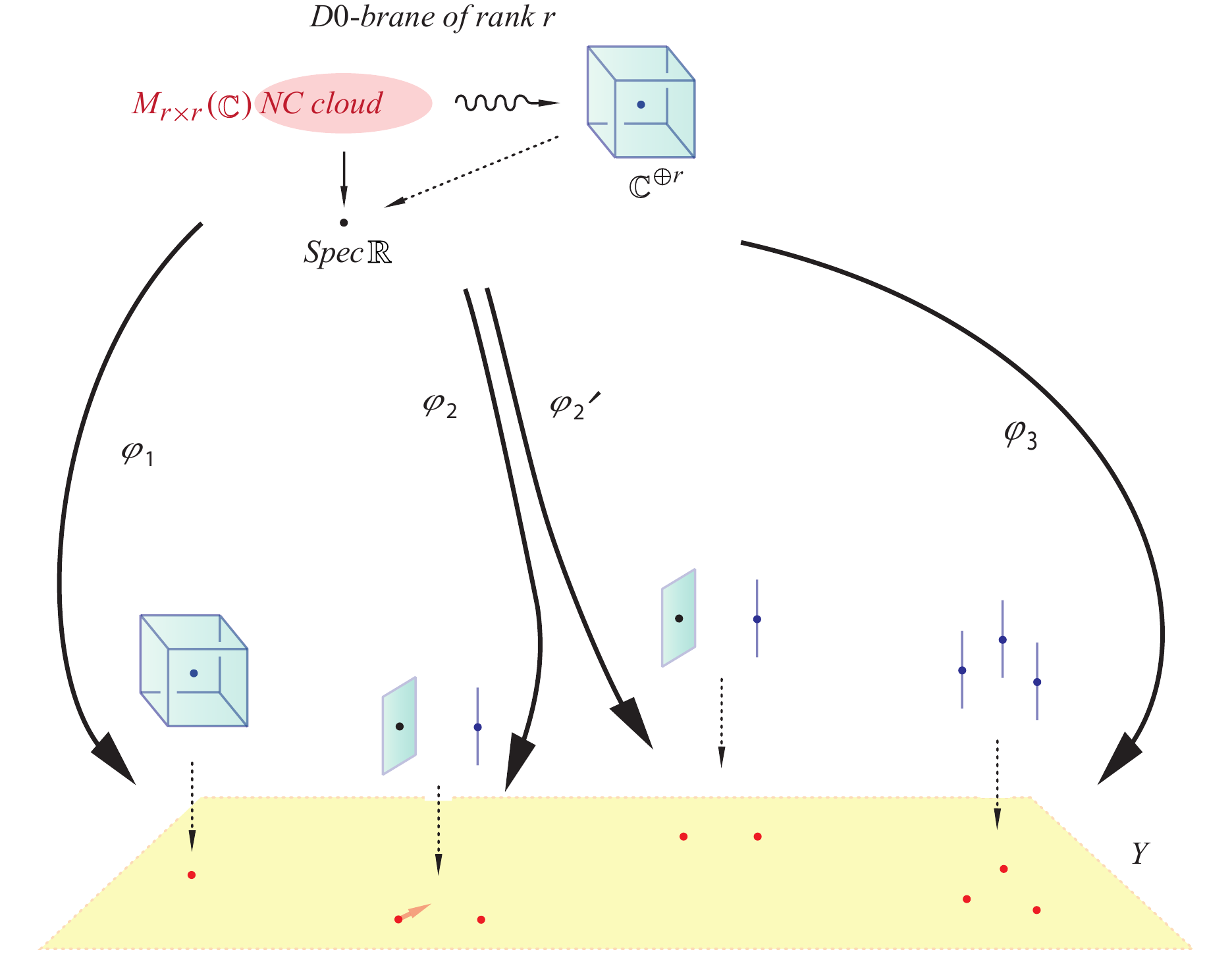}

  \bigskip
  \centerline{\parbox{13cm}{\small\baselineskip 12pt
   {\sc Figure}~2-1-2.
    Four examples of maps
     $\varphi:(p^{A\!z},{\Bbb C}^{\oplus r})\rightarrow {\Bbb R}^2$
      from an Azumaya/matrix point with a fundamental module to ${\Bbb R}^2$
	  are illustrated.
	The nilpotency of the image scheme $\Image\varphi$ in ${\Bbb R}^2$ is bounded by $r$.	
    In the figure, the push-forward $\varphi_{\ast}({\Bbb C}^{\oplus r})$
	of the fundamental module in each example is also indicated.
       }}
  \bigskip
 \end{figure}	
}\end{example}

\vspace{6em}

\begin{remark}{\rm [\hspace{.1ex}}D$p$-brane from smearing D$0$-branes{\rm\hspace{.1ex}]}\; {\rm
 Functionally, a D$p$-brane can be thought of as from smearing a jam of D$0$-branes along a $p$-cycle $X$.
 A map $\varphi: X^{\!A\!z}\rightarrow Y$ may be thought of
  as an $X$-family of maps $\varphi_x:p^{\!A\!z}\rightarrow Y$ from an Azumaya/matrix point $p^{\!A\!z}$ to $Y$.
 This gives another way to visualize $\varphi$,
    in addition to the picture by the surrogate $X_{\varphi}$ of $X^{\!A\!z}$ specified by $\varphi$.
  Cf.\ {\sc Figure}~2-1-3 and Example~2.1.7.
%

\begin{figure}[htbp]
 \bigskip
  \centering
  \includegraphics[width=0.80\textwidth]{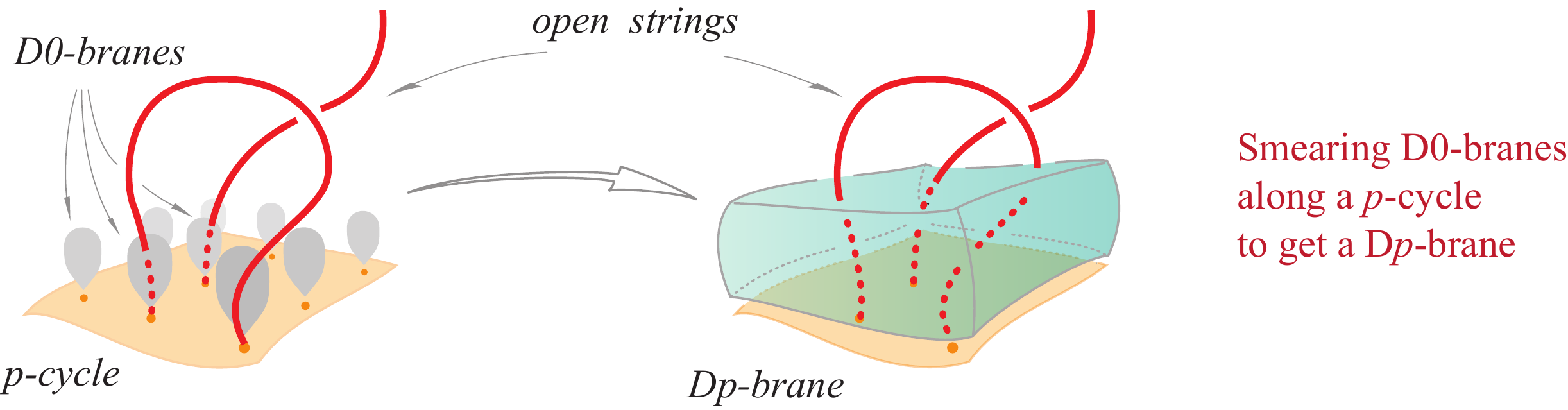}
 
  \bigskip
  \bigskip
 \centerline{\parbox{13cm}{\small\baselineskip 12pt
  {\sc Figure}~2-1-3.
  A D$p$-brane from smearing a jam of D$0$-branes along a $p$-cycle.
  }}
\end{figure}
}\end{remark}

\bigskip

\subsection{Compatibility between the map $\varphi$ and the connection $\nabla$ from the open-string aspect}
 
Recall ([DV-M: Proposition 3]; see ibidem and [L-Y4: Sec.\ 4] for more references and discussions) that
a connection $\nabla$ on a complex vector bundle $E$ over $X$ induces canonically a connection $D$ 
 on the endomorphism bundle $\End_{\Bbb C}(E)$
 ($\simeq E\otimes_{\Bbb C}E^{\vee}$ canonically, with $E^{\vee}$ the dual vector bundle of $E$).
Let
 $\pi_X^{\sharp}: C^{\infty}(X)\hookrightarrow C^{\infty}(\End_{\Bbb C}(E))$
   be the inclusion that follows from the inclusion of center
   $C^{\infty}(X)^{\Bbb C}\subset C^{\infty}(\End_{\Bbb C}(E))$,
 ${\cal T}_{\ast}X$, ${\cal T}_{\ast}X^{\!A\!z}$ be the tangent sheaf
  (i.e.\ sheaf of derivations) of $X$ and $X^{\!A\!z}$ respectively, and
 $\Innsheaf({\cal O}_X^{A\!z})$ be the sheaf of inner derivations of the structure sheaf
  ${\cal O}_X^{A\!z}$.
Then there is a natural exact sequence of ${\cal O}_X^{\,\Bbb C}$-modules
 $$
    0\; \longrightarrow\; \Innsheaf({\cal O}_X^{A\!z})\;
	     \longrightarrow \; {\cal T}_{\ast}X^{\!A\!z}\;
         \longrightarrow\; {\cal T}_{\ast}X^{\Bbb C}\;  \longrightarrow\; 0\,.
 $$
Furthermore,
 since
 $$
   D_{\tinybullet}(m_1m_2)\;=\; (D_{\tinybullet}m_1)\,m_2\, +\, m_1\,D_{\tinybullet}m_2
 $$
 for all $m_1,\, m_2\in C^{\infty}(\End_{\Bbb C}(E))$,
 the connection
    $\nabla:{\cal E}\rightarrow {\cal T}_{\ast}X\otimes_{{\cal O}_X^{\,\Bbb C}}{\cal E}$
 on ${\cal E}$ induces an embedding
 $$
   \xymatrix{
	 \iota^{\nabla}\;:\; {\cal T}_{\ast}X^{\Bbb C}\;  \ar@{^{(}->}[r]
	 & \;{\cal T}_{\ast}X^{\!A\!z}
	 }
 $$
 as ${\cal O}_X^{\,\Bbb C}$-modules,
   with  $\xi\mapsto  D_{\xi}$,
  that splits the above short exact sequence.
$D$ satisfies the property that
 $$
   \pi_X^{\sharp}\, df(\,\tinybullet\,) \;
     =\; D_{\tinybullet}\,\pi_X^{\sharp}(f)
 $$
 for all $f\in C^{\infty}(X)$.
All these serve as an indication that, in a sense, $X^{\!A\!z}$ is flatly uniform over $X$ differentially topologically.
Which renders the differential calculus on $X^{\!A\!z}$ accessible with respect to that on $X$,
 despite being noncommutative.

In contrast, when given a map $\varphi:(X^{\!A\!z},E)\rightarrow Y$,
 the surrogate $X_{\varphi}$ of $X^{\!A\!z}$ specified by $\varphi$,
  though a commutative $C^{\infty}$-scheme finite over $X$,
  may not be flat or uniform over $X$.
And it has nothing to do with $\nabla$ at all.
Since the fundamental (left) ${\cal O}_{X^{\!A\!z}}$-module ${\cal E}$
   descends canonically to an ${\cal O}_{X_{\varphi}}$-module
      $ _{{\cal O}_{X_{\varphi}}}{\cal E}$,
 one would like $\nabla$ induces canonically a connection $^{\varphi}\nabla$
  on $ _{{\cal O}_{X_{\varphi}}}{\cal E}$ at least over an open-dense subset of $X$.
When that happens, the curvature $F_{\nabla}$ of $\nabla$
 should behave also as a tensor, possibly with singularity, on $X_{\varphi}$.
  
These together motivate us the following definition:

\bigskip

\begin{definition} {\bf [admissible pair $(\varphi,\nabla)$]}$\;$ {\rm
 Let $\varphi:(X^{\!A\!z},E)\rightarrow Y$ be a differentiable map  and
  $\nabla$ be a connection on $E$.
 The pair $(\varphi,\nabla)$ is called {\it admissible} if the following two conditions are satisfied
  over an open-dense subset of $X$:
  %
  %
   $$
    (1)\;\;\;
     DA_{\varphi}\;\subset\; C^{\infty}(\Omega_X)\otimes_{C^{\infty}(X)}A_{\varphi}
	  \hspace{2em}\mbox{and}\hspace{2em}
	(2)\;\;\;
	 F_{\nabla}\;
	    \subset\;  C^{\infty}(\Omega_X^2)\otimes_{C^{\infty}(X)} \Comm(A_{\varphi})\,.
   $$
 Here,
  $\Comm(A_{\varphi})$ is the commutant of $A_{\varphi}$
     in $C^{\infty}(\End_{\Bbb C}(E))$.
 For convenience, we say also that
  $\varphi$ is an {\it admissible map} from $(X^{\!A\!z},E;\nabla)$ to $Y$,   or that
  $\varphi:(X^{\!A\!z},E)\rightarrow Y$ is a map that is {\it admissible} to $\nabla$ on $E$,   or that
  $\nabla$ is a connection on $E$ that is {\it admissible} to $\varphi:(X^{\!A\!z},E)\rightarrow Y$.
}\end{definition}

\bigskip

Further illuminations of Definition~2.2.1 are given in the following two remarks:

\bigskip

\begin{remark} {\rm [\hspace{.1ex}}on
      Admissibility Condition $(1):$ generic uniformality of $X_{\varphi}$
	  over $X${\rm \hspace{.1ex}]}$\;$
{\rm
 The Admissible Condition (1) says that
  \begin{itemize}
   \item[\LARGE $\cdot$] {\it
    The commutative $C^{\infty}(X)$-subalgebra $A_{\varphi}$
         of $C^{\infty}(\End_{\Bbb C}(E))$
      is covariantly invariant under the induced connection $D$ on $\End_{\Bbb C }(E)$
	  over an open-dense subset $U$ of $X$.}
  \end{itemize}
  This defines an embedding
    $$
	   \Der (C^{\infty}(U))\; \subset\;  \Der(A_{\varphi}|_U)
	$$
	and, hence, a connection on $X_{\varphi}|_U$ over $U$.
  In terms of the above inclusion,
  \begin{itemize}
   \item[\LARGE $\cdot$] {\it
    One can associate to a tensor of type $(0,d)$ on $X_{\varphi}|_U$
        an $A_{\varphi}|_U$-valued tensor of the same type $(0,d)$ on $U$,
    which is then canonically
      an $C^{\infty}(\End_{\Bbb C}(E|_U))$-valued tensor of type $(0,d)$  on $U$
      through the localization of the built-in embedding
	  $A_{\varphi}\subset C^{\infty}(\End_{\Bbb C}(E))$ to over $U$.}
  \end{itemize}
  %
%
%
%
}\end{remark}

\medskip
 
\begin{remark} {\rm [\hspace{.1ex}}on Admissibility Condition $(2):$
       massless condition on $\nabla$ with respect to open strings{\rm \hspace{.1ex}]}$\;$ {\rm
 When $D$-brane is treated as a fundamental dynamical object,
  its interaction with open strings is through its image in the space-time $Y$.
 For simplicity,
  assume that
    $\pr_Y:X\times Y\rightarrow Y$ pushes $\tilde{\cal E}_{\varphi}$
	   to $\varphi_{\ast}(\cal E)$ isomorphically and, hence,
    $X_{\varphi}\simeq\Supp(\varphi_{\ast}({\cal E}))$.
 Then, if $\nabla$ is to be massless from the aspect of open strings moving in $Y$,
   $\nabla$ must be descendable to a connection $_{\varphi}\!\nabla$
    on $_{{\cal O}_{X_{\varphi}}}{\cal E}$ over $X_{\varphi}$.
 When that happens,
  its curvature $F_{\,_{\varphi}\!\nabla}$ becomes a $(0,2)$-tensor on $X_{\varphi}$	
   and, hence,
  takes values in the commutant $\Comm(A_{\varphi})$ of
   $A_{\varphi}$ in $C^{\infty}(\End_{\Bbb C}(E))$.
 When, in addition, the Admissible Condition (1) holds,
   to $F_{\,_{\varphi}\!\nabla}$ is associated a $\End_{\Bbb C}(E)$-valued $2$-form on $X$,
   which is nothing but $F_{\nabla}$.
 Thus, one has the Admissible Condition (2):
   $F_{\nabla}  \subset
     C^{\infty}(\Omega_X^2)\otimes_{C^{\infty}(X)} \Comm(A_{\varphi})$.
 This reasoning indicates  that
   \begin{itemize}
    \item[\LARGE $\cdot$] {\it
   Admissible Condition (2) has a concrete open-string-theoretical meaning of
   requiring $\nabla$ to be massless from the viewpoint of open strings in $Y$ via $\varphi$.}
   \end{itemize}
}\end{remark}
 
%
%
%
%

\bigskip

\subsection{Self-adjoint/Hermitian maps from an Azumaya/matrix manifold with a Hermitian fundamental module}

When the complex vector bundle $E$ over $X$ is Hermitian,
 i.e.\ $E$ is equipped with a smooth map
    $$
	  \langle\:,\:\rangle\; :\;  E\times _XE \;\longrightarrow\; {\Bbb C}
	$$
  that gives a Hermitian inner product on each fiber of $E$ over $X$, 	
one can require that
  the connection $\nabla$ under consideration be unitary with respet to the Hermitian structure
 $\langle\:,\:\rangle$.
This is a compatibility condition between
  the connection $\nabla$ on $E$ and the Hermitian metric $\langle\:,\:\rangle$ on $E$
  in the sense that
   the parallel transport defined by $\nabla$ of a  pair of elements in a fiber of $E$ along a path in $X$
   would then preserve their inner product under $\langle\:,\:\rangle$.

Very naturally, one may ask:
 \begin{itemize}
  \item[{\bf Q.}] \parbox[t]{37em}{\it
   Is there a condition on maps $\varphi:(X^{\!A\!z},E)\rightarrow Y$ one can impose as well
     so that such $\varphi$ can be thought of as being compatible with $\langle\:,\:\rangle$ in some sense?}
 \end{itemize}
In this subsection, we answer this question affirmatively.

\bigskip

\begin{flushleft}
{\bf The adjoint $\varphi^{\dagger}$ of $\varphi$ with respect to $\langle\:,\:\rangle$ on $E$}
\end{flushleft}
The Hermitian structure $\langle\:,\:\rangle : E\times _XE  \rightarrow{\Bbb C}$
 induces an anti-linear isomorphism $E\simeq E^{\vee}$, $v\mapsto \langle v,\,\cdot\,\rangle$,
  as smooth complex vector bundles on $X$
  and hence an anti-linear anti-isomorphism
   $\End_{\Bbb C}(E)\simeq \End_{\Bbb C}(E^{\vee})$,
   as ${\Bbb C}$-algebra bundles, given by $s\mapsto s^{\dagger}$
   with
   $\langle s^{\dagger}(v), w \rangle = \langle v, s(w) \rangle$.
 With respect  to a local trivialization of $E$ by unitary frames with respect to $\langle\:,\:\rangle$,
  the adjoint $s^{\dagger}$ of $s$ is the transpose of the complex-conjugate of $s$.
  
\bigskip

\begin{lemma}
 Let $\varphi:(X^{\!A\!z},E)\rightarrow Y$ be a differentiable map defined by a ring-homomorphism
   $\varphi^{\sharp}:C^{\infty}(Y)\rightarrow C^{\infty}(\End_{\Bbb C}(E))$
   over ${\Bbb R}\subset {\Bbb C}$.
 With the above anti-linear anti-isomorphism,
 consider the specification
  $$
   \begin{array}{cccccc}
   {\varphi^{\sharp}}^{\dagger} & :
        & C^{\infty}(Y)   & \longrightarrow    & C^{\infty}(\End_{\Bbb C}(E))  \\[1.2ex]
	 && f  	 & \longmapsto    & (\varphi^{\sharp}(f))^{\dagger}  &,	
   \end{array}
  $$
 where $(\varphi^{\sharp}(f))^{\dagger}$ is the adjoint of $\varphi^{\sharp}(f)$
  with respect to $\langle\:,\:\rangle$.
 Then,
  ${\varphi^{\sharp}}^{\dagger}$
   is a ring-homomorphism over ${\Bbb R}\subset {\Bbb C}$.
\end{lemma}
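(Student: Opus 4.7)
The plan is to verify directly the ring-homomorphism axioms for $\varphi^{\sharp\dagger}$, using
(i) that $\varphi^{\sharp}$ is itself a ring-homomorphism over $\mathbb{R}\subset\mathbb{C}$, and
(ii) the standard formal properties of the adjoint: $(A+B)^{\dagger}=A^{\dagger}+B^{\dagger}$, $(AB)^{\dagger}=B^{\dagger}A^{\dagger}$, $(cA)^{\dagger}=\bar{c}\,A^{\dagger}$ for $c\in\mathbb{C}$, and $\mathrm{id}_{E}^{\dagger}=\mathrm{id}_{E}$, together with the fact that $\dagger$ is an involutive anti-linear anti-isomorphism of $C^{\infty}(\End_{\mathbb{C}}(E))$. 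These properties are exactly what the preceding paragraph records about the induced map $s\mapsto s^{\dagger}$, so the verification proceeds at a purely formal level.

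The only axiom that is not completely automatic is multiplicativity, because the adjoint reverses order. The resolution is the one decisive observation in the proof: since $C^{\infty}(Y)$ is a commutative ring and $\varphi^{\sharp}$ is a ring-homomorphism, the image $\varphi^{\sharp}(C^{\infty}(Y))\subset C^{\infty}(\End_{\mathbb{C}}(E))$ is a commutative subalgebra. Consequently, for any $f,g\in C^{\infty}(Y)$,
\[
  (\varphi^{\sharp}(fg))^{\dagger}
  \;=\;(\varphi^{\sharp}(f)\,\varphi^{\sharp}(g))^{\dagger}
  \;=\;\varphi^{\sharp}(g)^{\dagger}\,\varphi^{\sharp}(f)^{\dagger}
  \;=\;\varphi^{\sharp}(f)^{\dagger}\,\varphi^{\sharp}(g)^{\dagger}\,,
\]
where the last equality uses that taking $\dagger$ of the identity $\varphi^{\sharp}(f)\varphi^{\sharp}(g)=\varphi^{\sharp}(g)\varphi^{\sharp}(f)$ shows the adjoints commute as well. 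This gives $\varphi^{\sharp\dagger}(fg)=\varphi^{\sharp\dagger}(f)\,\varphi^{\sharp\dagger}(g)$.

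For the remaining axioms: additivity follows from $\mathbb{R}$-linearity of $\varphi^{\sharp}$ together with $(A+B)^{\dagger}=A^{\dagger}+B^{\dagger}$; the unit $1\in C^{\infty}(Y)$ is sent by $\varphi^{\sharp}$ to $\mathrm{id}_{E}$, whose adjoint is $\mathrm{id}_{E}$, so $\varphi^{\sharp\dagger}(1)=1$; and $\mathbb{R}$-linearity of $\varphi^{\sharp\dagger}$ is where the phrase ``over $\mathbb{R}\subset\mathbb{C}$'' is essential: although $\dagger$ is only anti-linear over $\mathbb{C}$, for real scalars $c\in\mathbb{R}$ one has $\bar{c}=c$, so $(c\,\varphi^{\sharp}(f))^{\dagger}=c\,\varphi^{\sharp}(f)^{\dagger}$, and $\mathbb{R}$-linearity descends. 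I do not expect any genuine obstacle here; the only point where one might slip is conflating $\mathbb{C}$-linearity with $\mathbb{R}$-linearity, and the statement is deliberately phrased to sidestep this (note that on $\mathbb{C}$-valued functions the analogous statement would fail, consistent with Proposition~2.1.3(1) which records the realness nature of $\varphi^{\sharp}$).
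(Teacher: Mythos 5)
Your proof is correct and uses the same key idea as the paper: $\dagger$ composed with $\varphi^{\sharp}$ is a ring-anti-homomorphism, and the commutativity of $C^{\infty}(Y)$ collapses the distinction between anti-homomorphism and homomorphism. The paper simply commutes $fg=gf$ in the source ring and cites the anti-homomorphism property, while you commute the adjoints in the (commutative) image and also spell out additivity, unit, and $\mathbb{R}$-linearity, which the paper takes for granted; either route is fine.
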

 
\medskip

\begin{proof}
 Since
    $(m_1m_2)^{\dagger}=m_2^{\dagger}m_1^{\dagger}$
	   for all $m_1, m_2 \in C^{\infty}(\End_{\Bbb C}(E))$,
  ${\varphi^{\sharp}}^{\dagger}$ is a ring-anti-homomorphism by nature.
 However, $C^{\infty}(Y)$ is commutative;
  this renders ${\varphi^{\sharp}}^{\dagger}$ a ring-homomorphism.
  
\end{proof}
 
\bigskip

It follows that
 ${\varphi^{\sharp}}^{\dagger}:
        C^{\infty}(Y)\rightarrow C^{\infty}(\End_{\Bbb C}(E))$
 defines a differentiable map
  $$
    \varphi^{\dagger}\;:\; (X^{\!A\!z},E)\;\longrightarrow\; Y\,.
  $$		
   
\bigskip

\begin{definition}  {\bf [adjoint/Hermitian conjugate of $\varphi$]}$\;$ {\rm
  The map $\varphi^{\dagger}:(X^{\!A\!z},E)\rightarrow Y$  thus defined
   is called the {\it adjoint}, or synonymously the {\it Hermitian conjugate}, of $\varphi$
   with respect to the Hermitian structure $\langle\:,\:\rangle$ on $E$.
}\end{definition}

\medskip

\begin{lemma} {\bf [basic properties of $\varphi^{\dagger}$]}$\;$
  $(1)$
    The support of the graph of $\varphi^{\dagger}$ and $\varphi$ are identical, i.e.\
     $\varGamma_{\varphi^{\dagger}}= \varGamma_{\varphi}$  as subschemes of $X\times Y$.
  $\;(2)$	
    The graph of $\varphi^{\dagger}$ and the graph of $\varphi$ differ by an antiisomorphism, i.e.\
	 $\tilde{\cal E}_{\varphi^{\dagger}}\simeq \overline{\tilde{\cal E}_{\varphi}}$
	   as ${\cal O}_{X\times Y}^{\,\Bbb C}$-modules.
	
   As a consequence, 	
	  $\Image \varphi =\Image \varphi^{\dagger}$ as subschemes of $Y$,  and
	  $\varphi^{\dagger}_{\;\ast}(\cal E)
	     \simeq \overline{\varphi_{\ast}(\cal E)}$ as ${\cal O}_Y^{\,\Bbb C}$-modules.		
\end{lemma}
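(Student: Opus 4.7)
The plan is to derive both parts from a single identity between the ring homomorphisms $\tilde\varphi^{\sharp}$ and $\tilde{\varphi^{\dagger}}^{\sharp}$, and then push the relevant consequences forward along $\pr_{Y}$. First I would establish the identity
$$
  \tilde{\varphi^{\dagger}}^{\sharp}(h)\; =\; \bigl(\tilde{\varphi}^{\sharp}(\bar h)\bigr)^{\dagger}
   \qquad\text{for every } h \in C^{\infty}(X\times Y)\otimes_{\Bbb R}{\Bbb C}.
$$
On a coordinate function $y^{j}$ this is the definition of $\varphi^{\dagger}$; on polynomials in the $y^{j}$'s it follows because $\dagger$ is a ${\Bbb C}$-antilinear anti-involution and $C^{\infty}(Y)$ is commutative, so the reversal of order produced by $\dagger$ is absorbed by the commutativity of $A_{\varphi}$ inside $C^{\infty}(\End_{\Bbb C}(E))$. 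The extension to general smooth $h$ then follows from the uniqueness half of Proposition~2.1.6: both sides are ring homomorphisms over ${\Bbb R}\subset{\Bbb C}$ agreeing on the coordinate generators $y^{j}$ and on the central generators from $C^{\infty}(X)$ (which are self-adjoint, since they are real-eigenvalued scalar multiples of the identity).

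Given this identity, part~(1) is immediate. Since $s=0\Leftrightarrow s^{\dagger}=0$ in $C^{\infty}(\End_{\Bbb C}(E))$, restricting to real-valued $f$ yields $\ker\tilde{\varphi^{\dagger}}^{\sharp} = \ker\tilde{\varphi}^{\sharp}$ as ideals of $C^{\infty}(X\times Y)$. The $C^{\infty}$-scheme-theoretic supports of the graphs therefore coincide,
$$
 \varGamma_{\varphi^{\dagger}}\;=\;\Spec^{\Bbb R}\!\bigl(C^{\infty}(X\times Y)/\ker\tilde{\varphi^{\dagger}}^{\sharp}\bigr)\;=\;\Spec^{\Bbb R}\!\bigl(C^{\infty}(X\times Y)/\ker\tilde{\varphi}^{\sharp}\bigr)\;=\;\varGamma_{\varphi}
$$
as subschemes of $X\times Y$, and pushing forward under $\pr_{Y}$ gives the first stated consequence $\Image\varphi = \Image\varphi^{\dagger}$.

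For part~(2), the Hermitian pairing $\langle\,,\,\rangle$ supplies a ${\Bbb C}$-antilinear sheaf isomorphism $\Theta:{\cal E}\to\overline{\cal E}$ which intertwines the left action by $s$ on ${\cal E}$ with the left action by $s^{\dagger}$ on $\overline{\cal E}$, i.e.\ $\Theta\circ s = s^{\dagger}\circ\Theta$ for every local section $s$ of $\End_{\Bbb C}(E)$; in a local unitary frame this is the matrix identity $\bar M = (M^{\dagger})^{T}$ once the canonical identification $\overline{\cal E}\simeq{\cal E}^{\vee}$ provided by $\langle\,,\,\rangle$ has been applied. Combined with the identity of Step~1, $\Theta$ then carries the action $h\cdot v=(\tilde{\varphi}^{\sharp}(\bar h))^{\dagger}\,v$ defining $\tilde{\cal E}_{\varphi^{\dagger}}$ to the action $h\cdot v=\tilde{\varphi}^{\sharp}(\bar h)\,v$ on the conjugated ${\Bbb C}$-structure, which is by definition $\overline{\tilde{\cal E}_{\varphi}}$; the antilinearity of $\Theta$ matches the conjugation of the ${\Bbb C}$-scalar structure. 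Finally, pushing forward along the real map $\pr_{Y}$ commutes with complex conjugation of ${\cal O}^{\,\Bbb C}$-modules, so the module consequence $\varphi^{\dagger}_{\ast}({\cal E})\simeq\overline{\varphi_{\ast}({\cal E})}$ follows.

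The main obstacle lies in Step~3: giving a frame-independent definition of the complex-conjugate sheaf $\overline{\tilde{\cal E}_{\varphi}}$ as an ${\cal O}_{X\times Y}^{\,\Bbb C}$-module and a canonical construction of $\Theta$ with the claimed intertwining property. Once the bookkeeping of which ${\Bbb C}$-structures --- on functions, on operators in $\End_{\Bbb C}(E)$, and on sections of $E$ --- are being conjugated is fixed, the verification reduces to the anti-involution property of $\dagger$ together with the commutativity of $A_{\varphi}$, and part~(1) together with its consequence $\Image\varphi=\Image\varphi^{\dagger}$ is essentially a formal corollary.
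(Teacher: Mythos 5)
Part (1) and your key identity $\tilde{\varphi^{\dagger}}^{\sharp}(h)=\bigl(\tilde{\varphi}^{\sharp}(\bar h)\bigr)^{\dagger}$ are correct and yield the kernel equality cleanly; this is essentially the paper's observation that $A_{\varphi^{\dagger}}=(A_{\varphi})^{\dagger}\simeq A_{\varphi}$ as $C^{\infty}(X)$-algebras, spelled out in a bit more detail. The consequences $\Image\varphi=\Image\varphi^{\dagger}$ and $\varphi^{\dagger}_{\ast}({\cal E})\simeq\overline{\varphi_{\ast}({\cal E})}$ do follow by pushing forward along $\pr_{Y}$.

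The gap is in Step~3, and it is more than bookkeeping. You want a ${\Bbb C}$-antilinear sheaf automorphism $\Theta$ of ${\cal E}$ (equivalently a ${\Bbb C}$-linear isomorphism ${\cal E}\to\overline{\cal E}$) satisfying $\Theta\circ s=s^{\dagger}\circ\Theta$ for every $s\in A_{\varphi}$, because only then does $\Theta$ carry the action $h\cdot v=\tilde{\varphi^{\dagger}}^{\sharp}(h)v=(\tilde{\varphi}^{\sharp}(\bar h))^{\dagger}v$ to the action $h\cdot v=\tilde{\varphi}^{\sharp}(\bar h)v$ on the same underlying sheaf with conjugated ${\Bbb C}$-structure. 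But the $\Theta$ you propose --- the Hermitian pairing $v\mapsto\langle v,\cdot\,\rangle:{\cal E}\to{\cal E}^{\vee}$ followed by the ${\Bbb C}$-linear Riesz identification ${\cal E}^{\vee}\simeq\overline{\cal E}$ also furnished by $\langle\,,\,\rangle$ --- is, as a map of underlying real sheaves, simply the identity of ${\cal E}$. As such it satisfies $\Theta\circ s=s\circ\Theta$, not $\Theta\circ s=s^{\dagger}\circ\Theta$, and the two are equal only when $\varphi$ is Hermitian. What the Hermitian pairing does give for free is an intertwiner between the left $A_{\varphi^{\dagger}}$-action on ${\cal E}$ and the \emph{right dual} $A_{\varphi}$-action on ${\cal E}^{\vee}$; to finish from there you still must know that ${\cal E}^{\vee}$ with that dual $A_{\varphi}$-module structure is isomorphic (over $C^{\infty}(X\times Y)$) to ${\cal E}$ with the original $A_{\varphi}$-module structure, i.e.\ that the module is self-dual. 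That self-duality is exactly what the paper's observation --- that a matrix with all-real eigenvalues and its adjoint have the same Jordan form, hence define isomorphic cyclic module types fiber by fiber --- is there to supply. Your proposal flags Step~3 as the main obstacle, which is the right instinct, but the Hermitian pairing alone does not clear it: you need either the Jordan-form/self-duality input or some other mechanism producing a simultaneous conjugation of $A_{\varphi}$ to $A_{\varphi}^{\dagger}$, together with an argument (absent from both your proposal and, frankly, from the paper's terse proof) that the resulting pointwise isomorphisms can be chosen to vary smoothly over $X$.
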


\medskip

\begin{proof}
 Statement (1) follows from the observation that, as $C^{\infty}(X)$-algebras,
  $A_{\varphi}\simeq A_{\varphi^{\dagger}}$.
 Statement (2) follows from
   the observation that
     if two matrices $m_1$ and  $m_2$ have their eigenvalues all real and $m_2=m_1^{\dagger}$,
	 then, their Jordan form can be made identical,
  and
 the fact that, by construction,
  $A_{\varphi^{\dagger}}= (A_{\varphi})^{\dagger} $
   as subalgebras of $C^{\infty}(\End_{\Bbb C}(E))$.

\end{proof}

\bigskip

\begin{flushleft}
{\bf Self-adjoint/Hermitian maps}
\end{flushleft}
 Let $E$ be equipped with a Hermitian structure $\langle\:,\:\rangle$.
 
\begin{definition} {\bf [self-adjoint/Hermitian map]}$\;$ {\rm
 A differentiable map $\varphi:(X^{\!A\!z},E)\rightarrow Y$ is called
  {\it self-adjoint},
    or synonymously {\it Hermitian}, with respect to $\langle\:,\:\rangle$
   if $\,\varphi^{\dagger}=\varphi$.
}\end{definition}

\begin{lemma} {\bf [characterization by coordinate functions]}$\;$
 Let
   $y^1,\, \cdots\,, y^n$ be a set of coordinate functions of ${\Bbb R}^n$ as a $C^{\infty}$-manifold
   and $\varphi:(X^{\!A\!z},E)\rightarrow {\Bbb R}^n$ be a differentiable map.
 Then $\varphi$ is self-adjoint if and only if
   each of $\varphi^{\sharp}(y^i)$, $i=1,\,\ldots\,,n$, is Hermitian.
\end{lemma}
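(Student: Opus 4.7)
The plan is to reduce the lemma to the uniqueness clause of Proposition~2.1.6. The forward implication is essentially a definition chase: assuming $\varphi^\dagger = \varphi$ gives ${\varphi^\sharp}^\dagger = \varphi^\sharp$ as ring-homomorphisms $C^\infty({\Bbb R}^n) \to C^\infty(\End_{\Bbb C}(E))$, and evaluating both sides on each coordinate function $y^i$ yields $(\varphi^\sharp(y^i))^\dagger = \varphi^\sharp(y^i)$, which is the statement that $\varphi^\sharp(y^i)$ is Hermitian. No subtlety enters here.

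For the reverse implication, I would set $m_i := \varphi^\sharp(y^i)$ and argue by the unique-extension property of Proposition~2.1.6. First I would verify that the $m_i$ satisfy the two hypotheses of that proposition: pairwise commutativity $m_i m_j = m_j m_i$ follows from $\varphi^\sharp$ being a ring-homomorphism out of the commutative ring $C^\infty({\Bbb R}^n)$, while the reality of the eigenvalues of each $m_i|_x$ is automatic from Proposition~2.1.3(1) (and is also obvious here since $m_i$ is assumed Hermitian with respect to a positive-definite inner product). Hence Proposition~2.1.6 applies and produces a \emph{unique} ring-homomorphism $C^\infty({\Bbb R}^n) \to C^\infty(\End_{\Bbb C}(E))$ over ${\Bbb R} \subset {\Bbb C}$ sending $y^i$ to $m_i$.

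Now I would observe that both $\varphi^\sharp$ and ${\varphi^\sharp}^\dagger$ are such ring-homomorphisms: for $\varphi^\sharp$ this is tautological, while ${\varphi^\sharp}^\dagger$ is a ring-homomorphism over ${\Bbb R} \subset {\Bbb C}$ by Lemma~2.3.1, and on each $y^i$ it takes the value $(\varphi^\sharp(y^i))^\dagger = m_i^\dagger = m_i$ by the Hermitian assumption. Uniqueness then forces ${\varphi^\sharp}^\dagger = \varphi^\sharp$, i.e.\ $\varphi^\dagger = \varphi$.

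The only nontrivial input is Proposition~2.1.6 (which itself rests on the Malgrange Division Theorem); once that is granted, there is no real obstacle. The only thing to be careful about is ensuring that the adjoint operation $s \mapsto s^\dagger$ genuinely preserves ring-homomorphism data over ${\Bbb R} \subset {\Bbb C}$ (as opposed to merely ring-anti-homomorphism data), but this is precisely the content of Lemma~2.3.1 and uses the commutativity of the target coordinates on ${\Bbb R}^n$.
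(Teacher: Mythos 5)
Your proof is correct, and the reverse implication is packaged differently from the paper's. The paper proves the if-part by \emph{opening the box} of Proposition~2.1.6: it invokes the Malgrange Division Theorem to write each germ of $\varphi^{\sharp}(f)$ as a polynomial of bounded multi-degree in $\varphi^{\sharp}(y^1),\dots,\varphi^{\sharp}(y^n)$ with real-smooth-function coefficients, and then closes by noting that sums and products of \emph{commuting} Hermitian matrices (with real scalar coefficients) are again Hermitian, so $\varphi^{\sharp}(f)$ is Hermitian for every $f$, i.e.\ $(\varphi^{\sharp})^{\dagger}=\varphi^{\sharp}$. You instead use Proposition~2.1.6 as a \emph{black box}: you check that $m_i:=\varphi^{\sharp}(y^i)$ satisfies the two hypotheses (commutativity comes for free from $\varphi^{\sharp}$ being a homomorphism out of a commutative ring; eigenvalue reality from Proposition~2.1.3(1) or from Hermiticity), observe via Lemma~2.3.1 that $(\varphi^{\sharp})^{\dagger}$ is another ring-homomorphism over ${\Bbb R}\subset{\Bbb C}$ sending $y^i\mapsto m_i^{\dagger}=m_i$, and then conclude by the \emph{uniqueness} clause. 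The underlying engine is the same (the Generalized Division Lemma), but your packaging is cleaner: it spares you the explicit "commuting Hermitian matrices are closed under polynomials with real coefficients" argument, while the paper's version has the minor virtue of showing directly that $\varphi^{\sharp}$ actually lands in the Hermitian locus $C^{\infty}(\SAd(E,\langle\,,\,\rangle))$ (cf.\ Remark~2.3.7), which is information the uniqueness argument conceals.
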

 
\begin{proof}
 We only need to prove the if-part.
 This follows from the proof of Proposition~2.1.6,
  cf.\ [L-Y6: Theorem 3.2.1]	(D(11.3.1)), reviewed in Sec.~4.1.
 In essence, as a consequence of the Malgrange Division Theorem,
  for any $f\in C^{\infty}(Y)$ and at the level of germs over $X$,
    $\varphi^{\sharp}(f)$ is expressible as a polynomial
	$P(\varphi^{\sharp}(y^1),\,\cdots\,,\, \varphi^{\sharp}(y^n))$
	with coefficients elements in a germ of smooth functions on $X$.
 The multi-degree of $P$ $\le (r-1,\,\cdots\,, r-1)$
   and the coefficients depend on $f$ and location of the germ on $X$.
 Since the addition and the multiplication of commuting Hermitian matrices remain Hermitian,
   $\varphi^{\sharp}(f)$ is Hermitian,
   i.e.\ $(\varphi^{\sharp}(f))^{\dagger}=\varphi^{\sharp}(f)$.
 As $f\in C^{\infty}(Y)$ is arbitrary,
   this implies that $(\varphi^{\sharp})^{\dagger}=\varphi^{\sharp}$.
 This concludes the proof.
   
\end{proof}

\medskip

\begin{lemma} {\bf [$X_{\varphi}$ generically reduced]}$\;$
 Let $\varphi:(X^{\!A\!z},E;\nabla)\rightarrow Y$ be a Hermitian map.
 Then $X_{\varphi}$ is generically reduced.
\end{lemma}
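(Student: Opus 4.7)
The plan is to reduce the claim to a smooth simultaneous diagonalization of a commuting family of Hermitian matrix-valued functions on an open dense subset of $X$, and then read off reducedness from the resulting local-product description of $A_{\varphi}$. Since reducedness is a local condition on $Y$, I may work in a coordinate chart $(y^1,\ldots,y^n)$ of $Y$. Set $m_i := \varphi^{\sharp}(y^i) \in C^{\infty}(\End_{\Bbb C}(E))$ for $i=1,\ldots,n$; then $A_{\varphi}$ is generated over $C^{\infty}(X)$ by $m_1,\ldots,m_n$. By Lemma~2.3.5 each $m_i$ is pointwise Hermitian, and commutativity of $A_{\varphi}$ gives $m_im_j=m_jm_i$. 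Hence at every $x \in X$ the tuple $(m_1(x),\ldots,m_n(x))$ is a family of pairwise commuting Hermitian matrices on $E|_x$ and is simultaneously unitarily diagonalizable. Writing $k(x)$ for the number of distinct joint-eigenvalue $n$-tuples $\lambda^{(\alpha)}(x) \in {\Bbb R}^n$, $\alpha=1,\ldots,k(x)$, the pointwise image subalgebra of $\End_{\Bbb C}(E|_x)$ generated by these matrices is isomorphic to ${\Bbb C}^{k(x)}$ and is in particular reduced.

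Next I would identify the generic locus on $X$. Under continuous perturbation, distinct joint eigenvalues can collide but cannot split, so $k$ is integer-valued and lower semi-continuous; a standard argument then shows that $k$ is locally constant on an open dense subset $U \subset X$. On any sufficiently small $U' \subset U$ with $k \equiv k_0$, the $k_0$ joint-eigenvalue tuples are separated, so the implicit function theorem (applied, for instance, to the characteristic polynomial of a generic linear combination $\sum_{i=1}^n c_i m_i$) produces smooth functions $\lambda_i^{(\alpha)} \in C^{\infty}(U')$, and the joint eigenspaces
$$
  E^{(\alpha)}\;:=\;\bigcap_{i=1}^{n}\ker\bigl(m_i\,-\,\lambda_i^{(\alpha)}\cdot\id_E\bigr)\;\subset\;E|_{U'}
$$
assemble into a smooth direct-sum decomposition $E|_{U'}=\bigoplus_{\alpha=1}^{k_0}E^{(\alpha)}$. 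The corresponding joint spectral projectors can be written as polynomials in the $m_i$ whose coefficients depend smoothly on $x$ through the $\lambda_i^{(\alpha)}$, which is what upgrades the pointwise diagonalization to a smooth one.

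With respect to this smooth joint diagonalization, each $m_i|_{U'}$ acts on $E^{(\alpha)}$ as the scalar $\lambda_i^{(\alpha)}\cdot\id$, so $A_{\varphi}|_{U'}$ is realised as a $C^{\infty}(U')$-subring of the reduced $C^{\infty}$-ring $\bigoplus_{\alpha=1}^{k_0}C^{\infty}(U')$. It follows that $A_{\varphi}|_U$ is reduced, hence $X_{\varphi}|_{U}$ is reduced, and therefore $X_{\varphi}$ is generically reduced.

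The hard part I anticipate is precisely this upgrade from pointwise to smooth simultaneous diagonalization on $U$: maximality of $k$ on $U$ is what guarantees separation of the joint spectrum there, and only under separation do the implicit function theorem and the polynomial joint-spectral projectors yield smooth joint eigenvalues and smooth joint-eigenspace sub-bundles of $E|_{U'}$. Away from $U$ the scheme $X_{\varphi}$ can genuinely acquire nilpotents in its fibers when joint eigenvalues collide (compare the collision limits of D$0$-branes in Example~2.1.7), so generic reducedness is the sharp statement.
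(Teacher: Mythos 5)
Your proof is correct and rests on the same key fact as the paper's: commuting Hermitian matrices are simultaneously unitarily diagonalizable, so the fiber algebras $A_{\varphi}|_p$ are reduced. The paper stops there and writes ``the lemma follows,'' whereas you spell out the implicit step — that pointwise reducedness of fibers upgrades to reducedness of the $C^{\infty}$-scheme $X_{\varphi}$ over an open dense locus — by exhibiting, on the open dense set where the joint-eigenvalue count $k$ is locally maximal, a smooth simultaneous diagonalization and an embedding of $A_{\varphi}|_{U'}$ into the reduced ring $\bigoplus_{\alpha}C^{\infty}(U')$; this extra work is exactly what is needed to justify the word ``generically'' and makes your argument a complete version of the paper's terse one.
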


\medskip

\begin{proof}
 Commutating Hermitian matrices are simultaneously diagonalizable by a common unitary frame.
 Thus over each $p\in X$,
  the finite-dimensional ${\Bbb R}$-algebra $A_{\varphi}|_p$ has no nilpotent elements.
 The lemma follows.

\end{proof}

\medskip
  
\begin{remark} {\rm [\hspace{.1ex}}meaning of enhanced Lie-algrebra-valued massless spectrum
       on D-brane world-volume{\rm\hspace{.1ex}]}$\;$ {\rm
 Recall from [Po3], [Po4], [Wi] that for coincident D-brane world-volume $X$ of multiplicity $r$,
  the massless spectrum thereupon from excitations of oriented open strings consists of
  a $u(r)$-valued gauge field and a $u(r)$-valued scalar field.
 The former corresponds to $(E,\nabla)=$ a Hermitian vector bundle with a unitary connection,
  which describes the Spin-$1$ degrees of freedom on the D-brane
 while the latter corrsponds a self-adjoint map $\varphi:(X^{\!A\!z},E)\rightarrow Y$
   that describes the Spin-$0$ degrees of freedom on the D-brane.
 This gives a precise interpretation of the related paragraphs in the above work of Polchinski and Witten.
 Here, the Lie algebra $u(r)$ from the unitary group $U(r)$
  is identified with the fibers of
  the ${\Bbb R}$-vector subbundle $\SAd(E, \langle\,,\,\rangle)$ of $\End_{\Bbb C}(E)$
  that consists of self-ajoint endomorphisms on fibers of $(E,\langle\,,\,\rangle)$.
 While $\SAd(E)$ is {\it not} a bundle of rings,
  it makes sense to talk about {\it
   ring-homomorphisms
     $\varphi^{\sharp}:C^{\infty}(Y)\rightarrow C^{\infty}(\End_{\Bbb C})$
	 with values in $C^{\infty}(\SAd(E,\langle\,,\,\rangle))$}.
 They define precisely the self-adjoint maps in Definition~2.3.4.
}\end{remark}

\medskip

\begin{remark} {\rm[\hspace{.1ex}}admissible Hermitian map{\rm\hspace{.1ex}]}$\;$
 An admissible Hermitian map $\varphi:(X^{\!A\!z},E;\nabla)\rightarrow Y$
  has the special property that
    over some open-dense subset $U\subset X$,
	 $X_{\varphi}|_U$ is a covering space over $U$
     under the built-in map $X_{\varphi}|_U\rightarrow U$
   and that
    $E|_U$ has an orthogonal decomposition,
    	from the built-in isomorphism
		${\cal E}\simeq
		   {\pi_{\varphi}}_{\ast}(_{{\cal O}_{X_{\varphi}}}{\cal E})$,	
	with each summand covariantly invariant under $\nabla$.
 They should be studied in more detail.
 Cf.\ Remark~3.2.5 and Remark~5.3.3.
\end{remark}
 
\bigskip

\noindent
{\bf Convention.} {\rm
 For simplicity and a better focus on other issues that also occur,
  we'll assume for the rest of the work that
  the Admissibility Conditions (1) and (2) in Definition~2.1.1 apply to all over $X$.
}

\bigskip

\section{The Dirac-Born-Infeld action for differentiable maps from\\ Azumaya/matrix manifolds}

Recall from Sec.\ 1 Issues (2) -- (6) in the list one needs to understand to make sense of
 the formal expression of the Dirac-Born-Infeld action for stacked D-branes
 $$
   S_{\DBI}^{(\Phi,g,B)}(\varphi,\nabla)\;
    \stackrel{\mbox{\tiny formally }}{=}\;
     -T_{m-1}\,\int_X \Tr\left(\rule{0ex}{1em}\right.
	                       e^{-\varphi^{\ast}(\Phi)}
						    \sqrt{-\,
						     \Det_X(
							  \varphi^{\ast}(g+B)\,
						       +\, 2\pi\alpha^{\prime}F_{\nabla}
							    )\,}
	                          \left.\rule{0ex}{1em}\right)\,.
 $$
 \begin{itemize}
  \item[(2)]  [{\it push-pull of tensor under $\varphi$}$^{\,}$]\\
  The notion of {\it push-pulls under a differentiable map $\varphi$}
     from an Azumaya/matrix manifold with a fundamental module to a real manifold;
  cf.\ $\varphi^{\ast}(g+B)$.	
	
 \item[(3)] [{\it determinant of $2$-tensor on $X$}$^{\,}$]\\
  The notion of {\it determinant} $\Det_X(\,\cdots\,)$ in the current context
   once Issues (1) and (2) are resolved.
   
 \item[(4)] [{\it square root of matrix section}$^{\,}$]\\
 Can we take a {\it square root of a matrix}?
  When the answer is Yes, is there a unique square root?
  If not, which one to choose?
 Extension of this to {\it matrix sections}\hspace{.1ex}?
 
 \item[(5)]  [{\it dilaton-field factor}$^{\,}$]\\
   How does {\it the factor $e^{-\varphi^{\sharp}(\Phi)}$}
    influence the interpretation of the formal expression?
	
 \item[(6)] [{\it real-valuedness}$^{\,}$]\\
  Is the expression {\it real}$^{\,}$?
 \end{itemize}
 
We now proceed to resolve all these issues (Sec.\ 3.1) and, hence,
 construct the Dirac-Born-Infeld action for admissible pairs $(\varphi,\nabla)\;$ (Sec.\ 3.2).

\bigskip

\subsection{The resolution of issues toward defining the Dirac-Born-Infeld action}

In this subsection we resolve Issues (2) -- (6) in the list subsubsection  by subsubsection.

\bigskip
 
\subsubsection{The pull-back of tensors from the target space via commutative surrogates}

We introduce the notion of `pull-push' that works naturally for $(\varphi,\nabla)$ admissible
 and then discuss its basic properties and introduce its characteristic tensors.
For the purpose of the current work,
 all the tensors on $Y$ considered are of type $(0,\,\tinybullet\,)$,
  i.e.\ sections of $\otimes^{\tinybullet}\,T^{\ast}Y$.

\bigskip

\begin{flushleft}
{\bf The notion of `pull-push' }
\end{flushleft}
By itself, there is no problem to define the notion of tensors on an abstract ``space" associated to
 a general (unital, associative but not necessarily commutative) ring.
However, when compared  with the definition for the same on a space associated to a commutative ring,
 the latter is a quotient of the former with additional relators
  arising from the commutativity relation of the commutative ring.
Due to this,
 for a map between two spaces, with each associated to a ring,
 $$
   f\::\; \Space(R)\; \longrightarrow\; \Space(S)\,
 $$
 defined through a ring-homomorphism $f^{\sharp}:S\rightarrow R$,
 with $S$ commutative and $R$ noncommutative,
 there is no canonical/natural notion of a pull-back $f^{\ast}$ that takes tensors on $\Space(S)$
  to that on $\Space(R)$.
(See [L-Y4: Sec.\ 4.1] (D(11.1)), in particular, [ibidem: Example~4.1.20] for more details.)
This is what happens in our situation for a map $\varphi: (X^{\!A\!z},E)\rightarrow Y$,
 defined by a ring-homomorphism
  $\varphi^{\sharp}:C^{\infty}(Y)\rightarrow C^{\infty}(\End_{\Bbb C}(E))$
  over ${\Bbb R}\subset{\Bbb C}$.

On the other hand, while $X^{\!A\!z}$ is noncommutative,
 the surrogate $X_{\varphi}$ of $X^{\!A\!z}$ associated to $\varphi$ is commutative
  and fits into the following diagram that is canonically associated to $\varphi$
  $$
    \xymatrix{
	 X^{\!A\!z} \ar[rrrd]^-{\varphi} \ar@{->>}[d]   \\
	 X_{\varphi} \ar[rrr]_-{f_{\varphi}} \ar@{->>}[d]^-{\pi_{\varphi}}         &&& \;Y\;. \\
	 X
     }
  $$
The existing notion of pull-back of tensors can be applied to define the pull-back
 $f_{\varphi}^{\ast}\varXi$ on $X_{\varphi}$ of a tensor $\varXi$ on $Y$ under $f_{\varphi}$.
Furthermore,
 when $E$ is equipped with a connection $\nabla$  and the pair $(\varphi,\nabla)$ is admissible,
 $f_{\varphi}^{\ast}\varXi$ can be naturally realized as a $\End_{\Bbb C}(E)$-valued tensor
 ${\pi_{\varphi}}_{\ast}f_{\varphi}^{\ast}\varXi$ on $X$;
(cf.\ Sec.\ 2.2).
 
\bigskip

\begin{ssdefinition} {\bf [pull-push from $Y$ to $X$ by $\varphi$]}$\;$ {\rm
 Let $(\varphi,\nabla)$ be admissible and $\varXi$ a tensor on $Y$ as above.
 We will denote ${\pi_{\varphi}}_{\ast}f_{\varphi}^{\ast}\varXi$,
  which comes from the pull-push (i.e.\ first pulling back, then pushing forward)
    along the diagram associated to $\varphi$,
  by $\varphi^{\diamond}\varXi$ and call it the {\it pull-push} of the tensor $\varXi$ on $Y$
   to $X$ by $\varphi$.
}\end{ssdefinition}

\bigskip

In particular, for the $2$-tensors metric $g$ and $B$-field $B$ on $Y$,
 $$
   \varphi^{\diamond}(g+B)\; :=\; {\pi_{\varphi}}_{\ast}f_{\varphi}^{\ast}(g+B)
 $$
 is a well-defined $\End_{\Bbb C}(E)$-valued $2$-tensor on $X$.
This can then be added to a multiple of
  the curvature $2$-tensor $F_{\nabla}$ on $X$ of the connection $\nabla$ on $E$
 to give the $\End_{\Bbb C}(E)$-valued $2$-tensor
 $$
   \varphi^{\diamond}(g+B)\,+\, 2\pi\alpha^{\prime}\,F_{\nabla}
 $$
 on $X$.
This is what we will interpret the object
 ``\hspace{.1ex}$\varphi^{\ast}(g+B)+2\pi\alpha^{\prime}F_{\nabla}$\hspace{.1ex}"
 in the formal expression of the Dirac-Born-Infeld action
  $S_{\DBI}^{(\Phi,g,B)}(\varphi,\nabla)$, Sec.\ 1,   in our context.
 
\bigskip
 
This resolves Issue (2) in the list.

\bigskip

\begin{flushleft}
{\bf Basic properties and the characteristic tensor of the pull-push}
\end{flushleft}

\begin{lemma} {\bf [pull-push of symmetric tensor or alternating tensor]}$\;$
 Given an admissible pair $(\varphi,\nabla)$ as above, let $\varXi$ be a tensor,
  say of degree $l$, on $Y$.
 Then:
    \begin{itemize}
	 \item[$(1)$]
	   If $\varXi$ is a symmetric $l$-tensor on $Y$, then
	    $\varphi^{\diamond}\varXi$ is an $\End_{\Bbb C}(E)$-valued symmetric $l$-tensor on $X$.
    
      \item[$(2)$]
	   If $\varXi$ is an alternating $l$-tensor (i.e.\ an $l$-form) on $Y$, then
	    $\varphi^{\diamond}\varXi$ is an $\End_{\Bbb C}(E)$-valued alternating $l$-tensor
		  (i.e.\ an $\End_{\Bbb C}(E)$-valued $l$-form) on $X$.
    \end{itemize}
\end{lemma}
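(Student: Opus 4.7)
The plan is to prove both statements uniformly by decomposing the pull-push along the factorization $\varphi^{\diamond}=(\pi_{\varphi})_{\ast}\circ f_{\varphi}^{\ast}$ and checking that each of the two operations preserves the symmetry, resp.\ alternating, property of a $(0,l)$-tensor. Since both properties are characterized by invariance (up to the sign of the permutation) under the natural $\Sym_l$-action on tensor slots, and this action commutes with all the maps involved, the argument is essentially a bookkeeping of where this $\Sym_l$-equivariance is preserved.

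First, $f_{\varphi}:X_{\varphi}\rightarrow Y$ is an honest $C^{\infty}$-map of commutative $C^{\infty}$-schemes (in fact the composite $X_{\varphi}\hookrightarrow X\times Y \twoheadrightarrow Y$). Thus $f_{\varphi}^{\ast}\varXi$ is defined pointwise by the differential of $f_{\varphi}$ acting on each of the $l$ cotangent slots independently, which commutes with permutations of the slots. Hence $f_{\varphi}^{\ast}\varXi$ is symmetric (resp.\ alternating) as a section of $(T^{\ast}X_{\varphi})^{\otimes l}$ whenever $\varXi$ is.

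Next, by Admissibility Condition (1), together with the Convention at the end of Sec.~2.3, $DA_{\varphi}\subset C^{\infty}(\Omega_X)\otimes_{C^{\infty}(X)}A_{\varphi}$ holds over all of $X$; the discussion in Remark~2.2.2 then gives a canonical identification sending a $(0,l)$-tensor on $X_{\varphi}$ to an $A_{\varphi}$-valued $(0,l)$-tensor on $X$, and composing with the built-in inclusion $A_{\varphi}\hookrightarrow C^{\infty}(\End_{\Bbb C}(E))$ yields an $\End_{\Bbb C}(E)$-valued $(0,l)$-tensor on $X$. This identification is defined slot-by-slot through the embedding $\Der(C^{\infty}(X))\hookrightarrow \Der(A_{\varphi})$ induced by $D$, so it is again $\Sym_l$-equivariant: permutation of slots on $X_{\varphi}$ corresponds exactly to permutation of slots on $X$. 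Consequently $(\pi_{\varphi})_{\ast}(f_{\varphi}^{\ast}\varXi)=\varphi^{\diamond}\varXi$ inherits the symmetric (resp.\ alternating) property.

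I do not expect a genuine obstacle here; the only subtlety to make explicit is the slotwise nature of the pull-push identification in Remark~2.2.2, which is what guarantees that the natural $\Sym_l$-action is respected. Once that is recorded carefully, parts $(1)$ and $(2)$ follow simultaneously by specializing to the trivial character and the sign character of $\Sym_l$, respectively. No use of the connection on $E$ beyond Admissibility Condition (1) is needed, so the condition involving $F_{\nabla}$ plays no role in this lemma.
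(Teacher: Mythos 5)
Your argument is correct, and it takes a genuinely different route from the paper's. The paper works entirely in local coordinates: it writes out the local coefficient
\[
  a_{\mu_1\cdots\mu_l}
   \;=\; \sum_{i_1,\ldots,i_l}
      \varphi^{\sharp}(\varXi_{i_1\cdots i_l})\,
      D_{\mu_1}\varphi^{\sharp}(y^{i_1})\cdots D_{\mu_l}\varphi^{\sharp}(y^{i_l})\,,
\]
observes that Admissibility (1) forces every one of these $(l+1)$ factors to lie in $A_{\varphi}|_U$, hence to commute, and then verifies the (anti)symmetry relation $a_{\mu_{\sigma(1)}\cdots\mu_{\sigma(l)}} = (\pm 1)^{\sigma} a_{\mu_1\cdots\mu_l}$ by a hands-on relabeling in which the commutativity of the factors is used explicitly to permit the reordering before the $i$-relabeling. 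Your proof instead exploits the defining factorization $\varphi^{\diamond}=(\pi_{\varphi})_{\ast}\circ f_{\varphi}^{\ast}$ and checks $\Sym_l$-equivariance at each stage: $f_{\varphi}^{\ast}$ is an ordinary pull-back over the commutative $C^{\infty}$-scheme $X_{\varphi}$, and $(\pi_{\varphi})_{\ast}$ is precomposition with the diagonal of the single inclusion $\iota^{\nabla}:\Der(C^{\infty}(X))\hookrightarrow\Der(A_{\varphi})$ supplied by Admissibility (1), so both commute with the slot permutation action. The two arguments rest on the same mechanism --- Admissibility (1) confines the coefficients to the commutative $A_{\varphi}$ --- but the paper makes the commutativity do work inside an explicit sum, while you make it invisible by delaying the passage to the noncommutative ambient ring until after the symmetry has already been established over $X_{\varphi}$. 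Your route is cleaner and correctly identifies that Admissibility (2) and the $F_{\nabla}$ condition play no role here; the one point worth spelling out, which you flag but could state more sharply, is that the coefficient-commutativity obstruction is precisely what would break slot-permutation invariance if one tried to work directly with an $\End_{\Bbb C}(E)$-valued tensor, and that pushing the whole construction through the commutative $X_{\varphi}$ is what removes it.
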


\smallskip

\begin{proof}
 The issue is local.
 Thus, for any $p\in X$, consider a small enough coordinate neighborhood $U$
  (with coordinate functions $\mbox{\boldmath $x$}=(x^1,\,\cdots\,,\,x^m)$)
  of $p$ such that
  $\varphi(U)$ lies in a local chart $V$ of $Y$
  (with coordinate functions $\mbox{\boldmath $y$}=(y^1,\,\cdots\,,\,y^n)$).
 Let
  $\varXi|_V
     =\sum_{i_1,\,\cdots\,,\, i_l}\,
	      \varXi_{i_1\,\cdots\,i_l}\,dy^{i_1}\otimes\,\cdots\,\otimes dy^{i_l}$.
 Recall the connection $D$ on $\End_{\Bbb C}(E)$ canonically induced by $\nabla$ on $E$.
 Denote $D_{\partial/\partial x^{\mu}}$, $\mu=1,\,\ldots\,,\, m$,  by $D_{\mu}$.
 Then, by definition,
 {\small
  $$
   \begin{array}{rll}
    (\varphi^{\diamond}\varXi)|_{U}
	     & = & \sum_{\mu_1,\,\cdots\,,\, \mu_l=1}^m \left(
		              \sum_{i_1,\,\cdots\,,\, i_l=1}^n
		                 \varphi^{\sharp}(\varXi_{i_1\,\cdots\,,i_l})
						   D_{\mu_1}\varphi^{\sharp}(y^{i_1})\,\cdots\,
						   D_{\mu_l}\varphi^{\sharp}(y^{i_l})   \right)
						    dx^{\mu_1}\otimes\,\cdots\,\otimes dx^{\mu_l}\\[2ex]
         & =: &  	 \sum_{\mu_1,\,\cdots\,,\, \mu_l=1}^m
		                   a_{\mu_1\,\cdots\,\mu_l}						
						    dx^{\mu_1}\otimes\,\cdots\,\otimes dx^{\mu_l}\,.
   \end{array}							
  $$}
 Note that since $(\varphi,\nabla)$ is admissible,
   namely $D_{\!\mbox{\LARGE $\cdot$}\,}A_{\varphi}\subset A_{\varphi}$,
  all the elements
   $$
      \varphi^{\sharp}(\varXi_{i_1\,\cdots\,,i_l})\,,\;\;
	   D_{\mu_1}\varphi^{\sharp}(y^{i_1})\,,\;\;   \cdots\,,\;\;
	   D_{\mu_l}\varphi^{\sharp}(y^{i_l})\;
       \in\;  A_{\varphi}|_U \; \subset \; C^{\infty}(\End_{\Bbb C}(E|_U))\,,
   $$
   with $i_1,\,\cdots\,,\, i_l =1,\,\ldots\,,\, n$, commute.
  
 If $\varXi$ is symmetric, then, for example,
   $$
    \begin{array}{rcl}
	 a_{\mu_2\mu_1\mu_3\,\cdots\,\mu_l}
	   & = &    \sum_{i_1,\,\cdots\,,\, i_l=1}^n
		                 \varphi^{\sharp}(\varXi_{i_1i_2i_3\,\cdots\,,i_l})
						   D_{\mu_2}\varphi^{\sharp}(y^{i_1})\,
						   D_{\mu_1}\varphi^{\sharp}(y^{i_2})\,
						   D_{\mu_3}\varphi^{\sharp}(y^{i_3})\,\cdots\,
						   D_{\mu_l}\varphi^{\sharp}(y^{i_l}) \\[2ex]
       & = &    \sum_{i_1,\,\cdots\,,\, i_l=1}^n
		                 \varphi^{\sharp}(\varXi_{i_1i_2i_3\,\cdots\,,i_l})
						   D_{\mu_1}\varphi^{\sharp}(y^{i_2})\,
						   D_{\mu_2}\varphi^{\sharp}(y^{i_1})\,
						   D_{\mu_3}\varphi^{\sharp}(y^{i_3})\,\cdots\,
						   D_{\mu_l}\varphi^{\sharp}(y^{i_l}) \\[2ex]							
       & = &    \sum_{i_1,\,\cdots\,,\, i_l=1}^n
		                 \varphi^{\sharp}(\varXi_{i_2i_1i_3\,\cdots\,,i_l})
						   D_{\mu_1}\varphi^{\sharp}(y^{i_2})\,
						   D_{\mu_2}\varphi^{\sharp}(y^{i_1})\,
						   D_{\mu_3}\varphi^{\sharp}(y^{i_3})\,\cdots\,
						   D_{\mu_l}\varphi^{\sharp}(y^{i_l})               \\[2ex]				
       & = &	 a_{\mu_1\mu_2\mu_3\,\cdots\,\mu_l}\,; 						 	
    \end{array}
   $$
   and similarly for other exchanges of indices of $a_{\mu_1\,\cdots\,\mu_l}$.
 This proves that $\varphi^{\diamond}\varXi$ is symmetric and concludes Statement $(1)$.

 Statement $(2)$ follows by a similar argument.
  
\end{proof}
 
\bigskip

For $(\varphi,\nabla)$ admissible, let $\varXi$ be a tensor of degree $l$ on $Y$.
Then, since $\varphi^{\diamond}\varXi$ is $A_{\varphi}$-valued,
 in any local expression of $\varphi^{\diamond}\varXi$ in terms of local coordinate functions,
  $$
    \begin{array}{c}
  	  \varphi^{\diamond}\varXi|_U\;
	     =\; \sum_{\mu_1,\,\cdots\,,\, \mu_l=1}^m
		                      a_{\mu_1\,\cdots\,\mu_l}						
			   			    dx^{\mu_1}\otimes\,\cdots\,\otimes dx^{\mu_l}\,,
    \end{array}							
  $$
  the coefficients
   $a_{\mu_1\,\cdots\,\mu_l}
       \in A_{\varphi}|_U  \subset C^{\infty}(\End_{\Bbb C}(E|_U))$,
  $\mu_1,\,\cdots\,,\,\mu_l=1,\,\ldots\,,\,m$,
  commute with each other  and, hence, at each $p\in U$, can be simultaneously triangulated:
  $$
    a_{\mu_1\,\cdots\,\mu_l} \;
	=\;     G_p\,\cdot\,  \left[
	             \begin{array}{ccc}
	                \lambda_{\mu_1\,\cdots\,\mu_l}^{(1)}(p)   & \ast & \ast   \\[1.2ex]
				 0    & \ddots            & \ast                                                                  \\[1.2ex]
				 0    & 0 &  \lambda_{\mu_1\,\cdots\,\mu_l}^{(r)}(p)
			     \end{array}	
                              	     \right]  \,\cdot\, G_p^{\;\;-1}\,,
  $$
  where $G_p\in \Aut(E|_p)$.
 The set of (ordinary) tensors  at $p\in X$
  $$
   \begin{array}{c}
     \Lambda_{\varphi^{\diamond}\varXi}(p) \;
	  :=\; \{\, \sum_{\mu_1,\,\cdots\,,\, \mu_l}\,
	                  \lambda_{\mu_1\,\cdots\,\mu_l}^{(s)}(p)
	                             (dx^{\mu_1}\otimes \,\cdots\,\otimes dx^{\mu_l})|_p \;|\; s=1,\,\ldots\,,r \,\}\;
      \subset\; (\otimes^l\,T^{\ast}X)|_p								
   \end{array}								
  $$
  is invariant under changes of coordinates on $X$  and the local trivializations of $E$.
 As $p$ varies, this defines
  a $r$-multi-section $\Lambda_{\varphi^{\diamond}\varXi}$ of $\otimes^l\,T^{\ast}X$.
  
\bigskip

\begin{ssdefinition} {\bf [characteristic tensor of pull-push]}$\;$ {\rm
 With some abuse of the word `tensor',
  the multi-section $\Lambda_{\varphi^{\diamond}\varXi}$ of $\otimes^l\,T^{\ast}X$ thus defined
  is called the {\it characteristic tensor} of the pull-push $\varphi^{\diamond}\varXi$ of an $l$-tensor
  $\varXi$ on $Y$ under an admissible map $\varphi:(X^{\!A\!z},E;\nabla)\rightarrow Y$.
}\end{ssdefinition}

\bigskip

The same proof as that of Lemma~3.1.1 gives:
 
\bigskip

\begin{sslemma} {\bf [characteristic tensor of pull-push of symmetric tensor or alternating tensor]}$\;$
 Continuing the setting in Lemma~3.1.1.
  %
  \begin{itemize}
   \item[$(1)$]
    If $\varXi$ is a symmetric tensor on $Y$,
	  then $\Lambda_{\varphi^{\diamond}\varXi}$ is a symmetric multi-valued tensor on $X$.
   									
   \item[$(2)$]
    If $\varXi$ is an alternating tensor on $Y$,
	 then $\Lambda_{\varphi^{\diamond}\varXi}$ is an alternating multi-valued tensor on $X$.
  \end{itemize}									
\end{sslemma}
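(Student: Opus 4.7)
The plan is to piggyback on Lemma~3.1.1 and then transfer the symmetry/antisymmetry from the $A_\varphi$-valued coefficients of $\varphi^{\diamond}\varXi$ to the scalar eigenvalue-tensors comprising $\Lambda_{\varphi^{\diamond}\varXi}$. Concretely, working in a coordinate chart $U\subset X$ with $\varphi(U)$ inside a chart $V\subset Y$, write
$$
\varphi^{\diamond}\varXi|_U\;=\;\sum_{\mu_1,\,\cdots\,,\,\mu_l=1}^m
   a_{\mu_1\,\cdots\,\mu_l}\,dx^{\mu_1}\otimes\cdots\otimes dx^{\mu_l},
\qquad a_{\mu_1\,\cdots\,\mu_l}\in A_{\varphi}|_U.
$$
By Lemma~3.1.1, if $\varXi$ is symmetric (resp.\ alternating) then $\varphi^{\diamond}\varXi$ is symmetric (resp.\ alternating) as an $\End_{\Bbb C}(E)$-valued tensor. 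Thus one has the matrix identities $a_{\mu_{\sigma(1)}\,\cdots\,\mu_{\sigma(l)}}=a_{\mu_1\,\cdots\,\mu_l}$ (resp.\ $=(-1)^{\sigma}a_{\mu_1\,\cdots\,\mu_l}$) for every $\sigma\in\Sym_l$, holding in $A_{\varphi}|_U$.

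Next I would invoke the simultaneous triangulation used in the definition of $\Lambda_{\varphi^{\diamond}\varXi}$: since the $a_{\mu_1\,\cdots\,\mu_l}$ commute pairwise (they all sit in the commutative algebra $A_{\varphi}|_p$), there is a common $G_p\in\Aut(E|_p)$ that puts all of them simultaneously in upper-triangular form. The key linear-algebra observation is elementary but crucial: if two matrices $A,\,B$ are triangularized by the same $G_p$ and $A=\pm B$, then their diagonals coincide up to the same sign. Applying this to the relations $a_{\mu_{\sigma(1)}\,\cdots\,\mu_{\sigma(l)}}=\pm\, a_{\mu_1\,\cdots\,\mu_l}$ for each fixed $s\in\{1,\,\ldots\,,r\}$ yields
$$
\lambda_{\mu_{\sigma(1)}\,\cdots\,\mu_{\sigma(l)}}^{(s)}(p)
\;=\;\pm\,\lambda_{\mu_1\,\cdots\,\mu_l}^{(s)}(p)
$$
with $+$ in the symmetric case and $(-1)^{\sigma}$ in the alternating case.

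This shows that each of the $r$ ordinary scalar tensors
$\sum \lambda^{(s)}_{\mu_1\,\cdots\,\mu_l}(p)\,dx^{\mu_1}|_p\otimes\cdots\otimes dx^{\mu_l}|_p$
constituting $\Lambda_{\varphi^{\diamond}\varXi}(p)$ is itself symmetric (resp.\ alternating), giving both statements. It remains only to note that the conclusion is independent of the choices involved: a different choice of simultaneous triangulator $G_p$ merely permutes the index $s$, so the multi-set $\Lambda_{\varphi^{\diamond}\varXi}(p)\subset\otimes^l T^{\ast}X|_p$ is well-defined, and the symmetry/antisymmetry relations, being intrinsic to $\varXi$ and hence holding on every $s$-slice, are preserved under such permutations.

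The only point requiring care, and hence the main (mild) obstacle, is the transfer of identities from matrices to their diagonals in a simultaneous triangulation; but since the triangulation is by a \emph{common} $G_p$, this reduces to the trivial observation that conjugation by $G_p$ and extraction of the $s$-th diagonal entry are both $\Bbb R$-linear, so any $\Bbb R$-linear relation among the $a_{\mu_1\,\cdots\,\mu_l}$ descends to the same relation among the $\lambda^{(s)}_{\mu_1\,\cdots\,\mu_l}(p)$. No further analytic input is needed beyond Lemma~3.1.1.
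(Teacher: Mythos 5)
Your proof is correct and takes essentially the approach the paper intends (the paper simply asserts ``the same proof as Lemma~3.1.1 gives''). You make explicit the step the paper leaves tacit: that conjugation by a fixed simultaneous triangulator $G_p$ and extraction of a diagonal entry are both linear, so the $A_{\varphi}|_p$-level symmetry/antisymmetry identities supplied by Lemma~3.1.1 descend verbatim to each eigenvalue slice $\lambda^{(s)}_{\mu_1\cdots\mu_l}(p)$, and a change of $G_p$ only permutes the label $s$.
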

 
\smallskip

\begin{ssremark} {\it $[\,\Lambda_{\varphi^{\diamond}\varXi}$
               from aspect of $C^{\infty}$-algebraic geometry$\,]$}$\;$ {\rm
 As a subobject in the total space (denoted the same) of $\otimes^l\,T^{\ast}X$,
  $\Lambda_{\varphi^{\diamond}\varXi}$ is a $C^{\infty}$-subscheme of $\otimes^l\,T^{\ast}X$
   that is algebraic and finite over $X$ of relative length $r$.
 Its detail can be complicated.
 For the current notes, we use only its pointwise property over $X$ in a few occasions.
}\end{ssremark}
  
\smallskip
 
\begin{ssexample} {\bf [pull-push of metric tensor on $Y$]}$\;$ {\rm
 Let
   $(Y,g)$ be either a Riemannian manifold or a Lorentzian manifold   and
  $\varphi: (X^{\!A\!z},E;\nabla)\rightarrow Y$ be an admissible map.
 Then $\varphi^{\diamond}g$ is an $\End_{\Bbb C}(E)$-valued symmetric $2$-tensor on $X$.
}\end{ssexample}
  
\smallskip

\begin{ssexample} {\bf [pull-push of $B$-field on $Y$]}$\;$ {\rm
 Continuing Example~3.1.1.5.
 Let $B$ be a $2$-form on $Y$.
 Then $\varphi^{\diamond}B$ is an $\End_{\Bbb C}(E)$-valued $2$-form on $X$.
}\end{ssexample}

\bigskip

\subsubsection{D-brane world-volume with constant induced-metric signature}

For a simple D-brane moving in a space-time $Y$,
 by definition it sweeps out a D-brane world-volume that has a Lorentzian induced metric.
This gives the simplest picture of D-brane world-volume:
 A Lorentzian submanifold (with a Chan-Paton bundle with a connection, ... ) in a space-time $Y$.
Now that
 we generalize the notion of a submanifold to the notion of differentiable map
   $\varphi:(X^{\!A\!z},E)\rightarrow Y$,
  a question arises immediately:
 \begin{itemize}
  \item[\bf Q.] \parbox[t]{37em}{\it
    Given a Lorentzian manifold $(Y,g)$,
	 in what sense can one say that $\varphi:(X^{\!A\!z},E)\rightarrow Y$
	  is Lorentzian (or equivalently, timelike), or spacelike, or null?}
 \end{itemize}
A geometrically reasonable, though naive, approach is to consider
 the $C^{\infty}$-subscheme
  $$
      \varGamma_{\varphi}\; :=\;  \Supp(\tilde{\cal E}_{\varphi})\; \subset\;  X\times Y\,,
  $$
   which is canonically isomorphic to the surrogate $X_{\varphi}$,
  and look at the restriction $(\pr_Y^{\ast}g)|_{\varGamma_{\varphi}}$ or
  $(\pr_Y^{\ast}g)|_{(\varGamma_{\varphi})_{red}}$.
 Here, $\pr_Y:X\times Y\rightarrow Y$ is the projection map.
In [L-Y4: Sec.\ 6.3 ] (D(11.1)), we took such an approach to define notions such as
   `{\sl Lagrangian maps}' to a symplectic manifold or `{\sl special Lagrangian maps}' to a Calabi-Yau manifold.
The setting is independent of the connection $\nabla$ on $E$.

However, in the course of understanding the Dirac-Born-Infeld action in our context,
 it turns out that,
  for an admissible $(\varphi,\nabla)$, the following definition is algebraically and technically more natural:
(cf.\ Lemma~3.1.4.7)
  
\bigskip

\begin{ssdefinition} {\bf [Lorentzian/timelike, spacelike, null map]}$\;$ {\rm
 Let $(Y,g)$ be a Lorentzian manifold (of signature $(-, +,\,\cdots\,,+)$).
 An admissible map $\varphi:(X^{\!A\!z},E;\nabla)\rightarrow Y$ is said to be
  {\it Lorentzian},, or equivalently {\it timelike}, (resp.\ {\it spacelike}, {\it null})
  if for any $p\in X$,
   each symmetric 2-tensor in the characteristic tensor-set $\Lambda_{\varphi^{\diamond}g}(p)$
    (cf.\ Definition~3.1.1.2)
	  defines a Minkowskian
	(resp.\ Euclidean, degenerate with signature $(0,+,\,\cdots\,,+)$) inner product on $T_pX$.
}\end{ssdefinition}
 
\smallskip
 
\begin{ssdefinition} {\bf [Riemannian map]}$\;$ {\rm
 Let $(Y,g)$ be a Riemannian manifold.
 An admissible map $\varphi:(X^{\!A\!z},E;\nabla)\rightarrow Y$ is said to be
  {\it Riemannian}
  if for any $p\in X$,
   each symmetric 2-tensor in the characteristic tensor-set $\Lambda_{\varphi^{\diamond}g}(p)$
   defines a Euclidean inner product on $T_pX$.
}\end{ssdefinition}
 
\bigskip

The relation, or discrepancy,
  between the setting following [L-Y4: Sec.\ 6.3] (D(11.1)) and the setting in the above two definitions
 should be investigated further.

\bigskip

\subsubsection{From determinant $\Det$ to symmetrized determinant $\SymDet$}

For comparison and motivation, we
 review first the defining properties of the determinant function $\Det$ over a commutative ring   and then
 generalize it to the notion of symmetrized determinant $\SymDet$ in the noncommutative case.
This is then applied to define the notion of symmetrized determinant of
 an $\End_{\Bbb C}(E)$-valued $2$-tensor on $X$.
 
\bigskip

\noindent
{\bf Convention [oriented manifold and compatible system of coordinate functions]}$\;$
 To have a globally well-define volumed form, rather than just a density or measure,
 for the rest of the notes, we assume:
  \begin{itemize}
   \item[\LARGE $\cdot$]
    Both $X$ and $Y$ are oriented manifolds.
	
   \item[\LARGE $\cdot$]	
    Whenever a system of local coordinate functions are chosen, e.g.
       $\mbox{\boldmath $x$}=(x^1,\,\cdots\,,\,x^m)$ for some local chart $U\subset X$ and
       $\mbox{\boldmath $y$}=(y^1,\,\cdots\,,\,y^n)$ for some local chart $V\subset Y$,
    the order of these functions  is chosen so that
      $dx^1\wedge\,\cdots\,\wedge dx^m$ specifies the orientation on $U$    and
	  $dy^1\wedge\,\cdots\,\wedge dy^n$ specifies the orientation on $V$.
  \end{itemize}

\bigskip

\begin{flushleft}
{\bf The determinant function $\Det$ over a commutative ring}
\end{flushleft}
We summarize the defining properties of the determinant function $\Det$ over a commutative ring
 into the following two definitions and theorem.
Readers are referred to [H-K: Chapter 5] for details.

\bigskip

\begin{ssdefinition} {\bf [multi-linear alternating function on matrices]}$\;$ {\rm
  Let
    $R$ be a commutative ring with the identity element $1$,
	$M_{l\times l }(R)$ be the ring of $l\times l$ matrices with entries in $R$.
  A function
   $$
     f\,:\; M_{l\times l}(R)\;\longrightarrow\; R
   $$
   is called {\it $l$-linear alternating}
  if
    \begin{itemize}
	 \item[\LARGE $\cdot$] [\hspace{.1ex}{\it $l$-linear}$\,$]$\;$
      For each $i$, $1\le i\le l$,
       $f$ is an $R$-linear function of the $i$-th row when the other $(l-1)$ rows are held fixed.
		
	 \item[\LARGE $\cdot$] [{\it alternating}$\,$]$\;$
      The following two conditions are satisfied:
	  \begin{itemize}
	   \item[\LARGE $\cdot$]
	    $f(m)=0$ whenever two rows of $m\in M_{l\times l}(R)$ are equal.
		
	   \item[\LARGE $\cdot$]
       If $m^{\prime}$ is obtained from $m\in M_{l\times l}(R)$ by interchanging two rows of $m$,\\
       then $f(m^{\prime})= - f(m)$.	   	
	  \end{itemize}	
	\end{itemize}	
}\end{ssdefinition}

\smallskip

\begin{ssdefinition} {\bf [determinant function]}$\;$ {\rm
 Continuing the setting of Definition~3.1.3.1.
 A function $f:M_{l\times l}(R)\rightarrow R$ is called a {\it determinant function}
  if $f$ is $l$-linear, alternating, and $f(\Id_{l\times l})=1$.
  Here, $\Id_{l\times l}$ is the identity matrix in $M_{l\times l}(R)$.
}\end{ssdefinition}

\smallskip

\begin{sstheorem} {\bf [existence and uniqueness of determinant function]}$\;$
 {\rm Continuing the setting of Definition~3.1.3.1.}
 There exists a unique determinant function $M_{l\times l}(R)\rightarrow R$.
 Denote this function by $\Det$.
 Then, for $m=(m_{ij})_{ij}\in M_{l\times l}(R)$,
  $$
    \Det(m)\;=\; \sum_{\sigma\in\scriptsizeSym_l}
	  (-1)^{\sigma}
	     m_{1\sigma(1)}\,\cdots\, m_{l\sigma(l)}\,.
  $$
 Here,
   $\Sym_l$ is the permutation group on $l$-many letters, and
   $(-1)^{\sigma}=1$ (resp.\ $-1$) if $\sigma$ is an even (resp.\ odd) permutation.
\end{sstheorem}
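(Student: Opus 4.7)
The plan is to prove existence by directly verifying that the explicit Leibniz formula satisfies the three axioms of Definition 3.1.3.2, and to prove uniqueness by showing that any $l$-linear alternating function on $M_{l\times l}(R)$ is completely determined, via row-by-row multilinear expansion, by its value on $\Id_{l\times l}$, with the alternating property killing every summand except those indexed by permutations.

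For existence, I would define
\[
f(m)\;:=\;\sum_{\sigma\in\Sym_l}(-1)^\sigma\,m_{1\sigma(1)}\cdots m_{l\sigma(l)}
\]
and check the axioms in turn. The $l$-linearity in the $i$-th row is immediate because each summand carries exactly one factor $m_{i\sigma(i)}$ drawn from row $i$. For the alternating axioms, suppose rows $i$ and $j$ coincide with $i<j$, and pair each $\sigma$ with $\sigma':=\sigma\circ(ij)$; since $(-1)^{\sigma'}=-(-1)^\sigma$ and, because $R$ is commutative, the products $m_{1\sigma(1)}\cdots m_{l\sigma(l)}$ and $m_{1\sigma'(1)}\cdots m_{l\sigma'(l)}$ agree (the two swapped factors are $m_{i\sigma(i)}=m_{j\sigma(i)}$ and $m_{j\sigma(j)}=m_{i\sigma(j)}$, and the remaining entries commute past each other), the sum cancels pairwise. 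The row-swap sign-flip then follows by the same pairing applied as a relabeling of the summation index. Evaluating at $\Id_{l\times l}$ kills every summand except $\sigma=e$, giving $f(\Id_{l\times l})=1$.

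For uniqueness, let $g:M_{l\times l}(R)\to R$ be any $l$-linear alternating function. Writing the $i$-th row of $m$ as $\sum_{j_i=1}^l m_{ij_i}e_{j_i}$, where $e_1,\ldots,e_l$ are the standard unit row vectors, and applying $l$-linearity one row at a time, I obtain
\[
g(m)\;=\;\sum_{j_1,\ldots,j_l=1}^l m_{1j_1}\cdots m_{lj_l}\;g(e_{j_1},\ldots,e_{j_l}).
\]
The first alternating condition forces $g(e_{j_1},\ldots,e_{j_l})=0$ whenever the tuple $(j_1,\ldots,j_l)$ has a repeated entry, so only tuples of the form $(\sigma(1),\ldots,\sigma(l))$ for some $\sigma\in\Sym_l$ contribute. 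Writing $\sigma$ as a product of transpositions and applying the row-swap sign-flip repeatedly yields $g(e_{\sigma(1)},\ldots,e_{\sigma(l)})=(-1)^\sigma\, g(\Id_{l\times l})$. Thus $g$ agrees with the Leibniz formula up to the overall scalar $g(\Id_{l\times l})$, and imposing $g(\Id_{l\times l})=1$ pins down $g=\Det$ uniquely.

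No serious obstacle arises, but the single place where commutativity of $R$ is genuinely used is the pairwise cancellation in the alternating verification, where the two swapped factors must be freely commuted past the intervening row contributions. This is exactly the point that will fail in the noncommutative generalization and motivates the passage from $\Det$ to the symmetrized determinant $\SymDet$ in the next subsubsection.
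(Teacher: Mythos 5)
Your proof is correct and is precisely the standard textbook argument the paper delegates to when it writes ``Readers are referred to [H-K: Chapter 5] for details''; the paper itself gives no proof. Existence via direct verification of the Leibniz formula (with the $\sigma\leftrightarrow\sigma\circ(ij)$ pairing, using commutativity to match the two products) and uniqueness via row-by-row multilinear expansion into unit-row-vector configurations, killing repeats by the alternating axiom and reducing the remaining terms to $(-1)^\sigma g(\Id_{l\times l})$, is exactly the Hoffman--Kunze route, and your identification of the pairwise-cancellation step as the sole genuine use of commutativity is the right observation to carry into the passage to $\SymDet$ in Sec.\ 3.1.3.
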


\bigskip

With the above review, the question now is:
  \begin{itemize}
   \item[\bf Q.] \parbox[t]{37em}{\it
     Can the above functorial definition of the determinant function $\Det$ be generalized to
	 the case where $R$ is noncommutative?}
  \end{itemize}

\vspace{6em}

\begin{flushleft}
{\bf The symmetrized determinant over a noncommutative ring}
\end{flushleft}
Let $R$ be an (associative, unital) ring that is not necessarily commutative.
Our goal now is to generalize the determinant function $\Det$ above for $R$ commutative to the current case.
When $R$ is noncommutative,
 one learns from experience that
  it is very restrictive to require a function $f:M_{l\times l}(R)\rightarrow R$ to be multi-$R$-linear
  and it is more practical to demand only that
  $f:M_{l\times l}(R)\rightarrow R$ be multi-$C(R)$-linear, where $C(R)$ is the center of $R$.
This motivates the following definition:

\bigskip

\begin{ssdefinition} {\bf [multi-central-linear alternating function on matrices]}$\;$ {\rm
  Let
    $R$ be a (unital associative) ring with the identity element $1$.
  Denote by $C(R)$ the center of $R$.	
	$M_{l\times l }(R)$ be the ring of $l\times l$ matrices with entries in $R$.
  A function
   $$
     f\,:\; M_{l\times l}(R)\;\longrightarrow\; R
   $$
   is called {\it $l$-central linear alternating}
  if
    \begin{itemize}
	 \item[\LARGE $\cdot$] [{\it $l$-central linear}$\,$]$\;$
      For each $i$, $1\le i\le l$,
       $f$ is an $C(R)$-linear function of the $i$-th row when the other $(l-1)$ rows are held fixed.
		
	 \item[\LARGE $\cdot$] [{\it alternating}$\,$]$\;$
      The following two conditions are satisfied:
	  \begin{itemize}
	   \item[\LARGE $\cdot$]
	    $f(m)=0$ whenever two rows of $m\in M_{l\times l}(R)$ are equal.
		
	   \item[\LARGE $\cdot$]
       If $m^{\prime}$ is obtained from $m\in M_{l\times l}(R)$ by interchanging two rows of $m$,\\
       then $f(m^{\prime})= - f(m)$.	   	
	  \end{itemize}	
	\end{itemize}	
}\end{ssdefinition}

\smallskip

\begin{ssdefinition} {\bf [determinant function -- noncommutative case]}$\;$ {\rm
 Continuing the setting of Definition~3.1.3.4.
 A function $f:M_{l\times l}(R)\rightarrow R$ is called a {\it determinant function}
  if $f$ is $l$-central linear, alternating, and $f(\Id_{l\times l})=1$.
  Here, $\Id_{l\times l}$ is the identity matrix in $M_{l\times l}(R)$.
}\end{ssdefinition}

\bigskip

The following definition and lemma answer the existence part of a determinant function
 in the noncommutative case:

\bigskip

\begin{ssdefinition} {\bf [symmetrized determinant]}$\;$ {\rm
 Let
   $R$ be a (unital associative) ring with the identity element $1$   and
   $M_{l\times l}(R)$ be the ring of $l\times l$ matrices with entries in $R$.
 Define the {\it symmetrized determinant function}
  $$
    \SymDet\;:\; M_{l\times l}(R)\; \longrightarrow\; R
  $$
  by the assignment to $m=(m_{ij})_{ij}\in M_{l\times l}(R)$ the following element in $R$
  $$
    \SymDet(m)\;
	  :=\;
	     \sum_{\sigma\in\scriptsizeSym_l}	  (-1)^{\sigma}\,
		     m_{1\sigma(1)}\odot\,\cdots\,\odot m_{l\sigma(l)}\,,	 		
  $$
  where
    $$
      r_1\odot\,\cdots\,\odot r_l\;
	    :=\; 	\frac{1}{l!}\,
		          \sum_{\sigma^{\prime}\in\scriptsizeSym_l}\,
		           r_{\sigma^{\prime}(1)}\,\cdots\,r_{\sigma^{\prime}(l)}
	$$
	is the {\it symmetrized product} of $r_1,\,\cdots\,, r_l\in R$.
 Here, $\Sym_l$ is the permutation group on $l$ letters.	
}\end{ssdefinition}
  
\bigskip

The lemma below justifies the name:

\bigskip
 
\begin{sslemma} {\bf [$\SymDet$ as generalization of $\Det$]}$\;$
 {\rm Continuing the setting in Definition~3.1.3.6.}
 The correspondence $\;\SymDet:M_{l\times l}(R)\rightarrow R\;$
   is a determinant function.
 Furthermore, when $R$ is commutative, $\SymDet$ and $\Det$ coincide.
\end{sslemma}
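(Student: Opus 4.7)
The plan is to verify directly from the defining formula the three properties required by Definition~3.1.3.5 ($l$-central linearity, alternating, normalization at $\Id_{l\times l}$), and then observe that in the commutative case the symmetrized product $\odot$ collapses to the ordinary product, so the formula recovers $\Det$.

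For $l$-central linearity, I would note that any $c \in C(R)$ commutes with every element of $R$, so it can be pulled freely across any factor in $r_1 \odot \cdots \odot r_l$; hence $\odot$ is $C(R)$-multilinear in each slot. Since in the summand $m_{1\sigma(1)} \odot \cdots \odot m_{l\sigma(l)}$ the $i$-th row of $m$ appears in the $i$-th slot only, $C(R)$-linearity of $\SymDet$ in the $i$-th row follows termwise. For the normalization, when $m = \Id_{l\times l}$ the entry $m_{i\sigma(i)}$ is nonzero only if $\sigma(i) = i$ for every $i$, so only $\sigma = \id$ contributes and gives $1 \odot \cdots \odot 1 = 1$.

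The alternating condition has two parts, and I would handle the sign-under-swap part first. If $m'$ is obtained from $m$ by interchanging rows $i$ and $j$, let $\tau = (ij) \in \Sym_l$; then $m'_{k,\sigma(k)} = m_{\tau(k),\sigma(k)}$. Because $\odot$ is symmetric in its arguments by construction, I can reorder the factors to bring the product into the form $m_{1,\sigma\tau(1)} \odot \cdots \odot m_{l,\sigma\tau(l)}$. Reindexing the summation via $\sigma' := \sigma \tau$ and using $(-1)^{\sigma'\tau} = -(-1)^{\sigma'}$ then gives $\SymDet(m') = -\SymDet(m)$. The subtler part is the vanishing when two rows of $m$ coincide: here one cannot simply invoke $2\SymDet(m) = 0$, since $R$ may have $2$-torsion. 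Instead I would pair each $\sigma$ with $\sigma\tau$ (where $\tau$ swaps the two equal rows); these have opposite signs, and equality of rows $i$ and $j$ together with symmetry of $\odot$ forces the two corresponding symmetrized products to be equal. The contributions then cancel pairwise and $\SymDet(m) = 0$.

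Finally, when $R$ is commutative, every reordering in $r_1 \odot \cdots \odot r_l = \frac{1}{l!}\sum_{\sigma'} r_{\sigma'(1)} \cdots r_{\sigma'(l)}$ produces the same product $r_1 \cdots r_l$, so $\odot$ reduces to ordinary multiplication and the formula for $\SymDet$ specializes to the Leibniz expansion of $\Det$ given in Theorem~3.1.3.3. The main obstacle I expect is the pairwise-cancellation argument for the equal-rows case, since it requires carefully checking that after applying the transposition $\tau$ and using the symmetry of $\odot$ the two products are literally equal (not merely equal up to a central scalar), so that their opposite signs cause genuine cancellation in $R$ without any invertibility hypothesis.
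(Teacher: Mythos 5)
Your proof is correct and follows essentially the same route as the paper: the paper's argument explicitly reduces the alternating property to the commutative-case proof of the same for $\Det$ via the symmetry of $\odot$ in its $l$ arguments, and your pairing of $\sigma$ with $\sigma\tau$ is precisely that argument carried out in full (matching the cited [H-K] treatment). The extra care you take over possible $2$-torsion in $R$ is a genuine refinement of exposition but not a different approach, since the standard commutative-ring proof the paper invokes already avoids dividing by $2$.
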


\medskip

\begin{proof}
 That $\SymDet$ is $l$-central linear and
  that $\SymDet(\Id_{l\times l})=1$
   are immediate.
 To show that $\SymDet$ is alternating,
  observe that
   $$
     r_1\odot \,\cdots\,\odot r_l\;=\;
	    r_{\sigma^{\prime}(1)}\,\odot\,\cdots\,\odot r_{\sigma^{\prime}(l)}
   $$
  for any $\sigma^{\prime}\in \Sym_l$.
 Consequently,
  the proof that $\SymDet$ is alternating follows exactly the same proof that $\Det$ is alternating
  since in the latter case only
     the commutativity of factors in the $l$-products in the expansion of $\Det(\,\cdot\,)$    and 
	 the sign $(-1)^{\sigma}=\pm 1$, $\sigma\in \Sym_l$, before the $l$-products
	 are used in the proof.
  That $\SymDet$ and $\Det$ coincide when $R$ is commutative is clear by the definition of $\SymDet$.
    
\end{proof}

\medskip

\begin{sslemma} {\bf [$\SymDet$ in terms of $\Det$]}$\;$
 {\rm Continuing the setting in Definition~3.1.3.6.}
  Let
    $$
	   m\;
	     =\; \left[ \begin{array}{c} m_{(1)}\\  \vdots  \\ m_{(l)} \end{array}  \right]\;
	     =\; [m_{(1)}^{\transpose},\,\cdots\,,\,m_{(l)}^{\transpose}]^{\transpose}
	$$
  be the presentation of an $l\times l$ matrix $m$ in terms of its row vectors
    $m_{(1)},\,\cdots\,,\,m_{(l)}$.	
 {\rm Here, $[\,\cdot\,]^{\transpose}$ denotes the transpose of a matrix $[\,\cdot\,]$.}
 Then,
   $$
     \SymDet(m)\;
	  =\;   \frac{1}{l!}
	              \sum_{\sigma\in\scriptsizeSym_l}
			       (-1)^{\sigma}
			        \Det([m_{\sigma(1)}^{\transpose},\,\cdots\,,\,
					            m_{(\sigma(l))}^{\transpose}]^{\transpose})\,,
   $$					
      where  we define
	     $\Det(m):=\sum_{\sigma\in\scriptsizeSym_l}
                                    (-1)^{\sigma}
                 m_{1\,\sigma(1)}\cdots m_{l\,\sigma(l)}$.	
\end{sslemma}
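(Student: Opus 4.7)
The plan is a direct double-sum manipulation: expand the right-hand side using the definitions, perform one change of variables in $\Sym_l$, and recognize the inner sum as a symmetrization.

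First I would unfold $\Det([m_{\sigma(1)}^{\transpose},\dots,m_{\sigma(l)}^{\transpose}]^{\transpose})$ using the prescribed formula for $\Det$ in the noncommutative setting. Writing $(P_\sigma m)_{ij}:=m_{\sigma(i)\,j}$ for the row-permuted matrix, this gives
\[
\Det\bigl([m_{\sigma(1)}^{\transpose},\dots,m_{\sigma(l)}^{\transpose}]^{\transpose}\bigr)\;=\;\sum_{\tau\in\scriptsizeSym_l}(-1)^{\tau}\,m_{\sigma(1)\tau(1)}\,m_{\sigma(2)\tau(2)}\,\cdots\,m_{\sigma(l)\tau(l)}\,.
\]
The key point to be careful about (this is the only real subtlety) is that the product is written in the row-order $1,2,\ldots,l$ of the permuted matrix, i.e.\ in the order of original rows $\sigma(1),\sigma(2),\ldots,\sigma(l)$. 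Since $R$ is noncommutative, this order must be tracked throughout.

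Second, I would reindex the inner summation by $\tau = \rho\circ\sigma$, so that as $\tau$ ranges over $\Sym_l$ so does $\rho$, and $(-1)^{\tau}=(-1)^{\rho}(-1)^{\sigma}$. The factor $(-1)^{\sigma}$ in front cancels against the $(-1)^{\sigma}$ that appears from this change of variable, and the right-hand side becomes
\[
\frac{1}{l!}\sum_{\sigma\in\scriptsizeSym_l}\sum_{\rho\in\scriptsizeSym_l}(-1)^{\rho}\,m_{\sigma(1)\,\rho(\sigma(1))}\,m_{\sigma(2)\,\rho(\sigma(2))}\,\cdots\,m_{\sigma(l)\,\rho(\sigma(l))}\,.
\]

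Third, fix $\rho$ and set $r_j:=m_{j,\rho(j)}\in R$ for $j=1,\dots,l$. The inner product above, for a given $\sigma$, is exactly $r_{\sigma(1)}r_{\sigma(2)}\cdots r_{\sigma(l)}$. Summing over $\sigma\in\Sym_l$ and using Definition 3.1.3.6 of the symmetrized product, the sum over $\sigma$ equals $l!\cdot(r_1\odot\cdots\odot r_l)=l!\cdot(m_{1\,\rho(1)}\odot\cdots\odot m_{l\,\rho(l)})$. The $1/l!$ cancels, leaving
\[
\sum_{\rho\in\scriptsizeSym_l}(-1)^{\rho}\,m_{1\,\rho(1)}\odot\cdots\odot m_{l\,\rho(l)}\,,
\]
which is the defining expression for $\SymDet(m)$. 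There is no real obstacle here; the only thing to watch is the strict bookkeeping of orderings of noncommuting factors under the reindexing $\tau=\rho\sigma$, which is purely combinatorial.
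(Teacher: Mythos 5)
Your proof is correct, and it is exactly the computation the paper leaves implicit: the paper's own ``proof'' is the single sentence ``This follows directly from the definition of $\SymDet$ and $\Det$,'' and your expansion, reindexing $\tau=\rho\circ\sigma$, and recognition of the symmetrized product are precisely the bookkeeping that sentence glosses over.
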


\medskip

\begin{proof}
 This follows directly from the definition of $\SymDet$ and $\Det$.

\end{proof}

\bigskip

Caution that for $R$ noncommutative,
  $\Det$ as defined is, in general, not a determinant function in the sense of Definition~3.1.3.5

\bigskip

\begin{ssremark} {\it $[\,$uniqueness$\,]$}$\;$ {\rm
 It is not clear to us whether
  the symmetrized determinant $\SymDet$ is the only determinat function
    (in the sense of Definition~3.1.3.5)
  that can be defined on $M_{l\times l}(R)$ for $R$ noncommutative.
}\end{ssremark}

\medskip

\begin{ssremark} {\it $[\,$on the altered ring $(R,+,\odot)\,]$}$\:$ {\rm
 Caution that we directly define the symmetrized product $r_1\odot\,\cdots\,\odot r_l$ for an $l$-tuple
  $(r_1,\,\cdots\,,\,r_l )$ of elements in $R$ in the definition of $\SymDet$,
  rather than building it up through a binary operation.
 This is all we need and used.
 The ring $(R,+, \odot)$ altered from the original $R$ is commutative and unital,
   but in general no longer associative: For example, the three products
 $$
   r_1\odot r_2\odot r_3\,,\;\;\;  (r_1\odot r_2)\odot r_3\,,\;\;\;  r_1\odot(r_2\odot r_3)
 $$
  in general are all different.
 In particular, a property or an identity related to $\Det(\,\cdot\,)$
   that relies only on the commutativity of the underlying ring
     automatically passes over to $\SymDet(\,\cdot\,)$ in the noncommutative case,
 while a property or an identity related to $\Det(\,\cdot\,)$ that involves also the associativity 
   of the underlying ring either fails or requires to be checked independently.
 For this reason, one does not have a simple formula that expresses $\SymDet(m)$
  in terms of  a row or column of $m$ and the corresponding $(l-1)\times (l-1)$ minors of $m$.
}\end{ssremark}

\bigskip

\begin{flushleft}
{\bf The symmetrized determinant $\SymDet_X(\Xi)$
          of an $\End_{\Bbb C}(E)$-valued $2$-tensor $\Xi$ on $X$}
\end{flushleft}
We are now ready to address the notion of `determinant' that appears in the formal expression of the Dirac-Born-Infeld action
 $S_{\DBI}^{(\Phi,g,B)}(\varphi,\nabla)$, Sec.\ 1.

\bigskip

\begin{ssansatz} {\bf [$\SymDet$ in the Dirac-Born-Infeld action]}$\;$ {\rm
 We interpret the determinant
   that appears in the formal expression of the Dirac-Born-Infeld action $S_{\DBI}^{(\Phi,g,B)}(\varphi,\nabla)$,
   Sec.\ 1, as the {\it symmetrized determinant} $\SymDet$
         that applies to an $\End_{\Bbb C}(E)$-valued $2$-tensor $\Xi$ on $X$.
}\end{ssansatz}
 
\bigskip

\noindent
We now explain the details of this determinant in our context of D-branes.

\bigskip
 
Let
 $\Xi\in  C^{\infty}( (T^{\ast}X)^{\otimes 2}\otimes_{\Bbb R}\End_{\Bbb C}(E))$
 be an $\End_{\Bbb C}(E)$-valued $2$-tensor on $X$.
Locally on a coordinate chart $U\subset X$ (with coordinates $(x^1,\,\cdots\,,\, x^m)$ )
 $\Xi$ has an expression of the form
 $$
    \sum_{\mu,\,\nu=1}^m\,\Xi_{\mu\nu} dx^{\mu}\otimes dx^{\nu}\,,
 $$
 with the coefficients
   in the (unital, associative) endomorphism ring (with the identity element $\Id_{r\times r}$):
  $$
    \Xi_{\mu\nu}\;  \in
     C^{\infty}(\End_{\Bbb C}(E|_U))\,.
  $$
The local coefficients form a $m\times m$ matrix, with the $(\mu,\nu)$-entry $\Xi_{\mu\nu}$:
 $$
    \hat{\Xi}_U\;:=\;
	  (\Xi_{\mu\nu})_{\mu\nu}\; \in\;
	   M_{m\times m}(C^{\infty}(\End_{\Bbb C}(E|_U)))\,.
 $$

\medskip
 
\begin{ssdefinition} {\bf [symmetrized determinant of $\End_{\Bbb C}(E)$-valued $2$-tensor]}$\;$
{\rm
 With the notation from above, the {\it symmetrized determinant}
  $$
    \SymDet_X(\Xi)\;\in\;
	     C^{\infty}(
		   (\mbox{$\bigwedge$}^mT^{\ast}X)^{\otimes 2}
 		        \otimes_{\Bbb R}\End_{\Bbb C}(E))
  $$
  of the $\End_{\Bbb C}(E)$-valued $2$-tensor $\Xi$ is defined to be
  the $\End_{\Bbb C}(E)$-valued $2m$-tensor on $X$, locally defined by
  $$
    \SymDet(\hat{\Xi}_U)\, (dx^1\wedge\,\cdots\,\wedge dx^m)^{\otimes 2}
  $$
  on a coordinate chart $U\subset X$ with coordinate functions $(x^1,\,\cdots\,,\,x^m)$.
}\end{ssdefinition}
  
\smallskip

\begin{sslemma} {\bf [well-definedness of $\SymDet(\Xi)$]}$\;$
 The symmetrized determinat $\SymDet(\Xi)$ of $\Xi$, as defined in Definition~3.1.3.12, is well-defined.
\end{sslemma}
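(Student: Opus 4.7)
The only thing to verify is coordinate-independence of the local expression, since for a fixed coordinate chart $U\subset X$ with ordered coordinate functions $(x^1,\ldots,x^m)$ the element $\SymDet(\hat{\Xi}_U)\,(dx^1\wedge\cdots\wedge dx^m)^{\otimes 2}$ makes unambiguous sense in $C^\infty((\bigwedge^mT^{\ast}X)^{\otimes 2}\otimes_{\Bbb R}\End_{\Bbb C}(E))|_U$. The plan is to compare the local expressions on overlapping charts and show they glue to a global section.

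Given a second chart with coordinates $(\tilde{x}^1,\ldots,\tilde{x}^m)$ on (a subchart of) $U$, let $J=(J^\alpha_\mu):=(\partial\tilde{x}^\alpha/\partial x^\mu)$ be the Jacobian. Substituting $d\tilde{x}^\alpha=\sum_\mu J^\alpha_\mu dx^\mu$ into the two local presentations of $\Xi$ and comparing coefficients yields the matrix identity $\hat{\Xi}_U = J^{\transpose}\hat{\tilde{\Xi}}_U J$, while the top-form factors transform by $d\tilde{x}^1\wedge\cdots\wedge d\tilde{x}^m = \Det(J)\,dx^1\wedge\cdots\wedge dx^m$. Crucially, the entries of $J$ are smooth real functions, hence lie in the center $C(R)$ of $R:=C^\infty(\End_{\Bbb C}(E|_U))$, which is precisely the input $\SymDet$ is designed to exploit.

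Thus everything reduces to establishing the identity
\[
  \SymDet(J^{\transpose}MJ)\; =\; \Det(J)^2\,\SymDet(M)
  \qquad\mbox{for }M\in M_{m\times m}(R)\mbox{ and }J\mbox{ with central entries}.
\]
I would prove this via two auxiliary lemmas. \emph{(i)} For $A\in M_{m\times m}(C(R))$ and $M\in M_{m\times m}(R)$, $\SymDet(AM)=\Det(A)\,\SymDet(M)$: the $i$-th row of $AM$ is $\sum_k A_{ik}M_{(k)}$, a $C(R)$-linear combination of rows of $M$, so by $l$-central-linearity (Definition~3.1.3.4) one expands
\[
  \SymDet(AM)\; =\; \sum_{k_1,\ldots,k_m}A_{1k_1}\cdots A_{mk_m}\,\SymDet([M_{(k_1)},\ldots,M_{(k_m)}]^{\transpose})\,,
\]
and the alternating property (together with Lemma~3.1.3.7) collapses the sum to $\Det(A)\,\SymDet(M)$; here $\Det(A)$ is meaningful because $A$ has commuting central entries. \emph{(ii)} $\SymDet(M^{\transpose})=\SymDet(M)$: via the substitution $\tau=\sigma^{-1}$ in the defining sum, using that the symmetrized product $\odot$ is symmetric in its arguments (Remark~3.1.3.10) and that $(-1)^\sigma=(-1)^{\sigma^{-1}}$. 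Combining (i) and (ii):
\[
  \SymDet(J^{\transpose}MJ)\;=\;\Det(J)\,\SymDet(MJ)\;=\;\Det(J)\,\SymDet(J^{\transpose}M^{\transpose})\;=\;\Det(J)^2\,\SymDet(M^{\transpose})\;=\;\Det(J)^2\SymDet(M).
\]

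Finally, $\SymDet(\hat{\Xi}_U)(dx^1\wedge\cdots\wedge dx^m)^{\otimes 2}=\Det(J)^2\,\SymDet(\hat{\tilde{\Xi}}_U)(dx^1\wedge\cdots\wedge dx^m)^{\otimes 2}=\SymDet(\hat{\tilde{\Xi}}_U)(d\tilde{x}^1\wedge\cdots\wedge d\tilde{x}^m)^{\otimes 2}$, so the local expressions agree on overlaps and glue to a well-defined global section. The main subtle point is lemma (i): it uses nothing more than multi-central-linearity plus the alternating property, and in particular does not require associativity of the altered product $\odot$ (cf.\ Remark~3.1.3.10) since the central scalars $A_{ik}$ pass freely across all operations. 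No trivialization of $E$ enters, so the construction is intrinsic to the bundle of rings $\End_{\Bbb C}(E)$.
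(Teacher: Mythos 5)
Your proof is correct, and it takes a genuinely different route from the paper's. The paper's argument trivializes $E$ over each chart, records the transition of the local coefficient matrix as $\hat{\Xi}_{U_\alpha}=J^{\transpose}\Ad_{\hat{h}^{-1}}(\hat{\Xi}_{U_\beta})J$, and then simply \emph{asserts} the transformation law $\SymDet(J^{\transpose}\Ad_{\hat{h}^{-1}}(M)J)=\Ad_{\hat{h}^{-1}}(\SymDet(M))\,(\Det J)^2$ without supplying a separate argument. You instead work directly with the entries $\Xi_{\mu\nu}$ as sections of $\End_{\Bbb C}(E|_U)$ — which they already are, per Definition 3.1.3.12 — so no bundle trivialization and hence no $\Ad$-factor is needed, and the entire transition reduces to the scalar Jacobian. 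What you gain is (a) a shorter, intrinsic setup, and (b) an actual proof of the key identity $\SymDet(J^{\transpose}MJ)=\Det(J)^2\SymDet(M)$ via the two auxiliary lemmas $\SymDet(AM)=\Det(A)\SymDet(M)$ for central $A$ (straight from central multi-linearity and the alternating property) and $\SymDet(M^{\transpose})=\SymDet(M)$ (from symmetry of $\odot$ and $(-1)^{\sigma}=(-1)^{\sigma^{-1}}$). The paper's proof makes the gauge part explicit — which is worth seeing, since it uses that $\Ad_g$ is a ring automorphism and therefore commutes with $\SymDet$ — but your version leaves nothing implicit at the level of the $\SymDet$ computation itself. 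Both arrive at the same gluing conclusion; I'd view yours as a tightened, more self-contained rendering of the same fact.
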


\begin{proof}
 We only need to show that the local expressions of $\SymDet(\Xi)$ transform from one chart to another
  under the local coordinate transformation on $X$ and the accompanying local transition on $E$.
 This is a standard computation.
 Let $U_{\alpha}$ and $U_{\beta}$ be overlapping local charts on $X$ with coordinates
  $\mbox{\boldmath $x$}_{\alpha}:=(x_{\alpha}^1,\,\cdots\,,\, x_{\alpha}^m)$ and
   $\mbox{\boldmath $x$}_{\beta}
       :=(x_{\beta}^1,\,\cdots\,,\, x_{\beta}^m)$
	respectively.
 Let
  $$
    \begin{array}{ccccc}
	 \phi_{\alpha\beta} & : &
   	 (E|_{U_{\alpha}})|_{U_{\alpha}\cap U_{\beta}}  & \longrightarrow
	       & (E|_{U_{\beta}})|_{U_{\beta}\cap U_{\alpha}}\\[1.2ex]
     && (\mbox{\boldmath $x$}_{\alpha},\mbox{\boldmath $v$}_{\alpha})
   		  & \longmapsto
		  & (\mbox{\boldmath $x$}_{\beta}, \mbox{\boldmath $v$}_{\beta})\;
		       =\; (h_{\alpha\beta}(\mbox{\boldmath $x$}_{\alpha}),
			           \hat{h}_{\alpha\beta}(\mbox{\boldmath $x$}_{\alpha})
					     (\mbox{\boldmath $v$}_{\alpha})	)
	\end{array}
  $$
  be the transition map.
 Then the transition of local expressions of $\Xi$ is given by
  $$
   \begin{array}{rcl}
   \phi_{\alpha\beta}^{\ast}(\Xi_{\beta})
     &  =   &  \phi_{\alpha\beta}^{\ast}
	                 (\sum_{\mu^{\prime},\, \nu^{\prime}}
			                   \Xi_{\beta,\,\mu^{\prime}\nu^{\prime}}(\mbox{\boldmath $x$}_{\beta})
				                dx_{\beta}^{\mu^{\prime}}\otimes dx_{\beta}^{\nu^{\prime}})\\[1.2ex]
     &  =   & \sum_{\mu,\,\nu}\left(
	                \sum_{\mu^{\prime},\,\nu^{\prime}}
	                  \hat{h}_{\alpha\beta}(\mbox{\boldmath $x$}_{\alpha})^{-1}\,
					  \Xi_{\beta,\,\mu^{\prime}\nu^{\prime}}
					                       (h_{\alpha\beta}(\mbox{\boldmath $x$}_{\alpha}))\,
					  \hat{h}_{\alpha\beta}(\mbox{\boldmath $x$}_{\alpha})\,				
					    \frac{\partial h^{\mu^{\prime}}_{\alpha\beta}
						                          (\mbox{\scriptsize\boldmath $x$}_{\alpha})}
						         {\partial x_{\alpha}^{\mu}}\,
						\frac{\partial h^{\nu^{\prime}}_{\alpha\beta}
						                          (\mbox{\scriptsize\boldmath $x$}_{\alpha})}
						         {\partial x_{\alpha}^{\nu}} \right)
						dx_{\alpha}^{\mu}\otimes dx_{\alpha}^{\nu}   \\[2.4ex]
     &  =   & \sum_{\mu,\,\nu}\,
	                  \Xi_{\alpha,\,\mu\nu}(\mbox{\boldmath $x$}_{\alpha})\,
					   dx_{\alpha}^{\mu}\otimes dx_{\alpha}^{\nu}  \;.
    \end{array}	
  $$
 In terms of the local coefficient matrix of $\End_{\Bbb C}(E)$-valued $2$-tensors on $X$,
  this says that
  $$
   \begin{array}{c}
    \hat{\Xi}_{U_{\alpha}}\;
	 =\;    {\frac{\partial \mbox{\scriptsize\boldmath $x$}_{\beta} }
	                    {\partial\mbox{\scriptsize\boldmath $x$}_{\alpha}}}^T
  	                 \Ad_{\hat{h}_{\alpha\beta}^{\;\;-1}}(\hat{\Xi}_{U_{\beta}})\,
			      \frac{\partial \mbox{\scriptsize\boldmath $x$}_{\beta} }
	                    {\partial\mbox{\scriptsize\boldmath $x$}_{\alpha}}\,,		
   \end{array}	
  $$
 where
     the Adjoint $\Ad_{\hat{h}_{\alpha\beta}^{\;\;-1}}$
       acts on the $\End_{\Bbb C}(E|_{U_{\alpha}\cap U_{\beta}})$-valued entries of
	   $\hat{\Xi}_{U_{\beta}}$, and
     $\frac{\partial \mbox{\scriptsize\boldmath $x$}_{\beta} }
	                      {\partial\mbox{\scriptsize\boldmath $x$}_{\alpha}}$
       is the $m\times m$ Jacobian matrix of $h_{\alpha\beta}$	
	  with ${\frac{\partial \mbox{\scriptsize\boldmath $x$}_{\beta} }
	                    {\partial\mbox{\scriptsize\boldmath $x$}_{\alpha}}}^T$ its transpose. 	
 
 It follows that
  $$
    \begin{array}{l}
	  \hspace{-2em}
      \SymDet(\hat{\Xi}_{U_{\alpha}})\,
	         (dx^1\wedge\,\cdots\,\wedge dx^m)^{\otimes 2}                              \\[1.2ex]
	   = \; \SymDet\left(
	           {\frac{\partial \mbox{\scriptsize\boldmath $x$}_{\beta} }
	                    {\partial\mbox{\scriptsize\boldmath $x$}_{\alpha}}}^T
  	                 \Ad_{\hat{h}_{\alpha\beta}^{\;\;-1}}(\hat{\Xi}_{U_{\beta}})\,
			      \frac{\partial \mbox{\scriptsize\boldmath $x$}_{\beta} }
	                    {\partial\mbox{\scriptsize\boldmath $x$}_{\alpha}}
	                            \right)	
            (dx^1\wedge\,\cdots\,\wedge dx^m)^{\otimes 2}								\\[2.4ex]
	  =\;  \Ad_{\hat{h}_{\alpha\beta}^{\;\;-1}}
	                 \left(\SymDet(\hat{\Xi}_{U_{\beta}})\right)\,
					(\Det(\frac{\partial\mbox{\scriptsize\boldmath $x$}_{\beta} }
			                {\partial\mbox{\scriptsize\boldmath $x$}_{\alpha}}))^2\,
             (dx^1\wedge\,\cdots\,\wedge dx^m)^{\otimes 2}                               \\[2.4ex]
       =\;  \phi_{\alpha\beta}^{\ast}
	           \left(  \SymDet(\hat{\Xi}_{U_{\beta}})\,
			                  (dx_{\beta}^1\wedge\,\cdots\,\wedge dx_{\beta}^m)^{\otimes 2}
				 \right)\,.
    \end{array}
  $$
 In other words,
  the collection
   $$
      \left\{ \SymDet(\hat{\Xi}_{U_{\alpha}})\,
                  (dx_{\alpha}^1\wedge\,\cdots\,\wedge dx_{\alpha}^m)^{\otimes 2})
				 \right\}_{\alpha}
   $$
   of local sections of
    $(\bigwedge^mT^{\ast}X)^{\otimes 2}\otimes_{\Bbb R}\End_{\Bbb C}(E)$
   glue to a global section of 	
    $(\bigwedge^mT^{\ast}X)^{\otimes 2}\otimes_{\Bbb R}\End_{\Bbb C}(E)$.
 This concludes the lemma.
  
\end{proof}

\bigskip

This resolves Issue (3) in the list.

\bigskip

\subsubsection{Square roots of sections of
  $(\bigwedge^mT^{\ast}X)^{\otimes 2}\otimes_{\Bbb R}\End_{\Bbb C}(E)$}

We give first a general study of square roots of matrices in $M_{r\times r}(\Bbb C)$
 and then apply it to understand the square roots of sections of
  $(\bigwedge^mT^{\ast}X)^{\otimes 2}\otimes_{\Bbb R}\End_{\Bbb C}(E)$.

\vspace{6em}
  
\begin{flushleft}
{\bf Square roots of matrices in $M_{r\times r}({\Bbb C})$}
\end{flushleft}
Note first that for an arbitrary $r\times r$ matrix $m\in M_{r\times r}({\Bbb C})$,
 there may not be an $m^{\prime}\in M_{r\times r}({\Bbb C})$
 that satisfies $(m^{\prime})^2=m$.
In other words, a square root of $m$ may not exist.
This is illustrated by the following example:
 
\bigskip

\begin{ssexample} {\bf [matrix with no square root]}$\;$ {\rm
 Let $r\ge 2$ and $m\in M_{r\times r}({\Bbb C})$ be a nilpotent matrix of nilpotency $r$.
 If  a square root $m^{\prime}$ of $m$ exists, then $m^{\prime}$ must also be nilpotent, of nilpotency $\le r$.
 But this implies in turn that the nilpotency of $m=(m^{\prime})^2$ must be strictly less than $r$, 
 which is a contradiction.
}\end{ssexample}

\bigskip

On the other hand, one has the following affirmative situation:

\bigskip

\begin{ssexample} {\bf [neighborhood of diagonalizable matrices with nonzero eigenvalues]}$\;$ {\rm
 First notice that a diagonalizable matrix in $M_{r\times r}({\Bbb C})$ with all its eigenvalues nonzero
  has $2^r$-many square roots.
 Using the Implicit Function Theorem, one can show then that
  any matrix in a small enough neighborhood of such a matrix in $M_{r\times r}({\Bbb C})$
    has also $2^r$-many square roots.
}\end{ssexample}
 
\bigskip

Indeed, motivated by
  how a commutative subalgebra of $M_{r\times r}({\Bbb C})$ is canonically a $C^{\infty}$-ring
  (cf.\ [L-Y6: Sec.\ 2] (D(11.3.1))),
one can prove a stronger result than Example~3.1.4.2:

\bigskip

\begin{sslemma} {\bf [existence of $2^r$-many square roots of invertible matrix]}$\;$
 Let $\GL_r({\Bbb C})=\{m\,|\,\determinant m\ne 0 \}$
  be the open subset of $M_{r\times r}({\Bbb C})$ that consists of invertible $r\times r$ matrices, 
   with the subset topology from the isomorphism
   $M_{r\times r}({\Bbb C})\simeq {\Bbb C}^{\oplus r^2}$ as ${\Bbb C}$ vector spaces.
 Then,
  $$
   \begin{array}{ccccc}
     \varUpsilon &:  & \GL_r({\Bbb C}) & \longrightarrow  & \GL_r({\Bbb C})   \\[1.2ex]
              && m^{\prime}   & \longmapsto   &  (m^{\prime})^2
   \end{array}
  $$
  is a covering map of degree $2^r$.
 It follows that for $m\in \GL_r({\Bbb C})$, $m$ has exactly $2^r$-many distinct square roots.
\end{sslemma}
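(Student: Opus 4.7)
The plan is to establish the lemma by combining three ingredients: (i) a derivative computation identifying the ramification locus of $\varUpsilon$ and showing it is a local biholomorphism generically; (ii) an explicit enumeration of fibers by holomorphic functional calculus producing exactly $2^r$ preimages; and (iii) a properness argument that, together with the connectedness of $\GL_r(\mathbb{C})$, upgrades a locally constant finite fiber count to a genuine covering-map statement.

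First, I would differentiate $\varUpsilon$ at $m'\in\GL_r(\mathbb{C})$. Since $(m'+tH)^2 = (m')^2 + t(m'H+Hm') + t^2 H^2$, the differential is the Sylvester operator $S_{m'}\colon H\mapsto m'H + Hm'$ on $M_{r\times r}(\mathbb{C})$. Its eigenvalues are the sums $\mu_i+\mu_j$ with $\{\mu_i\}=\Spec(m')$, so $S_{m'}$ is invertible precisely when no two eigenvalues of $m'$ sum to zero. By the Implicit Function Theorem (as already signaled in Example~3.1.4.2), $\varUpsilon$ is a local biholomorphism on the Zariski-open locus where this holds.

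Second, to enumerate preimages of an arbitrary $m\in\GL_r(\mathbb{C})$, I would use the Dunford holomorphic functional calculus. Since $m$ is invertible, $\Spec(m)\subset\mathbb{C}^\times$; for each sign vector $\epsilon\in\{\pm1\}^r$ indexed by the $r$ eigenvalues of $m$ (counted with multiplicity, constant across each Jordan block), one picks a holomorphic branch $f_\epsilon$ of $\sqrt{z}$ on a neighborhood of $\Spec(m)$ realizing those signs, and forms
\[
  m'_\epsilon \;:=\; f_\epsilon(m)\;=\; \frac{1}{2\pi i}\oint_\Gamma f_\epsilon(z)(zI-m)^{-1}\,dz\;\in\;\mathbb{C}[m].
\]
This gives $2^r$ distinct elements with $(m'_\epsilon)^2 = m$, distinctness verified by reading off their eigenvalues from those of $m$.

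Third, I would show $\varUpsilon$ is proper and surjective: surjectivity follows from step (ii), while properness follows since $\|(m')^2\|$ dominates $\|m'\|^2$ up to spectral bounds and $\det((m')^2)=\det(m')^2$ controls the fiber from below. Combining (i)–(iii), $\varUpsilon$ is a proper surjective local biholomorphism with constant fiber cardinality over a connected base, hence a covering map of degree $2^r$. The main obstacle will be step (ii) at matrices where $S_{m'}$ degenerates, most notably over $m=I$: the plan there is to exploit the observation that any $n$ with $n^2=m$ commutes with $m$ (since $nm = n\cdot n^2 = n^3 = mn$), so the entire fiber lies in the centralizer $Z(m)\subset M_{r\times r}(\mathbb{C})$; a Jordan-block analysis of $Z(m)$ together with the functional-calculus construction in (ii) must then be shown to saturate the fiber at the $2^r$ roots $m'_\epsilon$ relevant to the covering structure. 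Reconciling this with the classical matrix-square-root theory is the delicate point of the argument.
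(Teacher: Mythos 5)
Your step (i) is correct, and the overall strategy (\'etale off the ramification locus, plus properness, plus a constant fiber count over a connected base) is reasonable in principle; but two of its three pillars fail here, and they fail for the same underlying reason. First, the holomorphic functional calculus in step (ii) produces only $2^{k}$ distinct square roots, where $k$ is the number of \emph{distinct} eigenvalues of $m$: a single holomorphic germ $f_\epsilon$ defined on a neighborhood of the spectrum cannot take two different values at one and the same spectral point, so sign vectors that differ only on repeated eigenvalues yield the same $f_\epsilon(m)$. This is precisely where the paper's construction diverges from yours: it passes to a Jordan basis, writes $J_m=D_m+N_m$, and chooses the sign of $\sqrt{D_m}$ \emph{entry by entry} on the diagonal --- in particular, different signs on different Jordan blocks carrying the same eigenvalue. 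The resulting $2^r$ matrices still square to $m$ (since $\sqrt{D_m}$ is scalar on each block it commutes with $N_m$, and the truncated binomial series does the rest), but they are in general not polynomials in $m$ and are invisible to the functional calculus. Second, the properness claim in step (iii) is false: for $r\ge 2$ the fiber $\varUpsilon^{-1}(\Id_{r\times r})$ contains the entire conjugacy class of $\mathrm{diag}(-1,1,\dots,1)$, which is positive-dimensional and unbounded --- for instance $\left(\begin{smallmatrix}0&t\\ t^{-1}&0\end{smallmatrix}\right)$ squares to the identity for every $t\ne 0$ --- so preimages of compact sets need not be compact, and the heuristic that $\|(m')^2\|$ dominates $\|m'\|^2$ fails badly for non-normal $m'$.

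The same example shows that the ``delicate point'' you flag at the end cannot be reconciled: over $m=\Id_{r\times r}$ the centralizer is all of $M_{r\times r}({\Bbb C})$, the fiber is infinite and non-discrete, and $\varUpsilon$ is not even a local homeomorphism at $\mathrm{diag}(-1,1,\dots,1)$ (your own Sylvester computation shows $d\varUpsilon$ is singular there, since the eigenvalues $1$ and $-1$ sum to zero). So the global assertion ``covering map of degree $2^r$ with exactly $2^r$ preimages everywhere'' is only available on the locus of matrices with $r$ distinct eigenvalues, where every square root commutes with $m$, hence lies in ${\Bbb C}[m]$, and your functional-calculus enumeration does saturate the fiber; there all three of your steps go through. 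The paper's proof does not attempt your global route: it is purely local, constructing $2^r$ local inverses through the $2^r$ specific roots obtained from the entrywise $\sqrt{D_m}$ and the truncated series in $N_m$, and extending each to a neighborhood of $m$ using that these particular roots are simple --- it never controls the full fiber, and indeed it cannot at points such as the identity. To salvage your argument you should either restrict to the distinct-eigenvalue locus or, following the paper, settle for the existence of $2^r$ local sections rather than a genuine degree-$2^r$ covering.
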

 
\begin{proof}
 Let $m\in\GL_r({\Bbb C})$.
 We construct first $2^r$-many local inverses $\varUpsilon^{-1}(m)$ to $\varUpsilon$ at $m$ as follows.
 Let
   $\upsilon:{\Bbb C}\rightarrow {\Bbb C}$ be the map $z\mapsto z^2$,
     $z_0\in{\Bbb C}-\{0\}$,   and
	 $\sqrt{\upsilon\,}$ be either of the square root of $\upsilon$,
	     defined and analytic on a simply connected region
	    $\Omega\subset {\Bbb C}-\{0\}$ that contains $z_0.$	
   Consider the Taylor expansion of $\sqrt{\upsilon\,}$ at $z_0$ for $z$ with $|z-z_0|\le |z_0|$:
    $$
	  \sqrt{\upsilon\,}(z)\;=\;
	     \sum_{l=0}^{r-1}
		    \frac{(-1)^{l-1}(2l-3)!!}{l!\cdot 2^l}\,\frac{\sqrt{z_0}}{z_0^l}\cdot    (z-z_0)^l\,
 			+\, O((z-z_0)^r)\,,
	$$
   where
     $(2l-3)!!:= \prod_{i=1}^{l-1}(2i-1)$ for $l\ge 2$,   and
     $(-3)!!=-1$, $(-1)!!=1$ by convention.
   Let $m=G_mJ_mG_m^{\,-1}$, where $G_m\in  \GL_r({\Bbb C})$
     and $J_m$ is the Jordan form of $m$.
   Then $J_m=D_m+N_m$,
      where $D_m$ is diagonal and $N_m$ is upper triangular and nilpotent, such that $D_mN_m=N_mD_m$.
   In terms of this,
    $$
      \sqrt{\upsilon\,}(m)\;=\;
	     G_m\,
		   \left(
             \sum_{l=0}^{r-1}
		     \frac{(-1)^{l-1}(2l-3)!!}{l!\cdot 2^l}\,\frac{\sqrt{D_m}}{D_m^{\, l}}
		     N_m^{\,l}
		   \right)\,   G_m^{\,-1}\,.
    $$	
  This reduce the problem of defining $\sqrt{\upsilon\,}(m)$ to the existence of $\sqrt{D_m}$.
  The latter holds, since $D_m$ is diagonal, and has $2^r$-many choices.
  This says that $m$ has $2^r$-many inverses, counted with multiplicity, under $\varUpsilon$.
  
 Since $m\in\GL_r({\Bbb C})$,
   all the diagonal entries in the diagonal matrix $D_m$ is non-zero.
 Thus, all these inverses must be simple (i.e.\ distinct of multiplicity $1$).
 It follows that the construction can be extended to a small enough neighborhood of $m\in GL_r({\Bbb C})$
  to define $2^r$-many distinct local inverses to $\varUpsilon$ around $m$.
 This concludes the lemma.
   
\end{proof}
 
\smallskip

\begin{ssdefinition} {\bf [principal square root]}$\;$ {\rm
 (1) For a daigonal matrix $D\in M_{r\times r}({\Bbb C})$ with all the diagonal entries positive,
            we define the {\it principal square root} of $D$, in notation $\sqrt{D}$,
			   to be the unique square root of $D$ that has all the dagonal entries positive as well.
 (2) For $m\in \GL_r({\Bbb C})$
                that lies in a small enough neighborhood of the conjugacy class
				   of a diagonal matrix in $\GL_r({\Bbb C})$ with all the diagonal entries positive,
            we define the {\it principal square root} of $m$, in notation $\sqrt{m}$,	
			   to be the unique square root of $m$	
			   that has all its eigenvalues $\lambda_i$ satisfying $\Real\lambda_i>0$.            			  
}\end{ssdefinition}

\bigskip

\begin{flushleft}
{\bf Principal square root of elements in $A_{\varphi}$}.
\end{flushleft}
Let $\varphi: (X^{\!A\!z},E)\rightarrow Y$ be a differentiable map,
  defined by a ring-homomorphism
   $\varphi^{\sharp}:C^{\infty}(Y)\rightarrow C^{\infty}(\End_{\Bbb C}(E))$
   over ${\Bbb R}\subset {\Bbb C}$..
As an intermediate step, consider the notion of `square roots' of elements in
 $A_{\varphi}:= C^{\infty}(X)\langle \Image(\varphi^{\sharp})\rangle$.
Through
  the built-in inclusion $A_{\varphi}\subset C^{\infty}(\End_{\Bbb C}(E))$
  and the study of the previous theme `{\sl Square roots of matrices in $M_{r\times r}({\Bbb C})$}',
 one learns that
   an element $s\in A_{\varphi}$ may not have a square root in $A_{\varphi}$;
    namely, there may be no element $s^{\prime}\in A_{\varphi}$ such that $(s^{\prime})^2=s$.
However, from the proof of Lemma~3.1.4.3,
  one learns that fiberwise over $p$, if a principal square root of $s(p)\in \End_{\Bbb C}(E|_p)$ exists,
    it comes from the $C^{\infty}$-ring structure of $A_{\varphi}|_p$.
It follows that for $s\in A_{\varphi}$,
 if there is an $s^{\prime}\in C^{\infty}(\End_{\Bbb C}(E))$
 such that for all $p\in X$,  $s^{\prime}(p)$ is the principal square root of $s$,
 then $s^{\prime}$ must lie in $A_{\varphi}$.

\bigskip

\begin{ssdefinition} {\bf [principal square root of element in $A_{\varphi}$]}$\;$ {\rm
  An $s^{\prime}\in A _{\varphi}$ is the {\it principal square root} of $s\in A_{\varphi}$
   if for all $p\in X$, $s^{\prime}(p)$ is the principal square root of $s(p)$.
}\end{ssdefinition}
  
\smallskip
 
\begin{sslemma} {\bf [criterion for existence of principal square root]}$\;$
 Let $s\in A_{\varphi}\subset \End_{\Bbb C}(E)$.
 Then,
   the principal square root $\sqrt{s\,}$ of $s$ exists in $A_{\varphi}$
    if and only if, for all $p\in X$, the principal square root $\sqrt{s(p)\,}\in \End_{\Bbb C}(E|_p)$ exists.
\end{sslemma}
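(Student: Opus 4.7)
The only-if direction is immediate from Definition~3.1.4.5: if $\sqrt{s}\in A_{\varphi}$, then evaluation at any $p\in X$ gives an element of $\End_{\Bbb C}(E|_{p})$ whose square is $s(p)$ and whose eigenvalues have positive real part, which by uniqueness is the principal square root of $s(p)$. For the if-direction, the plan is to define $s':X\to\End_{\Bbb C}(E)$ pointwise by $s'(p):=\sqrt{s(p)}$ and then to verify separately that (i) $s'$ is smooth and (ii) $s'\in A_{\varphi}$; Definition~3.1.4.5 will then identify $s'$ with $\sqrt{s}$.

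For (i), the set $\Omega\subset M_{r\times r}({\Bbb C})$ of matrices admitting a principal square root (Definition~3.1.4.4) is characterized by the spectrum avoiding the ray $(-\infty,0]$, hence is open in $M_{r\times r}({\Bbb C})$; and on $\Omega$ the map $m\mapsto\sqrt{m}$ is smooth---indeed real-analytic---either by the Implicit Function Theorem applied to $\varUpsilon(m')=(m')^{2}$ as in the proof of Lemma~3.1.4.3, or, equivalently, by the Dunford/Cauchy representation
$$
\sqrt{m}\;=\;\mbox{\Large $\frac{1}{2\pi\sqrt{-1}}$}\oint_{\gamma}\sqrt{z}\,(zI-m)^{-1}\,dz\,,
$$
where $\gamma$ is any contour in ${\Bbb C}\setminus(-\infty,0]$ winding once about the spectrum of $m$. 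Since by hypothesis $s(p)\in\Omega$ for every $p\in X$, the composition $s'=\sqrt{\,\cdot\,}\circ s$ is smooth on $X$.

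For (ii), the fiberwise containment $s'(p)\in A_{\varphi}|_{p}$ is the content of the paragraph immediately preceding the lemma: the explicit Taylor-polynomial construction of $\sqrt{m}$ in the proof of Lemma~3.1.4.3 realizes it as a polynomial in $m$ with coefficients determined by the spectral data of $m$, so $\sqrt{s(p)}$ lies in the commutative subalgebra generated by $s(p)$, hence in $A_{\varphi}|_{p}$. To globalize, the plan is to work locally on any small open $U\subset X$ on which a single contour $\gamma$ can be chosen enclosing the spectrum of $s(p)$ uniformly for $p\in U$: for each $z\in\gamma$ the resolvent $(zI-s)^{-1}$ then belongs to $A_{\varphi}|_{U}$ (because $zI-s$ is pointwise invertible there and $A_{\varphi}|_{U}$ is a commutative $C^{\infty}(U)$-subalgebra of $C^{\infty}(\End_{\Bbb C}(E|_{U}))$ closed under inversion on the pointwise-invertible locus), and the $C^{\infty}$-limit of Riemann sums of $\sqrt{z}\,(zI-s)^{-1}\,dz$ produces $s'|_{U}$ as an element of $A_{\varphi}|_{U}$; a partition-of-unity patching then yields $s'\in A_{\varphi}$. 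The main obstacle is precisely this last step: producing $s'$ as a smooth section of $\End_{\Bbb C}(E)$ with the correct pointwise values is routine once the functional calculus is at hand, but certifying that $s'$ lies in the possibly rank-varying $C^{\infty}(X)$-subalgebra $A_{\varphi}$ rather than merely in the ambient $C^{\infty}(\End_{\Bbb C}(E))$ requires coupling the fiberwise algebraic inclusion with the smoothness of $s'$, for which the Cauchy/Dunford integral representation of $\sqrt{\,\cdot\,}$ is the cleanest vehicle.
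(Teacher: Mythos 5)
Your only-if direction and your smoothness argument are essentially the paper's own: the paper establishes smoothness of $p\mapsto\sqrt{s(p)}$ by observing that $h\mapsto h^2$ is a covering map $\Aut_{\Bbb C}(E)\to\Aut_{\Bbb C}(E)$ over $X$ whose local inverse is a diffeomorphism, which is the Implicit-Function-Theorem statement you invoke (the Dunford integral is a third equivalent formulation). Where you diverge is in how the membership $\sqrt{s}\in A_{\varphi}$ is certified. The paper does \emph{not} address this inside the proof of the lemma at all; it has already disposed of it in the paragraph preceding Definition~3.1.4.5, where it argues that the principal square root is a $C^{\infty}$-ring operation, so that $\sqrt{s}$ is obtained by applying a (locally defined) $C^{\infty}$ function to the element $s$ of the $C^{\infty}$-subring $A_{\varphi}$ and therefore lands in $A_{\varphi}$ automatically. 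The lemma's written proof is then reduced to the smoothness statement alone.

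The step you flag as the main obstacle is indeed a genuine gap in your argument as written. Passing to a $C^{\infty}$-limit of Riemann sums of $\sqrt{z}\,(zI-s)^{-1}\,dz$ produces a smooth section whose values lie in $A_{\varphi}|_{p}$ fiberwise, but this alone does not certify membership in $A_{\varphi}$: since the surrogate $X_{\varphi}$ need not be flat or uniform over $X$, the subalgebra $A_{\varphi}$ can have pointwise rank that jumps, and a finitely-generated $C^{\infty}(X)$-submodule of $C^{\infty}(\End_{\Bbb C}(E))$ with jumping fiber rank need not be closed under $C^{\infty}$-limits. What closes the gap is not the convergence of Riemann sums inside $A_{\varphi}$ but the $C^{\infty}$-ring structure on $A_{\varphi}$ itself (the one inherited from $\tilde{\varphi}^{\sharp}:C^{\infty}(X\times Y)\to A_{\varphi}$, cf.\ [L-Y6: Sec.\ 2]): once you know $p\mapsto\sqrt{s(p)}$ is smooth and that, locally on $U$, $\sqrt{\,\cdot\,}$ is given by a $C^{\infty}$ function on a neighborhood of the spectrum of $s|_{U}$, you apply that function to $s$ via the $C^{\infty}$-ring operations of $A_{\varphi}|_{U}$ and the result lies in $A_{\varphi}|_{U}$ by definition, then patch. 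So your outline is salvageable, but the mechanism that actually delivers the membership is the $C^{\infty}$-ring structure of $A_{\varphi}$, not the limit of Riemann sums; the Dunford integral remains useful for the smoothness half but should be decoupled from the membership half.
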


\medskip

\begin{proof}
 We only neeed to show the if-part.
 For that, one only needs to show
   that the correspondence $X\rightarrow \End_{\Bbb C}(E)$ with $p\mapsto \sqrt{s(p)\,}$ is smooth.
 The latter follows from the observation
   that in the current situation
     $s \in C^{\infty}(\Aut_{\Bbb C}(E)) \subset C^{\infty}(\End_{\Bbb C}(E))$   and
   that the  square map $\Aut_{\Bbb C}(E)\rightarrow \Aut_{\Bbb C}(E)$, $h\mapsto h^2$,
     over $X$ is a smooth covering map and, hence, its local inverse, which takes in particular $s$ to $\sqrt{s}$,
	 must be a diffeomorphism.
   
\end{proof}

\bigskip

\begin{flushleft}
{\bf Square roots of sections of
          $(\bigwedge^mT^{\ast}X)^{\otimes 2}\otimes_{\Bbb R}\End_{\Bbb C}(E)$}
\end{flushleft}
It follows from the previous theme `{\sl Square roots of matrices in $M_{r\times r}({\Bbb C})$}'
 that for a general section $s$ of
 $(\bigwedge^mT^{\ast}X)^{\otimes 2}\otimes_{\Bbb R}\End_{\Bbb C}(E)$,
 the square root of $s$ is only a rational multi-section of
 $\bigwedge^mT^{\ast}X\otimes_{\Bbb R}\End_{\Bbb C}(E)$,
 defined on an open subset of $X$.

\bigskip

\begin{sslemma} {\bf [principal square root of symmetrized determinant of pull-push of metric tensor]}$\;$
 $(1)$
  Let
   $(Y,g)$ be a Lorentzian $n$-manifold   and
   $\varphi:(X^{\!A\!z},E;\nabla)\rightarrow Y$ be a Lorentzian admissible map.
  Then
   $$
     -\,\SymDet_X(\varphi^{\diamond}g)\;
	        \in\;   C^{\infty}((\mbox{$\bigwedge$}^mT^{\ast}X)^{\otimes 2}
	                                          \otimes_{\Bbb R}\End_{\Bbb C}(E))
   $$
   has a well-defined principal square root
   $$
     \sqrt{
         -\, \SymDet_X(\varphi^{\diamond}g)\,}\;
		    \in\;   C^{\infty}
			            (\mbox{$\bigwedge$}T^{\ast}X\otimes_{\Bbb R}\End_{\Bbb C}(E))\,.
   $$
 $(2)$
  Let either
    $(Y,g)$ be a Lorentzian $n$-manifold   and
       $\varphi:(X^{\!A\!z},E;\nabla)\rightarrow Y$ be a spacelike admissible map,
  or $(Y,g)$ be a Riemannian $n$-manifold   and
       $\varphi:(X^{\!A\!z},E;\nabla)\rightarrow Y$ be a Riemannian admissible map.
  Then	
   $$
       \SymDet_X(\varphi^{\diamond}g)\;
	        \in\;   C^{\infty}((\mbox{$\bigwedge$}^mT^{\ast}X)^{\otimes 2}
	                                          \otimes_{\Bbb R}\End_{\Bbb C}(E))
   $$
   has a well-defined principal square root
   $$
     \sqrt{
        \SymDet_X(\varphi^{\diamond}g)\,}\;
		    \in\;   C^{\infty}
			            (\mbox{$\bigwedge$}^mT^{\ast}X \otimes_{\Bbb R}\End_{\Bbb C}(E))\,.
   $$
\end{sslemma}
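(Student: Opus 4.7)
The plan is to reduce the global existence of a principal square root of $\mp\SymDet_X(\varphi^{\diamond}g)$ to pointwise existence in $\End_{\Bbb C}(E|_p)$, and then to globalize via Lemma~3.1.4.6 together with the transformation law derived in the proof of Lemma~3.1.3.13. The crucial structural input is that, by Lemma~3.1.1 and the admissibility condition (1) in Definition~2.2.1, the local coefficient matrix $\hat{\Xi}_U = (\Xi_{\mu\nu})_{\mu\nu}$ of $\Xi := \varphi^{\diamond}g$ has all entries in the commutative subring $A_{\varphi}|_U \subset C^{\infty}(\End_{\Bbb C}(E|_U))$. On this commutative subring the symmetrized product $\odot$ coincides with ordinary multiplication, so $\SymDet(\hat{\Xi}_U) = \Det(\hat{\Xi}_U) \in A_{\varphi}|_U$.

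Next I would compute the eigenvalues of the endomorphism $\SymDet(\hat{\Xi}_U)(p) \in \End_{\Bbb C}(E|_p)$ at each $p \in U$. Since the $\Xi_{\mu\nu}(p)$ pairwise commute, there exists $G_p \in \Aut(E|_p)$ simultaneously triangularizing them; write $\lambda^{(s)}_{\mu\nu}(p)$ for the $s$-th diagonal entry of $G_p^{-1}\Xi_{\mu\nu}(p)G_p$, $s=1,\ldots,r$. Because sums and products of commuting upper-triangular matrices are upper-triangular with componentwise sum and product along the diagonal, conjugating the defining formula for $\Det$ shows that $G_p^{-1} \SymDet(\hat{\Xi}_U)(p) G_p$ is upper-triangular with $s$-th diagonal entry $\det((\lambda^{(s)}_{\mu\nu}(p))_{\mu\nu})$, which is precisely the coordinate-basis determinant of the ordinary symmetric $2$-tensor $g^{(s)}_p := \sum_{\mu,\nu}\lambda^{(s)}_{\mu\nu}(p)\,dx^\mu|_p \otimes dx^\nu|_p$. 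By Definition~3.1.1.2 each $g^{(s)}_p$ lies in the characteristic tensor-set $\Lambda_{\varphi^{\diamond}g}(p)$, so the $r$ eigenvalues of $\SymDet(\hat{\Xi}_U)(p)$ are exactly the numbers $\det(g^{(s)}_p)$, $s=1,\ldots,r$.

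The sign hypothesis of the lemma then resolves the pointwise problem. In case (1) each $g^{(s)}_p$ is Minkowskian on $T_pX$, so $\det(g^{(s)}_p)<0$ and $-\SymDet(\hat{\Xi}_U)(p)$ has all eigenvalues strictly positive; in case (2) each $g^{(s)}_p$ is Euclidean, so $\det(g^{(s)}_p)>0$ and $\SymDet(\hat{\Xi}_U)(p)$ itself has all eigenvalues strictly positive. Either way, Definition~3.1.4.4 yields a well-defined pointwise principal square root in $\End_{\Bbb C}(E|_p)$, and Lemma~3.1.4.6 upgrades this to a smooth local section $s_U \in A_{\varphi}|_U$ satisfying $s_U^{\,2} = \mp\,\SymDet(\hat{\Xi}_U)$.

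For the globalization I would patch $s_U\,(dx^1\wedge\cdots\wedge dx^m)$ into a global section of $\bigwedge^m T^{\ast}X\otimes_{\Bbb R}\End_{\Bbb C}(E)$. The transformation law worked out in the proof of Lemma~3.1.3.13 shows that a coordinate change conjugates the endomorphism part of $\SymDet(\hat{\Xi}_U)$ by $\Ad_{\hat{h}_{\alpha\beta}^{-1}}$ and multiplies the scalar part by $(\det(\partial \mathbf{x}_\beta/\partial\mathbf{x}_\alpha))^2$; since $X$ is oriented (Convention preceding Definition~3.1.3.1), this Jacobian is strictly positive, so its unique positive square root is exactly the scalar factor required for the local square roots to match on overlaps, while the principal-square-root operation commutes with the conjugation $\Ad_{\hat{h}_{\alpha\beta}^{-1}}$ since it is a universal analytic function of the endomorphism. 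I expect the main subtlety to be the smooth dependence on $p$, because the triangularizing $G_p$ is generically not continuous in $p$; this obstacle is absorbed by Lemma~3.1.4.6, whose proof relies only on the square map being a local diffeomorphism on the open subset of $\Aut_{\Bbb C}(E)$ whose eigenvalues avoid the negative real axis --- precisely the locus in which our pointwise endomorphisms lie by the sign computation above.
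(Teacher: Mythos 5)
Your proposal is correct and follows essentially the same route as the paper's own proof: reduce to the commutative subring $A_{\varphi}$ where $\SymDet=\Det$, compute the eigenvalues of $\SymDet(\hat{\Xi}_U)(p)$ as the determinants $\det(\lambda^{(s)}_p)$ of the characteristic tensors via simultaneous triangularization, use the Lorentzian/Riemannian sign hypothesis to conclude these eigenvalues are all of one strict sign, and then invoke the covering-map / pointwise-to-smooth criterion (Lemma~3.1.4.3 and Lemma~3.1.4.6) for existence of the principal square root. Your treatment of the globalization step (positivity of the Jacobian from orientedness, and commutativity of the principal-square-root operation with $\Ad_{\hat{h}_{\alpha\beta}^{-1}}$) is more explicit than the paper's, which leaves that step implicit, but it is a faithful elaboration rather than a different argument.
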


\medskip

\begin{proof}
 Let $\Aut_{\Bbb C}(E)\subset \End_{\Bbb C}(E)$ be the automorphism bundle of $E$.
 Then it follows from Lemma~3.1.4.3
   that the map
      $\Aut_{\Bbb C}(E)\rightarrow \Aut_{\Bbb C}(E)$, $h\mapsto h^2$,
	  over $X$
    is a covering map of degree $2^r$, where is the rank of $E$ as a complex vector bundle over $X$.
 It follows that
  as a long as a section $s$  in
   $(\mbox{$\bigwedge$}^mT^{\ast}X)^{\otimes 2}
	                                          \otimes_{\Bbb R}\End_{\Bbb C}(E)$
   lies in the open subset
       $(\mbox{$\bigwedge$}^mT^{\ast}X)^{\otimes 2}
	                                          \otimes_{\Bbb R}\Aut_{\Bbb C}(E)$
   and for each point $p\in X$, the principal square root $\sqrt{s(p)\,}$ exists,
 then $\sqrt{s}$ exists as a smooth section of
   $\mbox{$\bigwedge$}^mT^{\ast}X \otimes_{\Bbb R}\Aut_{\Bbb C}(E)
      \subset  \mbox{$\bigwedge$}^mT^{\ast}X
	                                          \otimes_{\Bbb R}\End_{\Bbb C}(E)$.
	
 For Statement (1),
  for each $p\in X$, let
   $$
     \lambda^{(1)},\,\cdots\,,\,\lambda^{(r)} \;
	   \in\;  \Lambda_{\varphi^{\diamond}g}(p)
   $$
  give the characteristic tensor of $\varphi^{\diamond}g$ over $p$.
 Counted with multilicity, they defines $r$-many inner products on $T_pX$.
 By construction,
  $$
    \SymDet((\varphi^{\diamond}g)(p))\;
	 =\; \Det((\varphi^{\diamond}g)(p))\;
	 =\; G_p\cdot \left[
	           \begin{array}{ccc}
			     \Det(\lambda^{(1)})  & \ast   & \ast   \\
				    0   &  \ddots   & \ast                                           \\[1.2ex]
					0   &  0   & \Det(\lambda^{(r)})			
	           \end{array}
	                     \right]
	         \cdot G_p^{\;\;-1}\,,
  $$
  for some $G_p\in \Aut(E|_p)$.
 Since $\varphi$ is Lorentzian, each $\lambda^{i}$ defines
  a (non-degenerate) Minkowskian inner product on $T_pX$, for $i=1,\,\ldots\,, r$.
 It follows that the proof of Lemma~3.1.4.3 that
  $$
      -\, \SymDet((\varphi^{\diamond}g)(p))\;
	 =\; -\, \Det((\varphi^{\diamond}g)(p))\;
	 =\; G_p\cdot \left[
	           \begin{array}{ccc}
			     -\, \Det(\lambda^{(1)})  & -\, \ast   & -\, \ast   \\
				    0   & -\, \ddots   & -\, \ast                                           \\[1.2ex]
					0   &  0   & -\, \Det(\lambda^{(r)})			
	           \end{array}
	                     \right]
	         \cdot G_p^{\;\;-1}
  $$
  admits a principal square root
   $\sqrt{-\,\SymDet(\varphi^{\diamond}g)\,}$ of the form
   $$
      \sqrt{-\, \SymDet((\varphi^{\diamond}g)(p))\,}
	 =\; G_p\cdot \left[
	           \begin{array}{ccc}
			     \sqrt{-\, \Det(\lambda^{(1)})\,}  &  \ast^{\prime}   & \ast^{\prime}  \\
				    0   &  \ddots   &   \ast^{\prime}                                           \\[1.2ex]
					0   &  0   &  \sqrt{-\, \Det(\lambda^{(r)})\,}			
	           \end{array}
	                     \right]
	         \cdot G_p^{\;\;-1}\,.
   $$
 This proves Statement (1)
   
 Statement (2) is proved by a similar argument.

\end{proof}

\bigskip

It follows that, for a Lorentzian map $\varphi:(X^{\!A\!z},E;\nabla)\rightarrow (Y,g,B)$,
 \begin{itemize}
  \item[\LARGE $\cdot$] {\it
   If $B$ and $F_{\nabla}$ are small,
   the $\End_{\Bbb C}(E)$-valued $2$-tensor
     $\;-\SymDet_X(\varphi^{\diamond}(g+B)+2\pi\alpha^{\prime}F_{\nabla})$,
      now regarded as from a deformation of $\;-\SymDet_X(\varphi^{\diamond}g)$,
    has a well-defined principal square root
   $\;\sqrt{-\SymDet(\varphi^{\diamond}(g+B)+2\pi\alpha^{\prime}F_{\nabla})\,}$.}
 \end{itemize}
Similarly, for the other two situations.
 
\medskip
 	
This resolves Issue (4) in the list.
 
\bigskip

\begin{ssremark} {\it $[\,$Where the tensors take their value$\,]$}$\;$ {\rm
 Let $\varXi$ be a $2$-tensor on $Y$.
 By construction,
  both $\varphi^{\diamond}\varXi$ and, hence, its symmetrized determinant
    $\SymDet_X(\varphi^{\diamond}\varXi$) are $A_{\varphi}$-valued tensors on $X$.
 It follows from Lemma~3.1.4.6
 and the construction of $\sqrt{\pm \SymDet_X(\varphi^{\diamond}\varXi)}$ that
  if the principal square root $\sqrt{\pm \SymDet_X(\varphi^{\diamond}\varXi)}$
       exists as an $\End_{\Bbb C}(E)$-valued tensor on $X$,
  then it must be indeed $A_{\varphi}$-valued.
 When in addition $F_{\nabla}$ is taken into account,
 in general
  $\varphi^{\diamond}\varXi+2\pi\alpha^{\prime}F_{\nabla}$,
   and, hence,
      $\SymDet_X(\varphi^{\diamond}\varXi+2\pi\alpha^{\prime}F_{\nabla})$     and
	  $\sqrt{\pm \SymDet_X(\varphi^{\diamond}\varXi+2\pi\alpha^{\prime}F_{\nabla})\,}$
      (if defined)
   are only $\End_{\Bbb C}(E)$-valued.
 This applies when $\varXi=g$, $B$, or $g+B$.
}\end{ssremark}

\bigskip
 
\subsubsection{The factor from the dilaton field $\Phi$ on the target space(-time)}

The dilaton field $\Phi$ is a scalar field on $Y$.
We will take $\Phi$ as smooth, i.e.\ $\Phi \in C^{\infty}(Y)$.
Then, by the definition of pull-push under $\varphi:(X^{\!A\!z},E)\rightarrow Y$,
 $$
   \varphi^{\diamond}\Phi\;  =\; \varphi^{\sharp}(\Phi)
    \hspace{2em}\mbox{and}\hspace{2em}
   e^{-{\varphi^{\diamond}\Phi}}\;
      =\; \varphi^{\sharp}(e^{-\Phi})\;
	  =\;  e^{-\varphi^{\sharp}(\Phi)} \;\;\;
   \in\;  A_{\varphi}\,.
 $$
 Here,
  $e^{-\varphi^{\sharp}(\Phi)}$ is defined
   through the $C^{\infty}$-ring structure of $A_{\varphi}$.
 As noted in Remark~3.1.4.8,
  $\sqrt{-\SymDet_X(\varphi^{\diamond}(g+B))\,}$ is $A_{\varphi}$-valued,
   and, hence,
  \begin{itemize}
   \item[\LARGE $\cdot$] {\it
     The factor $e^{-\varphi^{\diamond}\Phi}$ and
	 the principal square root
	   $\sqrt{-\SymDet_X(\varphi^{\diamond}(g+B))\,}$
	 commute.}
  \end{itemize}
Once the gauge curvature $F_{\nabla}$ is also taken into account,
   $e^{-\varphi^{\diamond}\Phi}$ and\\ 	
	   $\sqrt{-\SymDet_X(\varphi^{\diamond}(g+B)+2\pi\alpha^{\prime}F_{\nabla})\,}\;$
	 may not commute for a general $F_{\nabla}$.

\bigskip

\begin{sslemma} {\bf [commutativity with dilaton factor]}$\;$
 Let $(Y,g)$ be Lorentzian and $\varphi$ is admissible to $\nabla$ and Lorentzian.
 Then
   $e^{-\varphi^{\diamond}\Phi}$ and
	   $\sqrt{-\SymDet_X(\varphi^{\diamond}(g+B)+2\pi\alpha^{\prime}F_{\nabla})\,}\;$
   commute.
 Similarly, for
   the case $(Y,g)$ Lorentzian and $\varphi$ admissible and spacelike,    and
   the case $(Y,g)$  Riemannian and $\varphi$ admissible and Riemannian.
\end{sslemma}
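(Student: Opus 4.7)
The plan is to reduce the claimed commutativity to a purely algebraic observation about the commutant $\Comm(A_\varphi)$ of $A_\varphi$ in $C^\infty(\End_{\Bbb C}(E))$, combined with the fact that the principal square root is expressible as a polynomial (indeed, a truncated power series) in its argument and therefore preserves commutation relations.

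First I would assemble three facts. Since $A_\varphi$ is a commutative subring of $C^\infty(\End_{\Bbb C}(E))$, one has $A_\varphi \subset \Comm(A_\varphi)$. By Admissibility Condition~(2) of Definition~2.2.1, the curvature satisfies $F_\nabla \in C^\infty(\Omega_X^2)\otimes_{C^\infty(X)}\Comm(A_\varphi)$. And the pull-push $\varphi^{\diamond}(g+B)$ is $A_\varphi$-valued (cf.\ Remark~3.1.4.8), hence also $\Comm(A_\varphi)$-valued. Therefore the combined $2$-tensor
$$
  \Xi\;:=\;\varphi^{\diamond}(g+B)\,+\, 2\pi\alpha'F_\nabla
$$
has all of its matrix-valued coefficients lying in $\Comm(A_\varphi)$. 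Next I would pass to the symmetrized determinant: because $\Comm(A_\varphi)$ is a (generally noncommutative) subring of $C^\infty(\End_{\Bbb C}(E))$, closed under both addition and multiplication, and the defining formula for $\SymDet$ in Definition~3.1.3.6 involves only sums, signs, and symmetrized products of entries, the section $\SymDet_X(\Xi)$ is itself $\Comm(A_\varphi)$-valued. The same then holds for $-\SymDet_X(\Xi)$ in the Lorentzian case and for $+\SymDet_X(\Xi)$ in the two remaining cases.

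The decisive step is to verify that taking the principal square root preserves membership in $\Comm(A_\varphi)$. Here I would reuse the construction in the proof of Lemma~3.1.4.3: on the open locus where $\sqrt{s\,}$ is defined, that proof realizes $\sqrt{s\,}(p)$ fiberwise as the value of the Taylor polynomial for the principal branch of $z\mapsto\sqrt{z\,}$ applied to the Jordan decomposition $s(p)=D_m+N_m$, truncated after degree $r-1$ since $N_m^{\,r}=0$. This Taylor polynomial is a polynomial in $s(p)$ with scalar coefficients depending only on spectral invariants of $s(p)$; consequently $\sqrt{s\,}(p)$ lies in the bicommutant of $s(p)$, and in particular commutes with every endomorphism that commutes with $s(p)$. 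Applying this fiberwise and using the smoothness provided by Lemma~3.1.4.6, one obtains that
$$
  \sqrt{-\SymDet_X(\Xi)\,}\;\in\;
    C^\infty(\mbox{$\bigwedge$}^m T^{\ast}X\otimes_{\Bbb R}\Comm(A_\varphi))\,.
$$
Since $e^{-\varphi^{\diamond}\Phi}=\varphi^{\sharp}(e^{-\Phi})\in A_\varphi$ by the $C^\infty$-ring structure on $A_\varphi$, and every element of $A_\varphi$ commutes tautologically with every element of $\Comm(A_\varphi)$, the two factors commute. The same argument, with $-\SymDet$ replaced by $+\SymDet$, handles the spacelike and Riemannian cases.

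The main obstacle will be the functional-calculus step: one must confirm that the fiberwise polynomial expression for $\sqrt{s\,}$ from Lemma~3.1.4.3 globalizes to a smooth section whose values lie in $\Comm(A_\varphi)$, despite the fact that the Jordan structure of $\Xi(p)$ may jump as $p$ varies on $X$. Once this is granted, the remainder of the argument is a purely formal consequence of the ring inclusion $A_\varphi\subset\Comm(A_\varphi)$ and of the structural formula for $\SymDet$.
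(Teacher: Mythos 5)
Your proposal is correct and follows essentially the same route as the paper's proof: reduce to the observation that $\sqrt{m_1}$ is, fiberwise, a polynomial in $m_1$ (hence lies in the bicommutant of $m_1$), so it commutes with anything that commutes with $m_1$, and then invoke Admissibility Condition (2) together with $e^{-\varphi^{\diamond}\Phi}\in A_\varphi$ to get the initial commutativity. The paper states the reduction to a pointwise matrix lemma directly and argues via a simultaneous semisimple-plus-nilpotent decomposition $m_i = a_i + n_i$, while you phrase it more structurally through the subring $\Comm(A_\varphi)$ and its stability under $\SymDet$ and under principal square roots; these are the same argument in slightly different dress, and your version spells out the ring-theoretic bookkeeping that the paper leaves implicit.
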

	
\begin{proof}
 Since this a pointwise issue over $X$,
 we only need to prove the following statement:
  \begin{itemize}
   \item[\LARGE $\cdot$] {\it
    Let $m_0, \, m_1\in M_{r\times r}({\Bbb C})$ commute.
	Assume that $\Det(m_1)\ne 0$ and let $\sqrt{m_1}$ be any of its square root.
   Then, $m_0$ and $\sqrt{m_1}$  also commute.}
  \end{itemize}
 
 Since $m_0$ and  $m_1$ commute,
  there exists a decomposition
   $$
     m_0\;=\; a_0+n_0  \hspace{2em} m_1\;=\; a_1+n_1
   $$
  such that
    $a_0$ and $a_1$ are diagonalizable,
	$n_0$ and $n_1$ are nilpotent, and
	$a_0,\, a_1,\, n_0,\, n_1$ commute with each other.
  Since $\sqrt{m_1}$ can be expressed as a polynomial in $n_1$
   with coefficients the evaluation of smooth functions on $a_1$,
  the statement follows.
\end{proof}
 
\bigskip

\begin{ssremark}{\rm [\hspace{.1ex}}dilaton factor{\hspace{.1ex}\rm ]}$\;$ {\rm
 In proving Lemma~3.1.5.1 above,
  the fact that $F_{\nabla}$ lies in the commutant of $A_{\varphi}$ for $\varphi$ admissible to $\nabla$
  is used.
 When this condition is not satisfied, one may have to consider taking the symmetrized product of
     $e^{-\varphi^{\diamond}\Phi}$ and
	   $\sqrt{-\SymDet_X(\varphi^{\diamond}(g+B)+2\pi\alpha^{\prime}F_{\nabla})\,}$.
}\end{ssremark}
 
\bigskip

This resolves Issue (5) in the list.

\bigskip
  
\subsubsection{Reality of the trace}

Let $\varphi:(X^{\!A\!z},E;\nabla)\rightarrow (Y, g, B, \Phi)$
  be an admissible Lorentzian map to a Lorentzian manifold with a $B$-field $B$ and a dilaton field $\Phi$.
As noted in Remark~3.1.4.8,
  the $\End_{\Bbb C}(E)$-valued tensor $\sqrt{-\SymDet_X(\varphi^{\diamond}g)\,}$
     is indeed $A_{\varphi}$-valued,
   and, hence,
  \begin{itemize}
   \item[\LARGE $\cdot$] {\it
     The factor $e^{-\varphi^{\diamond}\Phi}$ and
	 the principal square root
	   $\sqrt{-\SymDet_X(\varphi^{\diamond}g)\,}$
	 commute.}
 \end{itemize}
It follows then from
     the proof of Lemma~3.1.4.7,
     the positivity of eigenvalues of $e^{-\varphi^{\diamond}\Phi}$,    and             								
     the commutivity of
	    $e^{-\varphi^{\diamond}\Phi}$ and  $\sqrt{-\SymDet_X(\varphi^{\diamond}g)\,}$
 that
  \begin{itemize}
   \item[\LARGE $\cdot$]  {\it
    $\Tr (e^{-\varphi^{\diamond}\Phi}\,
     	\sqrt{-\SymDet_X(\varphi^{\diamond}g)})$
	is positive real-valued (when applied to a frame on $X$ that is compatible with the orientation).}
   \end{itemize}
 
 Recall from the near end of Sec.\ 3.1.4, as a consequence of Lemma~3.1.4.7,
   that
    if $\nabla$ and $B$ are such that
     $F_{\nabla}$  and $\varphi^{\diamond}B$ are small in the sense that they are close enough
        to the zero-section of  $(T^{\ast}X)^{\otimes 2}\otimes_{\Bbb R}\End_{\Bbb C}(E)$
		(with respect to the natural topology on the total space thereof),
     then the principal square root
	  $$
	    \sqrt{-\SymDet_X(\varphi^{\diamond}(g+B)+2\alpha^{\prime}F_{\nabla})\,}\;    \in\;
		C^{\infty}(\mbox{$\bigwedge^m$}T^{\ast}X \otimes_{\Bbb R}\End_{\Bbb C}(E)  )
	  $$
	  exists.
 It follows that, as a deformation of
   $\Tr(e^{-\varphi^{\diamond}\Phi}\,\sqrt{-\SymDet_X(\varphi^{\diamond}g)})$,
  $$
    \Tr \left( e^{-\varphi^{\diamond}\Phi}\,
	   \sqrt{-\,\SymDet_X(\varphi^{\diamond}(g+B)+2\alpha^{\prime}F_{\nabla})\,}\,
	     \right)\;
		 \in\;     C^{\infty}(\mbox{$\bigwedge$}^mT^{\ast}X)^{\Bbb C}\,,
  $$
   has positive real part if $B$ and $F_{\nabla}$ are small enough.
However, it may not be real itself.
This can be remedied by taking only the real part of the resulting trace as the action functional.

 
Similarly, for
   the case where $(Y,g)$ is Lorentzian and $(\varphi, \nabla)$ is admissible spacelike  or
   the case where $(Y,g)$ is Riemannian and $(\varphi,\nabla)$ is admissible Riemannian.

\bigskip
 	
This resolves Issue (6) in the list.
 
\bigskip

We have  thus resolved all of Issues (2) -- (6) in the list.

\bigskip

\subsection{The Dirac-Born-Infeld action for admissibles pairs $(\varphi, \nabla)$}

With the preparations in Sec.\ 2.2 and Sec.\ 3.1,
 we can now define the Dirac-Born-Infeld action for D-branes
 along the line of [L-Y1] (D(1)) and [L-Y4] (D(11.1)).
 
\bigskip

\begin{definition} {\bf [Dirac-Born-Infeld action for admissible $(\varphi,\nabla)$]}$\;$ {\rm
 (1)
 Let
   $(Y, \Phi, g, B)$ be a Lorentzian manifold $(Y,g)$ with a $B$-field $B$ and a dilaton field $\Phi$,  and
   $\varphi:(X^{\!A\!z}, E;\nabla)\rightarrow Y$ be an admissible Lorentzian map.
 We assume that $B$ and the curvature $F_{\nabla}$ are small enough.
 Then the {\it Dirac-Born-Infeld action} $S_{\DBI}^{(\Phi,g,B)}$ for the pair $(\varphi,\nabla)$
  is defined to be
 $$
   S_{\DBI}^{(\Phi,g,B)}(\varphi,\nabla)\;
    :=\;  -\, T_{m-1}\,
	         \int_X  \Real \left(
             	\Tr   \left(e^{-\varphi^{\diamond}\Phi}\,
  				\sqrt{-\,\SymDet_X
				   (\varphi^{\diamond}(g+B)\, +\, 2\pi\alpha^{\prime}\, F_{\nabla} )\,}\,
                          \right)             \right)											\,,
 $$
 where
    $m=\dimm X$,
	$T_{m-1}$ is the D$(m-1)$-brane tension,
     $\alpha^{\prime}$ is the Regge slope,   and
	 $2\pi\alpha^{\prime}$ is the inverse of the open-string tension.

  \smallskip
  
 (2)
 Let either
     $(Y,g)$ Lorentzian   and
	 $\varphi:(X^{\!A\!z},E;\nabla)\rightarrow (Y,\Phi, g,B)$ admissible and spacelike
  or 	
    $(Y,g)$ Riemannian  and
	 $\varphi:(X^{\!A\!z},E;\nabla)\rightarrow (Y,\Phi, g, B)$ admissible and Riemannian.
 Assume also that $B$ and $F_{\nabla}$ are small enough.
 Then the {\it Dirac-Born-Infeld action} $S_{\DBI}^{(\Phi,g,B)}$ for the pair $(\varphi,\nabla)$
  is defined to be
  $$
    S_{\DBI}^{(\Phi,g,B)}(\varphi,\nabla)\;
     :=\;  T_{m-1}\,
	         \int_X  \Real \left(
             	\Tr  \left(	e^{-\varphi^{\diamond}\Phi}\,
				  \sqrt{\SymDet_X
				   (\varphi^{\diamond}(g+B)\, +\, 2\pi\alpha^{\prime}\, F_{\nabla} )\,}\,
                             \right)          \right)											\,.
  $$
}\end{definition}

\bigskip

In local coordinate chart $U\subset X$, $V\subset Y$, this is explicitly

 {\footnotesize
  \begin{eqnarray*}
   \lefteqn{
    S_{\DBI}^{(\Phi,g,B)}(\varphi|_U,\nabla|_U)  }\\
    &&	=\;
       \mp\,T_{m-1}\,\int_U
   	                       \Tr
						    e^{-\varphi^{\sharp}(\Phi)}
						    \sqrt{\mp\,
						      \SymDet_U \left( \rule{0ex}{1em} \right.
		                         \sum_{i,j}
								    \varphi^{\sharp}(g_{ij}+ B_{ij})\,
								      D_{\mu}\varphi^{\sharp}(y^i)\,D_{\nu}\varphi^{\sharp}(y^j)\,
		                              +\,  2\pi\alpha^{\prime}\, F_{\mu\nu}
                                       \left. \rule{0ex}{1em}	                                    
 										\right)_{\mu\nu} }\;   d^{\,m}x\,.	 	
  \end{eqnarray*}}

\bigskip

\begin{theorem} {\bf [non-Abelian Dirac-Born-Infeld action for D-brane world-volume]}$\;$
 Under the assumption of
	  the enough weakness of the $B$-field $B$ on $Y$ and the gauge curvature $F_{\nabla}$ on $X$,
 the Dirac-Born-Infeld action $S_{\DBI}^{(\Phi,g,B)}$ for an admissible pair $(\varphi,\nabla)$
   in each setting in Definition~3.2.1
  is well-defined.	
 Furthermore,
  when the rank $r$ of $E$, as a complex vector bundle over $X$, is $1$
  (i.e.\ the case of a simple D-brane where the Chan-Paton bundle $E$ is a complex line bundle),
  the action $S_{\DBI}^{(\Phi,g,B)}$ as defined therein
   resumes to the standard Dirac-Born-Infeld action in the string-theory literature action
     for simple D-branes moving in a space-time with a background metric, $B$-field, and dilaton field,
	 e.g., {\rm [Po3: vol.\ I, Eqn.\ (8.7.2)]}.
\end{theorem}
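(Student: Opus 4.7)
The plan is to assemble the well-definedness claim directly from the preceding subsections of Sec.~3.1, checking ingredient by ingredient, and then collapse the construction in the abelian case $r=1$ back to the classical formula.

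First, for well-definedness, I would verify in sequence that each factor in the integrand is a globally defined smooth section of the appropriate bundle on $X$. By the admissibility of $(\varphi,\nabla)$ (Definition~2.2.1) and the pull-push construction of Sec.~3.1.1, the combination $\varphi^{\diamond}(g+B)$ is an $A_{\varphi}$-valued symmetric-plus-antisymmetric $2$-tensor on $X$, while $F_{\nabla}$ takes values in $\Comm(A_{\varphi})$; hence $\varphi^{\diamond}(g+B)+2\pi\alpha^{\prime}F_{\nabla}$ is a well-defined $\End_{\Bbb C}(E)$-valued $2$-tensor. Lemma~3.1.3.13 upgrades its pointwise $\SymDet$ to a global section of $(\bigwedge^{m}T^{\ast}X)^{\otimes 2}\otimes_{\Bbb R}\End_{\Bbb C}(E)$. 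Lemma~3.1.4.7, together with the openness argument following it, then supplies the principal square root as a smooth section of $\bigwedge^{m}T^{\ast}X\otimes_{\Bbb R}\End_{\Bbb C}(E)$ under the smallness assumption on $B$ and $F_{\nabla}$ (which guarantees that one remains in the neighborhood where Lemma~3.1.4.3 and Definition~3.1.4.4 apply). Next, by Sec.~3.1.5, $e^{-\varphi^{\diamond}\Phi}=\varphi^{\sharp}(e^{-\Phi})\in A_{\varphi}$ and Lemma~3.1.5.1 ensures it commutes with the square root; the trace is then a smooth complex-valued $m$-form on $X$, and taking $\Real(\,\cdot\,)$ per Sec.~3.1.6 produces a real $m$-form, whose integral over the oriented $X$ is a well-defined real number.

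For the $r=1$ case, I would observe that when $E$ is a complex line bundle the endomorphism bundle $\End_{\Bbb C}(E)$ is canonically the trivial complex line bundle and $C^{\infty}(\End_{\Bbb C}(E))=C^{\infty}(X)^{\Bbb C}$ is already commutative. Consequently, $A_{\varphi}=C^{\infty}(X)^{\Bbb C}\langle\Image\varphi^{\sharp}\rangle$ coincides with its own commutant, the surrogate $X_{\varphi}$ is simply the graph of the ordinary smooth map $\varphi:X\rightarrow Y$, the admissibility conditions of Definition~2.2.1 are automatic, and the pull-push $\varphi^{\diamond}$ reduces to the usual pull-back $\varphi^{\ast}$. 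Moreover, the symmetrized product $\odot$ collapses to ordinary multiplication, so by Lemma~3.1.3.7 the symmetrized determinant $\SymDet_{X}$ equals the usual $\Det_{X}$; the trace acts as the identity on scalars, the principal square root is the standard positive square root, and $\Real(\,\cdot\,)$ is the identity because $\varphi^{\sharp}(f)$ is real-valued for real $f$ (Proposition~2.1.3(1)). The action therefore reduces to
\[
  S_{\DBI}^{(\Phi,g,B)}(\varphi,\nabla)\;
    =\;  -\,T_{m-1}\int_{X} e^{-\varphi^{\ast}\Phi}
	      \sqrt{-\,\Det_{X}(\varphi^{\ast}(g+B)+2\pi\alpha^{\prime}F_{\nabla})\,}\,,
\]
which is the standard Dirac-Born-Infeld action, e.g.\ [Po3:~vol.\ I, Eqn.~(8.7.2)], with the analogous reduction in the spacelike/Riemannian settings.

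The main obstacle is making the smallness hypothesis effective and uniform. The $2^{r}$-valued nature of the square root (Lemma~3.1.4.3) means that the pointwise choice of principal square root must glue to a continuous, hence smooth, section; this requires the image of the $A_{\varphi}$-valued $2$-tensor $\varphi^{\diamond}(g+B)+2\pi\alpha^{\prime}F_{\nabla}$ to stay, pointwise over all of $X$, inside the specific sheet of $\Aut_{\Bbb C}(E)$ on which the principal branch of Definition~3.1.4.5 is defined. For non-compact $X$ or degenerating $\varphi$, this is a genuine restriction rather than a formality, and strictly speaking the action is defined only on the open subset of $(\varphi,\nabla)$ where the branch exists globally; this is the price paid for a closed-form, non-perturbative expression rather than the formal expansion used in [Ts1], [Ts2], [My]. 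Once this is granted, the remaining verifications are routine synthesis of the lemmas of Sec.~3.1.
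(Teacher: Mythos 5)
Your proof is correct and follows the same route as the paper's own (quite terse) argument, which simply collects the preparatory results of Sec.~2.2 and Sec.~3.1.1--3.1.6 for well-definedness and declares the $r=1$ reduction immediate; you have usefully made that ingredient-by-ingredient synthesis explicit. The only slips are cosmetic and confined to the $r=1$ paragraph: the paper defines $A_{\varphi}=C^{\infty}(X)\langle\Image\varphi^{\sharp}\rangle$ (no $^{\Bbb C}$), and for $r=1$ one has $A_{\varphi}=C^{\infty}(X)$ whose commutant in $C^{\infty}(\End_{\Bbb C}(E))\simeq C^{\infty}(X)^{\Bbb C}$ is all of $C^{\infty}(X)^{\Bbb C}$, so $A_{\varphi}$ does not literally equal its own commutant --- but since the ambient ring is commutative both admissibility conditions are vacuous regardless, and the rest of your reduction goes through as stated.
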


\medskip

\begin{proof}
 The discussions in Sec.~2.2 on admissible pairs $(\varphi,\nabla)$ and in Sec.\ 3.1.1 -- Sec.\ 3.1.4  and
    the assumption that $B$ and $F_{\nabla}$ are weak enough
  imply that the principal square root
    $$
	    \sqrt{\pm\,
	                \SymDet_X
				   (\varphi^{\diamond}(g+B)\, +\, 2\pi\alpha^{\prime}\, F_{\nabla} )\,}
	$$
   in each case is well-defined and has positive real part.
 This imples the well-definedness and ${\Bbb R}_{> 0}\,$-valuedness of the whole integrand
  in each case of the statement.
  
 That $S_{\DBI}^{(\Phi,g,B)}$ as defined is a generalization of the standard Dirac-Born-Infeld action in,
   for example, the quoted textbook by Polchinski, is immediate.

\end{proof}

\medskip

\begin{remark} {\rm [\hspace{.1ex}}on
                         the overall sign in $S_{\DBI}^{(\Phi,g,B)}${\rm\hspace{.1ex}]}$\;$
{\rm
 Recall that in electrodynamics,
  the action for a relativistic charged particle of mass $m$ and electric charge $e$
   moving in a space-time $(Y,g)$ with a background $U(1)$-gauge field $A^{\prime}$ on $Y$,
     whose curvature gives the electromagnetic field on $Y$,
  is given by the Lorentz-invariant action
  $$
    S_{\EM}(\gamma)\;
	 =\;
     -\, m\,\int_{{\Bbb R}^1}
	         \sqrt{-\,\mbox{$ g(\frac{d\gamma}{d\tau}, \frac{d\gamma}{d\tau})$ } }\,d\tau\,
	 +\,  e\,\int_{{\Bbb R}^1}\gamma^{\ast}A^{\prime}\,.
  $$
  Here, $\gamma:{\Bbb R}^1\rightarrow Y$  is the world-line of the particle, parameterized by $\tau$.
 In comparison with the situation for D-branes,
  $\gamma$ here corresponds to the admissible map $\varphi$
  $A^{\prime}$ here plays the role of a Ramond-Ramond field $C$,
  the first term in $S_{\EM}$ corresponds to the Dirac-Born-Infeld action $S_{\DBI}^{(g)}$,
   and the second term in $S_{\EM}$ corresponds to the Chern=Simons/Wess-Zumino action
    $S_{\CSWZ}^{(C)}$  (cf.\ Sec.\ 6).
  See, e.g., [Ja: Chapter 12].
 This comparison sets the overall sign in $S_{\DBI}^{(\Phi,g,B)}$ to be  $\,-\,T_{m-1}\int_X(\cdots)$,
   rather than $\,+\,T_{m-1}\int_X(\cdots)$ for $\varphi$ Lorentzian.
 
 For $\varphi$ spacelike or Riemannian,
    we set the sign to be $+$, by convention, to fit in with the study of minimal submanifolds and harmonics maps
	(e.g.\ [La], [L-W]).
}\end{remark}

%
%
%
   
\medskip

\begin{remark}{$[\,$open-string-compatible quantizable action and super generalization$\,]$.} \rm
 For the first time since the beginning of this D-project,
  the dynamics of D-branes is addressed along the line of the project in a most natural and geometric way.
 This brings the study of D-branes truly to the same starting point as that for the fundamental string:

 \bigskip
 
 \centerline{
{\footnotesize\it
 \begin{tabular}{|l||l|}\hline
   \hspace{5em}{\bf string theory}\rule{0ex}{1.2em}
       & \hspace{10em}{\bf D-brane theory}\\[.6ex] \hline\hline
     $\begin{array}{l}
	    \mbox{string world-sheet}: \\
 		 \hspace{3em}
		 \mbox{$2$-manifold $\,\Sigma$}
	   \end{array}$
       &  $\begin{array}{l}
	           \mbox{D-brane world-volume}:\rule{0ex}{1.2em}\\
			      \hspace{3em}\mbox{Azumaya/matrix manifold}\\
                  \hspace{3em}\mbox{with a fundamental module with a connection}\\
				  \hspace{4em}(X^{\!A\!z},E,\nabla)
			   \end{array}$	  				   \\[3.8ex] \hline
	 $\begin{array}{l}
	    \mbox{string moving in space-time $Y$}:\rule{0ex}{1.2em} \\
 		 \hspace{3em}
		 \mbox{differentiable map $f:\Sigma \rightarrow Y$}
	   \end{array}$
       &  $\begin{array}{l}
	           \mbox{D-brane moving in space-time $Y$}:\rule{0ex}{1.2em}\\
			      \hspace{3em}
	              \mbox{differentiable map $\varphi:(X^{\!A\!z},E,\nabla)\rightarrow Y$}
			   \end{array}$	  				   \\[2.4ex] \hline		
       $\begin{array}{c}
	          \mbox{Nambu-Goto action $\,S_{\tinyNG}\,$  for $f$'s}\end{array}$\rule{0ex}{1.2em}
       &  $\begin{array}{c}\rule{0ex}{1.2em}
	           \mbox{Dirac-Born-Infeld action $\,S_{\tinyDBI}\,$ for $(\varphi,\nabla)$'s}\end{array}$
                                                                     	   \\[1ex] \hline
 \end{tabular}
  } 
 }

 \bigskip

 On the other hand, from the lesson in string theory
  one learns that this action is quite unworkable for quantization of the theory.
 Thus, the above table should be extended immediately to the following not-yet-completed table
  as a guide for further studies:

 \bigskip

 \centerline{
{\footnotesize\it
 \begin{tabular}{|l||l|}\hline
   \hspace{7em}{\bf string theory}\rule{0ex}{1.2em}
       & \hspace{6.4em}{\bf D-brane theory}\\[.6ex] \hline\hline
     $\begin{array}{l}
	    \mbox{string world-sheet}: \\
 		 \hspace{3em}
		 \mbox{$2$-manifold $\,\Sigma$}
	   \end{array}$
       &  $\begin{array}{l}
	           \mbox{D-brane world-volume}:\rule{0ex}{1.2em}\\
			      \hspace{3em}\mbox{Azumaya/matrix manifold}\\
 				  \hspace{3em}\mbox{with a fundamental module with a connection}\\
				  \hspace{4em}\mbox{$(X^{\!A\!z},E,\nabla)$}
			   \end{array}$	  				                                                                      \\[3.8ex] \hline
	 $\begin{array}{l}
	    \mbox{string moving in space-time $Y$}:\rule{0ex}{1.2em} \\
 		 \hspace{3em}
		 \mbox{differentiable map $f:\Sigma \rightarrow Y$}
	   \end{array}$
       &  $\begin{array}{l}\rule{0ex}{1.2em}
	           \mbox{D-brane moving in space-time $Y$}:\\
			      \hspace{3em}
	              \mbox{differentiable map $\varphi:(X^{\!A\!z},E,\nabla)\rightarrow Y$}
			   \end{array}$	  				                                                                             \\[2.4ex] \hline		
       $\begin{array}{c}
	        \mbox{Nambu-Goto action $\,S_{\tinyNG}\,$ for $f$'s}\end{array}$\rule{0ex}{1.2em}
           &  $\begin{array}{c}\rule{0ex}{1.2em}
	                \mbox{Dirac-Born-Infeld action $\,S_{\tinyDBI}\,$ for $(\varphi,\nabla)$'s}\end{array}$
                                                                     	                                                           \\[1ex] \hline
       $\begin{array}{l}\rule{0ex}{1.2em}
	        \mbox{Polyakov action $\,S_{\tinyPolyakov}\,$ for bosonic strings}
	       \end{array}$
	        & \hspace{1.2em}
		        $\begin{array}{l}\rule{0ex}{1.2em}
				 \mbox{\rm ???,
				                      required to be open-string compatible}\end{array}$\\[1ex] \hline
       $\begin{array}{l}\rule{0ex}{1.2em}
	        \mbox{action for Ramond-Neveu-Schwarz superstrings}
	       \end{array}$
	       & \hspace{1.2em}
		      $\begin{array}{l}\rule{0ex}{1.2em}
			       \mbox{\rm ???,$\;$ cf.\ [L-Y5: Sec.\ 5.1] (D(11.2))}\end{array}$
			                                                                                                           \\[1ex] \hline		
       $\begin{array}{l}\rule{0ex}{1.2em}
	        \mbox{action for Green-Schwarz superstrings}
	       \end{array}$
	       & \hspace{1.2em}
		       $\begin{array}{l}\rule{0ex}{1.2em}
			     \mbox{\rm ???,$\;$ cf.\ [L-Y5: Sec.\ 5.1] (D(11.2))}\end{array}$\\[1ex] \hline		   
       $\begin{array}{l}\rule{0ex}{1.2em}
	        \mbox{quantization}
	       \end{array}$
	       & \hspace{1.2em}
		      $\begin{array}{l}\rule{0ex}{1.2em}???\end{array}$\\[1ex] \hline		
 \end{tabular}
  }
 }
 
 \bigskip

 \noindent
 Recall how the Dirac-Born-infeld action for a simple D-branes arises from
   the anomaly-free condition for the world-sheet of open-strings with end-points on such D-brane; 
   cf.\ [Le].
 Here, `open-string compatible' means that the new quantizable action for D-branes
        is required to produce the same anomaly-free conditions for open strings.
\end{remark}

\medskip

\begin{remark} {\rm [\hspace{.1ex}}when in addition
 $\varphi$ is Hermitian and $\nabla$ unitary{\rm\hspace{.1ex}]}$\;$ {\rm
 With the setting in Definition~3.2.1,
 let $E$ be equipped with a Hermitian structure $\langle\:,\:\rangle$.
 If in addition $\varphi$ is Hermitian and $\nabla$ is unitary with respect to $\langle\:,\:\rangle$,
 then one can check that
  $$
     e^{-\varphi^{\sharp}(\Phi)}\sqrt{
	              \mp\,
	                \SymDet_X  \left(
		           \varphi^{\diamond}(g+B))\,
		              +\,  2\pi\alpha^{\prime}\, F_{\nabla}
	                                         \right) }
  $$
   is Hermitian-matrix-valued (with respect to any local unitary frame on $(E,\langle\:,\:\rangle)$).
 %
 %
 %
 %
 %
 %
  and, hence,
    the Dirac-Born-Infeld action for an admissible Hermitian pair $(\varphi,\nabla)$ is simply
     $$
      S_{\DBI}(\varphi,\nabla)\;    :
	   =\; \mp\,
	        T_{m-1}\,\int_X
		     e^{-\varphi^{\diamond}(\Phi)}
    	      \Tr   \sqrt{\mp\,
			                       \SymDet_X \left(
 		                           \varphi^{\diamond}(g+B))\,
 		                              +\,  2\pi\alpha^{\prime}\, F_{\nabla}
 	                                             \right) }\;.
      $$
  Cf.\ Remark~2.3.8 and Remark~5.3.3.
}\end{remark}

\bigskip

\section{Variations of $\varphi^{\sharp}$ in terms of variations of local generators}

Recall from [L-Y6: Sec.\ 2] (D(11.3.1)) that
  \begin{itemize}
   \item[\LARGE $\cdot$] \it
    Let  $(y^1,\,\cdots\,, y^n)$ be a coordinate system on ${\Bbb R}^n$ and
	 $\varphi^{\sharp}:C^{\infty}({\Bbb R}^n)\rightarrow C^{\infty}(\End_{\Bbb C}(E))$
       be a ring-homomorphism over ${\Bbb R}\subset {\Bbb C}$.
    Then for any $f\in C^{\infty}({\Bbb R}^n)$,
      $$
         \varphi^{\sharp}(f)\;=\; f(\varphi^{\sharp}(y^1), \,\cdots\,, \varphi^{\sharp}(y^n))\,.
      $$	
  \end{itemize}
 Here for the Right Hand Side of the equality,
  $\varphi^{\sharp}(y^i)\in C^{\infty}(\End_{\Bbb C}(E))$, for $i=1, \ldots\,, n$,  and
  the value $f(\varphi^{\sharp}(y^1), \,\cdots\,, \varphi^{\sharp}(y^n))$
   is computed pointwise-over-$X$
   through the built-in/canonical $C^{\infty}$-ring structure of the commutative subalgebra
   generated by
      $\varphi^{\sharp}(y^1)(x), \,\cdots\,,\varphi^{\sharp}(y^n)(x)
	       \in \End_{\Bbb C}(E|_x) $ for all $x\in X$.
 That the result lies in $C^{\infty}(\End_{\Bbb C}(E))$
    and coincides with $\varphi^{\sharp}(f)$
  is proved using the Generalized Division Lemma as a consequence of the Malgrange Division Theorem,
  (cf.\ [L-Y6: Step (b) in Proof of Theorem 3.1.1] (D(11.3.1))).
 The equality says that
   \begin{itemize}
    \item[\LARGE $\cdot$] \it
	 A differentiable map $\varphi:(X^{\!A\!z},E)\rightarrow {\Bbb R}^n$,
	   defined by a ring-homorphism
	   $\varphi^{\sharp}: C^{\infty}({\Bbb R}^n)
	          \rightarrow C^{\infty}(\End_{\Bbb C}(E))$,
     is determined by the value of $\varphi^{\sharp}$ on the coordinate functions
	  $y^1,\,\cdots\,, y^n$ of ${\Bbb R}^n$; namely by
      $$
	      \varphi^{\sharp}(y^1)\,,\;  \cdots\,,\;   \varphi^{\sharp}(y^n)\;
		     \in\;  C^{\infty}(\End_{\Bbb C}(E))\,.
	  $$	
   \end{itemize}
([L-Y6: Theorem 3.2.1] (D(11.3.1)).)
  
To calculate the variation of the Dirac-Born-Infeld $S_{\DBI}$ under variations of $(\varphi,\nabla)$,
 one needs to address the following generalization of the above order-$0$ result to higher orders:
  \begin{itemize}
   \item[{\bf Q.}] \parbox[t]{37em}{Let
    $T=(-\varepsilon,\varepsilon)^l\subset {\Bbb R}^l$ be the base manifold and
	 $\varphi_{\mathbf{t}}:(X^{\!A\!z},E)\rightarrow {\Bbb R}^n$,
	     $\mbox{\boldmath $t$}\,:=(t^1,\,\cdots\,,t^l)\in T$,
     	 be a $T$-family of differentiable maps from $(X^{\!A\!z},E)$
  		 to ${\Bbb R}^n$ (with coordinate functions
		 $\mbox{\boldmath $y$}\,:=(y^1,\,\cdots\,,y^n)$),
	defined by
	  $$
	     \varphi_{\mbox{\boldmath \scriptsize $t$}} ^{\sharp}\; :\;
   		  C^{\infty}({\Bbb R}^n)\; \longrightarrow\;
          C^{\infty}(\End_{\Bbb C}(E))\,.		
	  $$	 	
	For $\mbox{\boldmath $\alpha$}\,=(\alpha_1,\,\cdots\,,\alpha_l)$,
	   with $\alpha_1,\,\cdots\,,\alpha_l\in {\Bbb Z}_{\ge 0}$,
   	 let $|\mbox{\boldmath $\alpha$}|:=\alpha_1+\,\cdots\,+\alpha_l$ and
	  $$
	    \mbox{\Large
		  $\frac{\partial^{|\mbox{\boldmath\tiny $\alpha$}|}}
		                 {\partial \mbox{\boldmath\small $t$}^{\mbox{\boldmath\tiny $\alpha$}}}$}\;
		  :=\; \mbox{\Large
		         $\frac{\partial^{|\mbox{\boldmath\tiny $\alpha$}|}}
		        {\rule{0ex}{0.7em}
				   {\partial t^1}^{\,\alpha_1}\, \cdots\, {\partial t^l}^{\,\alpha_l}}$}\,.
	  $$	
    Consider the derivation of order $ |\mbox{\boldmath $\alpha$}| \ge 1$
	  $$
	   \begin{array}{cccccc}
	      \mbox{\Large
		          $\frac{\partial^{|\mbox{\boldmath\tiny $\alpha$}|}}
		           {\partial\mbox{\boldmath\small $t$}^{\mbox{\boldmath\tiny $\alpha$}}}$}
		        \varphi^{\sharp}_{\mbox{\boldmath\scriptsize $t$}} &  :
		     & C^{\infty}({\Bbb R}^n)    & \longrightarrow
		                               & C^{\infty}(\End_{\Bbb C}(E))   \\[1.2ex]
          && f	  & \longmapsto
		     & \frac{\partial^{|\mbox{\boldmath\tiny $\alpha$}|}}
		           {\partial\mbox{\boldmath\scriptsize $t$}^{\mbox{\boldmath\tiny $\alpha$}}}
		              (\varphi^{\sharp}_{\mbox{\boldmath\scriptsize $t$}}(f) )        &.
       \end{array}								
	  $$	
	 Then, for $f\in C^{\infty}({\Bbb R}^n)$,
      $$
	   \begin{array}{l}
     	\mbox{\it Can
          $\; \frac{\partial^{|\mbox{\boldmath\tiny $\alpha$}|}}
		           {\partial\mbox{\boldmath\scriptsize $t$}^{\mbox{\boldmath\tiny $\alpha$}}}
		              (\varphi^{\sharp}_{\mbox{\boldmath\scriptsize $t$}}(f) )\;$
		    be expressed in terms of
          $\; \left(
		          \frac{\partial^{|\mbox{\boldmath\tiny $\alpha$}_1|}}
		           {\partial\mbox{\boldmath\scriptsize $t$}^{\mbox{\boldmath\tiny $\alpha$}_1}}
		              (\varphi^{\sharp}_{\mbox{\boldmath\scriptsize $t$}}(y^1) )\,,\;
					  \cdots\,,\;
         \frac{\partial^{|\mbox{\boldmath\tiny $\alpha$}_n|}}
		           {\partial\mbox{\boldmath\scriptsize $t$}^{\mbox{\boldmath\tiny $\alpha$}_n}}
		              (\varphi^{\sharp}_{\mbox{\boldmath\scriptsize $t$}}(y^n) )
					  \right)$'s}                 \\
        \mbox{\it with
	      $\; \mbox{\boldmath $\alpha$}_1 +\,\cdots\,+ \mbox{\boldmath $\alpha$}_n
	          =  \mbox{\boldmath $\alpha$}\,$?}
	   \end{array}
      $$	
	   } 
  \end{itemize}
In this section, we answer this question affirmatively at the level of germs of differentiable functions,
 following a similar reasoning as in [L-Y6] (D(11.3.1)).
The result will be used to calculate the first variation of  $S_{\DBI}^{(\Phi,g,B)}$ in the current notes
 and the second variation of  $S_{\DBI}^{(\Phi,g,B)}$ in a sequel.

\bigskip
\bigskip

\begin{flushleft}
{\bf\large 4.1\hspace{0.6em}
   $\varphi^{\sharp}_{\mbox{\boldmath\scriptsize $t$}}(f)$
		 in terms of
		  $(\varphi^{\sharp}_{\mbox{\boldmath\scriptsize $t$}}(y^1),\,\cdots\,
			  \varphi^{\sharp}_{\mbox{\boldmath\scriptsize $t$}}(y^n))$
          via Generalized Division Lemma}
\end{flushleft}		
With the notations from above,
we recall from [L-Y6] (D(11.3.1)) how the right-hand-side of the equality
   $$
     \varphi^{\sharp}_{\mbox{\boldmath\scriptsize $t$}}(f)
	  \; =\;
	    f(\varphi^{\sharp}_{\mbox{\boldmath\scriptsize $t$}}(y^1)\,,\,\cdots\,,\,
             \varphi^{\sharp}_{\mbox{\boldmath\scriptsize $t$}}(y^n))
   $$
   is expressed in terms of
     $f$ and
     $\varphi^{\sharp}_{\mbox{\boldmath\scriptsize $t$}}(y^1)\,,\,\cdots\,,\,
             \varphi^{\sharp}_{\mbox{\boldmath\scriptsize $t$}}(y^n)$
    via the Generalized Division Lemma, a corollary of the Malgrange Division Theorem
	([Mal]; see also [Br\"{o}], [Mat1], [Mat2], and [Ni]).		
Readers are referred to ibidem for more details.

\bigskip

\begin{flushleft}
{\bf The basic setup}
\end{flushleft}
For convenience, consider the projection map
 $$
    X_T\, :=\,  X\times T\; \longrightarrow\; X
 $$
 and denote the pull-back of $E$ to $X\times T$ by $E_T$.
Then the $T$-family of differentiable maps
  $$
    \{\varphi_{\mbox{\boldmath\scriptsize $t$}}:(X^{\!A\!z},E)\rightarrow {\Bbb R}^n\;|\;
	     \mbox{\boldmath $t$} \in T \}\,,
  $$
  with $\varphi_{\mbox{\boldmath\scriptsize $t$}}$
   defined by a ring-homomorphism
    $\varphi^{\sharp}_{\mbox{\boldmath\scriptsize $t$}}:
	   C^{\infty}({\Bbb R}^n)\rightarrow C^{\infty}(\End_{\Bbb C}(E))$
  over ${\Bbb R}\subset {\Bbb C}$,	
 defines a differentiable map
  $$
    \xymatrix{
    (X_T^{\!A\!z}, E_T)\ar[rr]^-{\varphi_T}    &&{\Bbb R}^n
	}
  $$
  that is defined by the ring-homomorphism
  $$
    \xymatrix{
	  C^{\infty}(\End_{\Bbb C}(E_T))
	     && C^{\infty}({\Bbb R}^n)\ar[ll]_-{\varphi^{\sharp}_T}
	  } 	
  $$
   over ${\Bbb R}\subset{\Bbb C}$
  that restricts to $\varphi^{\sharp}_{\mbox{\boldmath\scriptsize $t$}}$,
    for all $\mbox{\boldmath $t$}\,\in T$.
 This extends canonically to a commutative diagram of ring-homomorphisms
  (over ${\Bbb R}$ or ${\Bbb R}\subset{\Bbb C}$, whichever is applicable)
  ([L-Y6: Theorem 3.1.1] (D(11.3.1)))
  $$
    \xymatrix{
	  \;C^{\infty}(\End_{\Bbb C}(E_T))\;		
	     && \;\;C^{\infty}({\Bbb R}^n)\;\;
		           \ar[ll]_-{\varphi^{\sharp}_T}
				   \ar@{_{(}->}[d]^-{pr^{\sharp}_{{\Bbb R}^n}}   \\
	  \;\;C^{\infty}(X_T)\rule{0ex}{1em}\;
	       \ar@{^{(}->}[rr]_-{pr^{\sharp}_{X_T}}      \ar@{^{(}->}[u]
	     && C^{\infty}(X_T\times{\Bbb R}^n)
		          \ar[llu]_-{\tilde{\varphi}^{\sharp}_T}\;,
    }
  $$
  where
    $\pr_{X_T}:X_T\times {\Bbb R}^n \rightarrow X_T$ and
	$\pr_{{\Bbb R}^n}:X_T\times{\Bbb R}^n\rightarrow{\Bbb R}^n$ are the projection maps,  and
	$C^{\infty}(X_T)\hookrightarrow C^{\infty}(\End_{\Bbb C}(E))$
	   follows from the inclusion of the center $C^{\infty}(X_T)^{\Bbb C}$ of
	   $C^{\infty}(\End_{\Bbb C}(E_T))$.
 This in turn defines the following diagrams of differentiable maps that extends $\varphi_T$	   
   $$
    \xymatrix{
	  \;(X_T^{\!A\!z},E_T)\;
	           \ar[rr]^-{\varphi_T}  \ar[rrd]^-{\tilde{\varphi}_T}  \ar@{->>}[d]
	       &&   \;{\Bbb R}^n\;    \\
	    \;X_T\;
		   &&   \;\;X_T\times {\Bbb R}^n
		                 \ar@{->>}[ll]^-{pr_{X_T}}
                         \ar@{->>}[u]_-{pr_{{\Bbb R}^n}}						 \,.	
     }
   $$
 Let ${\cal E}_{T}$ be the sheaf of  $C^{\infty}$-sections of	$E_T$.
 Then,
  the ${\cal O}_{X_T\times {\Bbb R}^n}^{\,\Bbb C}$-module
  $$
     \tilde{\cal E}_{\varphi_T}\;  :=\;   \mbox{$\tilde{\varphi}_T$}_{\ast}({\cal E}_T)
  $$
  defines the {\it graph} of  $\varphi_T$.
  Its $C^{\infty}$-scheme-theoretical support
    $$
	   \varGamma_{\varphi_T}   \; :=\;  \Supp(\tilde{\cal E}_{\varphi_T})\;
	     \subset\; X_T\times{\Bbb R}^n
	$$
  is finite and algebraic over  $X_T$
    under the restriction of the projection $\pr_{X_T}:X_T\times{\Bbb R}^n\rightarrow X_T$.

\bigskip

\begin{flushleft}
{\bf  $\varGamma_{\varphi_T}$ and the spectral subscheme $\varSigma_{\varphi_T}$
           of $\varphi^{\sharp}_T$ in $X_T\times{\Bbb R}^n$}	
\end{flushleft}	
Under the projection map $\pr_{{\Bbb R}^n}:X_T\times{\Bbb R}^n\rightarrow{\Bbb R}^n$,
the generators $y^1\,,\,\cdots\,,\, y^n$ of $C^{\infty}({\Bbb R}^n)$, as a $C^{\infty}$-ring,
 pull back to elements in $C^{\infty}(X_T\times{\Bbb R}^n)$.
They will still be denoted by $y^1\,,\,\cdots\,,\, y^n$.
Define the {\it spectral subscheme} of $\varphi_T$
  ({\it associated to the generating set $\{y^1\,,\,\cdots\,,\,y^n\}$ of $C^{\infty}({\Bbb R}^n)$})
  to be the subscheme
  $$
    \varSigma_{\varphi_T; \{y^1\,,\,\cdots\,,\,y^n\}}\;
	  \subset\; X_T\times {\Bbb R}^n
  $$
  defined by the ideal
  $$
    I_{\varphi_T;\{y^1\,,\,\cdots\,,\,y^n\}}\;
	=\; (\,
	         \determinant(y^i\cdot\Id_{r\times r}- \varphi^{\sharp}_T(y^i)) \,|\;
              	i=1,\,\cdots\,,\, n \,)\;
		\subset\; C^{\infty}(X_T\times{\Bbb R}^n)  \,.
  $$
 Here,
   $r$ is the rank of $E$ as a complex vector bundle over $X$, and
   $\Id_{r\times r}\in C^{\infty}(\End_{\Bbb C}(E))$ the identity endomorphism.
 Then,
  $$
    \varGamma_{\varphi_T}\;\subset\; \varSigma_{\varphi_T;\{y^1\,,\,\cdots\,,\,y^n\}}
      \hspace{1em}\mbox{with}\hspace{1em}
   	(\varGamma_{\varphi_T})_{\redscriptsize}\;
	    =\; (\varSigma_{\varphi_T;\{y^1\,,\,\cdots\,,\,y^n\}})_{\redscriptsize}\,,
  $$
  where $(\,\cdot\,)_{\redscriptsize}$ denotes the reduced subscheme of a scheme in the sense
   $C^{\infty}$-algebraic geometry.
In terms of this picture, one has
 \begin{itemize}
  \item[\LARGE $\cdot$]
  {\it
   For any $f\in C^{\infty}({\Bbb R}^n)$,
     denote its pull-back to $C^{\infty}(X_T\times{\Bbb R}^n)$ still by $f$.
   Then,
     $\varphi_T^{\sharp}(f)$ depends only the restriction of $f$
       to $\varSigma_{\varphi_T;\{y^1\,,\,\cdots\,,\,y^n\}}$.}	
   In other words,
   {\it  if $f_1,\,f_2\in C^{\infty}({\Bbb R}^n)$ satisfy that,
          after being pulled back to $X_T\times{\Bbb R}^n$,
            $f_1-f_2\in I_{\varphi_T,\{y^1\,,\,\cdots\,,\,y^n\}}$,
	       then $\varphi^{\sharp}_T(f_1)=\varphi^{\sharp}_T(f_2)$.}
 \end{itemize}			
  
It is important to note that
 $\varSigma_{\varphi_T;\{y^1\,,\,\cdots\,,\,y^n\}}$ is a $C^{\infty}$-subscheme,
   rather than just a subset,  of $X_T\times{\Bbb R}^n$.
Thus, the restriction of $f|_{\varSigma_{\varphi_T;\{y^1\,,\,\cdots\,,\,y^n\}}}$
  of $f$ on $\varSigma_{\varphi_T;\{y^1\,,\,\cdots\,,\,y^n\}}$ captures
   not only the value of $f$ at ${\Bbb R}$-points on $\varSigma_{\varphi_T;\{y^1\,,\,\cdots\,,\,y^n\}}$
  but also the behavior of $f$  in an infinitesimal neighborhood
    of $\varSigma_{\varphi_T;\{y^1\,,\,\cdots\,,\,y^n\}}$ up to a finite order.
Cf.\ {\sc Figure}~4-1-1.
 %

 \begin{figure} [htbp]
  \bigskip
  \centering

  \includegraphics[width=0.55\textwidth]{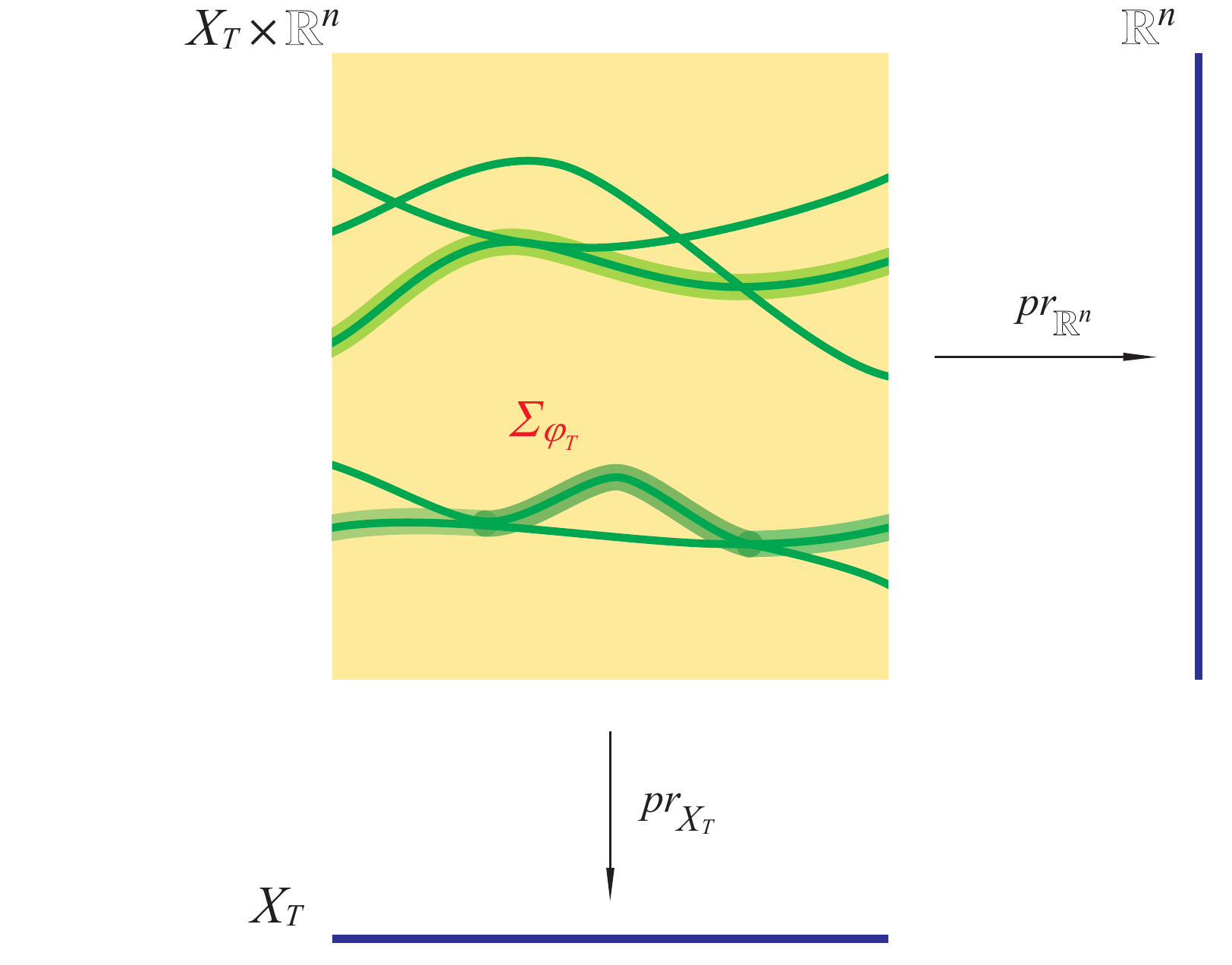}

  \bigskip
  \bigskip
  \centerline{\parbox{13cm}{\small\baselineskip 12pt
   {\sc Figure}~4-1-1.
     The spectral subscheme $\varSigma_{{\varphi}_T}$
	   (in green color, with the green shade indicating the nilpotent structure/cloud
	       on $\varSigma_{{\varphi}_T}$)
	   in $X_T\times {\Bbb R}^n$
	  associated to a ring-homomorphism
	   $\varphi^{\sharp}_T:C^{\infty}({\Bbb R}^n)\rightarrow C^{\infty}(\End_{\Bbb C}(E_T))$.
	 More than just a point-set with topology,
	   it is a $C^{\infty}$-scheme that is finite over $X_T$.
       }}
  \bigskip
 \end{figure}

\bigskip

\begin{flushleft}
 {\bf Function germs at  $\varSigma_{\varphi_T;\{y^1,\,\cdots\,,y^n\}}$
           from the Generalized Division Lemma\\  \`{a} la Malgrange}
\end{flushleft}	
Note that
 the polynomials
  $\determinant(y^i\cdot\Id_{r\times r}- \varphi^{\sharp}_T(y^i))
     \in C^{\infty}(X_T)[y^i] \subset C^{\infty}(X_T\times{\Bbb R}^n)$, $i=1,\,\ldots\,,\, n$,
 	 are of degree $r$.
As a consequence of the Generalized Division Lemma, which is a corollary of the Malgrange Division Theorem,
 one has
   \begin{itemize}
    \item[\LARGE $\cdot$] {\it
	  For $q\in \varSigma_{\varphi_T;\{y^1\,,\,\cdots\,,\,y^n\}}$  and
	         $h\in C^{\infty}(X_T\times {\Bbb R}^n)$,
	   there exists a neighborhood $U_h^{\prime}\times V_h$ of $q$ in $X_T\times{\Bbb R}^n$  and
    	  a polynomial of $(y^1,\,\cdots\,,\, y^n)$-degree $\le (r-1,\,\cdots,\,r-1)$
	     $$
		  R_{h;q}\; :=\;
		    \sum_{\mbox{\tiny $\begin{array}{c}0\le d_i\le r-1 \\[1.2ex]  1\le i\le n \end{array}$}}
			  a^{h;q}_{(d_1,\,\cdots\,,\,d_n)}\cdot(y^1)^{d_1}\,\cdots\,(y^n)^{d_n}\;
			 \in\; C^{\infty}(U^{\prime}_h)[y^1,\,\cdots\,,y^n]\;
			 \subset\; C^{\infty}(U^{\prime}_h\times V_h)
		 $$
     such that
        $$
          h|_{U_h^{\prime}\times V_h}\; =\; R_{h;q}\, +\, h^{\prime}\,,
        $$		
		where
		 $h^{\prime}\in I_{\varphi_T;\{y^1,\,\cdots\,,\,y^n\}}|_{U^{\prime}_h\times V_h}$. }
   \end{itemize}

\bigskip

\begin{flushleft}
{\bf The explicit form for $\varphi^{\sharp}_T(f)$ from the Generalized Division Lemma}
\end{flushleft}
For $f\in C^{\infty}({\Bbb R}^n)$ and $p\in X_T$,
 let
 \begin{itemize}
  \item[\LARGE $\cdot$]
   $\{q_1,\,\cdots\,,q_s\}$ be the set of ${\Bbb R}$-points in the $0$-dimensional subscheme
    $\pr_{X_T}^{-1}(p)\cap  \varSigma_{\varphi_T;\{y^1,\,\cdots\,,y^n\}}$
    of $X_T\times{\Bbb R}^n$,
 \end{itemize}	
Then,
 for a neighborhood $U^{\prime}$ of $p\in X_T$ sufficiently small,
 $\pr_{X_T}^{-1}(U)\cap \varSigma_{\varphi_T;\{y^1,\,\cdots\,,y^n\}}$ has
 exactly $s$-many connected components
 $$
    \pr_{X_T}^{-1}(U)\cap \varSigma_{\varphi_T;\{y^1,\,\cdots\,,y^n\}}\;     =\;
	\varSigma_{(U^{\prime}; 1)}\,\sqcup\,\cdots\,\sqcup\, \varSigma_{(U^{\prime};\,s)}
     \hspace{1em}\mbox{with}\hspace{1em}
      q_j\in \varSigma_{(U^{\prime};\, j)}\,.	
 $$
This implies that
   the support  $\Supp(\tilde{\cal E}_{\varphi_T}|_{U^{\prime}\times {\Bbb R}^n})$
     of $\tilde{\cal E}_{\varphi_T}|_{U^{\prime}\times{\Bbb R}^n}$
     has also exactly $s$-many connected components.
Let
 \begin{itemize}
  \item[\LARGE $\cdot$] 	
	    $E_T|_{U^{\prime}}
		   =  E|_{U^{\prime}}^{(1)}\oplus\,\cdots\, \oplus E_T|_{U^{\prime}}^{(s)}$
		 be the decomposition of $E_T|_{U^{\prime}}$
		 associated to the decomposition of
		 $\tilde{\cal E}_{\varphi_T}|_{U^{\prime}\times {\Bbb R}^n}$
		 into the direct sum of its restriction to the connected components of
		 $\Supp(\tilde{\cal E}_{\varphi_T}|_{U^{\prime}\times {\Bbb R}^n})$.  		
 \end{itemize}		
In terms of this,
 \begin{itemize}
  \item[\LARGE $\cdot$] {\it
    Over $U^{\prime}$, $\varphi^{\sharp}_T$ is decomposed into
     $$
        \varphi^{\sharp}_T\;
   	      =\; (\varphi_T^{\sharp, (1)},\,\cdots\,, \varphi_T^{\sharp, (s)})\;
	      =\;   \varphi_T^{\sharp, (1)}\oplus\,\cdots\,\oplus \varphi_T^{\sharp, (s)} \,,
     $$
    with
     $$
        \varphi_T^{\sharp,(j)}\;:\;  C^{\infty}({\Bbb R}^n)\; \longrightarrow\;
         C^{\infty}(\End_{\Bbb C}(E_T|_{U^{\prime}}^{(j)}))\,,
     $$
     for $j=1,\,\cdots\,,s$.
  }
 \end{itemize}
 
 Now for each
    $\varphi_T^{\sharp,(j)}:  C^{\infty}({\Bbb R}^n) \rightarrow
         C^{\infty}(\End_{\Bbb C}(E_T|_{U^{\prime}}^{(j)}))$,
 one can apply the result in the previous theme on the Generalized Division Lemma
   to express the germ of $\varphi_T^{\sharp,(j)}(f)$ over $p\in X_T$
   in terms of $\varphi_T^{\sharp,(j)}(y^1)\,,\,\cdots\,,\, \varphi_T^{\sharp,(j)}(y^n)$
   as follows.
  \begin{itemize}
   \item[(1)]
     First, it follows from the previous theme on the Generalized Division Lemma, with $q=q_j$ and $h=f$, that
	   there exists a neighborhood
		   $U_f^{\prime, (j)}\times V_f^{(j)}$ of $q_j$ in $U^{\prime}\times{\Bbb R}^n$  and
    	  a polynomial of $(y^1,\,\cdots\,,\, y^n)$-degree $\le (r-1,\,\cdots,\,r-1)$
	     $$
		  R_{f;q_j}\; :=\;
		    \sum_{\mbox{\tiny $\begin{array}{c}0\le d_i\le r-1 \\[1.2ex]  1\le i\le n \end{array}$}}
			  a^{f;q_j}_{(d_1,\,\cdots\,,\,d_n)}\cdot(y^1)^{d_1}\,\cdots\,(y^n)^{d_n}
         $$
		   in $C^{\infty}(U^{\prime, (j)}_f)[y^1,\,\cdots\,,y^n]
			 \subset C^{\infty}(U^{\prime, (j)}_f\times V_f^{(j)})$
       such that
        $$
          f|_{U_f^{\prime,(j)}\times V_f^{(j)}}\; =\; R_{f;q_j}\, +\, f^{\prime}_{(j)}\,,
        $$		
		where
		 $f^{\prime}_{(j)} \in
		  I_{\varphi_T;\{y^1,\,\cdots\,,\,y^n\}}|_{U^{\prime, (j)}_f\times V_f^{(j)}}$.
  
   \item[(2)]
    Then, over $U^{\prime,(j)}_f$,
	 $\varphi_T^{\sharp,(j)}(f)$ has the following expression in terms of
	   $\varphi_T^{\sharp,(j)}(y^1)\,,\,\cdots\,,\,\varphi_T^{\sharp,(j)}(y^n)\,$:   	
	  \begin{eqnarray*}
	   \lefteqn{\varphi_T^{\sharp,(j)}(f)\;
	      =\;    \varphi_T^{\sharp,(j)}(R_{f;q_j}) }\\[1.2ex]
          && =\;
		  	 \sum_{\mbox{\tiny $\begin{array}{c}0\le d_i\le r-1 \\[1.2ex]  1\le i\le n \end{array}$}}
			  a^{f;q_j}_{(d_1,\,\cdots\,,\,d_n)}\cdot
			    (\varphi_T^{\sharp,(j)}(y^1))^{d_1}\,\cdots\,
				(\varphi_T^{\sharp,(j)}(y^n))^{d_n}  \\[1.2ex]		
		  && \in \;\;\;
  		         C^{\infty}
		             (\End_{\Bbb C}
					    ((E_T|_{U^{\prime}}^{(j)})|_{U^{\prime, (j)}_f}))\;.
	 \end{eqnarray*}
	
   \item[(3)]
    Finally, let $U^{\prime}_{f;p}:=\bigcap_{j=1}^s U^{\prime,(j)}_f$.
	Then, over $U^{\prime}_{f;p}$,
	  $$
	    \varphi_T^{\sharp}(f)\;
		 =\; \varphi_T^{\sharp,(1)}(f)\,\oplus\,\cdots\,\oplus\,\varphi_T^{\sharp,(s)}(f)\;	
	    \in\; C^{\infty}	 (\End_{\Bbb C}(E_T|_{U^{\prime}_{f;p}}))\,.
      $$
 \end{itemize}
  
\bigskip
 
\noindent
{\bf Remark/Notation 4.1.1. [equivalent form: zero-th order]}$\;$
 At the level of germs over $X$,
 one can re-write the above expression for $\varphi_T^{\sharp}(f)$
    in terms of  the evaluation of $\varphi_T^{\sharp}$ on local coordinate functions
  more compactly as
  $$
	\varphi_T^{\sharp}(f)\;
	=\;
		 \sum_{\mbox{\tiny $\begin{array}{c}0\le d_i\le r-1 \\[1.2ex]  1\le i\le n \end{array}$}}
			  a^f_{(d_1,\,\cdots\,,\,d_n)}\cdot
			    (\varphi_T^{\sharp}(y^1))^{d_1}\,\cdots\,
				(\varphi_T^{\sharp}(y^n))^{d_n}              \;\;\;\;
      \in\;\;  C^{\infty}(\End_{\Bbb C}(E_T|_{U^{\prime}_f}))\,,
  $$
  where
  $$
    a^f_{(d_1,\,\cdots\,,\, d_n)}\;
	 :=\; \sum_{j=1}^s
	          a^{f;q_j}_{(d_1,\,\cdots\,,\,d_n)}
			  \cdot \Id_{(E_T|^{(j)}_{U^{\prime}})|_{U^{\prime}_f}   }\,.
  $$
 Here, for $j=1,\,\ldots\,,\,s$,
  we identify the identity map
   $\Id_{(E_T|^{(j)}_{U^{\prime}})|_{U^{\prime}_f}   }$
     on  $(E_T|^{(j)}_{U^{\prime}})|_{U^{\prime}_f}   $
   as an idempotent map on $E_T|_{U^{\prime}_f}$
   through its extension-by-zero on all
   $(E_T|^{(j^{\prime})}_{U^{\prime}})|_{U^{\prime}_f}$ for $j^{\prime}\ne j$.

\bigskip
\bigskip

\begin{flushleft}
{\bf\large 4.2\hspace{0.6em}
      $\frac{\partial^{|\mbox{\boldmath\tiny $\alpha$}|}}
		           {\partial\mbox{\boldmath\scriptsize $t$}^{\mbox{\boldmath\tiny $\alpha$}}}
		              (\varphi^{\sharp}_{\mbox{\boldmath\scriptsize $t$}}(f) )$
		   in terms of
          $\left(
		      \frac{\partial^{|\mbox{\boldmath\tiny $\alpha$}_1|}}
		           {\partial\mbox{\boldmath\scriptsize $t$}^{\mbox{\boldmath\tiny $\alpha$}_1}}
		              (\varphi^{\sharp}_{\mbox{\boldmath\scriptsize $t$}}(y^1) )\,,\,
					  \cdots\,,\,
         \frac{\partial^{|\mbox{\boldmath\tiny $\alpha$}_n|}}
		           {\partial\mbox{\boldmath\scriptsize $t$}^{\mbox{\boldmath\tiny $\alpha$}_n}}
		              (\varphi^{\sharp}_{\mbox{\boldmath\scriptsize $t$}}(y^n) )\right)$'s,
	   $\mbox{\boldmath $\alpha$}_1+\cdots +\mbox{\boldmath $\alpha$}_n
	      = \mbox{\boldmath $\alpha$}$}
\end{flushleft}		

\noindent 		
With the preparation/review in Sec.~4.1,
 we now study
   how to compute/express
     $\frac{\partial^{|\mbox{\boldmath\tiny $\alpha$}|}}
		           {\partial\mbox{\boldmath\scriptsize $t$}^{\mbox{\boldmath\tiny $\alpha$}}}
		              (\varphi^{\sharp}_{\mbox{\boldmath\scriptsize $t$}}(f) )$
		   in terms of
          $\left(\frac{\partial^{|\mbox{\boldmath\tiny $\alpha$}_1|}}
		           {\partial\mbox{\boldmath\scriptsize $t$}^{\mbox{\boldmath\tiny $\alpha$}_1}}
		              (\varphi^{\sharp}_{\mbox{\boldmath\scriptsize $t$}}(y^1) )\,,\;
					  \cdots\,,\;
         \frac{\partial^{|\mbox{\boldmath\tiny $\alpha$}_n|}}
		           {\partial\mbox{\boldmath\scriptsize $t$}^{\mbox{\boldmath\tiny $\alpha$}_n}}
		              (\varphi^{\sharp}_{\mbox{\boldmath\scriptsize $t$}}(y^n) )\right)$'s
	   with
	   $\mbox{\boldmath $\alpha$}_1+ \cdots+\mbox{\boldmath $\alpha$}_n
	      = \mbox{\boldmath $\alpha$}$.

\bigskip

\begin{flushleft}
{\bf 4.2.1\hspace{0.6em} Preparatory: Chain rule vs.\ Leibniz rule, and the increase of complexity}
\end{flushleft}		
In the commutative case, this is simply the consequence of the chain rule of differentiations.
However, in the noncommutative case, the formal commutative chain rule is not correct
 and there is no obvious chain rule to use.
Neverless, the Leibniz rule
 $$
   \partial_{\mbox{\LARGE $\cdot$}}(ab)\;
     =\;   (\partial_{\mbox{\LARGE $\cdot$}}a)\,b\,
	          +\,  a\, \partial_{\mbox{\LARGE $\cdot$}}b
 $$
 still holds.
Thus, for polynomial type functions with noncommutative arguments,
 one can still work things out.
This is why the polynomial type expression
 (with coefficients in the germs of differentiable functions on $X_T$)
 of germs of $\varphi_T^{\sharp}(f)$ over $X_T$
  in terms of germs of $\varphi_T^{\sharp}(y^1),\,\cdots\,,\varphi_T^{\sharp}(y^n)$
  in the previous subsection
  is fundamental.
 
\bigskip

\noindent
{\bf Example 4.2.1.1. [violation of formal chain rule in the noncommutative case]}$\;$
 Consider $m(t)=a+bt$, where $ab\ne ba$ (but $ta=at$ and $tb=bt$).
 Let $f\in C^{\infty}({\Bbb R}^1)$ defined by $f(y)= y^k$, $k\ge 2$.
 Then $\frac{d}{dy}f=ky^{k-1}$.
 The formal chain rule would give
  $$
   \begin{array}{c}
    \frac{d}{dt}(f(m(t)))\;
	  =\; (\frac{d}{dy}f)(m(t))\,\frac{d}{dt}m(t)\;
	  =\; k\,(a+bt)^{k-1}\,b\,.
   \end{array}
  $$
 However, in truth,
  $$
   \begin{array}{c}
     \frac{d}{dt}(f(m(t)))\; =\; \frac{d}{dt}
	  (\,\overbrace{(a+bt)\,\cdots\,(a+bt)}^{\mbox{\scriptsize $k$-times}}\,)\;
	 =\; \sum_{k^{\prime}=0}^{k-1}(a+bt)^l\,b\,(a+bt)^{k-1-k^{\prime}}\,.	
   \end{array}
  $$
  As $b$ and $a+bt$ do not commute,
   $$
    \begin{array}{c}
     k\,(a+bt)^{k-1}\,b  \;
	   \ne\; \sum_{k^{\prime}=0}^{k-1}(a+bt)^{k^{\prime}}\,b\,(a+bt)^{k-1-k^{\prime}}
	\end{array}
   $$
  in general.
  
\bigskip

\noindent
{\it Remark 4.2.1.2. $[$complexity$\,]$}\;
 Example~4.2.1.1 serves to illustrate also
   the complexity of the expression in the noncommutative case versus the commutative case.
 In general,
   for $\mbox{\boldmath $\alpha$}=(\alpha_1,\,\cdots\,,\,\alpha_l)$  and
   a monomial
   $$
     P(\xi^1,\,\cdots\,,\,\xi^n)\;  =\;  (\xi^1)^{d_1}\,\cdots\,(\xi^n)^{d_n}
   $$
   of degree $d=d_1+\,\cdots\,+d_n$ in noncommuting variables $\xi^1,\,\cdots\,\xi^n$,
 suppose  that
  $\xi^i=\xi^i(\mbox{\mbox{\boldmath $t$}})$,
    where $\mbox{\boldmath $t$}=(t^1,\,\cdots\,,\,t^l)\in T$   and
 {\it assume that the order of differentiations play no role}.		
 Then,
   the direct expansion of the composition
    $$
	  \partial^{\mbox{\scriptsize\boldmath $\alpha$}}_{\mbox{\scriptsize\boldmath $t$}}
	     P(\xi^1(\mbox{\boldmath $t$})\,,\,\cdots\,,\, \xi^n(\mbox{\boldmath $t$}))	  \;
	  :=\;
	  \mbox{\Large $\frac{\partial^{|\mbox{\tiny\boldmath $\alpha$}|}}
	        {\partial \mbox{\boldmath\small $t$}^{\mbox{\tiny\boldmath $\alpha$}} }$}
		  P(\xi^1(\mbox{\boldmath $t$})\,,\,\cdots\,,\, \xi^n(\mbox{\boldmath $t$}))	
	$$
   via the Leibniz rule would have $d^{|\mbox{\scriptsize\boldmath $\alpha$}|}$-many terms.
   Which collapse to
    $$	
	  \prod_{l^{\prime}=1}^l
	     \left(\mbox{\small $\begin{array}{c}    \alpha_{l^{\prime}}+d -1\\
		                                                                           d-1 \end{array}$}\right)\;
	  =\;   \prod_{l^{\prime}=1}^l
	                    \left(\mbox{\small $\begin{array}{c}    \alpha_{l^{\prime}}+d -1\\
		                                                                           \alpha_{l^{\prime}} \end{array}$}\right)
	$$
	-many terms, after collecting like terms,
   of the form
   
   {\small
	$$
	   \partial^{\mbox{\tiny\boldmath $\alpha$}_{(1)}}_{\mbox{\tiny\boldmath $t$}}
		  \xi^1(\boldt)\,
		  \cdots\,
	   \partial^{\mbox{\tiny\boldmath $\alpha$}_{(d_1)}}_{\mbox{\tiny\boldmath $t$}}
		  \xi^1(\boldt)\,
	   \partial^{\mbox{\tiny\boldmath $\alpha$}_{(d_1+1)}}_{\mbox{\tiny\boldmath $t$}}
		  \xi^2(\boldt)\, \cdots
      \partial^{\mbox{\tiny\boldmath $\alpha$}_{(d_1+d_2)}}_{\mbox{\tiny\boldmath $t$}}
		  \xi^2(\boldt)\,		  		
		 \cdots\cdots\,
      \partial^{\mbox{\tiny\boldmath $\alpha$}_{(d_1+\,\cdots\,+d_{n-1}+1)}}
	      _{\mbox{\tiny\boldmath $t$}}
		  \xi^n(\boldt)\,\cdots\,
		\partial^{\mbox{\tiny\boldmath $\alpha$}_{(d)}}_{\mbox{\tiny\boldmath $t$}}
		  \xi^n(\boldt)\,,		
	$$}
   where
     $\boldalpha=\boldalpha_{(1)}+\,\cdots\,+\boldalpha_{(d)}$,
	   $\boldalpha_{(1)},\,\cdots\,, \boldalpha_{(d)}\in ({\Bbb Z}_{\ge 0})^l$,
     is an ordered partition $(\boldalpha_{(1)},\,\cdots\,,\boldalpha_{(d)})$ of $\boldalpha$
	 into $d$-many summands,
 with the coefficients from the coefficients of the expansion of the product
   $\prod_{l^{\prime}=1}^l
     (z^1_{l^{\prime}}+\,\cdots\,+z^d_{l^{\prime}})^{\alpha_{l^{\prime}}}$.

\bigskip

\noindent
{\bf Example 4.2.1.3. [differentiation via Leibniz rule]}$\;$
 Let
  $\boldt=(t^1,t^2)$  and
  $a(\boldt),\, b(\boldt)$ be functions of $\boldt$ with values in a noncommutative ring.
 Then
 
 {\scriptsize
  $$
   \begin{array}{rcll}
     \partial_1^3\partial_2^2(a(\boldt)^2 b(\boldt))
	   & = & \partial_1^3\partial_2^2 a(\boldt)\,a(\boldt)b(\boldt)\;
	                  +\;  \partial_2^2a(\boldt)\, \partial_1^3 a(\boldt)\, b(\boldt) \;
					  +\;  \partial_2^2a(\boldt)\,a(\boldt)\,\partial_1^3 b(\boldt)                      \\[.8ex]
   	   &&		  +\; 3\,\partial_1^2\partial_2^2a(\boldt) \partial_1a(\boldt )\, b(\boldt)\;
	                  +\; 3\,\partial_1^2\partial_2^2a(\boldt)\,a(\boldt)\, \partial_1 b(\boldt)\;
                      +\; 3\,\partial_2^2a(\boldt)\, \partial_1^2 a(\boldt)\partial_1 b(\boldt) \\[.8ex]					  					
	   &&	      +\; 3\,\partial_1 \partial_2^2a(\boldt)\partial_1^2 a(\boldt)\,b(\boldt)\;
					  +\; 3\,\partial_1\partial_2^2 a(\boldt)\,a(\boldt)\,\partial _1^2 b(\boldt)\;
                      +\; 3\,\partial_2^2a(\boldt)\,\partial_1 a(\boldt)\partial_1^2 b(\boldt)  \\[.8ex]
       &&         +\; 6\,\partial_1 \partial_2^2a(\boldt)\partial_1 a(\boldt)\partial_1 b(\boldt)		\\[.8ex]
	   && +\; \partial_1^3 a(\boldt)\partial_2^2a(\boldt)\,b(\boldt)\;
	                  +\;  a(\boldt)\, \partial_1^3\partial_2^2 a(\boldt)\, b(\boldt) \;
					  +\;  a(\boldt)\,\partial_2^2a(\boldt)\,\partial_1^3 b(\boldt)                      \\[.8ex]
   	   &&		  +\; 3\,\partial_1^2a(\boldt) \partial_1\partial_2^2a(\boldt )\, b(\boldt)\;
	                  +\; 3\,\partial_1^2a(\boldt)\,\partial_2^2a(\boldt)\, \partial_1 b(\boldt)\;
                      +\; 3\,a(\boldt)\, \partial_1^2 \partial_2^2a(\boldt)\partial_1 b(\boldt) \\[.8ex]					  					
	   &&	      +\; 3\,\partial_1 a(\boldt)\partial_1^2 \partial_2^2a(\boldt)\,b(\boldt)\;
					  +\; 3\,\partial_1 a(\boldt)\,\partial_2^2a(\boldt)\,\partial _1^2 b(\boldt)\;
                      +\; 3\,a(\boldt)\,\partial_1 \partial_2^2a(\boldt)\partial_1^2 b(\boldt)  \\[.6ex]
       &&         +\; 6\,\partial_1 a(\boldt)\partial_1\partial_2^2 a(\boldt)\partial_1 b(\boldt)		\\[.8ex]
	   && +\; \partial_1^3 a(\boldt)\,a(\boldt)\partial_2^2b(\boldt)\;
	                  +\;  a(\boldt)\, \partial_1^3 a(\boldt)\, \partial_2^2b(\boldt) \;
					  +\;  a(\boldt)^2\,\partial_1^3\partial_2^2 b(\boldt)                      \\[.8ex]
   	   &&		  +\; 3\,\partial_1^2a(\boldt) \partial_1a(\boldt )\,\partial_2^2 b(\boldt)\;
	                  +\; 3\,\partial_1^2a(\boldt)\,a(\boldt)\, \partial_1 \partial_2^2b(\boldt)\;
                      +\; 3\,a(\boldt)\, \partial_1^2 a(\boldt)\partial_1\partial_2^2 b(\boldt) \\[.8ex]					  					
	   &&	      +\; 3\,\partial_1 a(\boldt)\partial_1^2 a(\boldt)\,\partial_2^2b(\boldt)\;
					  +\; 3\,\partial_1 a(\boldt)\,a(\boldt)\,\partial _1^2\partial_2^2 b(\boldt)\;
                      +\; 3\,a(\boldt)\,\partial_1 a(\boldt)\partial_1^2 b\partial_2^2(\boldt)  \\[.6ex]
       &&         +\; 6\,\partial_1 a(\boldt)\partial_1 a(\boldt)\partial_1\partial_2^2 b(\boldt)		  \\[.8ex]
	   && +\; 2\,\partial_1^3\partial_2 a(\boldt)\,\partial_2a(\boldt)b(\boldt)\;
	                  +\; 2\,\partial_2 a(\boldt)\, \partial_1^3\partial_2 a(\boldt)\, b(\boldt) \;
					  +\; 2\,\partial_2 a(\boldt)\partial_2a(\boldt)\,\partial_1^3 b(\boldt)                      \\[.8ex]
   	   &&		  +\; 3\!\cdot\! 2\,\partial_1^2\partial_2a(\boldt) \partial_1\partial_2a(\boldt )\, b(\boldt)\;
	                  +\; 3\!\cdot\! 2\,\partial_1^2\partial_2a(\boldt)\,\partial_2a(\boldt)\, \partial_1 b(\boldt)\;
                      +\; 3\!\cdot\! 2\,\partial_2a(\boldt)\, \partial_1^2\partial_2 a(\boldt)\partial_1 b(\boldt) \\[.8ex]					  					
	   &&	      +\; 3\!\cdot\! 2\,\partial_1\partial_2 a(\boldt)\partial_1^2\partial_2 a(\boldt)\,b(\boldt)\;
					  +\; 3\!\cdot\! 2\,\partial_1\partial_2 a(\boldt)\,\partial_2a(\boldt)\,\partial _1^2 b(\boldt)\;
                      +\; 3\!\cdot\! 2\,\partial_2a(\boldt)\,\partial_1\partial_2 a(\boldt)\partial_1^2 b(\boldt)  \\[.6ex]
       &&         +\; 6\!\cdot\! 2\,\partial_1\partial_2 a(\boldt)\partial_1\partial_2 a(\boldt)\partial_1 b(\boldt)
                                             	   \\[.8ex]
	   && +\; 2\,\partial_1^3\partial_2 a(\boldt)\,a(\boldt)\partial_2b(\boldt)\;
	                  +\; 2\, \partial_2a(\boldt)\, \partial_1^3 a(\boldt)\,\partial_2 b(\boldt) \;
					  +\; 2\,\partial_2a(\boldt)\,a(\boldt)\,\partial_1^3\partial_2 b(\boldt)                      \\[.8ex]
   	   &&		  +\; 3\!\cdot\! 2\,\partial_1^2\partial_2a(\boldt) \partial_1a(\boldt )\,\partial_2 b(\boldt)\;
	                  +\; 3\!\cdot\! 2\,\partial_1^2\partial_2a(\boldt)\,a(\boldt)\, \partial_1\partial_2 b(\boldt)\;
                      +\; 3\!\cdot\! 2\,\partial_2a(\boldt)\, \partial_1^2 a(\boldt)\partial_1\partial_2 b(\boldt) \\[.8ex]					  					
	   &&	      +\; 3\!\cdot\! 2\,\partial_1\partial_2 a(\boldt)\partial_1^2 a(\boldt)\,\partial_2 b(\boldt)\;
					  +\; 3\!\cdot\! 2\,\partial_1\partial_2 a(\boldt)\,a(\boldt)\,\partial _1^2\partial_2 b(\boldt)\;
                      +\; 3\!\cdot\! 2\,\partial_2a(\boldt)\,\partial_1 a(\boldt)\partial_1^2\partial_2 b(\boldt)  \\[.6ex]
       &&         +\; 6\!\cdot\! 2\,\partial_1\partial_2 a(\boldt)\partial_1 a(\boldt)\partial_1\partial_2 b(\boldt)
                                            	   \\[.8ex]
	   && +\; 2\,\partial_1^3 a(\boldt)\,\partial_2a(\boldt)\partial_2b(\boldt)\;
	                  +\; 2\, a(\boldt)\, \partial_1^3\partial_2 a(\boldt)\,\partial_2 b(\boldt) \;
					  +\; 2\, a(\boldt)\,\partial_2a(\boldt)\,\partial_1^3\partial_2 b(\boldt)                      \\[.8ex]
   	   &&		  +\; 3\!\cdot\! 2\,\partial_1^2a(\boldt) \partial_1\partial_2a(\boldt )\, \partial_2b(\boldt)\;
	                  +\; 3\!\cdot\! 2\,\partial_1^2a(\boldt)\,\partial_2a(\boldt)\, \partial_1\partial_2 b(\boldt)\;
                      +\; 3\!\cdot\! 2\,a(\boldt)\, \partial_1^2\partial_2 a(\boldt)\partial_1\partial_2 b(\boldt) \\[.8ex]					  					
	   &&	      +\; 3\!\cdot\! 2\,\partial_1 a(\boldt)\partial_1^2\partial_2 a(\boldt)\,\partial_2b(\boldt)\;
					  +\; 3\!\cdot\! 2\,\partial_1 a(\boldt)\,\partial_2a(\boldt)\,\partial _1^2\partial_2 b(\boldt)\;
                      +\; 3\!\cdot\! 2\,a(\boldt)\,\partial_1\partial_2 a(\boldt)\partial_1^2\partial_2 b(\boldt)  \\[.6ex]
       &&         +\; 6\!\cdot\! 2\,\partial_1 a(\boldt)\partial_1\partial_2 a(\boldt)\partial_1\partial_2 b(\boldt)&,				  	
   \end{array}
  $$}
  which has
   $(\!\!\mbox{\tiny $\begin{array}{c}5\\ 2\end{array}$}\!\!)
      (\!\!\mbox{\tiny $\begin{array}{c}4\\ 2\end{array}$}\!\!)=60$-many terms
	  (collapsed from an expansion of $3^5=243$-many terms).

\bigskip

\noindent
{\bf Notation 4.2.1.4. [differentiation via Leibniz rule on monomial]}\;
 Continuing Remark 4.2.1.2 and Example 4.2.1.3.
 While the Leibniz rule itself is straightforward,
  the resulting expression after its repeated applications can be a burdon.
 Some notations are introduced below for easier bookkeeping.
 \begin{itemize}
  \item[(1)]  ({\it differentiation of any order})\hspace{2em}
  Let
  $\vec{\Ptn}(\boldalpha, d)$ be the {\it set of ordered partitions of $\boldalpha$} into $d$-many summands.
  For a $\vec{\pi}=(\boldalpha_{(1)},\,\cdots\,,\boldalpha_{(d)})\in \vec{\Ptn}(\boldalpha, d)$, 
   denote
   \begin{eqnarray*}
     && \hspace{-1.6em}
	  [\partial^{\vec{\pi}}_{\mbox{\scriptsize\boldmath $t$}}]
	    P(\xi^1(\boldt),\,\cdots\,,\xi^n(\boldt))    \\
     &&	 :=\;
	   \partial^{\mbox{\scriptsize\boldmath $\alpha$}_{(1)}}_{\mbox{\scriptsize\boldmath $t$}}
		  \xi^1(\boldt)\,
		  \cdots\,
	   \partial^{\mbox{\scriptsize\boldmath $\alpha$}_{(d_1)}}_{\mbox{\scriptsize\boldmath $t$}}
		  \xi^1(\boldt)\:
	   \partial^{\mbox{\scriptsize\boldmath $\alpha$}_{(d_1+1)}}_{\mbox{\scriptsize\boldmath $t$}}
		  \xi^2(\boldt)\,
		  \cdots
      \partial^{\mbox{\scriptsize\boldmath $\alpha$}_{(d_1+d_2)}}_{\mbox{\scriptsize\boldmath $t$}}
		  \xi^2(\boldt)\:    \\
	 &&	\hspace{16em}
		 \cdots\cdots\,
      \partial^{\mbox{\scriptsize\boldmath $\alpha$}_{(d_1+\,\cdots\,+d_{n-1}+1)}}
	      _{\mbox{\scriptsize\boldmath $t$}}
		  \xi^n(\boldt)\,\cdots\,
		\partial^{\mbox{\scriptsize\boldmath $\alpha$}_{(d)}}_{\mbox{\scriptsize\boldmath $t$}}
		  \xi^n(\boldt)\,.
	\end{eqnarray*}
 Then, in terms of these,
  $$
     \partial^{\mbox{\scriptsize\boldmath $\alpha$}}_{\mbox{\scriptsize\boldmath $t$}}
	    P(\xi^1(\boldt),\,\cdots\,,\xi^n(\boldt))\;
	=\;
	  \sum_{\vec{\pi}
	                   \in \vec{\mbox{\scriptsize\it Ptn}}
					                (\mbox{\scriptsize\boldmath $\alpha$}, d)}\,
	         m_{\vec{\pi}}
			\cdot
         	[\partial^{\vec{\pi}}_{\mbox{\scriptsize\boldmath $t$}}]
	            P(\xi^1(\boldt),\,\cdots\,,\xi^n(\boldt))\,,
  $$
  where $m_{\vec{\pi}}\in {\Bbb Z}_{>0}$ are the coefficients in the expansion.
  
 \item[(2)]  ({\it first-order differentiation})\hspace{2em}
 For the case $|\alpha|=1$,
  let $\partial_t$ be the corresponding differentiable operator from the list $\partial_{t^1},\,\cdots\,,\partial_{t^l}$.
 Then,
   a $\vec{\pi}=(\boldalpha_{(1)},\,\cdots\,,\boldalpha_{(d)})\in \vec{\Ptn}(\boldalpha, d)$
   has exactly one summand $\boldalpha_{(j^{\prime})}$ that is non-zero.
 Assume that $d_1+\,\cdots\,+ d_{i-1} < j^{\prime} \le  d_1+\,\cdots\,+d_{i-1}+d_i$ and
  let $1\le j:=j^{\prime}-(d_1+\,\cdots\,+d_{i-1})\le d_i$.
 Then,
   \begin{eqnarray*}
     && \hspace{-1.6em}
	  [\partial^{\vec{\pi}}_{\mbox{\scriptsize\boldmath $t$}}]
	    P(\xi^1(\boldt),\,\cdots\,,\xi^n(\boldt))    \\
     &&	 =\;\;
		  \xi^1(\boldt)^{d_1}\, \cdots\,  \xi^{i-1}(\boldt)^{d_{i-1}}\:
		  \xi^i(\boldt)^{j-1}\,  \partial_t\xi^i(\boldt)\, \xi^i(\boldt)^{d_i-j}\;
          \xi^{i+1}(\boldt)^{d_{i+1}}\,\cdots\, \xi^n(\boldt)^{d_n}  \\
	 &&	=:\;
     ([\partial^{\vec{\pi}}_{\xi^{i_{\vec{\pi}}}}]P)^L(\xi^1(\boldt),\,\cdots\,, \xi^n(\boldt))
	   \cdot
	   \partial_t\xi^{i_{\vec{\pi}}}(\boldt)
	   \cdot
	 ([\partial^{\vec{\pi}}_{\xi^{i_{\vec{\pi}}}}]P)^R(\xi^1(\boldt),\,\cdots\,, \xi^n(\boldt))\,.	
	\end{eqnarray*}
 In terms of these,
  \begin{eqnarray*}
   \lefteqn{
    \partial_t P(\xi^1(\boldt),\,\cdots\,,\xi^n(\boldt))} \\
	&&
    =\; 	\partial^{\mbox{\scriptsize\boldmath $\alpha$}}_{\mbox{\scriptsize\boldmath $t$}}
	              P(\xi^1(\boldt),\,\cdots\,,\xi^n(\boldt))\;\;\;
	=\;
	  \sum_{\vec{\pi}
	                   \in \vec{\mbox{\scriptsize\it Ptn}}
					                (\mbox{\scriptsize\boldmath $\alpha$}, d)}\,
         	[\partial^{\vec{\pi}}_{\mbox{\scriptsize\boldmath $t$}}]
	            P(\xi^1(\boldt),\,\cdots\,,\xi^n(\boldt))  \\
   &&
    =\; 	
	   \sum_{\vec{\pi}
	                   \in \vec{\mbox{\scriptsize\it Ptn}}
					                (\mbox{\scriptsize\boldmath $\alpha$}, d)}\,
	  ([\partial^{\vec{\pi}}_{\xi^{i_{\vec{\pi}}}}]P)^L(\xi^1(\boldt),\,\cdots\,, \xi^n(\boldt))
	   \cdot
	   \partial_t\xi^{i_{\vec{\pi}}}(\boldt)
	   \cdot
	 ([\partial^{\vec{\pi}}_{\xi^{i_{\vec{\pi}}}}]P)^R(\xi^1(\boldt),\,\cdots\,, \xi^n(\boldt)) \\
	&& \hspace{16em}
	      \mbox{(a summation of $d$-many terms)}\\[.6ex]
	&& =\;
	   \sum_{i=1}^n
	    \sum_{\;\vec{\pi}
	                      \in \vec{\mbox{\scriptsize\it Ptn}}
					                (\mbox{\scriptsize\boldmath $\alpha$}, d);\,
						   i_{\vec{\pi}}=i  }\,
	  ([\partial^{\vec{\pi}}_{\xi^i}]P)^L(\xi^1(\boldt),\,\cdots\,, \xi^n(\boldt))
	   \cdot
	   \partial_t\xi^i(\boldt)
	   \cdot
	 ([\partial^{\vec{\pi}}_{\xi^i}]P)^R(\xi^1(\boldt),\,\cdots\,, \xi^n(\boldt))\\
	&& \hspace{16em}
	      \mbox{(an expression closer to the usual chain rule)}\,.
  \end{eqnarray*}
 \end{itemize}

\bigskip

\noindent
{\it Remark 4.2.1.5. $[\,$would-be chain rule in the noncommutative case$\,]\;$}
 Readers are recommended to compare the last expression in Notation~4.2.1.4, Item (2),
  with the expression for the chain rule in the commutative case
  $$
    \partial _t  f(y^1(\boldt),\,\cdots\,,y^n(\boldt))\;
     =\;  \sum_{i=1}^n\,
	        (\partial_{y^i}f)(y^1(\boldt),\,\cdots\,,y^n(\boldt))
	      \cdot \partial_ty^i(\boldt)\,.
  $$		
 In a sense, noncommutativity brings into the problem
  the necessity to distinguish `{\it which $\xi^i$ is involved}'
  when we takes the differentiation $\partial_t(P(\xi^1(\boldt)),\,\cdots\,,\,\xi^n(\boldt) )$.
 For that reason, each $\partial_{\xi^i}P(\xi^1,\,\cdots\,,\xi^n)$ splits into two factors,
 the {\it left factor $(\,\cdots\,)^L$} and the {\it right factor $(\,\cdots\,)^R$}
 that depends on this additional detail.
  
\bigskip

\noindent
{\it Remark 4.2.1.6. $[$standard presentation for mixed case$]$}\;
 For the situation to appear in the current notes,
   the values $\xi^1(\boldt), \,\cdots\,\xi^n(\boldt)$ for any $\boldt$ commute among themselves
   but not necessarily with their differentiations with respect to $\boldt$.
 The above discussion still applies to such situations.
 However, the explicit expression for
  $\partial^{\mbox{\scriptsize\boldmath $\alpha$}}_{\mbox{\scriptsize\boldmath $t$}}
	    P(\xi^1(\boldt),\,\cdots\,,\xi^n(\boldt))$
  after the expansion by Leibniz rule
 depends on how we represent $P(\xi^1(\boldt),\,\cdots\,,\xi^d(\boldt))$
  in terms of a product of $d_1$-many $\xi^1(\boldt)$'s, $\cdots$, $d_n$-many $\xi^n(\boldt)$'s.
 By convention, we will take
  $\xi^1(\boldt)^{d_1}\,\cdots\,\xi^n(\boldt)^{d_n}$
  as the standard presentation for $P(\xi^1(\boldt),\,\cdots\,,\xi^n(\boldt))$
  and the resulting expansion the standard expression for
   $\partial^{\mbox{\scriptsize\boldmath $\alpha$}}_{\mbox{\scriptsize\boldmath $t$}}
	    P(\xi^1(\boldt),\,\cdots\,,\xi^n(\boldt))$ and
   $[\partial^{\vec{\pi}}_{\mbox{\scriptsize\boldmath $t$}}]
	            P(\xi^1(\boldt),\,\cdots\,,\xi^n(\boldt))$'s respectively.	

\bigskip

\noindent
{\bf Notation 4.2.1.7. [first-order differentiation via Leibniz rule on polynomial]}\;
 Continuing Remark 4.2.1.2, Example 4.2.1.3, and Notation 4.2.1.4, Item (2);
 and recall $\bold\alpha$ and the associated $\partial_t$.			
 For a multi-degree $\boldd=(d_1,\,\cdots\,,d_n)\in {\Bbb Z}_{\ge 0}^n$,
  denote
   the monomial
   $(\xi^1)^{d_1}\,\cdots\,(\xi^n)^{d_n}$
    by $\boldxi^{\scriptsizeboldd}$ or $P_{\scriptsizeboldd}(\boldxi)$  interchangeably,
   $\xi^1(\boldt)^{d_1}\,\cdots\,\xi^n(\boldt)^{d_n}$
    by $\boldxi(\boldt)^{\scriptsizeboldd}$ or $P_{\scriptsizeboldd}(\boldxi(\boldt))$
	interchangeably,  and the total degree $|\boldd|:= d_1+\,\cdots\,+d_n$. 	
 Let
   $$
     P(\boldxi)\;\;
	  :=\;P(\xi^1,\,\cdots\,, \xi^n)\;
	  =\;  \sum_{d=0}^{\mbox{\tiny $\bullet$}}
	          \sum_{\;\,\scriptsizeboldd,\, |\scriptsizeboldd|=d}
	            c_{\scriptsizeboldd}\,\boldxi^{\scriptsizeboldd}\;
	  =:\;  \sum_{d=0}^{\mbox{\tiny $\bullet$}}
	          \sum_{\;\,\scriptsizeboldd,\, |\scriptsizeboldd|=d}
	            c_{\scriptsizeboldd}\,  P_{\scriptsizeboldd}(\boldxi)\;		
   $$			
   be a polynomial in $(\xi^1,\,\cdots\,,\xi^n)$ with coefficients commutative with all of
    $\xi^1, \,\cdots\,,\xi^n$.
 Then
  \begin{eqnarray*}
   \lefteqn{
    \partial_t P(\boldxi(\boldt))\;
	 =\;  \sum_{d=0}^{\mbox{\tiny $\bullet$}}
	          \sum_{\;\,\scriptsizeboldd, |\scriptsizeboldd|=d}
	            c_{\scriptsizeboldd}\,
				\partial_t(\boldxi(\boldt)^{\scriptsizeboldd})  }\\
    &&
      =\;	
	  \sum_{d=0}^{\mbox{\tiny $\bullet$}}
	          \sum_{\;\,\scriptsizeboldd,\, |\scriptsizeboldd|=d}
	            c_{\scriptsizeboldd}\,			
		\sum_{\vec{\pi}
	         \in \vec{\mbox{\scriptsize\it Ptn}}
				    (\mbox{\scriptsize\boldmath $\alpha$}, d)}\,
	  ([\partial^{\vec{\pi}}_{\xi^{i_{(\vec{\pi},\,\tinyboldd)}}}]
	             P_{\scriptsizeboldd})^L(\boldxi(\boldt))
	   \cdot
	   \partial_t\xi^{i_{(\vec{\pi},\,\tinyboldd)}}(\boldt)
	   \cdot
	 ([\partial^{\vec{\pi}}_{\xi^{i_{(\vec{\pi},\,\tinyboldd)}}}]
	             P_{\scriptsizeboldd})^R(\boldxi(\boldt))\\		
   &&
      =\;	
	  \sum_{d=0}^{\mbox{\tiny $\bullet$}}
	   \sum_{\;\,\scriptsizeboldd,\, |\scriptsizeboldd|=d}\,
		\sum_{\vec{\pi}
	         \in \vec{\mbox{\scriptsize\it Ptn}}
				    (\mbox{\scriptsize\boldmath $\alpha$}, d)}\,
	  ([\partial^{\vec{\pi}}_{\xi^{i_{(\vec{\pi},\,\tinyboldd)}}}]
	             P_{(\scriptsizeboldd)})^L(\boldxi(\boldt))
	   \cdot
	   \partial_t\xi^{i_{(\vec{\pi},\,\tinyboldd)}}(\boldt)
	   \cdot
	 ([\partial^{\vec{\pi}}_{\xi^{i_{(\vec{\pi},\,\tinyboldd)}}}]
	             P_{(\scriptsizeboldd)})^R(\boldxi(\boldt))\\
   &&	
     =\;
     \sum_{i=1}^n\,
	   \sum_{d=0}^{\mbox{\tiny $\bullet$}}
	   \sum_{\;\,\scriptsizeboldd,\, |\scriptsizeboldd|=d}
	   \sum_{\;\vec{\pi}
	                         \in \vec{\mbox{\scriptsize\it Ptn}}
				            (\mbox{\scriptsize\boldmath $\alpha$}, d),\,i_{(\vec{\pi},\,\tinyboldd)}=i}\,							
	  ([\partial^{\vec{\pi}}_{\xi^i}]
	             P_{(\scriptsizeboldd)})^L(\boldxi(\boldt))
	   \cdot
	   \partial_t\xi^i(\boldt)
	   \cdot
	 ([\partial^{\vec{\pi}}_{\xi^i}]
	             P_{(\scriptsizeboldd)})^R(\boldxi(\boldt))\,.
  \end{eqnarray*}	
  Here,
   $P_{(\scriptsizeboldd)}:= c_{\scriptsizeboldd}P_{\scriptsizeboldd}$
   is the multi-degree $\boldd$ component of the polynomial $P$.
 This is the expression --- a ``virtual chain rule" in some sense ---  we will need for the current notes.

\bigskip

\begin{flushleft}
{\bf The main technical issue}
\end{flushleft}
Continuing now the setting and the study in Sec.\ 4.1, via the Leibniz rule one has at first
 an expansion to
   \begin{eqnarray*}
     \left(\rule{0ex}{.9em}\right.\!
	   \mbox{\Large $\frac{\partial^{|\mbox{\boldmath\tiny $\alpha$}|}}
	                 {\partial \mbox{\boldmath\small $t$}^{\mbox{\boldmath\tiny $\alpha$}}}$}
	   \varphi_T^{\sharp}\!\left.\rule{0ex}{.9em}\right)(f)
	   & =
	   & \mbox{\Large$\frac{\partial^{|\mbox{\boldmath\tiny $\alpha$}|}}
	                 {\partial \mbox{\boldmath\small $t$}^{\mbox{\boldmath\tiny $\alpha$}}}$}
	    (\varphi_T^{\sharp}(f))            \\[.8ex]
	 & =	
	    & \sum_{j=1}^s
		    \sum_{\mbox{\tiny $\begin{array}{c}0\le d_i\le r-1 \\[1.2ex]  1\le i\le n \end{array}$}}
		    \mbox{\Large
			   $\frac{\partial^{|\mbox{\boldmath\tiny $\alpha$}|}}
	                    {\partial \mbox{\boldmath\small $t$}^{\mbox{\boldmath\tiny $\alpha$}}}$}
              \left(					
			   a^{f;\,q_j}_{(d_1,\,\cdots\,,\,d_n)}\cdot
			      (\varphi_T^{\sharp, (j)}(y^1))^{d_1}\,\cdots\,
			  	  (\varphi_T^{\sharp, (j)}(y^n))^{d_n}
				  \right)
  \end{eqnarray*}
  over $U^{\prime}_f$.
Since the coefficients $a^{f;\,q_j}_{(d_1,\,\cdots\,,\,d_n)}$
  may depend also on $\mbox{\boldmath $t$}$,
 the expansion to the above summation via repeated applications of the Leibniz rule
 involves, in general, terms that depend on
    $\left(\frac{\partial^{|\mbox{\boldmath\tiny $\alpha$}_1|}}
		       {\partial\mbox{\boldmath\scriptsize $t$}^{\mbox{\boldmath\tiny $\alpha$}_1}}
		              (\varphi^{\sharp}_{\mbox{\boldmath\scriptsize $t$}}(y^1) )\,,\;
					  \cdots\,,\;
         \frac{\partial^{|\mbox{\boldmath\tiny $\alpha$}_n|}}
		           {\partial\mbox{\boldmath\scriptsize $t$}^{\mbox{\boldmath\tiny $\alpha$}_n}}
		              (\varphi^{\sharp}_{\mbox{\boldmath\scriptsize $t$}}(y^n) )\right)$'s
	   with
	   $\mbox{\boldmath $\alpha$}_1+ \cdots+\mbox{\boldmath $\alpha$}_n
	      < \mbox{\boldmath $\alpha$}$.
The main technical issue is:
  \begin{itemize}
    \item[\LARGE $\cdot$]  \parbox[t]{37em}{\it
    How to express the summation of such lower-order derivative terms in terms of\\
     $\left(\frac{\partial^{|\mbox{\boldmath\tiny $\alpha$}_1|}}
		           {\partial\mbox{\boldmath\scriptsize $t$}^{\mbox{\boldmath\tiny $\alpha$}_1}}
		              (\varphi^{\sharp}_{\mbox{\boldmath\scriptsize $t$}}(y^1) )\,,\;
					  \cdots\,,\;
         \frac{\partial^{|\mbox{\boldmath\tiny $\alpha$}_n|}}
		           {\partial\mbox{\boldmath\scriptsize $t$}^{\mbox{\boldmath\tiny $\alpha$}_n}}
		              (\varphi^{\sharp}_{\mbox{\boldmath\scriptsize $t$}}(y^n) )\right)$'s
	   with
	   $\mbox{\boldmath $\alpha$}_1+ \cdots+\mbox{\boldmath $\alpha$}_n
	      = \mbox{\boldmath $\alpha$}$
    alone\\  so that, in the end,
	  $(\frac{\partial^{|\mbox{\boldmath\tiny $\alpha$}|}}
	                 {\partial\mbox{\boldmath\scriptsize $t$}^{\mbox{\boldmath\tiny $\alpha$}}}
	        \varphi_T^{\sharp})(f)$
	  depends only on
	        $\left(\frac{\partial^{|\mbox{\boldmath\tiny $\alpha$}_1|}}
		           {\partial\mbox{\boldmath\scriptsize $t$}^{\mbox{\boldmath\tiny $\alpha$}_1}}
		              (\varphi^{\sharp}_{\mbox{\boldmath\scriptsize $t$}}(y^1) )\,,\;
					  \cdots\,,\;
         \frac{\partial^{|\mbox{\boldmath\tiny $\alpha$}_n|}}
		           {\partial\mbox{\boldmath\scriptsize $t$}^{\mbox{\boldmath\tiny $\alpha$}_n}}
		              (\varphi^{\sharp}_{\mbox{\boldmath\scriptsize $t$}}(y^n) )\right)$'s
	   with
	   $\mbox{\boldmath $\alpha$}_1+ \,\cdots\,+\mbox{\boldmath $\alpha$}_n
	      = \mbox{\boldmath $\alpha$}$.
	}
  \end{itemize}

\bigskip

\begin{flushleft}
{\bf  4.2.2\hspace{0.6em} The case of first-order derivations}
\end{flushleft}
{\bf Proposition 4.2.2.1. [first order derivation]}$\;$ {\it
 Denote by $\partial_{\mbox{\LARGE $\cdot$}}$
   any of $\partial/\partial t^1,\,\cdots\,,\, \partial/\partial t^l$.
 Let $f\in C^{\infty}(Y)$,
   regarded also as an element in $C^{\infty}(X_T\times Y)$
      through the inclusion $\pr_Y^{\sharp}:C^{\infty}(Y)\rightarrow C^{\infty}(X_T\times Y)$
	  whenever necessary.
 Then
   $(\partial_{\mbox{\LARGE $\cdot$}}\varphi_T^{\sharp})(f)\;$
     ($=\partial_{\mbox{\LARGE $\cdot$}}(\varphi_T^{\sharp}(f))$)
   depends
   on $((\partial_{\mbox{\LARGE $\cdot$}} \varphi_T^{\sharp})(y^1),\,
                    \cdots\,,\, (\partial_{\mbox{\LARGE $\cdot$}}\varphi_T^{\sharp})(y^n))\;$
   ($=(\partial_{\mbox{\LARGE $\cdot$}}(\varphi_T^{\sharp}(y^1)),\,
                    \cdots\,,\,  \partial_{\mbox{\LARGE $\cdot$}}(\varphi_T^{\sharp}(y^n)))$).
 More explicitly,
  recall
   the defining equations
    $$
      \chi^{(i)}_{\varphi_T}\;
	  :=\;    \determinant(y^i\cdot\Id_{r\times r}-\varphi_T^{\sharp}(y^i))\;
	   =\; (y^i)^r + a^{(i)}_{r-1}(y^i)^{r-1}+\,\cdots\,+ a^{(i)}_1y^i+a^{(i)}_0\,,
    $$	
      $i = 1,\,\ldots\,,\, n$,
      for the spectral locus $\varSigma_{\varphi_T;\{y^1,\,\cdots\,,\, y^n\}}\subset X_T\times Y$
      of $\varphi_T$ over $U^{\prime}_f$
   and that,
     around $q_j\in \varSigma_{\varphi_T;\{y^1,\,\cdots\,,\, y^n\}}$,
     there exist
	  $a^{f;q_j}_{(d_1,\,\cdots\,, d_n)}\in C^{\infty}(U^{\prime}_f)$ and	
	  $Q^{f;\,q_j}_{(i)}\in C^{\infty}(U^{\prime}_f\times V_f^{(j)})$,
	  $0\le d_i\le r-1$, $i=1,\,\ldots\,,\, n$,
	 such that
   	  $$
		  f\;=\;
		    \sum_{\mbox{\tiny $\begin{array}{c}0\le d_i\le r-1 \\[1.2ex]  1\le i\le n \end{array}$}}
			    a^{f;\,q_j}_{(d_1,\,\cdots\,,\,d_n)}
				  \cdot (y^1)^{d_1}\,\cdots\,(y^n)^{d_n} \;
				+\;  \sum_{i=1}^n\, Q^{f;\,q_j}_{(i)}\,\chi_{\varphi_T}^{(i)}\,.
      $$
  Then, in terms of these data encoded in $\varphi_T$ and $f$, one has
   \begin{eqnarray*}
    (\partialdot \varphi_T^{\sharp})(f)
	    & = &
          \sum_{j=1}^s
		    \sum_{\mbox{\tiny $\begin{array}{c}0\le d_i\le r-1 \\[1.2ex]  1\le i\le n \end{array}$}}
			    a^{f;\,q_j}_{(d_1,\,\cdots\,,\,d_n)}\cdot
				 \partialdot
				   \!\left(
			       (\varphi_T^{\sharp, (j)}(y^1))^{d_1}\,\cdots\,
				   (\varphi_T^{\sharp, (j)}(y^n))^{d_n}
				   \right)    \\
     && +\; \sum_{j=1}^s\,
		    \sum_{i=1}^n\,
               \tilde{\varphi}_T(Q_{(i)}^{f;\,q_j})\,
               \sum_{d=1}^r\,			
			    a^{(i)}_d\cdot
				 \partialdot
				   \!\left(
			       (\varphi_T^{\sharp, (j)}(y^i))^d
				   \right)\,,      		
   \end{eqnarray*}
 from which
  one can use the Leibniz rule to further express
     $\partialdot ((\varphi_T^{\sharp, (j)}(y^1))^{d_1}\,\cdots\,
				                (\varphi_T^{\sharp, (j)}(y^n))^{d_n})$'s
   and
     $ \partialdot ((\varphi_T^{\sharp, (j)}(y^i))^d)$'s
   into the desired form.
 (Here, $a_r^{(i)}=1$ by convention, for $i=1,\,\ldots\,,\,n$.)
} 
  
\bigskip

\noindent{\it Proof.}
 We proceed in three steps.
 
 \medskip
 
 \noindent
  {\it Step $(1):\,$ Identifying the seemingly problematic terms}\hspace{1em}
 Over $U^{\prime}_f$, one has
  
   {\footnotesize
    \begin{eqnarray*}
   (\partialdot \varphi_T^{\sharp})(f)\;
	   & =
	   &  \sum_{j=1}^s
		    \sum_{\mbox{\tiny $\begin{array}{c}0\le d_i\le r-1 \\[1.2ex]  1\le i\le n \end{array}$}}
		    \partialdot
               \left(					
			    a^{f;\,q_j}_{(d_1,\,\cdots\,,\,d_n)}\cdot
			       (\varphi_T^{\sharp, (j)}(y^1))^{d_1}\,\cdots\,
				   (\varphi_T^{\sharp, (j)}(y^n))^{d_n}
				   \right)    \\[.8ex]
     &  & \hspace{-4em}=\;\; \sum_{j=1}^s
		    \sum_{\mbox{\tiny $\begin{array}{c}0\le d_i\le r-1 \\[1.2ex]  1\le i\le n \end{array}$}}
		    \partialdot	a^{f;\,q_j}_{(d_1,\,\cdots\,,\,d_n)}
			  \cdot (\varphi_T^{\sharp, (j)}(y^1))^{d_1}\,\cdots\,
				               (\varphi_T^{\sharp, (j)}(y^n))^{d_n}                   \\[.8ex]
     && \hspace{-2em}+\;
  	        \sum_{j=1}^s
		    \sum_{\mbox{\tiny $\begin{array}{c}0\le d_i\le r-1 \\[1.2ex]  1\le i\le n \end{array}$}}
			\sum_{d_1^{\prime}=0}^{d_1-1}
			    a^{f;\,q_j}_{(d_1,\,\cdots\,,\,d_n)}\cdot
			       (\varphi_T^{\sharp, (j)}(y^1))^{d_1^{\prime}} \cdot
				      \partialdot \varphi_T^{\sharp, (j)}(y^1)\cdot
					  (\varphi_T^{\sharp, (j)}(y^1))^{d_1-1-d_1^{\prime}}
				     \,\cdots\,
				   (\varphi_T^{\sharp, (j)}(y^n))^{d_n}                          \\[.8ex]       	 				   
     && \hspace{-2em}+\;\cdots\;   \\[2.4ex]
     && \hspace{-2em}+\;
  	        \sum_{j=1}^s
		    \sum_{\mbox{\tiny $\begin{array}{c}0\le d_i\le r-1 \\[1.2ex]  1\le i\le n \end{array}$}}
			\sum_{d_n^{\prime}=0}^{d_n-1}
			    a^{f;\,q_j}_{(d_1,\,\cdots\,,\,d_n)}\cdot
			       (\varphi_T^{\sharp, (j)}(y^1))^{d_1}
				     \,\cdots\,
				   (\varphi_T^{\sharp, (j)}(y^n))^{d_n^{\prime}}
				      \cdot   \partialdot\varphi_T^{\sharp, (j)}(y^n)
					  \cdot (\varphi_T^{\sharp, (j)}(y^n))^{d_n-1-d_n^{\prime}}\,.					
  \end{eqnarray*}}
 Terms that are not manifestly multilinear in
  $(\partialdot(\varphi_T^{\sharp}(y^1)),\,
                    \cdots\,,\,  \partialdot(\varphi_T^{\sharp}(y^n)))$
 lie in the first cluster 		
     $\sum_{j=1}^s
		    \sum_{\mbox{\tiny $\begin{array}{c}0\le d_i\le r-1 \\[1.2ex]  1\le i\le n \end{array}$}}
		    \partialdot	a^{f;\,q_j}_{(d_1,\,\cdots\,,\,d_n)}
			  \cdot (\varphi_T^{\sharp, (j)}(y^1))^{d_1}\,\cdots\,
				               (\varphi_T^{\sharp, (j)}(y^n))^{d_n}$.
							
 Since terms with different $j$'s are independent from each other and, hence, can be treated separately,
  to avoid the burden of notation and without loss of generality,
  we assume from now on in the proof that $s=1$ and drop the $j$ label altogether.
  
 \medskip

 \noindent
 {\it Step $(2):\;$ Defining equations of the spectral locus come to play}\hspace{1em}
 Recall the generators
  $$
    \begin{array}{rll}
    \chi^{(i)}_{\varphi_T}
	  & :=
	     & \determinant(y^i\cdot\Id_{r\times r}-\varphi_T^{\sharp}(y^i))  \\[.8ex]
	 & \;=:\;
	     &  (y^i)^r + a^{(i)}_{r-1}(y^i)^{r-1}+\,\cdots\,+ a^{(i)}_1y^i+a^{(i)}_0\\[1.2ex]
	  && \in\;  C^{\infty}(U_f^{\prime})[y^i]\;
	           \subset\; C^{\infty}(U_f^{\prime}\times V_f)\,,
    \end{array}	
  $$	
  for $i = 1,\,\ldots\,,\, n$,
  of the ideal $I_{\varphi_T;\{y^1,\,\cdots\,,\, y^n\}}|_{U_f^{\prime}\times V_f}$
  that defines the spectral locus $\varSigma_{\varphi_T;\{y^1,\,\cdots\,,\, y^n\}}$
  of $\varphi_T$ over $U_f^{\prime}$.
 By construction,
   there exist
    $Q^{f;\,q}_{(1)},\,\cdots\,,\, Q^{f;\,q}_{(n)}
	      \in C^{\infty}(U^{\prime}_f\times V_f)$
  such that
  $$
    f\; =\;
      \sum_{\mbox{\tiny $\begin{array}{c}0\le d_i\le r-1 \\[1.2ex]  1\le i\le n \end{array}$}}
			    a^{f;\,q}_{(d_1,\,\cdots\,,\,d_n)}\cdot(y^1)^{d_1}\,\cdots\,(y^n)^{d_n}\;
      +\; \sum_{i^{\prime}=1}^n
	            Q^{f;\,q}_{(i^{\prime})}\,\chi^{(i^{\prime})}_{\varphi_T}
  $$
  in $C^{\infty}(U^{\prime}_f\times V_f)$.
Since $f\in C^{\infty}(V_f)$ and, hence, has no dependence on {\boldmath  $t$},
 applying $\partialdot$ to both sides of the above identity gives
 $$
   0\;=\;
     \sum_{\mbox{\tiny $\begin{array}{c}0\le d_i\le r-1 \\[1.2ex]  1\le i\le n \end{array}$}}
			    \partialdot a^{f;\,q}_{(d_1,\,\cdots\,,\,d_n)}\cdot(y^1)^{d_1}\,\cdots\,(y^n)^{d_n}\;
         +\; \sum_{i^{\prime}=1}^n
		               \partialdot Q^{f;\,q}_{(i^{\prime})}
		                                  \cdot  \chi^{(i^{\prime})}_{\varphi_T}\;
         +\; \sum_{i^{\prime\prime}=1}^n
		              Q^{f;\,q}_{(i^{\prime\prime})}\cdot
		                                  \partialdot  \chi^{(i^{\prime\prime})}_{\varphi_T}\,.
 $$
 Applying $\tilde{\varphi}_T^{\sharp}$ to this identity   and
   noticing that $\tilde{\varphi}_T^{\sharp}(\chi_{\varphi_T}^{(i^{\prime})})=0$
      for $i^{\prime}=1,\,\ldots\,,n$,
 one now has
 $$
   \sum_{\mbox{\tiny $\begin{array}{c}0\le d_i\le r-1 \\[1.2ex]  1\le i\le n \end{array}$}}
	  \partialdot a^{f;\,q}_{(d_1,\,\cdots\,,\,d_n)}\cdot
	   (\varphi_T^{\sharp}(y^1))^{d_1}\,\cdots\,(\varphi_T^{\sharp}(y^n))^{d_n}\;
   =\;  -\,\sum_{i^{\prime\prime}=1}^n
		            \tilde{\varphi}_T^{\sharp}(Q^{f;\,q}_{(i^{\prime\prime})})
					    \cdot   \tilde{\varphi}_T^{\sharp}
						                      (\partialdot  \chi^{(i^{\prime\prime})}_{\varphi_T})\,.
 $$	

 \medskip
 
 \noindent
 {\it Step $(3):\;$ Understanding
      $\,\tilde{\varphi}_T^{\sharp}
	      (\partialdot\chi_{\varphi_T}^{(i^{\prime\prime})})$}\hspace{1em}
 Explicitly, one has
  $$
    \partialdot\chi_{\varphi_T}^{(i^{\prime\prime})}\;
	 =\;  \partialdot a^{(i^{\prime\prime})}_{r-1}\cdot(y^{i^{\prime\prime}})^{r-1}\,
			+\,\cdots\,
	        +\, \partialdot a^{(i^{\prime\prime})}_1\cdot y^{i^{\prime\prime}}\,
			+\,\partialdot a^{(i^{\prime\prime})}_0
  $$
 and, hence,
  $$
    \tilde{\varphi}_T^{\sharp}
	  (\partialdot\chi_{\varphi_T}^{(i^{\prime\prime})})\;
	 =\;   \partialdot a^{(i^{\prime\prime})}_{r-1}
			        \cdot(\varphi_T^{\sharp}(y^{i^{\prime\prime}}))^{r-1}\,
			+\,\cdots\,
	        +\, \partialdot a^{(i^{\prime\prime})}_1
			        \cdot \varphi_T^{\sharp}(y^{i^{\prime\prime}})\,
			+\,\partialdot a^{(i^{\prime\prime})}_0\,,
  $$
  for $i^{\prime\prime}=1,\,\ldots\,,\, n$.
 On the other hand,
  $$
    \tilde{\varphi}_T^{\sharp}
	  (\chi_{\varphi_T}^{(i^{\prime\prime})})\;
	 =\; (\varphi_T^{\sharp}(y^{i^{\prime\prime}}))^r \,
	        +\, a^{(i^{\prime\prime})}_{r-1}
			        \cdot(\varphi_T^{\sharp}(y^{i^{\prime\prime}}))^{r-1}\,
			+\,\cdots\,
	        +\,  a^{(i^{\prime\prime})}_1
			        \cdot \varphi_T^{\sharp}(y^{i^{\prime\prime}})\,
			+\,  a^{(i^{\prime\prime})}_0\;
	 =\; 0
  $$
  and, hence,
  $$
     \partialdot(
	   \tilde{\varphi}_T^{\sharp}
	  (\chi_{\varphi_T}^{(i^{\prime\prime})}))\;=\; 0\,.
  $$
  Which gives
   \begin{eqnarray*}
     \tilde{\varphi}_T^{\sharp}
	  (\partialdot\chi_{\varphi_T}^{(i^{\prime\prime})})\;
      & =
	  & -\,\left[\,
	       \partialdot(\varphi_T^{\sharp}(y^{i^{\prime\prime}}))^r \,
	        +\, a^{(i^{\prime\prime})}_{r-1}
			        \cdot \partialdot(\varphi_T^{\sharp}(y^{i^{\prime\prime}}))^{r-1}\,
			+\,\cdots\,
	        +\,  a^{(i^{\prime\prime})}_1
			        \cdot \partialdot\varphi_T^{\sharp}(y^{i^{\prime\prime}})  \,\right]  \\
     && \hspace{-7em}
			=\;\;
	        -\,\left[\rule{0ex}{1.6em}\right.\,
	           \sum_{k_r=0}^{r-1}
			     (\varphi_T^{\sharp}(y^{i^{\prime\prime}}))^{k_r}
				   \cdot   \partialdot \varphi_T^{\sharp}(y^{i^{\prime\prime}})
				   \cdot   (\varphi_T^{\sharp}(y^{i^{\prime\prime}}))^{r-1-k_r}  \\				   
     &&  \hspace{-3em}
   	        +\;  a_{r-1}^{(i^{\prime\prime})}\cdot
			       \sum_{k_{r-1}=0}^{r-2}
			     (\varphi_T^{\sharp}(y^{i^{\prime\prime}}))^{k_{r-1}}
				   \cdot   \partialdot \varphi_T^{\sharp}(y^{i^{\prime\prime}})
				   \cdot   (\varphi_T^{\sharp}(y^{i^{\prime\prime}}))^{r-2-k_{r-1}}  \;
				   +\; \cdots\; \\		
     && \hspace{-2.4em}
	        + \; a_2^{(i^{\prime\prime})}\cdot
			       \left(\partialdot\varphi_T^{\sharp}(y^{i^{\prime\prime}})
				        \cdot \varphi_T^{\sharp}(y^{i^{\prime\prime}})\,
                        +\, 	\varphi_T^{\sharp}(y^{i^{\prime\prime}})
				                  \cdot \partialdot \varphi_T^{\sharp}(y^{i^{\prime\prime}})	\right)\;
            +\;  a^{(i^{\prime\prime})}_1
			        \cdot \partialdot\varphi_T^{\sharp}(y^{i^{\prime\prime}}) 	
            \left.\rule{0ex}{1.6em}\right].	
   \end{eqnarray*}

  This concludes the proof of the proposition.
  
\noindent\hspace{40.8em}$\square$

\bigskip

\begin{flushleft}
{\bf  4.2.3\hspace{0.6em} Generalization to derivations of any order}
\end{flushleft}
Thinking deep enough of the case of first order derivations leads one to its generalization
 to all higher-order situations.
To state the proposition, with the notation from the previous theme,
 note that
  by shrinking $U^{\prime}_f$ if necessary and applying the Generalized Division Lemma repeatingly, 
   first to $f$, then to $Q^{f,q_j}_{(i)}$ , $\cdots$, and so on,
   one has
   \begin{eqnarray*}
   f  & =
      &  R^{f;\,q_j}_0\; + \; \sum_{i=1}^n R^{f;\,q_j}_{(i)}\,\chi_{\varphi_T}^{(i)}\;
	         +\;  \sum_{i_1,\,i_2=1}^n
	                    R^{f;\,q_j}_{(i_1,\,i_2)}\,
	                      \chi_{\varphi_T}^{(i_1)}\chi_{\varphi_T}^{(i_2)}\; +\;\cdots\cdots\; \\
   &&  \hspace{1em}
          +\; \sum_{i_1,\,\cdots\,,\,i_k=1}^n
                   R^{f;\,q_j}_{(i_1,\,\cdots\,,\,i_k)}\,
	                     \chi_{\varphi_T}^{(i_1)}\,\cdots\,\chi_{\varphi_T}^{(i_k)}\;
	      +\; \sum_{i_1,\,\cdots\,,\, i_{k+1}=1}^n	
   	               Q^{f;\,q_j}_{(i_1,\,\cdots\,,\,i_{k+1})}\,
	                     \chi_{\varphi_T}^{(i_1)}\,\cdots\,\chi_{\varphi_T}^{(i_{k+1})}\\
   &\;=:
      & R^{f;\,q_j}[k]\;
		  +\; \sum_{i_1,\,\cdots\,,\, i_{k+1}=1}^n	
   	               Q^{f;\,q_j}_{(i_1,\,\cdots\,,\,i_{k+1})}\,
	                     \chi_{\varphi_T}^{(i_1)}\,\cdots\,\chi_{\varphi_T}^{(i_{k+1})}
  \end{eqnarray*}
 around each $q_j$, for any $k\in {\Bbb Z}_{\ge 1}$.
Here,
   $R^{f;\,q_j}_0,\, R^{f;\,q_j}_{(i)},\,\cdots\,,\, R^{f;q_j}_{(i_1,\,\cdots\,,\,i_k)}$,
   and, hence, $R^{f;\,q_j}[k]$ are all in
   $C^{\infty}(U^{\prime}_f)[y^1,\,\cdots\,, y^n]
    \subset C^{\infty}(U^{\prime}_f\times V^{(j)}_f)$.
Denote the multi-degree of a summand of such a polynomial by $\boldd=(d_1,\,\cdots\,, d_n)$
 and $(y^1)^{d_1}\,\cdots\,(y^n)^{d_n}$ by $\boldy^{\scriptsizeboldd}$.
	
\bigskip

\noindent
{\bf Proposition 4.2.3.1. [derivation of any order]}$\;$ {\it
 Let
  $\partial^{\mbox{\boldmath \scriptsize $\alpha$}}$	be a derivation of order $k\ge 1$
  with respect to the coordinates $\boldt=(t^1,\,\cdots\,,t^l)$ of $T$ .
 Then,  over $U_f^{\prime}$,
  $$
    (\partialboldalpha\varphi_T^{\sharp})(f)\;
	  :=\;  \partialboldalpha(\varphi_T^{\sharp}(f))\;
	  =\; \sum_{j=1}^s
   	           R^{f;q_j}[k]
			   \!\!\left.\rule{0ex}{0.8em}\right|
			                    _{\scriptsizeboldy^{\tinyboldd}
 			                              \rightsquigarrow\,
								          \partial^{\tinyboldalpha}
								   (\varphi_T^{\sharp, (j)}(\scriptsizeboldy^{\tinyboldd}))}
  $$
 and, hence, depends only on
	        $\left(\frac{\partial^{|\tinyboldalpha_1|}}
		           {\partial\scriptsizeboldt^{\tinyboldalpha_1}}
		              (\varphi^{\sharp, (j)}_{\scriptsizeboldt}(y^1) )\,,\;
					  \cdots\,,\;
         \frac{\partial^{|\tinyboldalpha_n|}}
		           {\partial\scriptsizeboldt^{\tinyboldalpha_n}}
		              (\varphi^{\sharp, (j)}_{\scriptsizeboldt}(y^n) )\right)$'s,
      $j=1,\,\ldots\,,\, s$,					
	   with
	   $\boldalpha_1+ \,\cdots\,+\boldalpha_n = \boldalpha$.
 Here,
   $\boldy^{\scriptsizeboldd}
 	   \rightsquigarrow
	     \partial^{\scriptsizeboldalpha}
				(\varphi_T^{\sharp, (j)}(\boldy^{\scriptsizeboldd}))$
  means
   ``{\sl
      the replacement  of  $\boldy^{\scriptsizeboldd}$
 	   by $\partial^{\scriptsizeboldalpha}
				(\varphi_T^{\sharp, (j)}(\boldy^{\scriptsizeboldd}))$
       for all multi-degree-$\boldd$ summands of $R^{f;q_j}[k]$,
	   $\boldd$ running from $(0,\,\cdots\,,0)$ to $((k+1)r-1,\,\cdots\,,\,(k+1)r-1)$}".
}
  
\bigskip

\begin{proof}
 Since $|\boldalpha|=k$,
 there is always at least one
  $\chi_{\varphi_T}^{(\,\mbox{\tiny $\bullet$}\,)}$ factor left
  in the summands of the final expansion of
   $\partialboldalpha
     ( Q^{f;\,q_j}_{(i_1,\,\cdots\,,\,i_{k+1})}\,
	                     \chi_{\varphi_T}^{(i_1)}\,\cdots\,\chi_{\varphi_T}^{(i_{k+1})} ) $
   through repeating the Leibniz rule.
 Together with the identity
   $\tilde{\varphi}_T^{\sharp}(\chi_{\varphi_T}^{(\,\mbox{\tiny $\bullet$})})=0$,
 one has
 $$
   (\partialboldalpha\varphi_T^{\sharp})(f)\;
    =\;  \sum_{j=1}^s\partialboldalpha (R^{f;\,q_j}[k])\,.
 $$
 By construction, for each $j$,
  $R^{f;\,q_j}[k]$ is in
  $C^{\infty}(U^{\prime}_f)[y^1,\,\cdots\,,\, y^n]
     \subset C^{\infty}(U_f^{\prime}\times V_f^{(j)})$
  of $(y^1,\,\cdots\,,\, y^n)$-multi-degree $\le ((k+1)r-1, \,\cdots\,,\, (k+1)r-1) $.
 For convenience and all we need in the proof,
  we will write $R^{f;\,q_j}[k]$ as a polynomial in $\boldy=(y^1,\,\cdots\,, y^n)$
  $$
    R^{f;\,q_j}[k]\;
	  =\;  \sum_{\scriptsizeboldd} c_{j, \scriptsizeboldd}\,\boldy^{\scriptsizeboldd}
  $$
  with the coefficients $c_{j, \scriptsizeboldd}\in C^{\infty}(U^{\prime}_f)$.
 Recall from Sec.\ 4.1
  the decomposition\\
  $\varphi_T^{\sharp}=(\varphi_T^{\sharp, (1)},\,\cdots\,,\, \varphi_T^{\sharp,(s)})$,
   which induces the decomposition
     $\tilde{\varphi}_T^{\sharp}
	   =(\tilde{\varphi}_T^{\sharp, (1)},\,\cdots\,,\, \tilde{\varphi}_T^{\sharp,(s)})$.
 Since $f\in C^{\infty}(Y)$,
  $$
    \partialdot f=0
  $$
 for $\partialdot$ any of $\partial/\partial t^1, \,\cdots\,,\,\partial/\partial t^l$.
This gives a collection of identities for each $j$:
   \begin{eqnarray*}
   0\;\;=   & \tilde{\varphi}_T^{\sharp, (j)}(\partialboldalpha f)
     & =\;\;
	     \sum_{\scriptsizeboldd}
	         \partialboldalpha c_{j, \scriptsizeboldd}
	            \cdot  \varphi_T^{\sharp, (j)}(\boldy^{\scriptsizeboldd})\;,   \\
   0\;\;=
     &  \partial^{\scriptsizeboldalpha_1}
         (\tilde{\varphi}_T^{\sharp, (j)}
		      (\partial^{\scriptsizeboldalpha - \scriptsizeboldalpha_1}f))
	 &=\;\;
	    \tilde{\varphi}_T^{\sharp, (j)}(\partialboldalpha f)\;
	     +\; \sum_{\scriptsizeboldd}
		         \partial^{\scriptsizeboldalpha - \scriptsizeboldalpha_1}c_{j, \scriptsizeboldd}
				 \cdot \partial^{\scriptsizeboldalpha_1}
				     \varphi_T^{\sharp, (j)}(\boldy^{\scriptsizeboldd})\;, \\
   0\;\;=
     &  \partial^{\scriptsizeboldalpha_2}
         (\tilde{\varphi}_T^{\sharp, (j)}
		      (\partial^{\scriptsizeboldalpha - \scriptsizeboldalpha_2}f))   \\
	&& \hspace{-11.9em}
	    =\;\;\;
	    \tilde{\varphi}_T^{\sharp, (j)}(\partialboldalpha f)\;
		 +\; \sum_{\mbox{\tiny $\begin{array}{l}
		                         \tinyboldalpha^{\prime} \prec  \tinyboldalpha_2, \\[-.2ex] 								 
								 |\tinyboldalpha^{\prime}|=1
								\end{array}$}}
            m_{\scriptsizeboldalpha^{\prime} \prec  \scriptsizeboldalpha_2}\cdot								
				\sum_{\scriptsizeboldd}
		          \partial^{\scriptsizeboldalpha - \scriptsizeboldalpha^{\prime}}c_{j, \scriptsizeboldd}
				  \cdot \partial^{\scriptsizeboldalpha^{\prime}}
				      \varphi_T^{\sharp, (j)}(\mbox{\boldmath $y$}^{\scriptsizeboldd})\;													  
	     +\; \sum_{\scriptsizeboldd}
		         \partial^{\mbox{\boldmath\scriptsize $\alpha$}
		                                - \scriptsizeboldalpha_2}c_{j, \scriptsizeboldd}
				 \cdot \partial^{\scriptsizeboldalpha_2}
				     \varphi_T^{\sharp, (j)}(\boldy^{\scriptsizeboldd})\;,			\\
  0\;\;=
     &  \partial^{\scriptsizeboldalpha_3}
         (\tilde{\varphi}_T^{\sharp, (j)}
		      (\partial^{\scriptsizeboldalpha - \scriptsizeboldalpha_3}f))   \\
	&& \hspace{-11.9em}
	    =\;\;\;
	    \tilde{\varphi}_T^{\sharp, (j)}(\partialboldalpha f)\;
		 +\; \sum_{\mbox{\tiny $\begin{array}{l}
		                        \tinyboldalpha^{\prime}\prec  \tinyboldalpha_3, \\[-.2ex] 								 
								 |\tinyboldalpha^{\prime}|=1
								\end{array}$}}
                m_{\scriptsizeboldalpha^{\prime} \prec  \scriptsizeboldalpha_3}\cdot								
				\sum_{\scriptsizeboldd}
		          \partial^{\scriptsizeboldalpha - \scriptsizeboldalpha^{\prime}}c_{j, \scriptsizeboldd}
				  \cdot \partial^{\scriptsizeboldalpha^{\prime}}
				      \varphi_T^{\sharp, (j)}(\boldy^{\scriptsizeboldd})\;		\\											  
	  && \hspace{-8em}
       +\; \sum_{\mbox{\tiny $\begin{array}{l}
		                       \tinyboldalpha^{\prime\prime} \prec  \tinyboldalpha_3, \\[-.2ex] 								 
								 |\tinyboldalpha^{\prime\prime}|=2
								\end{array}$}}
                 m_{\scriptsizeboldalpha^{\prime\prime}\prec  \scriptsizeboldalpha_3}\cdot								
				\sum_{\scriptsizeboldd}
		          \partial^{\scriptsizeboldalpha - \scriptsizeboldalpha^{\prime\prime}}c_{j, \scriptsizeboldd}
				  \cdot \partial^{\scriptsizeboldalpha^{\prime\prime}}
				      \varphi_T^{\sharp, (j)}(\boldy^{\scriptsizeboldd})\;	
	  +\; \sum_{\scriptsizeboldd}
		         \partial^{\scriptsizeboldalpha - \scriptsizeboldalpha_3}c_{j, \scriptsizeboldd}
				 \cdot \partial^{\scriptsizeboldalpha_3}
				     \varphi_T^{\sharp, (j)}(\boldy^{\scriptsizeboldd})\;,	 \\[2ex]
   & \cdots\cdots\cdots\cdots  &	\\[3ex]
  0\;\;=
     &  \partial^{\scriptsizeboldalpha_{k^{\prime}}}
         (\tilde{\varphi}_T^{\sharp, (j)}
		      (\partial^{\scriptsizeboldalpha - \scriptsizeboldalpha_{k^{\prime}}}f))   \\
	&& \hspace{-11.9em}
	    =\;\;
	     \sum_{k^{\prime\prime}=0}^{k^{\prime}-1}
		 \sum_{\mbox{\tiny $\begin{array}{l}
		                       \tinyboldalpha^{\prime\prime}\prec  \tinyboldalpha_{k^{\prime}}, \\
								 |\tinyboldalpha^{\prime\prime}|=k^{\prime\prime}
								\end{array}$}}
                m_{\scriptsizeboldalpha^{\prime\prime}\prec  \scriptsizeboldalpha_{k^{\prime}}}
				\cdot								
				\sum_{\scriptsizeboldd}
		          \partial^{\scriptsizeboldalpha - \scriptsizeboldalpha^{\prime\prime}}c_{j, \scriptsizeboldd}
				  \cdot \partial^{\scriptsizeboldalpha^{\prime\prime}}
				      \varphi_T^{\sharp, (j)}(\boldy^{\scriptsizeboldd})\;		
	  +\; \sum_{\scriptsizeboldd}
		         \partial^{\scriptsizeboldalpha - \scriptsizeboldalpha_{k^{\prime}}}c_{j, \scriptsizeboldd}
				 \cdot \partial^{\scriptsizeboldalpha_{k^{\prime}}}
				     \varphi_T^{\sharp, (j)}(\boldy^{\scriptsizeboldd})\;,	 \\[2ex]
   & \cdots\cdots\cdots\cdots  &	\\[-2ex]			
   \end{eqnarray*}
  where, with a slight abuse of the labelling index $k^{\prime}$,
   $\,\boldalpha_{k^{\prime}}$
     runs over all $\boldalpha_{k^{\prime}}  \prec \mbox{\boldmath $\alpha$}$
	 with  $|\mbox{\boldmath $\alpha$}_{k^{\prime}}|=k^{\prime}$,
   $k^{\prime}=1,\,\ldots\,, |\mbox{\boldmath $\alpha$}|-1$.
 Here,
  $$
      m_{\mbox{\boldmath\scriptsize $\alpha$}^{\prime\prime}
			         \prec \mbox{\boldmath\scriptsize $\alpha$}^{\prime}}\;
	  =\; \mbox{\footnotesize
	       $\left(\!\!\begin{array}{l}\alpha_1^{\prime} \\  \alpha^{\prime\prime}_1 \end{array}\!\!\right)
	          \,\cdots\,
			 \left(\!\!\begin{array}{l}\alpha_l^{\prime} \\  \alpha^{\prime\prime}_l \end{array}\!\!\right)$}
  $$
  counts the number of ways to choose
    $\partial^{\scriptsizeboldalpha^{\prime\prime}}$ from $\partial^{\scriptsizeboldalpha^{\prime}}$
   for
    $\mbox{\boldmath $\alpha$}^{\prime\prime}
	    :=(\alpha^{\prime\prime}_1,\,\cdots\,,\,\alpha^{\prime\prime}_l)
        \prec \mbox{\boldmath $\alpha^{\prime}$}
		              := (\alpha^{\prime}_1,\,\cdots\,,\,\alpha^{\prime}_l)$.
   
 Now observe that
  the above system of identities is equivalent to the following system of identities for each $j$:
 $$
   \sum_{\scriptsizeboldd}
		         \partial^{\mbox{\boldmath\scriptsize $\alpha$}
		                             - \mbox{\boldmath\scriptsize $\alpha$}^{\prime\prime}}
				      c_{j, \scriptsizeboldd}
				 \cdot \partial^{\mbox{\boldmath\scriptsize $\alpha$}^{\prime\prime}}
				     \varphi_T^{\sharp, (j)}(\mbox{\boldmath $y$}^{\scriptsizeboldd})\;
    =\; 0\hspace{2em}
	\mbox{for all
	  $\;\mbox{\boldmath $\alpha$}^{\prime\prime}\,
	        \prec\, \mbox{\boldmath $\alpha$}\,$,
	  $\;0\le |\mbox{\boldmath $\alpha$}^{\prime\prime}|< |\mbox{\boldmath $\alpha$}|-1$}\,.	
 $$
 It follows that
   \begin{eqnarray*}
     \partialboldalpha (\varphi_T^{\sharp}(f))
	 & =	
     &  \sum_{j=1}^s
	      \sum_{k^{\prime}=0}^{|\scriptsizeboldalpha|-1}
  	      \!\!\sum_{\mbox{\tiny
		                        $\begin{array}{l}
		                             \tinyboldalpha^{\prime} \prec  \tinyboldalpha, \\[-.2ex] 								 
								   | \tinyboldalpha^{\prime}|=k^{\prime}
								  \end{array}$}}          								
		  \!\!
		   m_{\scriptsizeboldalpha^{\prime}\prec  \scriptsizeboldalpha}\cdot
		  \sum_{\scriptsizeboldd}
		          \partial^{\scriptsizeboldalpha - \scriptsizeboldalpha^{\prime}}c_{j, \scriptsizeboldd}
				  \cdot \partial^{\scriptsizeboldalpha^{\prime}}
				      \varphi_T^{\sharp, (j)}(\boldy^{\scriptsizeboldd})\;		
	     +\; 	 \sum_{j=1}^s\sum_{\scriptsizeboldd}
		          c_{j, \scriptsizeboldd}
				   \cdot
				   \partialboldalpha \varphi_T^{\sharp, (j)}(\boldy^{\scriptsizeboldd}) \\
     & =
	   &  \sum_{j=1}^s
	        \sum_{\scriptsizeboldd}
		       c_{\scriptsizeboldd}
			   \cdot
			   \partialboldalpha \varphi_T^{\sharp, (j)}(\boldy^{\scriptsizeboldd})\;\;\;
            =\;\;\;
 			   \sum_{j=1}^s
   	           R^{f;q_j}[k]
			   \!\!\left.\rule{0ex}{0.8em}\right|_{\scriptsizeboldy^{\tinyboldd}
 			                    \rightsquigarrow\,
								\partial^{\tinyboldalpha}
								   (\varphi_T^{\sharp, (j)}(\scriptsizeboldy^{\tinyboldd}))}\;.
  \end{eqnarray*}
 This concludes the proof of the proposition.
   
\end{proof}

\bigskip

\noindent
{\it Remark 4.2.3.2. $[\,$case $k=1\,]$}$\;$  {\rm
 Since
  $Q^{f;\,q_j}_{(i_1,\,\cdots\,,\, i_k)}
     = R^{f;\,q_j}_{(i_1,\,\cdots\,,\,i_k)}
	    + \sum_{i_{k+1}=1}^n
		     Q^{f;\,q_j}_{(i_1,\,\cdots\,,\,i_k, i_{k+1})}\,\chi_{\varphi_T}^{(i_{k+1})}$
     and $\tilde{\varphi}^{\sharp, (j)}_T(\chi_{\varphi_T}^{(i_{k+1})})=0\,$
	  for $i_{k+1}=1,\,\ldots\,,n$,      	
  one has	
   $$
      \tilde{\varphi}_T^{\sharp, (j)}(Q^{f;\,q_j}_{(i_1,\,\cdots\,,\, i_k)})\;
      =\;  \tilde{\varphi}_T^{\sharp, (j)}(R^{f;\,q_j}_{(i_1,\,\cdots\,,\,i_k)})\,.
   $$
 In particular, for $k=1$,
   \begin{eqnarray*}
    \lefteqn{
    (R^{f;q_j}_{(i)}\,\chi_{\varphi_T}^{(i)})
         |_{\scriptsizeboldy^{\tinyboldd}\,
		          \rightsquigarrow\,
				  \partialdot \varphi_T^{\sharp, (j)}(\scriptsizeboldy^{\tinyboldd})}  }\\
	   && =\;\;
 	   (R^{f;q_j}_{(i)}
                     |_{\scriptsizeboldy^{\tinyboldd}\,					 					
		                      \rightsquigarrow\,
				               \partialdot \varphi_T^{\sharp, (j)}(\scriptsizeboldy^{\tinyboldd})})\,
                       \cdot \tilde{\varphi}_T^{\sharp, (j)}( \chi_{\varphi_T}^{(i)})\;
                +\; \tilde{\varphi}_T^{\sharp, (j)}(R^{f;q_j}_{(i)})
		               \cdot  (\chi_{\varphi_T}^{(i)}
                                       |_{\scriptsizeboldy^{\tinyboldd}\,
		                                        \rightsquigarrow\,
				                                \partialdot \varphi_T^{\sharp, (j)}(\scriptsizeboldy^{\tinyboldd})})\\
    && =\;\;
       	\tilde{\varphi}_T^{\sharp, (j)}(Q^{f;q_j}_{(i)})
		               \cdot  (\chi_{\varphi_T}^{(i)}
                                       |_{\scriptsizeboldy^{\tinyboldd}\,
		                                        \rightsquigarrow\,
				                                \partialdot \varphi_T^{\sharp, (j)}(\scriptsizeboldy^{\tinyboldd})})
  \end{eqnarray*}
 and
 Proposition~4.2.3.1 resumes to Proposition~4.2.2.1.
} 
  
\bigskip

\noindent
{\it Remark 4.2.3.3. $[\,$case $k=0\,]$}$\;$  {\rm
 Setting the convention that for $|\mbox{\boldmath $\alpha$}|=0$, i.e.\
   $\mbox{\boldmath $\alpha$}=(0,\,\cdots\,,\,0)$,
  $\partialboldalpha (\,\cdots\,):= (\,\cdots\,)$.
 Then, for $|\mbox{\boldmath $\alpha$}|=0$,
  Proposition~4.2.3.1 resumes to the case studied in [L-Y6] (D(11.3.1)), reviewd in Sec.\ 4.1;
  cf.\ the formula
     $$		
	   \varphi_T^{\sharp}(f)\;
	   =\;
	     \sum_{j=1}^s
		 \sum_{\mbox{\tiny $\begin{array}{c}0\le d_i\le r-1 \\[1.2ex]  1\le i\le n \end{array}$}}
			  a^{f;q_j}_{(d_1,\,\cdots\,,\,d_n)}\cdot
			    (\varphi_T^{\sharp, (j)}(y^1))^{d_1}\,\cdots\,
				(\varphi_T^{\sharp, (j)}(y^n))^{d_n}
	 $$
 at the end of Sec.\ 4.1, which is simply
 $\sum_{j=1}^sR^{f;\,q_j}[0]|
                              _{\scriptsizeboldy^{\tinyboldd}\,
                                      \rightsquigarrow\,
								      \varphi_T^{\sharp, (j)}(\scriptsizeboldy^{\tinyboldd})}$.
}
		
\bigskip

\noindent
{\it Remark 4.2.3.4. $[\,$A second look at Proposition~4.2.3.1 from a comparison with the commutative case$\,]$}$\;$
{\rm															
 In the commutative case,
  let
   $X={\Bbb R}^m$ with coordinates $\boldx=(x^1,\,\cdots\,, x^m)$,
   $Y={\Bbb R}^n$  with coordinates $\boldy=(y^1,\,\cdots\,, y^n)$,
   $f\in C^{\infty}(Y)$, and
   $h :=(h^1,\,\cdots\,, h^n):  X \rightarrow Y$ be a differentiable map.
 Let
   $T$ be a small neighborhood of the origin $0\in{\Bbb R}^l$ with coordinates $\boldt=(t^1,\,\cdots\,, t^l)$
   and
   $h_T :=(h_T^1,\,\cdots\,, h_T^n):X\rightarrow Y$ be a $T$-family of differentiable maps from $X$ to $Y$
     that extends $h=: h_0$.	
 For $\boldalpha=(\alpha_1,\,\cdots\,,\alpha_l)\in {\Bbb Z}_{\ge 0}^l$,
  let $\partial^{\scriptsizeboldalpha}_{\scriptsizeboldt,0}
          := \partial^{|\scriptsizeboldalpha|}/(\partial t^1)^{\alpha_1}\,\cdots\,(\partial t^l)^{\alpha_l}$
		 at $\boldt=0$.
 For $\boldd=(d_1,\,\cdots\,,d_n)\in {\Bbb Z}_{\ge 0}^n$,
  let $\partial^{\scriptsizeboldd}_{\scriptsizeboldy}
          := \partial^{|\scriptsizeboldd|}/(\partial y^1)^{d_1}\,\cdots\,(\partial y^n)^{d_n}$,
  where $|\boldd|:=d_1+\,\cdots\,+d_n$.		
 Then, it follows from the chain rule and the Leibniz rule that
  $$
    \partial^{\scriptsizeboldalpha}_{\scriptsizeboldt,0}(f(h_T(\boldx)))
  $$
   is a summation over ${\Bbb Z}_{\ge 0}$ of terms of the following form
   $$
      (\partial^{\scriptsizeboldd}_{\scriptsizeboldy}f)
	     (h(\boldx))
		 \cdot
		  \partial^{\scriptsizeboldalpha_{i_1}} _{\scriptsizeboldt,0}
		   h^{i_1}_T(\boldx)\,\cdots\,
		  \partial^{\scriptsizeboldalpha_{i_I}} _{\scriptsizeboldt,0}
		  h^{i_I}_T(\boldx)
   $$
   with
    $|\boldd|\le |\boldalpha|$,
    $\boldalpha_{i_1}+\,\cdots\,+\boldalpha_{i_I}=\boldalpha$ and
	$1\le i_1<\,\cdots\,<i_I\le n$, $1\le I\le n$.
 In particular, it is a linear combination of such
   $\partial^{\mbox{\scriptsize\boldmath $\alpha$}_{i_1}} _{\mbox{\scriptsize\boldmath $t$},0}
		   h^{i_1}_T(\boldx)\,\cdots\,
		  \partial^{\mbox{\scriptsize\boldmath $\alpha$}_{i_I}} _{\mbox{\scriptsize\boldmath $t$},0}
		  h^{i_I}_T(\boldx)$
  with coefficients all depending universally on $f$ and $h$ alone
  (i.e.\ with coefficients not depending on how $h$ is extended to $h_T$).
 This universal identity can be made precise as follows.
  \begin{itemize}
   \item[\LARGE $\cdot$]
   Let $|\boldalpha|=k$.
   The fiberwise Taylor Theorem applied to $f$ as a differentiable function on $(X\times Y)/X$
    in a neighborhood of the locus
    $\{y^1=h^1(\boldx),\,\cdots\,,\, y^n=h^n(\boldx)\}\subset X\times Y$
     gives
	 \begin{eqnarray*}
       f(\boldy)
	   & =  &
		 \sum_{d=0}^k\,
		  \frac{1}{d!}
		 \sum_{\;\scriptsizeboldd,\,|\scriptsizeboldd|=d}\,
           m_{\scriptsizeboldd}
		    \cdot		
		  (\partial^{\scriptsizeboldd}_{\scriptsizeboldy}f)(h(\boldx))
		        (\boldy-h(\boldx))^{\scriptsizeboldd}    \\
        && \hspace{7em}				
             +\, \frac{1}{(k+1)!}
			      \sum_{\;\scriptsizeboldd,\, |\scriptsizeboldd|=k+1}
			        m_{\scriptsizeboldd}
					 \cdot
					 Q_{\scriptsizeboldd}(h(\boldx))
					  (\boldy-h(\boldx))^{\scriptsizeboldd}
     \end{eqnarray*}
   for some $Q_{\scriptsizeboldd}\in C^{\infty}(Y)$.
  Here,
    $m_{\scriptsizeboldd}$ is the multiplicity factor associated to $\boldd$  and\\
    $(\boldy-h(\boldx))^{\scriptsizeboldd}
	    := (y^1-h^1(\boldx))^{d_1}\,\cdots\,(y^n-h^n(\boldx))^{d_n}$.	
   
  \item[\LARGE $\cdot$]
  When $h=h_0$ is extended to $h_T$, then for $\boldt \in T$ close enough to $0$,
	 \begin{eqnarray*}
       f(h_{\scriptsizeboldt}(\boldx))
	   & =  &   f(\boldy)|
	                _{\scriptsizeboldy\rightsquigarrow h_{\tinyboldt}(\scriptsizeboldx)}   \\
	   & =  &
		 \sum_{d=0}^k\,
		  \frac{1}{d!}
		 \sum_{\;\scriptsizeboldd,\,|\scriptsizeboldd|=d}\,
           m_{\scriptsizeboldd}
		    \cdot		
		  (\partial^{\scriptsizeboldd}_{\scriptsizeboldy}f)(h(\boldx))
		        (h_{\scriptsizeboldt}(\boldx)-h(\boldx))^{\scriptsizeboldd}    \\
        && \hspace{7em}				
             +\, \frac{1}{(k+1)!}
			      \sum_{\;\scriptsizeboldd,\, |\scriptsizeboldd|=k+1}
			        m_{\scriptsizeboldd}
					 \cdot
					 Q_{\scriptsizeboldd}(h(\boldx))
					  (h_{\scriptsizeboldt}(\boldx)-h(\boldx))^{\scriptsizeboldd}\,,
     \end{eqnarray*}
	 where
	  $(h_{\scriptsizeboldt}(\boldx)-h(\boldx))^{\scriptsizeboldd}
	     :=(h^1_{\scriptsizeboldt}(\boldx)-h^1(\boldx))^{d_1}\,
		          \cdots\,(h^n_{\scriptsizeboldt}(\boldx)-h^n(\boldx))^{d_n}$,
	and, hence,
     $$
      \partial^{\mbox{\scriptsize\boldmath $\alpha$}}_{\mbox{\scriptsize\boldmath $t$},0}
	   (f(h_T(\boldx)))\;
	   =\;  	
	      \sum_{d=0}^k\,
		  \frac{1}{d!}
		 \sum_{\;\scriptsizeboldd,\,|\scriptsizeboldd|=d}\,
           m_{\scriptsizeboldd}
		    \cdot		
		  (\partial^{\scriptsizeboldd}_{\scriptsizeboldy}f)(h(\boldx))
		     \cdot
			   \partial^{\scriptsizeboldalpha}_{\scriptsizeboldt,0}
			     \left(
 			   (h_{\scriptsizeboldt}(\boldx)-h(\boldx))^{\scriptsizeboldd} \right)\,.
     $$	
 \end{itemize}
 From this aspect,
  Proposition~4.2.3.1 is nothing but the equal of the above identity in our particular noncommutative situation,
  with
    the map $h:X\rightarrow Y$ replaced by the map $\varphi:(X^{\!A\!z},E)\rightarrow Y$   and
    the extension $h_T$ of $h$ replaced by the extension of $\varphi_T$ of $\varphi$. 				 
 
 {\it However}, caution that
 in the commutative situation,
  $\partial^{\mbox{\scriptsize\boldmath $\alpha$}}_{\mbox{\scriptsize\boldmath $t$},0}
   (f(h_T(\boldx)))$
  involves only $\partial^{\scriptsizeboldd}_{\scriptsizeboldy}f$ along the graph of $h$ up to
  (and including) order $|\alpha|$
  (i.e.\ restriction of $f$ to the $|\boldalpha|$-th infinitesimal neighborhood of the graph of $h$)
 while in our noncommutative situation,
   $\partial^{\scriptsizeboldalpha}_{\scriptsizeboldt,0}(\varphi_T^{\sharp}(f))$
   may involve $\partial^{\scriptsizeboldd}_{\scriptsizeboldy}f$
   along the support  $\Supp(\tilde{\cal E}_{\varphi})\subset X\times Y$
   of the graph $\tilde{\cal E}_{\varphi}$ of $\varphi$ up to (and including) order
   $r|\alpha|$, where $r$ is the rank of $E$ as a complex vectior bundle on $X$.
 The detail depends on the nilpotency of the structure sheaf
  ${\cal O}_{\scriptsizeSupp(\tilde{\cal E}_{\varphi})}$
  of $\Supp(\tilde{\cal E}_{\varphi})\subset X\times Y$.
}

\bigskip

\noindent
{\bf Remark/Notation 4.2.3.5. [equivalent form: general order]}$\;$
 (Cf.\ Remark/Notation~4.1.1.)
 Recall from the proof of Proposition~4.2.3.1 the expression
  $$
    R^{f;q_j}[k]\;
	=\; \sum_{\scriptsizeboldd}c_{j,\scriptsizeboldd}\,\boldy^{\scriptsizeboldd}\;\;
    \in\; C^{\infty}(U^{\prime}_f)[y^1,\,\cdots\,, y^n]\,.
  $$
 As in Remark/Notation~4.1.1,
 define
  $$
    R^{f}[k]\;
	 =\;   \sum_{\scriptsizeboldd}\,
	          \left(\rule{0ex}{1em}\right.
			    \sum_{j=1}^s
				   c_{j,\scriptsizeboldd}
				      \cdot
					   \Id_{(E_T|_{U^{\prime}}^{(j)})|_{U^{\prime}_f}}
			  \left.\rule{0ex}{1em}\right)
			   \cdot  \boldy^{\scriptsizeboldd}\,.
  $$
 Then
  Proposition~4.2.3.1, with Remark~4.2.3.3, can be stated equivalently as
  $$
    \partial^{\scriptsizeboldalpha} (\varphi_T^{\sharp}(f))
    =\;
   	           R^f[k]\!\!\left.\rule{0ex}{0.8em}\right|
			    _{\scriptsizeboldy^{\tinyboldd}
 			                    \rightsquigarrow\,
								\partial^{\mbox{\boldmath\tiny $\alpha$}}
								   (\varphi_T^{\sharp}(\scriptsizeboldy^{\tinyboldd}))}
    \hspace{2em}								
    \mbox{for $\boldalpha$ with $|\boldalpha|=k\in {\Bbb Z}_{\ge 0}$}\,.
 $$				
 This generalizes Remark/Notation~4.1.1.
 
 Furthermore, for $k=1$,
  recall Notation~4.2.1.7    and
  let $\partial_t$ be any of $\partial_{t^1},\,\cdots\,,\,\partial_{t^l}$  and
       $\boldalpha\in{\Bbb Z}_{\ge 0}^l$ be associated to $\partial_t$.
 Then, one has the following expansion of $\partial_t(\varphi_T^{\sharp}(f))$,
   linearly in
   $(\partial_t\varphi_T^{\sharp}(y^1),\,\cdots\,,\, \partial_t\varphi_T^{\sharp}(y^n))\,$:
  \begin{eqnarray*}
   \lefteqn{\partial_t(\varphi_T^{\sharp}(f))
     =\;
     \sum_{i=1}^n\,
	   \sum_{d=0}^{\mbox{\tiny $\bullet$}}
	   \sum_{\;\,\scriptsizeboldd,\, |\scriptsizeboldd|=d}
	   \sum_{\;\vec{\pi}
	                         \in \vec{\mbox{\scriptsize\it Ptn}}
				            (\mbox{\scriptsize\boldmath $\alpha$}, d),\,
							i_{(\vec{\pi},\,\tinyboldd)}=i}	      }  \\
    && \hspace{7em}	
	  ([\partial^{\vec{\pi}}_{y^i}]
	             R^f[1]_{(\scriptsizeboldd)})^L(\varphi_T^{\sharp}(\boldy))
	   \cdot
	   \partial_t \varphi_T^{\sharp}(y^i)
	   \cdot
	 ([\partial^{\vec{\pi}}_{y^i}]
	             R^f[1]_{(\scriptsizeboldd)})^R(\varphi_T^{\sharp}(\boldy))\,,
  \end{eqnarray*}
 where
  $R^f[1]_{(\scriptsizeboldd)}$ is the multi-degree-$\boldd$ component of $R^f[1]$
     as a polynomial in $(y^1,\,\cdots\,,\, y^n)$
 and 	
  $(\,\cdots\,)^{L,R}(\varphi_T^{\sharp}(\boldy))
    := (\,\cdots\,)^{L,R}(\varphi_T^{\sharp}(y^1),\,\cdots\,,\,\varphi_T^{\sharp}(y^n))
     =(\,\cdots\,)^{L,R}|
			    _{\scriptsizeboldy^{\tinyboldd^{\prime}}   \rightsquigarrow\,
				          \varphi_T^{\sharp}(\scriptsizeboldy^{\tinyboldd^{\prime}})}$.

\bigskip

The following is an immediate consequence of Remark~4.2.3.2:
																	
\bigskip																			 

\noindent 
{\bf Corollary 4.2.3.6. [chain rule under trace]}$\;$ {\it 	
 Under the trace map 
  $\Tr:C^{\infty}(\End_{\Bbb C}(E))\rightarrow C^{\infty}(X)^{\Bbb C}$, 
 the chain rule for a first-order derivation holds:
  $$
    \Tr\left(  \partialdot(\varphi_T^{\sharp}(f))    \right)\;
     =\; \Tr\left(\sum_{i=1}^n 
	             \frac{\partial f}{\partial y^i}
	            (\varphi_T^{\sharp}(y^1),\,\cdots\,,\, \varphi_T^{\sharp}(y^n))
				 \cdot \partialdot\varphi_T^{\sharp}(y^i)     
				 \right),	
  $$
 where $\partialdot$ is any of $\partial/\partial t^1,\,\cdots\,,\, \partial/\partial t^l$. 
}

\bigskip

\begin{proof}
 Let $\partial_{y_{\LARGEdot}}$ be any of $\partial/\partial y^1,\,\cdots\,, \partial/\partial y^n$.
 Then, with the notation in Remark/Notation~4.2.3.5,  
 observe that, for $f\in C^{\infty}(Y)$, 
  $$
   \varphi_T^{\sharp}(\partial_{y_{\LARGEdot}}\!f)\;
    =\; \varphi_T^{\sharp}(\partial_{y_{\LARGEdot}}\!(R^f[1]))\;
	=\; (\partial_{y_{\LARGEdot}}\!(R^f[1]))|
	          _{\scriptsizeboldy^{\tinyboldd}\,
			           \rightsquigarrow\, \varphi_T^{\sharp}(\scriptsizeboldy^{\tinyboldd})}\;
    =\; 	R^f[1]|_{\scriptsizeboldy^{\tinyboldd}\,\rightsquigarrow\, 
	                             \varphi_T^{\sharp}(
 								    \partial_{y_{\Largedot}}\!( \scriptsizeboldy^{\tinyboldd}  )
									                                     )  }\,. 	
  $$ 
 Since 
  $$
    \Tr\left(  \partialdot(\varphi_T^{\sharp}(f))    \right)\;
	  =\; \Tr  \left(\rule{0ex}{1em}\right. 
	      \sum_{i=1}^n\, 
		      	R^f[1]|_{\scriptsizeboldy^{\tinyboldd}\,\rightsquigarrow\, 
	                             \varphi_T^{\sharp}(
 								    \partial_{y_i}\!( \scriptsizeboldy^{\tinyboldd}  )
									                                     )  }\,
                \partial_{\LARGEdot}\varphi_T^{\sharp}(y^i) 
	                  \left.\rule{0ex}{1em}\right)\,,
  $$ 
 the corollary follows.
 
\end{proof}

\bigskip

\section{The first variation of the Dirac-Born-Infeld action and the equations of motion for D-branes}

We discuss in this section the first variation of the Dirac-Born-Infeld action (Sec.\ 5.2)
  and its consequence, the equations of motion of D-branes in our setting (Sec.\ 5.3).
We begin with a few remarks
  on variations and infinitesimal deformations in $C^{\infty}$-algebraic geometry (Sec.\ 5.1).

\bigskip

\subsection{Remark on deformation problems in $C^{\infty}$-algebraic geometry}

From the viewpoint of $C^{\infty}$-algebraic geometry,
 it is very natural to address a deformation problem
 as an extension problem over a non-reduced $C^{\infty}$-scheme,
 as did in the setting Grothendieck's Modern Algebraic Geometry, e.g.\ [Il], [Ser], [Schl].
On the other hand,
  for a variation problem in differential or symplectic geometry, it is customary
  to consider a $1$- or $2$-parameter family of objects in question and then take derivatives.
The following very elementary example indicates that the former is more general than the latter:
  
\bigskip

\begin{example} {\bf [infinitesimal extension vs.\
        extension over $(-\delta,\delta)\subset {\Bbb R}^1$]} {\rm
 Let
  $$
    \xymatrix{
     (p, \End_{\Bbb C}({\Bbb C})\simeq{\Bbb C},{\Bbb C})\; \ar[rr]^-{\varphi}
	   &&   \; Y={\Bbb R^1}
	}
  $$
 be a map from an Azumaya/matrix point of rank $1$ (ie.\ a ${\Bbb C}$-point) to $Y={\Bbb R^1}$,
 defined by a ring-homomorphism over ${\Bbb R}\subset {\Bbb C}$:
  $$
    \xymatrix @ R=-.2ex {	
	 \;\End_{\Bbb C}({\Bbb C})\simeq{\Bbb C}\;
	   && \; C^{\infty}({\Bbb R}^1) \ar[ll]_-{\varphi^{\sharp}}\; \\
     \hspace{3em}\lambda\hspace{3em}
	    && \hspace{2em}y\hspace{2em} \ar@{|->}[ll] 	  \\
     \hspace{2.3em}f(\lambda)\hspace{2.3em}
	    && \hspace{1.3em}f(y)\hspace{1.3em} \ar@{|->}[ll] 	   		
	}
  $$
  for a $\lambda\in{\Bbb R}\subset {\Bbb C}$ fixed.
 Let
   $T_1:=\Spec^{\Bbb R}({\Bbb R}[t]/(t^2))
             =: \Spec^{\Bbb R}({\Bbb R}[\epsilon])$, $\epsilon^2=0$,
    be a dual-point
	and
   $T_2 :=(-\delta,\delta)\subset {\Bbb R}$ be a $1$-manifold with parameter $t$,
   where $\delta>0$ small.
 $T_1\subset T_2$ as $C^{\infty}$-subscheme.
 Treat $\varphi^{\sharp}$ as a ring-homomorphism over
 $\Spec^{\Bbb R}{\Bbb R}\subset T_1$.
 Then,
   the following is an infinitesimal extension of $\varphi^{\sharp}$
     to a ring-homomorphism $\varphi_{T_1}^{\sharp}$ over the base $T_1$:
   $$
    \xymatrix @ R=-.2ex {	
	 \;\End_{{\Bbb C}[\epsilon]}({\Bbb C}[\epsilon])\simeq{\Bbb C}[\epsilon]\;
	   && \; C^{\infty}({\Bbb R}^1) \ar[ll]_-{\varphi_{T_1}^{\sharp}}\; \\
     \hspace{2.6em}\lambda+ \sqrt{-1}\epsilon\hspace{2.6em}
	    && \hspace{2em}y\hspace{2em} \ar@{|->}[ll] 	  \\
     \hspace{.9em}f(\lambda)+ f^{\prime}(\lambda)\sqrt{-1}\epsilon\hspace{.9em}
	    && \hspace{1.2em}f(y)\hspace{1.2em}. \ar@{|->}[ll]   	   		
	}
  $$
 On the other hand,
  let $E_{T_2}$ be the trivialized complex line bundle over $T_2$.
 Then,
 since $T_2$ is a manifold,
  any extension of $\varphi^{\sharp}=:\varphi_0^{\sharp}$
    to a ring-homomorphism $\varphi_{T_2}^{\sharp}$ over the base $T_2$
	must be of the following form
   $$
    \xymatrix @ R=-.2ex {	
	 \;C^{\infty}(End_{\Bbb C}(E_{T_2}))\;
	   && \; C^{\infty}({\Bbb R}^1) \ar[ll]_-{\varphi_{T_2}^{\sharp}}\; \\
     \hspace{3em}h(t)\hspace{3em}
	    && \hspace{2em}y\hspace{2em} \ar@{|->}[ll] 	  \\
     \hspace{2.4em}f(h(t))\hspace{2.4em}
	    && \hspace{1.2em}f(y)\hspace{1.2em}, \ar@{|->}[ll]
	}
  $$
 for $t\in T_2$, where $h\in C^{\infty}(T_2)$  with $h(0)=\lambda$.
 Whose associated infinitesimal deformation of $\varphi^{\sharp}$
  is given by $\varphi_{T_2}^{\sharp}|_{T_1}$:
  $$
    \xymatrix @ R=-.2ex {	
	 \;\End_{{\Bbb C}[\epsilon]}({\Bbb C}[\epsilon])\simeq {\Bbb C}[\epsilon]\;
	   && \; C^{\infty}({\Bbb R}^1) \ar[ll]_-{\varphi_{T_2}^{\sharp}|_{T_1}}\; \\
     \hspace{2.5em}\lambda+h^{\prime}(0)\epsilon\hspace{2.5em}
	    && \hspace{2em}y\hspace{2em} \ar@{|->}[ll] 	  \\
     \hspace{.8em}f(\lambda)+ f^{\prime}(\lambda)h^{\prime}(0)\epsilon\hspace{.8em}
	    && \hspace{1.2em}f(y)\hspace{1.2em}. \ar@{|->}[ll]
	}
 $$
 Since $f^{\prime}(\lambda)h^{\prime}(0)\in {\Bbb R}$,
  this can never be the given $\varphi_{T_1}^{\sharp}$.
 In other words,
  $\varphi_{T_1}^{\sharp}$  cannot be extended further
   to a ring-homomorphism over $T_2\supset T_1$.
}\end{example}

\bigskip

The above example demonstrates the fact that
 there can be infinitesimal deformations in a moduli problem that do not arise from a smooth family.
Such a phenomenon may be unfamiliar to differential geometers or string-theorists
 but is completely normal to algebraic geometers.
It only means that the associated moduli stack is singular at the point representing that object in question
 and hence some infinitesimal deformations of that object can be obstructed from further extensions.
From this point of view, our treatment of the variation problem below through a smooth family
 is not yet the most general one.
But we will focus only on such unobstructed deformations for the current notes.
The more general, possibly obstructed, deformations in our problem
  and their consequences should be understood better in the future.

\bigskip
 
\subsection{The first variation of the Dirac-Born-Infeld action}
Given an admissible Lorentzian map,
 $$
   \varphi\; :\; (X^{\!A\!z},E;\nabla)\; \longrightarrow\; (Y,g,B,\Phi)\,,
 $$
let
  $T:=(-\varepsilon, \varepsilon) \subset {\Bbb R}^1$
and
 $\varphi_t:(X^{\!A\!z},E;\nabla^t)\rightarrow (Y,g,B,\Phi)$, $t\in T$,
 be a differentiable $T$-family of admissible Lorentzian maps that deforms $\varphi=: \varphi_0$.
In this subsection we   derive in steps the first variation
  $$
    \left.\mbox{\Large $\frac{d}{dt}$}\right|_{t=0}\,
	 S_{\DBI}^{(\Phi,g,B)}(\varphi_t,\nabla^t)
  $$
  of the Dirac-Born-Infeld action.
The derivation for the other two situations:
  $(Y,g)$ Lorentzian and $\varphi_t$ spacelike, and $(Y,g)$ Riemannian and $\varphi_t$ Riemannian, 
 are completely the same.
  
As the major part of the discussion is local and around $0\in T$,
we will assume that $\varepsilon$ is small enough
 and set the computation over a small enough coordinate chart $U\subset X$ (with coordinate functions
  $\mbox{\boldmath $x$}= (x^1,\,\cdots\,,\,x^m)$
  so that
 $E|_U$ is trivializable and trivialized,  and
 $\varphi_t(U)$ is contained in a coordinate chart $V\subset Y$ (with coordinate functions
  $\mbox{\bf $y$}=(y^1,\,\cdots\,,\, y^n))$.
Recall from Sec.\ 3.2 that, over $U$,
 {\footnotesize
 \begin{eqnarray*}
  S_{\mbox{\it\tiny DBI}\,}^{(\Phi,g,B)\,}|_U(\varphi_t,\nabla^t)
    & =
	& -\,T_{m-1}\,
	     \int_U\,\Real\left( \Tr\left(
		     e^{-\varphi_t^{\diamond}\Phi}\,
               \sqrt{-\,\SymDet_U\left(
			                        \varphi_t^{\diamond}(g+B)+ 2\pi\alpha^{\prime}F_{\nabla^t}
									                \right)\,}			   \right)\right) \\
  && \hspace{-9em}=\;\;
    -\, T_{m-1}\,\int_U\,  \Real  \left( \Tr \left(
          e^{-\varphi_t^{\sharp}(\Phi)}\,
		    \sqrt{-\SymDet  \left(\rule{0ex}{1.2em} \right.
			     \sum_{i,j}
				         \varphi_t^{\sharp}(E_{ij})
						               D^t_{\mu}\varphi_t^{\sharp}(y^i)D^t_{\nu}\varphi_t^{\sharp}(y^j)\,
                              +\, 2\pi\alpha^{\prime}\,[\nabla^t_{\mu},\nabla^t_{\nu}]	
			                                   \left.\rule{0ex}{1.2em}\right)_{\mu\nu} \,}\right )\right)
	   d^m\mbox{\boldmath $x$}\,.
 \end{eqnarray*}}
Here,  we set the notation for the tensors and connections involved as follows:
  \begin{itemize}
   \item[\LARGE $\cdot$]
     $g+B \;=\; \sum_{i,j}(g_{ij}+ B_{ij})\,dy^i\otimes dy^j  \;
	   =:\; \sum_{i,j}E_{ij}dy^i \otimes dy^j$,
	  with $g_{ij}=g_{ji}$, $B_{ij}=-B_{ji}$,
	
   \item[\LARGE $\cdot$]	
     $\nabla^t \; =\;  d+ A^t \; =\;  \sum_{\mu} (\partial_{\mu}+A^t_{\mu} )\,dx^{\mu}\;$
	 is the connection on $E|_U$,
	
   \item[\LARGE $\cdot$]
     $D^t \;=\;  d+[A^t,\,\cdot\,] \;=\; \sum_{\mu}(\partial_{\mu}+[A^t_{\mu},\,\cdot\,])\,dx^{\mu}\;$
       is the $\nabla^t$-induced connection on $\End_{\Bbb C}(E|_U)$,
	
   \item[\LARGE $\cdot$]	
     $d^m\mbox{\boldmath $x$}\; :=\; dx^1\wedge\,\cdots\,\wedge dx^m\;$
	  is compatible with the orientation on $U$;
  \end{itemize}
  and, for later use,
   $$
     \dot{\varphi}^{\sharp}(y^i)\;
	    :=\;  \left.\mbox{\Large $\frac{d}{dt}$}\right|_{t=0}
		              \left(\varphi_T^{\sharp}(y^i)\right)\,, \hspace{2em}	
    ^{\mbox{\Large $\cdot$}}\!
	  (\varphi^{\sharp}(\boldy^{\scriptsizeboldd}))
	  :=   \left.\mbox{\Large $\frac{d}{dt}$}\right|_{t=0}
	                 (\varphi_T^{\sharp}(\boldy^{\scriptsizeboldd}))\,,
	   \hspace{2em}					
     \dot{A}_{\mu}\;
	    :=\;  \left.\mbox{\Large $\frac{d}{dt}$}\right|_{t=0}A^T_{\mu}\,.
   $$
We assume further that the local chart $U$ and $\varphi>0$ are small enough so that
 the construction over $U_T:=U\times (-\varepsilon,\varepsilon)$ in Sec.\ 4.1,
    with $p\in  U\times \{0\}\subset  U_T$,
  applies simultaneously to $e^{-\Phi}$ and $E_{ij}$, $i,j=1,\,\cdots\,,\,n$,
  to give the local expression of
 $\varphi_T^{\sharp}(\Phi)$ and $\varphi_T^{\sharp}(E_{ij})$, $i,j=1\,\ldots\,,\,n$,
   in terms of  elements in the polynomial ring over $C^{\infty}(U_T)$
    $$
	  \varphi_T^{\sharp}(\Phi)\,,\;
      \varphi_T^{\sharp}(E_{ij})\;   \in\; 	
	    \left(
		  \oplus_{j=1}^s C^{\infty}(U_T)\cdot \Id_{E_T^{(j)}}
		 \right)
	     [\,\varphi_T^{\sharp}(y^1), \,\cdots\,,\, \varphi_T^{\sharp}(y^n)   \,]
	$$
   of multi-degree $\le (r-1,\,\cdots\,,\,r-1)$.
Associated to these settings and with the notation from Remark/Notataion 4.2.3.5,
 recall that																	
 \begin{eqnarray*}
  e^{-\varphi_T^{\sharp}(\Phi)}
   & =
   & \varphi_T^{\sharp}(e^{-\Phi})\hspace{3.5em}
	  =\;\;  R^{\,e^{-\Phi}}[0]|
	         _{\scriptsizeboldy^{\tinyboldd}
			        \rightsquigarrow \varphi_T^{\sharp}(\scriptsizeboldy)^{\tinyboldd}}\,,   \\
   \left.\mbox{\Large $\frac{d}{dt}$}\right|_{t=0}e^{-\varphi_T^{\sharp}(\Phi)}
   & =
   &  \left.\mbox{\Large $\frac{d}{dt}$}\right|_{t=0}\varphi_T^{\sharp}(e^{-\Phi})\;\;
       =\;\;
       R^{\,e^{-\Phi}}[1]|
	         _{\scriptsizeboldy^{\tinyboldd}
			        \rightsquigarrow\,
				    ^{\mbox{\Large $\cdot$}}
					(\varphi_T^{\sharp}(\scriptsizeboldy)^{\tinyboldd})}\,;
 \end{eqnarray*}										
 and
 \begin{eqnarray*}
  \varphi_T^{\sharp}(E_{ij})
    & =
	& R^{\,E_{ij}}[0]|
	         _{\scriptsizeboldy^{\tinyboldd}
			        \rightsquigarrow \varphi_T^{\sharp}(\scriptsizeboldy)^{\tinyboldd}}\,,   \\	
  \left.\mbox{\Large $\frac{d}{dt}$}\right|_{t=0}\varphi_T^{\sharp}(E_{ij})
    & =
	&  R^{\,E_{ij}}[1]|
	         _{\scriptsizeboldy^{\tinyboldd}
			        \rightsquigarrow\,
				    ^{\mbox{\Large $\cdot$}}
					(\varphi_T^{\sharp}(\scriptsizeboldy)^{\tinyboldd})}\,,
 \end{eqnarray*}
 for $i,j=1,\,\ldots\,,\, n$.
For simplicity of notation,
 it is understood that $R^{\,E_{ij}}[1]$ is evaluated at $t=0$ in the expression
  $R^{\,E_{ij}}[1]|
	         _{\scriptsizeboldy^{\tinyboldd}
			        \rightsquigarrow
				    ^{\mbox{\Large $\cdot$}}
					(\varphi_T^{\sharp}(\scriptsizeboldy)^{\tinyboldd})}$;
 and similarly for induced expressions that follow this.

\vspace{6em}

\begin{flushleft}
{\bf Basic identities}
\end{flushleft}
Basic identities that will be used in the calculation are collected here for reference.

\bigskip

\noindent
$(a)$ {\it Differentiation of a square root}\hspace{1em}
Let $M(t)\in C^{\infty}(\End_{\Bbb C}(E))$,
    $t\in T:=(-\varepsilon, \varepsilon)\subset {\Bbb R}$,
  be a $T$-family of invertible endomorphisms of $E$ such that $\sqrt{M(t)}$ is well-defined,
  cf.\ Sec.\ 3.1.4.
Denote ${\left.\frac{d}{dt}\right|}_{t=0}\,M(t)$ by $\dot{M}(0)$.
Then, $\sqrt{M(t)}$, $t\in(-\varepsilon, \varepsilon)$, is also invertible  and
 $$
    \sqrt{M(0)}^{\,-1}
	   \left( \left.\mbox{\Large $\frac{d}{dt}$}\right|_{t=0}\,\sqrt{M(t)}\right)\,
	    \sqrt{M(0)}\,
	    +\,     \left( \left. \mbox{\Large $\frac{d}{dt}$}\right|_{t=0}\,\sqrt{M(t)} \right)\;
	 =\;   \sqrt{M(0)}^{\,-1}\,\dot{M}(0)\,.	
 $$
It follows that
 $$
   \Tr\left( \left.\mbox{\Large $\frac{d}{dt}$}\right|_{t=0}\,\sqrt{M(t)}\right)\;
	 =\;  \mbox{\Large $\frac{1}{2}$}\,\Tr\left(\sqrt{M(0)}^{\,-1}\,\dot{M}(0)\right)\,.	
 $$
Slightly more generally,
 if $C\in C^{\infty}(\End_{\Bbb C}(E))$ commutes with $\sqrt{M(0)}$,
 then
 $$
    \sqrt{M(0)}^{\,-1}
	   \left(C\cdot \left.\mbox{\Large $\frac{d}{dt}$}\right|_{t=0}\,\sqrt{M(t)}\right)\,
	    \sqrt{M(0)}\,
	    +\,     \left( C\cdot\left. \mbox{\Large $\frac{d}{dt}$}\right|_{t=0}\,\sqrt{M(t)} \right)\;
	 =\;  C\cdot \sqrt{M(0)}^{\,-1}\,\dot{M}(0)\,.	
 $$
It follows that
 $$
   \Tr\left( C\cdot \left.\mbox{\Large $\frac{d}{dt}$}\right|_{t=0}\,\sqrt{M(t)}\right)\;
	 =\;  \mbox{\Large $\frac{1}{2}$}\,
	         \Tr\left(C\cdot\sqrt{M(0)}^{\,-1}\,\dot{M}(0)\right)\,.	
 $$

\bigskip

\noindent
$(b)$ {\it Identities on  symmetrized determinant and its differentiation}\hspace{1em}
The Leibniz rule holds for a symmetric product:
 $$
  \partialdot(a_1\odot\,\cdots\,\odot  a_m)\;
   =\;  \sum_{\mu=1}^m
            a_1\odot\,\cdots\,\odot a_{\mu-1}
			\odot (\partialdot a_{\mu})
			\odot a_{\mu+1}\odot\,\cdots\,\odot a_m\,.
 $$
It follows that
 if let $M=[M^{(1)},\,\cdots\,,\,M^{(m)}]$
  be the presentation of an $m\times m$ matrix $M$ in terms of its column vectors,
 then
 $$
  \partialdot \SymDet(M)\;
   =\; \sum_{\nu=1}^m
          \SymDet(
          [M^{(1)},\, \cdots\,,\, M^{(\nu-1)},\,
            \partialdot M^{(\nu)},\,
 		   M^{(\nu+1)},\,\cdots\,,\, M^{(m)}])\,.
 $$
 Similarly, for $M$ presented in terms of its row vectors.
  
\bigskip

\noindent
$(c)$ {\it Trace and Lie bracket}\hspace{1em}
For $r\times r$ matrices or matrix-valued functions $A$, $B$, and $C$,
 $$
   \Tr(A\,[B,C])\;=\; \Tr([A,B]\,C)\,.
 $$

\bigskip

\noindent
$(d)$ {\it $\partialdot \Tr=\Tr D_{\mbox{\LARGE $\cdot$}}$}\hspace{1em}
Recall the induced connection $D$ on $\End_{\Bbb C}(E)$ from $\nabla$ on $E$.
\begin{itemize}
 \item[\LARGE $\cdot$] {\it
 Let $s\in C^{\infty}(\End_{\Bbb C}(E))$.
 Then $\partialdot\Tr(s) = \Tr(D_{\mbox{\LARGE $\cdot$}}s)$.}
\end{itemize}
 
\begin{proof}
 In any local presentation of $E$, let $\nabla=d+A$,
  where $A$ is the $\End_{\Bbb C}(E)$-valued connection $1$-form on $X$
  with respect to the local trivialization.
 Then $D=d+[A,\,\cdot\,]$ with respect to the induced local trivialization of $\End_{\Bbb C}(E)$.
 It follows that
  $$
    \Tr(D_{\mbox{\LARGE $\cdot$}}(s))\;
	 =\; \Tr(\partialdot s\,+\, [A_{\mbox{\LARGE $\cdot$}}, s])\;
	 =\; \Tr(\partialdot s)\;
     =\; \partialdot\Tr(s)\,.	
  $$
\end{proof}

\bigskip

\begin{flushleft}
{\bf The first variation of each ingredient in the Dirac-Born-Infeld action}
\end{flushleft}
$(a)$ {\it The first variation of
    $e^{-\varphi^{\sharp}(\Phi)}$
	and $\varphi^{\sharp}(E_{ij})$}\hspace{1em}
Recall Remark/Notation 4.2.3.5.
Then, it follows from Proposition~4.2.3.1 that
 %
 \begin{eqnarray*}
  \lefteqn{
   \left.\mbox{\Large $\frac{d}{dt}$}\right|_{t=0}
      \left(e^{-\varphi_T^{\sharp}(\Phi)}\right)\;\;
    =\;\;
       R^{\,e^{-\Phi}}[1]|
	      _{\scriptsizeboldy^{\tinyboldd}
 			                    \rightsquigarrow\,
								   ^{\mbox{\large $\cdot$}}\!
								     (\varphi^{\sharp}(\scriptsizeboldy^{\tinyboldd}))}     }\\
   && =\;\;
       \sum_{i^{\prime}=1}^n\,
	   \sum_{d=0}^{\mbox{\tiny $\bullet$}}
	   \sum_{\;\,\scriptsizeboldd,\, |\scriptsizeboldd|=d}
	   \sum_{\;\vec{\pi}
	                         \in \vec{\mbox{\scriptsize\it Ptn}}(1, d),\,
					   i_{(\vec{\pi},\,\tinyboldd)}=i^{\prime}}	                           \\
    && \hspace{7em}	
	  ([\partial^{\vec{\pi}}_{y^{i^{\prime}}}]
	             R^{\,e^{-\Phi}}[1]_{(\scriptsizeboldd)})^L(\varphi^{\sharp}(\boldy))
	   \cdot
	   \dot{\varphi}^{\sharp}(y^{i^{\prime}})
	   \cdot
	 ([\partial^{\vec{\pi}}_{y^{i^{\prime}}}]
	             R^{\,e^{-\Phi}}[1]_{(\scriptsizeboldd)})^R(\varphi^{\sharp}(\boldy)) \\									 
  && =:\:\;
    \sum_{i^{\prime}=1}^n\,
	\sum_{\; d,\,\scriptsizeboldd,\,\vec{\pi};\,
	                      |\scriptsizeboldd|=d,\, i_{(\vec{\pi},\tinyboldd)}=i^{\prime}}\,
	  R^{\,e^{-\Phi}}[1]_{(\scriptsizeboldd,\,\vec{\pi})}^{\;L}(\varphi^{\sharp}(\boldy))
	  \cdot \dot{\varphi}^{\sharp}(y^{i^{\prime}})
	  \cdot
	  R^{\,e^{-\Phi}}[1]_{(\scriptsizeboldd,\,\vec{\pi})}^{\;R}(\varphi^{\sharp}(\boldy))
 \end{eqnarray*}
 and
 \begin{eqnarray*}
  \lefteqn{
   \left.\mbox{\Large $\frac{d}{dt}$}\right|_{t=0}
          \left(\varphi_T^{\sharp}(E_{ij})\right)\;\;
   =\;\;  R^{\,E_{ij}}[1]|
             _{\scriptsizeboldy^{\tinyboldd}
 			                    \rightsquigarrow\,
								   ^{\mbox{\large $\cdot$}}\!
								     (\varphi^{\sharp}(\scriptsizeboldy^{\tinyboldd}))}}\\
  && =\;\;
       \sum_{i^{\prime}=1}^n\,
	   \sum_{d=0}^{\mbox{\tiny $\bullet$}}
	   \sum_{\;\,\scriptsizeboldd,\, |\scriptsizeboldd|=d}
	   \sum_{\;\vec{\pi}
	                         \in \vec{\mbox{\scriptsize\it Ptn}}(1, d),\,
					   i_{(\vec{\pi},\,\tinyboldd)}=i^{\prime}}	                           \\
    && \hspace{7em}	
	  ([\partial^{\vec{\pi}}_{y^{i^{\prime}}}]
	             R^{\,E_{ij}}[1]_{(\scriptsizeboldd)})^L(\varphi^{\sharp}(\boldy))
	   \cdot
	   \dot{\varphi}^{\sharp}(y^{i^{\prime}})
	   \cdot
	 ([\partial^{\vec{\pi}}_{y^{i^{\prime}}}]
	             R^{\,E_{ij}}[1]_{(\scriptsizeboldd)})^R(\varphi^{\sharp}(\boldy)) \\									 
  && =:\:\;
    \sum_{i^{\prime}=1}^n\,
	\sum_{\;d,\,\scriptsizeboldd,\,\vec{\pi};\,
	                    |\scriptsizeboldd|=d,\, i_{(\vec{\pi}, \tinyboldd)}=i^{\prime} }\,
	  R^{\,E_{ij}}[1]_{(\scriptsizeboldd,\,\vec{\pi})}^{\;L}(\varphi^{\sharp}(\boldy))
	  \cdot \dot{\varphi}^{\sharp}(y^{i^{\prime}})
	  \cdot
	  R^{\,E_{ij}}[1]_{(\scriptsizeboldd,\,\vec{\pi})}^{\;R}(\varphi^{\sharp}(\boldy))\,.	
 \end{eqnarray*}

\bigskip

\noindent
$(b)$
{\it The first variation of $D_{\mu}\varphi^{\sharp}(y^i)$ and $F_{\mu\nu}$}\hspace{1em}
By straightforward computation,
\begin{eqnarray*}
 \left.\mbox{\Large $\frac{d}{dt}$}\right|_{t=0}
      \left(D^T_{\mu}\varphi_T^{\sharp}(y^i)\right)
   & = & \left.\mbox{\Large $\frac{d}{dt}$}\right|_{t=0}
                 \left(\partial_{\mu}\varphi_T^{\sharp}(y^i)\,
				            +\, [A^T_{\mu}, \varphi_T^{\sharp}(y^i)]\right)  \\
 &   = & D_{\mu}\dot{\varphi}^{\sharp}(y^i)\,
               -\, [\varphi^{\sharp}(y^i), \dot{A}_{\mu}]					
\end{eqnarray*}
and
\begin{eqnarray*}
 \left.\mbox{\Large $\frac{d}{dt}$}\right|_{t=0}F^T_{\mu\nu}
  & = &  \left.\mbox{\Large $\frac{d}{dt}$}\right|_{t=0}
                 [\nabla^T_{\mu},\nabla^T_{\nu}]\\
 &  = & D_{\mu}\dot{A}_{\nu}\,-\, D_{\nu}\dot{A}_{\mu}\,.
\end{eqnarray*}

\bigskip

\begin{flushleft}
{\bf The first variation of the Dirac-Born-Infeld action}
\end{flushleft}
With all the ingredients prepared, the computation of the first variation of $S_{\DBI}(\varphi,\nabla)$
 is now straightforward, though some of the expressions may look complicated due to noncommutativity.
We proceed in five steps.
 
\bigskip

\noindent
{\it Step $(1):$ Input from all the pieces}
  
\medskip

\noindent
Let
 $$
    \mbox{\boldmath $M$}_{\mu\nu}(t)\;
	  :=\;   \sum_{i,j}
				 \varphi_t^{\sharp}(E_{ij})
	                  D^t_{\mu}\varphi_t^{\sharp}(y^i)D^t_{\nu}\varphi_t^{\sharp}(y^j)\,
                   +\, 2\pi\alpha^{\prime}\,[\nabla^t_{\mu},\nabla^t_{\nu}]\;
	 \in\; C^{\infty}(\End_{\Bbb C}(E|_U))
 $$
 and
  $\boldM(t) := [\boldM_{\mu\nu}(t)]_{\mu\nu}$
     the $m\times m$ matrix with $(\mu,\nu)$-entry $\boldM_{\mu\nu}(t)$.
Then,
{\footnotesize
 \begin{eqnarray*}
  \left.\mbox{\Large $\frac{d}{dt}$}\right|_{t=0}\,
	 S_{\DBI}(\varphi_t,\nabla^t)
  & = & 	
	-T_{m-1}\,\left.\mbox{\Large $\frac{d}{dt}$}\right|_{t=0}\,
	\int_U\,  \Real  \left( \Tr \left(
          e^{-\varphi_t^{\sharp}(\Phi)}\,
		    \sqrt{-\SymDet(\boldM(t))  \,}\right )\right)
	   d^m\mbox{\boldmath $x$}   \\
 &= &
     -T_{m-1}\,\int_U\,  \Real  \left( \Tr 	
	          \left.\mbox{\Large $\frac{d}{dt}$}\right|_{t=0}\,	
	      \left(
                e^{-\varphi_t^{\sharp}(\Phi)}\,
		          \sqrt{-\SymDet(\boldM(t))  \,}\right )\right)
	         d^m\mbox{\boldmath $x$}\,;
 \end{eqnarray*}
}
{\footnotesize
\begin{eqnarray*}
 && \hspace{-2em}
       \Tr 	
	    \left.\mbox{\Large $\frac{d}{dt}$}\right|_{t=0}\,	
	      \left(
                e^{-\varphi_t^{\sharp}(\Phi)}\,
		          \sqrt{-\SymDet(\boldM(t))  \,}\right )\\
 &&	=\;
      \Tr \left(
	     (R^{\,e^{-\Phi}}[1]_{t=0})		
			   \!\!\left.\rule{0ex}{0.8em}\right|_{\mbox{\boldmath\scriptsize $y$}^I
 			                    \rightsquigarrow\,
								   ^{\mbox{\large $\cdot$}}\!
								     (\varphi^{\sharp}(\mbox{\boldmath\scriptsize $y$}^I))}\,
       \cdot\, \sqrt{-\SymDet(\boldM(0))  \,}
           \right)\;
	+\; 	
	  \Tr 	
	     \left(
                e^{-\varphi^{\sharp}(\Phi)}\,
		     \left.\mbox{\Large $\frac{d}{dt}$}\right|_{t=0}\!\!
		          \sqrt{-\SymDet(\boldM(t))  \,}\right ).
\end{eqnarray*}
}
Since $(\varphi,\nabla)$ is admissible,
  $e^{-\varphi^{\sharp}(\Phi)}$ and $\sqrt{-\SymDet(\boldM(t))}$ commute.
Thus,
{\footnotesize
 $$
    \Tr 	\left(
                e^{-\varphi^{\sharp}(\Phi)}\,
		     \left.\mbox{\Large $\frac{d}{dt}$}\right|_{t=0}\!\!
		          \sqrt{-\SymDet(\boldM(t))  \,}\right )\;
  =\; \frac{-1}{2}\, \Tr\left(
                e^{-\varphi^{\sharp}(\Phi)}\,
		          \sqrt{-\SymDet(\boldM(0))  \,}^{\,-1}\cdot				
		     \left.\mbox{\Large $\frac{d}{dt}$}\right|_{t=0}\!\!
		        \SymDet(\boldM(t))  	
			                                            \right).
 $$
}
Denote by $[\,\cdot\,]^{\transpose}$  the transpose of the matrix $[\,\cdot\,]$
and let
 $$
    \boldM(t)\;
	 =\;
	   \left[
	      \begin{array}{c} \boldM_{(1)}(t)\\    \cdots \\    \boldM_{(m)}(t)  \end{array}		
	   \right] \;
	 =\;
    \left[ \boldM_{(1)}^{\transpose},\,\cdots\,,\,\boldM_{(m)}^{\transpose} \right]^{\transpose}	 
 $$
   be the presentation of $\boldM(t)$ in terms of its row vectors   and
 denote $\left.\frac{d}{dt}\right|_{t=0}\boldM_{(\mu)}(t)$
   by $\dot{\boldM}_{(\mu)}(0)$,
   for $\mu=1,\,\ldots\,,\,m$.
Then
{\footnotesize
 $$
   \left.\mbox{\Large $\frac{d}{dt}$}\right|_{t=0}\!\!
		        \SymDet(\boldM(t))\;  	
    =\; \sum_{\mu=1}^m
               \SymDet(
			     [\boldM_{(1)}(0)^{\transpose},\,\cdots\,,\, \boldM_{(\mu-1)}(0)^{\transpose},\,
			      \dot{\boldM}_{(\mu)}(0)^{\transpose},\,
				  \boldM_{(\mu+1)}(0)^{\transpose},\,
				  \cdots\,,\, \boldM_{(m)}(0)^{\transpose}]^{\transpose})\,.
 $$
}
Denote
  $\left.\frac{d}{dt}\right|_{t=0}\boldM_{\mu\nu}(t)$ by $\dot{\boldM}_{\mu\nu}(0)$,
   for $\mu,\,\nu=1,\,\ldots\,,\,m$.
Then, the $\nu$-th entry in $\dot{\boldM}_{(\mu)}(0)$ is given by
{\footnotesize
 \begin{eqnarray*}
   \dot{\boldM}_{\mu\nu}(0)
    & = &   \sum_{i,j}	
	             R^{\,E_{ij}}[1]|
			              _{\scriptsizeboldy^{\tinyboldd}
 			                    \rightsquigarrow\,
								   ^{\mbox{\large $\cdot$}}\!
								     (\varphi^{\sharp}(\scriptsizeboldy^{\tinyboldd}))}
                    \cdot									
	                  D_{\mu}\varphi^{\sharp}(y^i)D_{\nu}\varphi^{\sharp}(y^j) \\
    &&      +\; 	
	             \sum_{i,j}	
				 \varphi^{\sharp}(E_{ij})
				   \cdot(
				     D_{\mu}\dot{\varphi}^{\sharp}(y^i)\,
				            -\, [\varphi^{\sharp}(y^i), \dot{A}_{\mu}])		
				   \cdot					
					  D_{\nu}\varphi^{\sharp}(y^j)      \\
    &&      +\;
	             \sum_{i,j}	
				 \varphi^{\sharp}(E_{ij})
	               D_{\mu}\varphi^{\sharp}(y^i)					
					\cdot (D_{\nu}\dot{\varphi}^{\sharp}(y^j)\,
                                           -\, [\varphi^{\sharp}(y^j), \dot{A}_{\nu}])\;
               +\, 2\pi\alpha^{\prime}\,
			        (D_{\mu}\dot{A}_{\nu}-D_{\nu}\dot{A}_{\mu})\,.
 \end{eqnarray*}
}
With
 $$
    \mbox{\boldmath $M$}_{\mu\nu}(0)\;
	  =\;   \sum_{i,j}
				 \varphi^{\sharp}(E_{ij})
	                  D_{\mu}\varphi^{\sharp}(y^i)D_{\nu}\varphi^{\sharp}(y^j)\,
                   +\, 2\pi\alpha^{\prime}\,[\nabla_{\mu},\nabla_{\nu}]\,,
 $$
 one has altogether:
{\footnotesize
 \begin{eqnarray*}
   \left.\mbox{\Large $\frac{d}{dt}$}\right|_{t=0}\,
   S_{\DBI}(\varphi_t,\nabla^t)
    & = & 	
	 -T_{m-1}\,\left.\mbox{\Large $\frac{d}{dt}$}\right|_{t=0}\,
	  \int_U\,  \Real  \left( \Tr \left(
          e^{-\varphi_t^{\sharp}(\Phi)}\,
		    \sqrt{-\SymDet(\boldM(t))  \,}\right )\right)
	   d^m\mbox{\boldmath $x$}   \\[1.2ex]
   && \hspace{-10em}
    =\;  	
	-T_{m-1}\,
	     \int_U\,  \Real\!\left(\rule{0ex}{1em}\right.\!\!\Tr\!\left(\rule{0ex}{1em}\right.
          R^{\,e^{-\Phi}}[1]|
	                    _{\scriptsizeboldy^{\tinyboldd}
 			                    \rightsquigarrow\,
								   ^{\mbox{\large $\cdot$}}\!
								     (\varphi^{\sharp}(\scriptsizeboldy^{\tinyboldd}))}\,
       \cdot\, \sqrt{-\SymDet(\boldM(0))  \,}   \\
    && \hspace{-7em}
	  -\, \mbox{\large $\frac{1}{2}$}\,
             e^{-\varphi^{\sharp}(\Phi)}\,
		     \sqrt{-\SymDet(\boldM(0))  \,}^{\,-1}\\
    &&	\hspace{-5em}		
		     \cdot		
              \sum_{\mu=1}^m
               \SymDet(
			     [\boldM_{(1)}(0)^{\transpose},\,\cdots\,,\, \boldM_{(\mu-1)}(0)^{\transpose},\,
			      \dot{\boldM}_{(\mu)}(0)^{\transpose},\,
				  \boldM_{(\mu+1)}(0)^{\transpose},\,\cdots\,,\,
				  \boldM_{(m)}(0)^{\transpose}]^{\transpose})
          \left.\rule{0ex}{1em}\right )\!\!\!\left.\rule{0ex}{1em}\right)
	                 d^m\mbox{\boldmath $x$}                       \\[1.2ex] 
    && \hspace{-10em}
    =\;  	
	-T_{m-1}\,
	     \int_U\,  \Real\!\left(\rule{0ex}{1em}\right.\!\!\Tr\!\left(\rule{0ex}{1em}\right.
      (R^{\,e^{-\Phi}}[1])|
	         _{\scriptsizeboldy^{\tinyboldd}
 			                    \rightsquigarrow\,
								   ^{\mbox{\large $\cdot$}}\!
								     (\varphi^{\sharp}(\scriptsizeboldy^{\tinyboldd}))}\,
       \cdot\, \sqrt{-\SymDet(\boldM(0))  \,}   \\
    && \hspace{-7em}
	  -\, \mbox{\large $\frac{1}{2}$}\,
             e^{-\varphi^{\sharp}(\Phi)}\,
		     \sqrt{-\SymDet(\boldM(0))  \,}^{\,-1}\\
    &&	\hspace{-5em}		
		     \cdot		
              \sum_{\mu=1}^m
			   \sum_{\sigma\in \tinySym_m}
			    (-1)^{\sigma}
			     \boldM_{1\,\sigma(1)}(0)
				      \odot\,\cdots\,\odot \boldM_{(\mu-1)\,\sigma(\mu-1)}(0)\\[-3ex]
    &&  \hspace{7em}
    	    \odot \dot{\boldM}_{\mu\,\sigma(\mu)}(0)
				  \odot \boldM_{(\mu+1)\,\sigma(\mu+1)}(0)
				      \odot\,\cdots\,\odot \boldM_{m\,\sigma(m)}(0)				
          \left.\rule{0ex}{1em}\right )\!\!\!\left.\rule{0ex}{1em}\right)
	                 d^m\mbox{\boldmath $x$}\,.     					
 \end{eqnarray*}
}

\bigskip

\noindent
{\it Step $(2):$} \parbox[t]{37em}{\it
      Arrangement to boundary terms and the linear functional
	  $\delta S_{\DBI}(\varphi,\nabla)/\delta(\varphi,\nabla)$ on\\
      $(\dot{\varphi}^{\sharp}(y^1),\,\cdots\,,\, \dot{\varphi}^{\sharp}(y^n);\,
              \dot{A}_1,\,\cdots\,,\, \dot{A}_m)$}
  
\medskip

\noindent
Summands from the first cluster
 $$
   R^{\,e^{-\Phi}}[1]|
                _{\scriptsizeboldy^{\tinyboldd}
 			                    \rightsquigarrow\,
								   ^{\mbox{\large $\cdot$}}\!
								     (\varphi^{\sharp}(\scriptsizeboldy^{\tinyboldd}))}\,
       \cdot\, \sqrt{-\SymDet(\boldM(0))  \,}
 $$
 contain only $\dot{\varphi}^{\sharp}(y^i)$, $i=1,\,\ldots\,,\,n$,
  from $(R^{\,e^{-\Phi}}[1])|
                            _{\boldy^{\tinyboldd}
 			                    \rightsquigarrow\,
								   ^{\mbox{\large $\cdot$}}\!
								     (\varphi^{\sharp}(\scriptsizeboldy^{\tinyboldd}))}$.
Hence,
 it contributes solely to the linear functional
   $\delta S_{\DBI}^{(\Phi,g,B)}(\varphi,\nabla)/\delta(\varphi,\nabla)$
   on
   $(\dot{\varphi}^{\sharp}(y^1),\,\cdots\,,\, \dot{\varphi}^{\sharp}(y^n);\,
              \dot{A}_1,\,\cdots\,,\, \dot{A}_m)$ 		
   and, hence,
 to the equations of motion for $(\varphi,\nabla)$.
 
On the other hand, summands from the expansion of the second cluster
  \begin{eqnarray*}
    && \hspace{-1em}
    -\, \mbox{\large $\frac{1}{2}$}\,
             e^{-\varphi^{\sharp}(\Phi)}\,
		     \sqrt{-\SymDet(\boldM(0))  \,}^{\,-1}\\
    &&	\hspace{1em}
		     \cdot		
              \sum_{\mu=1}^m
			   \sum_{\sigma\in \tinySym_m}
			     (-1)^{\sigma}
			          \boldM_{1\,\sigma(1)}(0)
				      \odot\,\cdots\,\odot \boldM_{(\mu-1)\,\sigma(\mu-1)}(0)\\[-3ex]
    &&  \hspace{13em}
    	    \odot \dot{\boldM}_{\mu\,\sigma(\mu)}(0)
				  \odot \boldM_{(\mu+1)\,\sigma(\mu+1)}(0)
				      \odot\,\cdots\,\odot \boldM_{m\,\sigma(m)}(0)	
  \end{eqnarray*}					
 are of two types:
   \begin{itemize}
    \item[\LARGE $\cdot$]
     One contains a factor in the list
       $\dot{\varphi}^{\sharp}(y^i)$, $i=1,\,\ldots\,,\,n$,
    	$\dot{A}_{\mu}$, $\mu=1,\,\ldots\,,\, m$
		from some $\dot{\boldM}_{\mu^{\prime}\nu^{\prime}}(0)$,
		 $\mu^{\prime},\, \nu^{\prime}=1,\,\ldots\,,\,m$.
     They contribute to the linear functional
	   $\delta S_{\DBI}(\varphi,\nabla)/\delta(\varphi,\nabla)$ on
       $(\dot{\varphi}^{\sharp}(y^1),\,\cdots\,,\, \dot{\varphi}^{\sharp}(y^n);\,
              \dot{A}_1,\,\cdots\,,\, \dot{A}_m)$ and, hence,
      to the equations of motion for $(\varphi,\nabla)$.
	
	\item[\LARGE $\cdot$]
	 The other contains a factor in the list
       $D_{\mu}\dot{\varphi}(y^i)$, $i=1,\,\ldots\,,\, n$, $\mu=1,\,\ldots\,,\,m$,
       $D_{\mu}\dot{A}_{\nu}$, $\mu,\,\nu=1,\,\ldots\,,\,m$,
       from some $\dot{\boldM}_{\mu^{\prime}\nu^{\prime}}(0)$,
		 $\mu^{\prime},\, \nu^{\prime}=1,\,\ldots\,,\,m$.
     After integration by parts,	
	   each contributes a boundary term in an integral $\int_{\partial U}(\,\cdots\,)$
	    and a term in the linear functional
	    $\delta S_{\DBI}(\varphi,\nabla)/\delta(\varphi,\nabla)$ on
        $(\dot{\varphi}^{\sharp}(y^1),\,\cdots\,,\, \dot{\varphi}^{\sharp}(y^n);\,
              \dot{A}_1,\,\cdots\,,\, \dot{A}_m)$.
	  The latter contributes then to the equations of motion for $(\varphi,\nabla)$.	     	   
   \end{itemize}	

We now proceed to study their details.

\bigskip

\noindent
{\it Step $(3):$ Details for the first cluster}

\medskip

\noindent
For the first cluster,
 
{\footnotesize
 \begin{eqnarray*}
  \lefteqn{
    \Tr\!\left(\rule{0ex}{1em}\right.\!\!
       R^{\,e^{-\Phi}}[1]|
	         _{\scriptsizeboldy^{\tinyboldd}
 			                    \rightsquigarrow\,
								   ^{\mbox{\large $\cdot$}}\!
								     (\varphi^{\sharp}(\scriptsizeboldy^{\tinyboldd}))}\,
       \cdot\, \sqrt{-\SymDet(\boldM(0))  \,}
	      \!\left.\rule{0ex}{1em}\right)     }\\
   && =\:\:
     \Tr\!\left(\rule{0ex}{1em}\right.\!\!
       \left(\rule{0ex}{1em}\right.
	    \sum_{i^{\prime}=1}^n\,
	     \sum_{\;d,\,\scriptsizeboldd,\,\vec{\pi};\,
		                     |\scriptsizeboldd|=d,\, i_{(\vec{\pi}, \tinyboldd)}=i^{\prime} }\,
	      R^{\,e^{-\Phi}}[1]_{(\scriptsizeboldd,\,\vec{\pi})}^{\;L}(\varphi^{\sharp}(\boldy))
	    \cdot \dot{\varphi}^{\sharp}(y^{i^{\prime}})
	    \cdot
	    R^{\,e^{-\Phi}}[1]_{(\scriptsizeboldd,\,\vec{\pi})}^{\;R}(\varphi^{\sharp}(\boldy))
	    \left.\rule{0ex}{1em}\right)
       \cdot\, \sqrt{-\SymDet(\boldM(0))  \,}
	      \!\left.\rule{0ex}{1em}\right)                   \\
   && =\:\:
     \Tr\!\left(\rule{0ex}{1em}\right.
	    \sum_{i^{\prime}=1}^n
		  \left(\rule{0ex}{1em}\right.
	     \sum_{\;d,\,\scriptsizeboldd,\,\vec{\pi};\,
		                     |\scriptsizeboldd|=d,\,  i_{(\vec{\pi},\tinyboldd)}=i^{\prime}}\,
	      R^{\,e^{-\Phi}}[1]_{(\scriptsizeboldd,\,\vec{\pi})}^{\;R}(\varphi^{\sharp}(\boldy))
          \cdot
		    \sqrt{-\SymDet(\boldM(0))  \,}
		  \cdot	
	    R^{\,e^{-\Phi}}[1]_{(\scriptsizeboldd,\,\vec{\pi})}^{\;L}(\varphi^{\sharp}(\boldy))
		  \left.\rule{0ex}{1em}\right)
	      \cdot
		  \dot{\varphi}^{\sharp}(y^{i^{\prime}})
	      \!\left.\rule{0ex}{1em}\right)                     \\
   && =:\;
     \Tr \!\left(\rule{0ex}{1em}\right.
      \sum_{i^{\prime}=1}^n
	    \NL^{1,\,(\Phi,g,B)}_{i^{\prime}}(\varphi,\nabla)
		 \cdot  \dot{\varphi}^{\sharp}(y^{i^{\prime}})
	       \left.\rule{0ex}{1em}\right).
 \end{eqnarray*}
}

\bigskip

\noindent
{\it Step $(4):$ Details for the second cluster}

\medskip

\noindent
For the second cluster, recall Lemma~3.1.3.8.
Then
																		
{\footnotesize																			
 \begin{eqnarray*}
   && \SymDet(
			     [\boldM_{(1)}(0)^{\transpose},\,\cdots\,,\, \boldM_{(\mu-1)}(0)^{\transpose},\,
			      \dot{\boldM}_{(\mu)}(0)^{\transpose},\,
				  \boldM_{(\mu+1)}(0)^{\transpose},\,
				  \cdots\,,\, \boldM_{(m)}(0)^{\transpose}]^{\transpose})\;\;=\\
   &&  \hspace{-.6em}
          \mbox{\large $\frac{1}{m!}$}
              \sum_{\mu^{\prime}=1}^m
              \sum_{\mbox{\tiny
			                   $\begin{array}{c} \sigma\in\tinySym_m\\
							                                              \sigma(\mu^{\prime})=\mu \end{array}$}}\!\!
                (-1)^{\sigma}
				\Det(
			     [\boldM_{(\sigma{(1)})}(0)^{\transpose},\,
				   \cdots\,,\, \boldM_{(\sigma(\mu^{\prime}-1))}(0)^{\transpose},\,
			      \dot{\boldM}_{(\mu)}(0)^{\transpose},\,
				  \boldM_{(\sigma(\mu^{\prime}+1))}(0)^{\transpose},\,
				     \cdots\,,\, \boldM_{(\sigma(m))}(0)^{\transpose}]^{\transpose})\,.
 \end{eqnarray*}} 
Thus, denoting the factor {\footnotesize
 $\;-\,\frac{1}{2}\,
       e^{-\varphi^{\sharp}(\Phi)}\,\sqrt{-\SymDet(\boldM(0))  \,}^{\,-1}\;$}
 by $F_2(\varphi,\nabla;\Phi,g,B)$,
 %
{\footnotesize
 \begin{eqnarray*}
    && \hspace{-2em}
	   \Tr\!\left(\rule{0ex}{1em}\right.
	    -\, \mbox{\large $\frac{1}{2}$}\,
             e^{-\varphi^{\sharp}(\Phi)}\,
		     \sqrt{-\SymDet(\boldM(0))  \,}^{\,-1}\\
    &&	\hspace{3em}	
		     \cdot\;		
              \sum_{\mu=1}^m
               \SymDet(
			     [\boldM_{(1)}(0)^{\transpose},\,\cdots\,,\, \boldM_{(\mu-1)}(0)^{\transpose},\,
			      \dot{\boldM}_{(\mu)}(0)^{\transpose},\,
				  \boldM_{(\mu+1)}(0)^{\transpose},\,
				  \cdots\,,\, \boldM_{(m)}(0)^{\transpose}]^{\transpose})
          \left.\rule{0ex}{1em}\right )\\
   && \hspace{-1em}
     =\;
	   \Tr\!\left(\rule{0ex}{1em}\right.
	     F_2(\varphi,\nabla;\Phi,g,B)\,       	
		     \cdot\,
              \sum_{\mu=1}^m
               \SymDet(
			     [\boldM_{(1)}(0)^{\transpose},\,\cdots\,,\, \boldM_{(\mu-1)}(0)^{\transpose},\,
			      \dot{\boldM}_{(\mu)}(0)^{\transpose},\,
				  \boldM_{(\mu+1)}(0)^{\transpose},\,
				  \cdots\,,\, \boldM_{(m)}(0)^{\transpose}]^{\transpose})
          \left.\rule{0ex}{1em}\right )\\		
   && \hspace{-1em}
     =\;
	   \Tr\!\left(\rule{0ex}{1em}\right.
	    \mbox{\large $\frac{1}{m!}$}\,
	     F_2(\varphi,\nabla;\Phi,g,B)\,       	
		     \cdot\,			
              \sum_{\mu=1}^m\,					
              \sum_{\mu^{\prime}=1}^m
              \sum_{\mbox{\tiny
			                   $\begin{array}{c} \sigma\in\tinySym_m\\
							                                              \sigma(\mu^{\prime})=\mu \end{array}$}} \\
   && \hspace{3em}																		
                (-1)^{\sigma}\,
				\Det(
			     [\boldM_{(\sigma{(1)})}(0)^{\transpose},\,
				   \cdots\,,\, \boldM_{(\sigma(\mu^{\prime}-1))}(0)^{\transpose},\,
			      \dot{\boldM}_{(\mu)}(0)^{\transpose},\,
				  \boldM_{(\sigma(\mu^{\prime}+1))}(0)^{\transpose},\,
				     \cdots\,,\, \boldM_{(\sigma(m))}(0)^{\transpose}]^{\transpose})		
          \left.\rule{0ex}{1em}\right )\\		  		  	
   && \hspace{-1em}
     =\;
	   \Tr\!\left(\rule{0ex}{1em}\right.
	    \mbox{\large $\frac{1}{m!}$}\,
              \sum_{\mu=1}^m\,					
              \sum_{\mu^{\prime}=1}^m
              \sum_{\mbox{\tiny
			                   $\begin{array}{c} \sigma\in\tinySym_m\\
							                                              \sigma(\mu^{\prime})=\mu \end{array}$}}
             (-1)^{\sigma}																		  \\
   && \hspace{2em}\cdot\,																		
				\Det(
			     [ F_2(\varphi,\nabla;\Phi,g,B)\,\boldM_{(\sigma{(1)})}(0)^{\transpose},\,
				   \cdots\,,\, \boldM_{(\sigma(\mu^{\prime}-1))}(0)^{\transpose},\,
			      \dot{\boldM}_{(\mu)}(0)^{\transpose},\,
				  \boldM_{(\sigma(\mu^{\prime}+1))}(0)^{\transpose},\,
				     \cdots\,,\, \boldM_{(\sigma(m))}(0)^{\transpose}]^{\transpose})		
          \left.\rule{0ex}{1em}\right )\\		
   && \hspace{-1em}
     =\;
	   \Tr\!\left(\rule{0ex}{1em}\right.
	    \mbox{\large $\frac{1}{m!}$}\,
              \sum_{\mu=1}^m\,					
              \sum_{\mu^{\prime}=1}^m
              \sum_{\mbox{\tiny
			                   $\begin{array}{c} \sigma\in\tinySym_m\\
							                                              \sigma(\mu^{\prime})=\mu \end{array}$}}
             (-1)^{\sigma}\,(-1)^{\mu^{\prime}(m-\mu^{\prime})}																		  \\
   && \hspace{2em}\cdot\,																		
				\Det(
			     [ 	\boldM_{(\sigma(\mu^{\prime}+1))}(0)^{\transpose},\,
				     \cdots\,,\, \boldM_{(\sigma(m))}(0)^{\transpose},\,				
				  F_2(\varphi,\nabla;\Phi,g,B)\,\boldM_{(\sigma{(1)})}(0)^{\transpose},\,
				    \cdots\,,\, \boldM_{(\sigma(\mu^{\prime}-1))}(0)^{\transpose},\,
			      \dot{\boldM}_{(\mu)}(0)^{\transpose}]^{\transpose})		
          \left.\rule{0ex}{1em}\right )\\		  		
   && \hspace{1em}\mbox{\normalsize (by the invariance of trace under cyclic permutations)}\,.		  
 \end{eqnarray*}
}

\noindent
Note that $\dot{\boldM}_{\mu\nu}(0)$, $\mu,\nu=1,\,\ldots\,,\, m$,
 now appear uniformly as the last factor in the summands from the expansion of
 $\Det([\,\cdots\,]^{\transpose})$ above.
Let $\Minor(\varphi,\nabla;\Phi,g,B\,|\,\mu^{\prime},\sigma)_{\mu\nu}$
 be the $(m,\nu)$-minor of
 {\small
  $[ 	\boldM_{(\sigma(\mu^{\prime}+1))}(0),\,
				     \cdots\,,\, \boldM_{(\sigma(m))}(0),\,				
				  F_2(\varphi,\nabla;\Phi,g,B)\,\boldM_{(\sigma{(1)})}(0),\,
				    \cdots\,,\, \boldM_{(\sigma(\mu^{\prime}-1))}(0),\,
			      \dot{\boldM}_{(\mu)}(0)]^{\transpose}$}. Then:
 
{\footnotesize
 \begin{eqnarray*}
   && \hspace{-.6em}
     =\;
	   \Tr\!\left(\rule{0ex}{1em}\right.
	    \mbox{\large $\frac{1}{m!}$}\,
              \sum_{\mu=1}^m\,					
              \sum_{\mu^{\prime}=1}^m
              \sum_{\mbox{\tiny
			                   $\begin{array}{c} \sigma\in\tinySym_m\\
							                                              \sigma(\mu^{\prime})=\mu \end{array}$}}
             (-1)^{\sigma}\,(-1)^{\mu^{\prime}(m-\mu^{\prime})}\,
          \cdot\,																		
           \sum_{\nu=1}^m			
              (-1)^{m+\nu}			
			  \Minor(\varphi,\nabla;\Phi,g,B\,|\,\mu^{\prime},\sigma)_{\mu\nu}
			   \dot{\boldM}_{\mu\nu}(0)
          \left.\rule{0ex}{1em}\right )\\	
   && \hspace{-.6em}
      =\;
        \Tr\!\left(\rule{0ex}{1em}\right.
              \sum_{\mu=1}^m\,	\sum_{\nu=1}^m
		       \ComboMinor(\varphi,\nabla;\Phi,g,B)_{\mu\nu}\,	\dot{\boldM}_{\mu\nu}(0)
          \left.\rule{0ex}{1em}\right ),  \\	
        &&  \hspace{1em}
   		        \mbox{\normalsize where}\;\;
		        \ComboMinor(\varphi,\nabla;\Phi,g,B)_{\mu\nu}  \\
		&& \hspace{7em}		
		           :=\; \mbox{\large $\frac{1}{m!}$}\,
                   \sum_{\mu^{\prime}=1}^m
                   \sum_{\mbox{\tiny
			                        $\begin{array}{c} \sigma\in\tinySym_m\\
						      	                                              \sigma(\mu^{\prime})=\mu \end{array}$}}
                  (-1)^{\sigma}\,(-1)^{\mu^{\prime}(m-\mu^{\prime})+m+\nu}\,
                    \Minor(\varphi,\nabla;\Phi,g,B\,|\,\mu^{\prime},\sigma)_{\mu\nu}\,,  \\
   && \hspace{-.6em}			
      =\;
        \Tr\!\left(\rule{0ex}{1em}\right.
              \sum_{\mu=1}^m\,	\sum_{\nu=1}^m
		       \ComboMinor(\varphi,\nabla;\Phi,g,B)_{\mu\nu}			   			  \\
   &&  \hspace{6em}
           \cdot
		    \left(\rule{0ex}{1em}\right.
 			  \sum_{i,j}	
	           R^{\,E_{ij}}[1]|
			              _{\scriptsizeboldy^{\tinyboldd}
 			                    \rightsquigarrow\,
								   ^{\mbox{\large $\cdot$}}\!
								     (\varphi^{\sharp}(\scriptsizeboldy^{\tinyboldd}))}
                    \cdot									
	                  D_{\mu}\varphi^{\sharp}(y^i)D_{\nu}\varphi^{\sharp}(y^j) \\
    && \hspace{8em}
	        +\; 	
	             \sum_{i,j}	
				 \varphi^{\sharp}(E_{ij})
				   \cdot(
				     D_{\mu}\dot{\varphi}^{\sharp}(y^i)\,
				            -\, [\varphi^{\sharp}(y^i), \dot{A}_{\mu}])		
				   \cdot					
					  D_{\nu}\varphi^{\sharp}(y^j)      \\
    && \hspace{8em}
          	+\;
	             \sum_{i,j}	
				 \varphi^{\sharp}(E_{ij})
	               D_{\mu}\varphi^{\sharp}(y^i)					
					\cdot (D_{\nu}\dot{\varphi}^{\sharp}(y^j)\,
                                           -\, [\varphi^{\sharp}(y^j), \dot{A}_{\nu}])\;
               +\, 2\pi\alpha^{\prime}\,
			        (D_{\mu}\dot{A}_{\nu}-D_{\nu}\dot{A}_{\mu})
			  \left.\rule{0ex}{1em}\right)\!\!
          \left.\rule{0ex}{1em}\right )\\	
    && \hspace{-.6em}		
	      =\;\;  \mbox{(I)}\; +\; \mbox{(II)}\; +\; \mbox{(III)}\; +\; \mbox{(IV)}\hspace{2em}
		            \mbox{(defined in  {\it Step} (4.1) -- {\it Step} (4.4) below)}\,.
 \end{eqnarray*}
  }
 
Let us now study each of the four subclusters of the second cluster separately.

\vspace{6em}

\noindent
{\it Step $(4.1):$ The subcluster $\,${\rm (I)}}

{\footnotesize
\begin{eqnarray*}
  \mbox{(I)}
   &:=\; & 			
        \Tr\!\left(\rule{0ex}{1em}\right.
              \sum_{\mu=1}^m\,	\sum_{\nu=1}^m
		       \ComboMinor(\varphi,\nabla;\Phi,g,B)_{\mu\nu}	
              \cdot			
 			  \sum_{i,j}	
	           R^{\,E_{ij}}[1]|
			          _{\scriptsizeboldy^{\tinyboldd}
 			                    \rightsquigarrow\,
								   ^{\mbox{\large $\cdot$}}\!
								     (\varphi^{\sharp}(\scriptsizeboldy^{\tinyboldd}))}
                    \cdot									
	                  D_{\mu}\varphi^{\sharp}(y^i)D_{\nu}\varphi^{\sharp}(y^j)
          \left.\rule{0ex}{1em}\right )\\	
   &= &		
	   \Tr\!\left(\rule{0ex}{1em}\right.
              \sum_{\mu=1}^m\,	\sum_{\nu=1}^m
		       \ComboMinor(\varphi,\nabla;\Phi,g,B)_{\mu\nu} \\
     &&\hspace{3em}			
			   \cdot			   			
 			  \sum_{i,j}
			    \left(\rule{0ex}{1em}\right.	
			      \sum_{i^{\prime}=1}^n\,
	              \sum_{\;d,\,\scriptsizeboldd,\,\vec{\pi};\,
				                       |\scriptsizeboldd|=d,\, i_{(\vec{\pi},\tinyboldd)}=i^{\prime}}\,
	               R^{\,E_{ij}}[1]_{(\scriptsizeboldd,\,\vec{\pi})}^{\;L}(\varphi^{\sharp}(\boldy))
	                \cdot     \dot{\varphi}^{\sharp}(y^{i^{\prime}})
	                \cdot
	                R^{\,E_{ij}}[1]_{(\scriptsizeboldd,\,\vec{\pi})}^{\;R}(\varphi^{\sharp}(\boldy))
				   \left.\rule{0ex}{1em}\right)
                    \cdot									
	                  D_{\mu}\varphi^{\sharp}(y^i)D_{\nu}\varphi^{\sharp}(y^j)
          \left.\rule{0ex}{1em}\right )\\	
   &= &		
	   \Tr\!\left(\rule{0ex}{1em}\right.
	     \sum_{i^{\prime}=1}^n\,
            \left(\rule{0ex}{1em}\right.		
              \sum_{\mu=1}^m\,	\sum_{\nu=1}^m\,
			   \sum_{i,j}\,
			     \sum_{\;d,\,\scriptsizeboldd,\,\vec{\pi};\,
				                     |\scriptsizeboldd|=d,\, i_{(\vec{\pi},\tinyboldd)}=i^{\prime}}\,
                  R^{\,E_{ij}}[1]_{(\scriptsizeboldd,\,\vec{\pi})}^{\;R}(\varphi^{\sharp}(\boldy))
                    \cdot									
	                  D_{\mu}\varphi^{\sharp}(y^i)D_{\nu}\varphi^{\sharp}(y^j)    \\
     &&\hspace{14.7em}			
	           \cdot\,					
		       \ComboMinor(\varphi,\nabla;\Phi,g,B)_{\mu\nu}
			   \cdot			   			
	               R^{\,E_{ij}}[1]_{(\scriptsizeboldd,\,\vec{\pi})}^{\;L}(\varphi^{\sharp}(\boldy))
				 \left.\rule{0ex}{1em}\right)
	                \cdot     \dot{\varphi}^{\sharp}(y^{i^{\prime}})
          \left.\rule{0ex}{1em}\right )\\	
   &\;=: &
         \Tr \left(\rule{0ex}{1em}\right.
            \sum_{i^{\prime}=1}^n
			  \NL^{2.I,\,(\Phi,g,B)}_{i^{\prime}}(\varphi, \nabla)
			    \cdot    \dot{\varphi}^{\sharp}(y^{i^{\prime}})
			   \!\left.\rule{0ex}{1em}\right).
\end{eqnarray*}
}

\bigskip
\bigskip

\noindent
{\it Step $(4.2):$ The subcluster $\,${\rm (II)}}

\medskip

\noindent
This subcluster contributes also to boundary terms.

{\footnotesize
\begin{eqnarray*}
 \mbox{(II)}
   &:=\; &
        \Tr\!\left(\rule{0ex}{1em}\right.
              \sum_{\mu=1}^m\,	\sum_{\nu=1}^m
		       \ComboMinor(\varphi,\nabla;\Phi,g,B)_{\mu\nu}			   			  \\[-2.4ex]
   &&  \hspace{8em}
           \cdot
		    \left(\rule{0ex}{1em}\right.			 								  					  
	             \sum_{i,j}	
				 \varphi^{\sharp}(E_{ij})
				   \cdot
				     D_{\mu}\dot{\varphi}^{\sharp}(y^i)\,				          		
				   \cdot					
					  D_{\nu}\varphi^{\sharp}(y^j)\;
          	+\;
	             \sum_{i,j}	
				 \varphi^{\sharp}(E_{ij})
	               D_{\mu}\varphi^{\sharp}(y^i)					
					\cdot D_{\nu}\dot{\varphi}^{\sharp}(y^j)					
			  \left.\rule{0ex}{1em}\right)\!\!
          \left.\rule{0ex}{1em}\right )\\	
   &= &
        \Tr\!\left(\rule{0ex}{1em}\right.
              \sum_{\mu=1}^m\,	\sum_{\nu=1}^m
	             \sum_{i,j}	
				  \left(\rule{0ex}{1em}\right.
				 D_{\mu}\varphi^{\sharp}(y^i)
				 \cdot
                   \ComboMinor(\varphi,\nabla;\Phi,g,B)_{\nu\mu}\,				
				 \varphi^{\sharp}(E_{ji})   \\[-3ex]
   && \hspace{14em}				
          	+\;
				   \ComboMinor(\varphi,\nabla;\Phi,g,B)_{\mu\nu}
				 \varphi^{\sharp}(E_{ij})
	               D_{\mu}\varphi^{\sharp}(y^i)					
				  \left.\rule{0ex}{1em}\right)
					\cdot D_{\nu}\dot{\varphi}^{\sharp}(y^j)					
          \left.\rule{0ex}{1em}\right )\\	
  &= &
        \Tr\!\left(\rule{0ex}{1em}\right.
              \sum_{\nu=1}^m\,	\sum_{\mu=1}^m\, \sum_{i,j}
				   D_{\nu}\!
				    \left[\rule{0ex}{1em}\right.
					 \left(\rule{0ex}{1em}\right.
				 D_{\mu}\varphi^{\sharp}(y^i)\,
                   \ComboMinor(\varphi,\nabla;\Phi,g,B)_{\nu\mu}\,				
				 \varphi^{\sharp}(E_{ji})   \\[-3ex]
  && \hspace{14em}
            +\;
				   \ComboMinor(\varphi,\nabla;\Phi,g,B)_{\mu\nu}\,
				 \varphi^{\sharp}(E_{ij})\,
	               D_{\mu}\varphi^{\sharp}(y^i)					
				  \left.\rule{0ex}{1em}\right)
				  \dot{\varphi}^{\sharp}(y^j)
			  \left.\rule{0ex}{1em}\right]					
          \left.\rule{0ex}{1em}\right )\\	
  &&
       -\;
        \Tr\!\left(\rule{0ex}{1em}\right.
              \sum_{j=1}^n\,
			   \left(\rule{0ex}{1em}\right.
			   \sum_{\mu=1}^m\,	\sum_{\nu=1}^m\, \sum_{i=1}^n	
				 D_{\nu}\!
				  \left[\rule{0ex}{1em}\right.\!\!
				 D_{\mu}\varphi^{\sharp}(y^i)
				 \cdot
                   \ComboMinor(\varphi,\nabla;\Phi,g,B)_{\nu\mu}\,				
				 \varphi^{\sharp}(E_{ji})   \\[-3ex]
    && \hspace{15em}
          	+\;
				   \ComboMinor(\varphi,\nabla;\Phi,g,B)_{\mu\nu}
				 \varphi^{\sharp}(E_{ij})
	               D_{\mu}\varphi^{\sharp}(y^i)\!\!					
				  \left.\rule{0ex}{1em}\right]\!\!
				    \left.\rule{0ex}{1em}\right)
					\cdot \dot{\varphi}^{\sharp}(y^j)					
          \left.\rule{0ex}{1em}\right )                \\
  &= &
       \sum_{\nu=1}^m \partial_{\nu}
        \Tr\!\left(\rule{0ex}{1em}\right.
              \sum_{\mu=1}^m\, \sum_{i,j}
				    \left[\rule{0ex}{1em}\right.
					 \left(\rule{0ex}{1em}\right.
				 D_{\mu}\varphi^{\sharp}(y^i)\,
                   \ComboMinor(\varphi,\nabla;\Phi,g,B)_{\nu\mu}\,				
				 \varphi^{\sharp}(E_{ji})   \\[-3ex]
  && \hspace{14em}
            +\;
				   \ComboMinor(\varphi,\nabla;\Phi,g,B)_{\mu\nu}\,
				 \varphi^{\sharp}(E_{ij})\,
	               D_{\mu}\varphi^{\sharp}(y^i)					
				  \left.\rule{0ex}{1em}\right)
				  \dot{\varphi}^{\sharp}(y^j)
			  \left.\rule{0ex}{1em}\right]					
          \left.\rule{0ex}{1em}\right )\\	
  &&
       -\;
        \Tr\!\left(\rule{0ex}{1em}\right.
              \sum_{j=1}^n\,
			   \left(\rule{0ex}{1em}\right.
			   \sum_{\mu=1}^m\,	\sum_{\nu=1}^m\, \sum_{i=1}^n	
				 D_{\nu}\!
				  \left[\rule{0ex}{1em}\right.\!\!
				 D_{\mu}\varphi^{\sharp}(y^i)
				 \cdot
                   \ComboMinor(\varphi,\nabla;\Phi,g,B)_{\nu\mu}\,				
				 \varphi^{\sharp}(E_{ji})   \\[-3ex]
    && \hspace{15em}
          	+\;
				   \ComboMinor(\varphi,\nabla;\Phi,g,B)_{\mu\nu}
				 \varphi^{\sharp}(E_{ij})
	               D_{\mu}\varphi^{\sharp}(y^i)\!\!					
				  \left.\rule{0ex}{1em}\right]\!\!
				    \left.\rule{0ex}{1em}\right)
					\cdot \dot{\varphi}^{\sharp}(y^j)					
          \left.\rule{0ex}{1em}\right )                \\
    &\;=:&
		  \sum_{\nu=1}^m (-1)^{\nu-1}\, \partial_{\nu}
			   (\BT^{2.I\!I,\,(\varphi,\nabla;\Phi,g,B)}_{\nu}(\dot{\varphi}^{\sharp}(\boldy)))\;  		  
	      +\, \Tr \left(\rule{0ex}{1em}\right.
		          \sum_{j=1}^n
				   \NL^{2.I\!I,\,(\Phi,g,B) }_j (\varphi,\nabla)
				       \cdot \dot{\varphi}^{\sharp}(y^j)
				      \!\left.\rule{0ex}{1em}\right).
\end{eqnarray*}
}

\vspace{18em}

\noindent
{\it Step $(4.3):$ The subcluster $\,${\rm (III)}}

{\footnotesize
\begin{eqnarray*}
 \mbox{(III)}
   &:=\; &
        \Tr\!\left(\rule{0ex}{1em}\right.
          -\,    \sum_{\mu=1}^m\,	\sum_{\nu=1}^m
		                 \ComboMinor(\varphi,\nabla;\Phi,g,B)_{\mu\nu}			   			  \\
   &&  \hspace{6em}
           \cdot
		    \left(\rule{0ex}{1em}\right.								
	             \sum_{i,j}	
				 \varphi^{\sharp}(E_{ij})
				   \cdot [\varphi^{\sharp}(y^i), \dot{A}_{\mu}]
				   \cdot D_{\nu}\varphi^{\sharp}(y^j)\;
          	+\;
	             \sum_{i,j}	
				 \varphi^{\sharp}(E_{ij})\,
	               D_{\mu}\varphi^{\sharp}(y^i)					
					\cdot [\varphi^{\sharp}(y^j), \dot{A}_{\nu}]
			  \left.\rule{0ex}{1em}\right)\!\!
          \left.\rule{0ex}{1em}\right )\\
   &= &
        \Tr\!\left(\rule{0ex}{1em}\right.
            \sum_{\nu=1}^m\,	\sum_{\mu=1}^m\, \sum_{i,j}	   		
		    \left(\rule{0ex}{1em}\right.								
				\ComboMinor(\varphi,\nabla;\Phi,g,B)_{\mu\nu}\,		
				 \varphi^{\sharp}(E_{ij})\,
	               D_{\mu}\varphi^{\sharp}(y^i)\,					
					 \varphi^{\sharp}(y^j)       \\[-2.4ex]
   && \hspace{9em}
        -\;
			   \varphi^{\sharp}(y^j)\,
				\ComboMinor(\varphi,\nabla;\Phi,g,B)_{\mu\nu}\,		
				 \varphi^{\sharp}(E_{ij})\,
	               D_{\mu}\varphi^{\sharp}(y^i)\,	         \\
   && \hspace{9em}						
	          -\,
                 D_{\mu}\varphi^{\sharp}(y^j)\,			
			     \ComboMinor(\varphi,\nabla;\Phi,g,B)_{\nu\mu}\,	
				 \varphi^{\sharp}(E_{ij})\,
				   \varphi^{\sharp}(y^i)                 \\
    && \hspace{9em}			
          	+\;
			    \varphi^{\sharp}(y^i)\, D_{\mu}\varphi^{\sharp}(y^j)\,
			    \ComboMinor(\varphi,\nabla;\Phi,g,B)_{\nu\mu}\,		
				 \varphi^{\sharp}(E_{ij})
				   \left.\rule{0ex}{1em}\right)				
				   \cdot   \dot{A}_{\nu}					
          \left.\rule{0ex}{1em}\right )          \\
   &\;=: &
        \Tr \left(\rule{0ex}{1em}\right.
		  \sum_{\nu=1}^m
		    \NL^{2.I\!I\!I,\, (\Phi,g,B)}_{\nu}(\varphi,\nabla)
			  \cdot \dot{A}_{\nu}
		      \!\left.\rule{0ex}{1em}\right).		
\end{eqnarray*}
}

\bigskip
\bigskip

\noindent
{\it Step $(4.4):$ The subcluster $\,${\rm (IV)}}

\medskip

\noindent
This subcluster contributes also to boundary terms.

{\footnotesize
\begin{eqnarray*}
 \mbox{(IV)}
   &:=\; &
        \Tr\!\left(\rule{0ex}{1em}\right.\!
		   2\pi\alpha^{\prime}\,
              \sum_{\mu=1}^m\,	\sum_{\nu=1}^m
		       \ComboMinor(\varphi,\nabla;\Phi,g,B)_{\mu\nu}			   			
           \cdot (D_{\mu}\dot{A}_{\nu}-D_{\nu}\dot{A}_{\mu})\!
          \left.\rule{0ex}{1em}\right )\\	
   &= &		
        \Tr\!\left(\rule{0ex}{1em}\right.\!
		   2\pi\alpha^{\prime}\,
              \sum_{\mu=1}^m\,	\sum_{\nu=1}^m
		        \left(\rule{0ex}{1em}\right.\!\!
			       \ComboMinor(\varphi,\nabla;\Phi,g,B)_{\mu\nu}\,
             		-\, \ComboMinor(\varphi,\nabla;\Phi,g,B)_{\nu\mu}
					\!\!\left.\rule{0ex}{1em}\right)	
           \cdot D_{\mu}\dot{A}_{\nu}\!
          \left.\rule{0ex}{1em}\right )\\	
   &= & 	
        \Tr\!\left(\rule{0ex}{1em}\right.\!
		   2\pi\alpha^{\prime}\,
              \sum_{\mu=1}^m\,	\sum_{\nu=1}^m
		       D_{\mu}\left[
			    \left(\rule{0ex}{1em}\right.\!\!
			       \ComboMinor(\varphi,\nabla;\Phi,g,B)_{\mu\nu}\,
             		-\, \ComboMinor(\varphi,\nabla;\Phi,g,B)_{\nu\mu}
					 \!\!\left.\rule{0ex}{1em}\right)	
           \cdot \dot{A}_{\nu}
		                             \right]\!\!
          \left.\rule{0ex}{1em}\right )\\
   && 	
      -\;
        \Tr\!\left(\rule{0ex}{1em}\right.\!
		   2\pi\alpha^{\prime}\,
              \sum_{\nu=1}^m\,	
			   \left(\rule{0ex}{1em}\right.
			     \sum_{\mu=1}^m
		          D_{\mu}
			        \left[\rule{0ex}{1em}\right.\!\!
				        \ComboMinor(\varphi,\nabla;\Phi,g,B)_{\mu\nu}\,
             	  	       -\, \ComboMinor(\varphi,\nabla;\Phi,g,B)_{\nu\mu}
				    \!\!\left.\rule{0ex}{1em}\right]\!\!	
			  \left.\rule{0ex}{1em}\right)\!			
                  \cdot \dot{A}_{\nu}\!
          \left.\rule{0ex}{1em}\right )   \\
   &= &
        \sum_{\mu=1}^m
		 \partial_{\mu}
		  \Tr\!\left(\rule{0ex}{1em}\right.\!
		   2\pi\alpha^{\prime}\,
              \sum_{\nu=1}^m
			    \left(\rule{0ex}{1em}\right.\!\!
				   \ComboMinor(\varphi,\nabla;\Phi,g,B)_{\mu\nu}\,
             		-\, \ComboMinor(\varphi,\nabla;\Phi,g,B)_{\nu\mu}
			    \!\!\left.\rule{0ex}{1em}\right)	
           \cdot \dot{A}_{\nu}
          \left.\rule{0ex}{1em}\right )\\
   && 	
      -\;
        \Tr\!\left(\rule{0ex}{1em}\right.\!
		   2\pi\alpha^{\prime}\,
              \sum_{\nu=1}^m\,	
			   \left(\rule{0ex}{1em}\right.
			     \sum_{\mu=1}^m
		          D_{\mu}
			        \left[\rule{0ex}{1em}\right.\!\!
					  \ComboMinor(\varphi,\nabla;\Phi,g,B)_{\mu\nu}\,
             	  	    -\, \ComboMinor(\varphi,\nabla;\Phi,g,B)_{\nu\mu}
				    \!\!\left.\rule{0ex}{1em}\right]\!\!	
			  \left.\rule{0ex}{1em}\right)\!			
                  \cdot \dot{A}_{\nu}\!
          \left.\rule{0ex}{1em}\right )  \\
	&\;=: &
	    \sum_{\mu=1}^m (-1)^{\mu-1}\, \partial_{\mu}
		   (\BT^{2.I\!V,\,(\varphi,\nabla;\Phi,g,B)}_{\mu}(\dot{\boldA}))\;
	             +\; \Tr \left(\rule{0ex}{1em}\right.
				          \sum_{\nu=1}^m
						    \NL^{2.I\!V,\,(\Phi,g,B)}_{\nu}(\varphi,\nabla)
							   \cdot \dot{A}_{\nu}
						     \!\left.\rule{0ex}{1em}\right).		
\end{eqnarray*}
}

\vspace{8em}

\noindent
{\it Step $(5):$ The final formula}

\medskip

\noindent
In summary,
 with the notation introduced for the various nonlinear first-order and second-order differential expressions
 on $(\varphi,\nabla)$ that depend on $(\Phi,g,B)$ and appear in the calculation
 (subject to a relabelling of the dummy $i^{\prime}$ index),
one has

{\footnotesize
\begin{eqnarray*}
   \left.\mbox{\Large $\frac{d}{dt}$}\right|_{t=0}\,
   S_{\DBI}(\varphi_t,\nabla^t)
    & = & 	
	 -T_{m-1}\,\left.\mbox{\Large $\frac{d}{dt}$}\right|_{t=0}\,
	  \int_U\,  \Real  \left( \Tr \left(
          e^{-\varphi_t^{\sharp}(\Phi)}\,
		    \sqrt{-\SymDet(\boldM(t))  \,}\right )\right)
	   d^{\,m}\boldx   \\[1.2ex]
  && \hspace{-10em}
    =\;  	
	-T_{m-1}\,\int_U
	  \Real
       \!\left(\rule{0ex}{1em}\right.	
	    \sum_{\mu=1}^m (-1)^{\mu-1}\,
		  \partial_{\mu}
		   \left(\rule{0ex}{1em}\right.
		     \BT^{2.I\!I,\,(\varphi,\nabla;\Phi,g,B)}_{\mu}(\dot{\varphi}^{\sharp}(\boldy))\,
			 +\, \BT^{2.I\!V,\,(\varphi,\nabla;\Phi,g,B)}_{\mu}(\dot{\boldA})
		   \left.\rule{0ex}{1em}\right)		
	   \!\!\!\left.\rule{0ex}{1em}\right)	
	          d^{\,m}\boldx                 \\
  && \hspace{-8em}
    -T_{m-1}\,\int_U
       \Real	
	    \!\left(\rule{0ex}{1em}\right.
	    \!\!\Tr
	    \!\left(\rule{0ex}{1em}\right.
	     \!\sum_{j=1}^n
		  ( \NL^{1,\,(\Phi,g,B)}_j(\varphi,\nabla)\,
		     +\, \NL^{2.I,\,(\Phi,g,B)}_j(\varphi,\nabla)\,
		     +\, \NL^{2.I\!I,\,(\Phi,g,B)}_j(\varphi,\nabla)   )\, \cdot\,
		      \dot{\varphi}^{\sharp}(y^j)  \\[-2ex]
   && \hspace{7em}			
		 +\;
		 \sum_{\nu=1}^m
          ( \NL^{2.I\!I\!I,\,(\Phi,g,B)}_{\nu}(\varphi,\nabla)
		        +\, \NL^{2.I\!V,\,(\Phi,g,B)}_{\nu}(\varphi,\nabla) )\, \cdot\,
		       \dot{A}_{\nu}
		\!\left.\rule{0ex}{1em}\right)		
	    \!\!\!\left.\rule{0ex}{1em}\right)	
			d^{\,m}\boldx    \\
  && \hspace{-10em}
    =:\;  	
	-T_{m-1}\,\int_{\partial U}
	  \Real(\BT^{(\varphi,\nabla;\Phi,g,B)}(\dot{\varphi}^{\sharp}(\boldy),\dot{\boldA}))\\
   && \hspace{-4em}
    -T_{m-1}\,\int_U
       \Real	
	    \!\left(\rule{0ex}{1em}\right.
	    \!\!\Tr
	    \!\left(\rule{0ex}{1em}\right.
	     \!\sum_{j=1}^n
		  \NL^{(\Phi,g,B);\delta\varphi}_j(\varphi,\nabla)\,
		     \cdot\,  \dot{\varphi}^{\sharp}(y^j)
		 +\;
		 \sum_{\nu=1}^m
           \NL^{(\Phi,g,B);\delta\nabla}_{\nu}(\varphi,\nabla)\,
		     \cdot\, \dot{A}_{\nu}
		\!\left.\rule{0ex}{1em}\right)		
	    \!\!\!\left.\rule{0ex}{1em}\right)	
			d^{\,m}\boldx\,.
\end{eqnarray*}
}
  
\noindent
Here,
{\footnotesize
 \begin{eqnarray*}
   \BT^{(\varphi,\nabla;\Phi,g,B)}(\dot{\varphi}^{\sharp}(\boldy),\dot{\boldA})\\
   && \hspace{-10em}
   :=\;
	    \sum_{\mu=1}^m
		   \left(\rule{0ex}{1em}\right.\!\!
		     \BT^{2.I\!I,\,(\varphi,\nabla;\Phi,g,B)}_{\mu}(\dot{\varphi}^{\sharp}(\boldy))\,
			 +\, \BT^{2.I\!V,\,(\varphi,\nabla;\Phi,g,B)}_{\mu}(\dot{\boldA})
		   \!\!\left.\rule{0ex}{1em}\right)		
	          dx^1\wedge\,\cdots\,\wedge dx^{\mu-1}\wedge \widehat{dx^{\mu}}
			  \wedge dx^{\mu+1}\,\cdots\,\wedge dx^m\,,
 \end{eqnarray*}}
 with the $\widehat{dx^{\mu}}$ meaning the removal of $dx^{\mu}$,
 is a complex-valued $(m-1)$-form on $U$
 that depends linearly on $(\dot{\boldy},\dot{\boldA})$
 and whose real part gives the total boundary term (up to the factor $-T_{m-1}$)
  of the first variation of $S_{\DBI}^{(\Phi,g,B)}(\varphi,\nabla)$ with respect to $(\varphi,\nabla)$.

\bigskip

\subsection{The equations of motion for D-branes}

\begin{remark} $[\,$effect of $\Real(\,\cdot\,)$ in action to equations of motion$\,]\;$ {\rm
 Due to the operation `\hspace{.1ex}{\sl Taking the real part of}\hspace{.6ex}' $\Real(\,\cdot\,)$,
  to go from the the first variation formula
     to the expression for the equations of motion
  there is a detail that depends on how the space of pairs $(\varphi,\nabla)$
  and its tangents $(\delta\varphi, \delta\nabla)$ are parameterized;
 (cf.\      $\Real(e^{\sqrt{-1}\theta}z)
                     =\cos\theta\cdot \Real(z) - \sin\theta\cdot \Imaginary(z)$).
 \begin{itemize}
  \item[(1)]
   For the $\varphi$-part,
     first,  caution that it is {\it not} that
       just because $\varphi^{\sharp}(y^i)$, $i=1, \,\ldots\,, n$,
          take values in a ring over ${\Bbb C}$
       (i.e.\ $C^{\infty}(\End_{\Bbb C}(E))$)
       that the space $\Map((X^{\!A\!z},E),Y)$ of all such $\varphi$'s becomes a complex space.
   Indeed,
      due to the fact that
     all the eigenvalues of $\varphi^{\sharp}(f)$, $f\in C^{\infty}(Y)$ are real
     (cf.\ [L-Y4: Sec.\ 3], D(11.1)),
   $\Map((X^{\!A\!z},E),Y)$ is intrinsically a real space and
      there is no natural complex-space structure on it (even if exists)
      that can be made compatible with the underlying moduli problem
    since
      if $\delta\varphi$ is an unobstructed tangent to $\Map((X^{\!A\!z},E),Y)$, then
      $\sqrt{-1}\delta\varphi$ can never be an unobstructed tangent to $\Map((X^{\!A\!z}),Y)$.	
   So this part is good in the sense that if we fix a real presentation for $\varphi$'s in the study,
    then $\Real(\delta S_{\DBI}/\delta\varphi)$ gives the system of equations of motion for $\varphi$.
	
  \item[(2)]
   For the $\nabla$-part,
     if alone, the parameter space is complex in nature in our most general setting.
   When $E$ is Hermitian and $\nabla$ is required to be compatible with the Hermitian structure, 
     the resulting parameter space becomes intrinsically real.
   In the latter case, depending on the convention in presenting a unitary gauge theory
    (mathematicians vs.\ physicists),
    one may take either
     $\Real(\delta S_{\DBI}^{(\Phi,g,B)}/\delta\nabla)$ or
     $\Imaginary(\delta S_{\DBI}^{(\Phi,g,B)})/\delta\nabla$  as the system of equations for $\nabla$.
   {\it However}, this is not the full story as we imposed the admissible condition
     $\nabla_{\mbox{\tiny $\bullet$}}{\cal A}_{\varphi}\subset {\cal A}_{\varphi}$
	 on $\nabla$.
   Details on writing the equations of motion will have to depend on how we present this condition.
 \end{itemize}
 Not to let this additional detail to distract us in this first work in the D(13) subseries,
  we present for the current notes the system of equations of motion that remove the effect of
  $\Real(\,\cdot\,)$ in $S_{\DBI}^{(\Phi,g,B)}$.
 In other words, a true system of equations of motion will involve only a combination of what are given below.
}\end{remark}

\bigskip

It follows from the study in Sec.\ 5.2 that
the equations of motion for D-branes from the Dirac-Born-Infeld action,
 with the D-brane world-volume modelled in the current context as an admissible map
  $$
    \varphi:(X^{A\!z},E;\nabla)\; \longrightarrow\;
     (Y,\Phi,g,B)
  $$
  from an Azumaya/matrix manifold with a fundamental module with a connection
  $(X^{A\!z},E;\nabla)$
  to a space-time $Y$ with massless background fields $(\Phi,g,B)$ from closed string excitations,  
 are given by the following system of second-order nonlinear partial differential equations on $(\varphi,\nabla)$:
 %
 %
 $$
  \left\{
   \!\!\begin{array}{lll}
    \NL^{(\Phi,g,B);\delta\varphi}_j (\varphi,\nabla)            & \!\!\!=\; 0\,,
  	  & \mbox{for $j=1,\,\ldots\,,\, n$}\,;                  \\[1.2ex]
    \NL^{(\Phi,g,B);\delta\nabla}_{\nu} (\varphi,\nabla) & \!\!\!=\; 0\,,
	 & \mbox{for $\nu=1,\,\ldots\,,\, m$}\,.
   \end{array}
  \right.
 $$
Here, for the first subsystem,
 $$
   \NL^{(\Phi,g,B);\delta\varphi}_j (\varphi,\nabla)\;
    =\;  \NL^{1,\,(\Phi,g,B)}_j (\varphi,\nabla)\,
		     +\, \NL^{2.I,\,(\Phi,g,B)}_j (\varphi,\nabla)
		     +\, \NL^{2.I\!I,\,(\Phi,g,B)}_j (\varphi,\nabla)
 $$
 with
{\footnotesize
 \begin{eqnarray*}
  \NL^{1,\,(\Phi,g,B)}_j (\varphi,\nabla)
    &= &
	     \sum_{d,\,\scriptsizeboldd,\,\vec{\pi};\,
		                       |\scriptsizeboldd|=d,\, i_{(\vec{\pi}, \tinyboldd)}=j }\,
	      R^{\,e^{-\Phi}}[1]_{(\scriptsizeboldd,\,\vec{\pi})}^{\;R}(\varphi^{\sharp}(\boldy))
          \cdot
		    \sqrt{-\SymDet(\boldM(0))  \,}
		  \cdot	
	    R^{\,e^{-\Phi}}[1]_{(\scriptsizeboldd,\,\vec{\pi})}^{\;L}(\varphi^{\sharp}(\boldy))\,,\\
  \NL^{2.I,\,(\Phi,g,B)}_j (\varphi,\nabla)
    &= &
         \sum_{\mu=1}^m\,	\sum_{\nu=1}^m\,
			   \sum_{i^{\prime},j^{\prime}}\,
			     \sum_{d,\,\scriptsizeboldd,\,\vec{\pi};\,
				                         |\scriptsizeboldd|=d,\, i_{(\vec{\pi},\tinyboldd)}=j}\,
                  R^{\,E_{i^{\prime}j^{\prime}}}[1]_{(\scriptsizeboldd,\,\vec{\pi})}^{\;R}
				      (\varphi^{\sharp}(\boldy))
                    \cdot									
	                  D_{\mu}\varphi^{\sharp}(y^i)D_{\nu}\varphi^{\sharp}(y^j)    \\
     &&\hspace{12em}			
	           \cdot\,					
		       \ComboMinor(\varphi,\nabla;\Phi,g,B)_{\mu\nu}
			   \cdot			   			
	               R^{\,E_{i^{\prime}j^{\prime}}}[1]_{(\scriptsizeboldd,\,\vec{\pi})}^{\;L}
				   (\varphi^{\sharp}(\boldy))  \,,            \\	            		     		
  \NL^{2.I\!I,\,(\Phi,g,B)}_j (\varphi,\nabla)
    &= &
	    -\,\sum_{\mu=1}^m\,	\sum_{\nu=1}^m\, \sum_{i=1}^n	
				 D_{\nu}\!
				  \left(\rule{0ex}{1em}\right.\!\!
				 D_{\mu}\varphi^{\sharp}(y^i)
				 \cdot
                   \ComboMinor(\varphi,\nabla;\Phi,g,B)_{\nu\mu}\,				
				 \varphi^{\sharp}(E_{ji})   \\[-3ex]
    && \hspace{8em}
          	+\;
				   \ComboMinor(\varphi,\nabla;\Phi,g,B)_{\mu\nu}
				 \varphi^{\sharp}(E_{ij})
	               D_{\mu}\varphi^{\sharp}(y^i)\!\!					
				  \left.\rule{0ex}{1em}\right);
 \end{eqnarray*}
}
and, for the second subsystem,
$$
  \NL^{(\Phi,g,B);\delta\nabla}_{\nu} (\varphi,\nabla) \;
   =\;   \NL^{2.I\!I\!I,\,(\Phi,g,B)}_{\nu} (\varphi,\nabla)
		        +\, \NL^{2.I\!V,\,(\Phi,g,B)}_{\nu} (\varphi,\nabla)
$$
with
{\footnotesize
 \begin{eqnarray*}
    \NL^{2.I\!I\!I,\,(\Phi,g,B)}_{\nu} (\varphi,\nabla)
  &= &
        \sum_{\mu=1}^m\, \sum_{i,j}	   		
		    \left(\rule{0ex}{1em}\right.\!\!								
				\ComboMinor(\varphi,\nabla;\Phi,g,B)_{\mu\nu}\,		
				 \varphi^{\sharp}(E_{ij})\,
	               D_{\mu}\varphi^{\sharp}(y^i)\,					
					 \varphi^{\sharp}(y^j)       \\[-2.4ex]
   && \hspace{4.2em}
        -\;
			   \varphi^{\sharp}(y^j)\,
				\ComboMinor(\varphi,\nabla;\Phi,g,B)_{\mu\nu}\,		
				 \varphi^{\sharp}(E_{ij})\,
	               D_{\mu}\varphi^{\sharp}(y^i)\,	         \\
   && \hspace{4.2em}						
	          -\,
                 D_{\mu}\varphi^{\sharp}(y^j)\,			
			     \ComboMinor(\varphi,\nabla;\Phi,g,B)_{\nu\mu}\,	
				 \varphi^{\sharp}(E_{ij})\,
				   \varphi^{\sharp}(y^i)                 \\
    && \hspace{4.2em}			
          	+\;
			    \varphi^{\sharp}(y^i)\, D_{\mu}\varphi^{\sharp}(y^j)\,
			    \ComboMinor(\varphi,\nabla;\Phi,g,B)_{\nu\mu}\,		
				 \varphi^{\sharp}(E_{ij})
				   \!\!\left.\rule{0ex}{1em}\right),				  \\
  \NL^{2.I\!V,\,(\Phi,g,B)}_{\nu} (\varphi,\nabla)\;
   & =  &
    2\pi\alpha^{\prime}\,
	\sum_{\mu=1}^m
	  D_{\mu}
	     (\ComboMinor(\varphi,\nabla;\Phi,g,B)_{\nu\mu}\,
             	  	    -\, \ComboMinor(\varphi,\nabla;\Phi,g,B)_{\mu\nu})\,.
 \end{eqnarray*}}
 
In both subsystems,
{\footnotesize
 \begin{eqnarray*}
  \ComboMinor(\varphi,\nabla;\Phi,g,B)_{\mu\nu}
   &= &
	   \mbox{\large $\frac{1}{m!}$}\,
              \sum_{\mu^{\prime}=1}^m
              \sum_{\mbox{\tiny
			                   $\begin{array}{c} \sigma\in\tinySym_m\\
							                                              \sigma(\mu^{\prime})=\mu \end{array}$}}
             (-1)^{\sigma}\,(-1)^{\mu^{\prime}(m-\mu^{\prime})+m+\nu}\,
              \Minor(\varphi,\nabla;\Phi,g,B\,|\,\mu^{\prime},\sigma)_{\mu\nu}\,,
 \end{eqnarray*}
}
where
{\footnotesize
 \begin{eqnarray*}
   \lefteqn{
    \Minor(\varphi,\nabla;\Phi,g,B\,|\,\mu^{\prime},\sigma)_{\mu\nu}\;
       =\; \mbox{the $(m,\nu)$-minor of}     } \\
    &&
      [ \boldM_{(\sigma(\mu^{\prime}+1))}(0)^{\transpose},\,
				     \cdots\,,\, \boldM_{(\sigma(m))}(0)^{\transpose},\,				
				  F_2(\varphi,\nabla;\Phi,g,B)\,\boldM_{(\sigma{(1)})}(0)^{\transpose},\,
				    \cdots\,,\, \boldM_{(\sigma(\mu^{\prime}-1))}(0)^{\transpose},\,
			      \dot{\boldM}_{(\mu)}(0)^{\transpose}]^{\transpose}
 \end{eqnarray*}}
with
{\footnotesize
 \begin{eqnarray*}
   F_2(\varphi,\nabla;\Phi,g,B)
     & = &
	  -\,\mbox{\large $\frac{1}{2}$}\,
       e^{-\varphi^{\sharp}(\Phi)}\,\sqrt{-\SymDet(\boldM(0))  \,}^{\,-1}\,,  \\	
 {\boldM}_{(\,\mbox{\tiny $\bullet$}\,)}(0)
     &=  &
	 \mbox{the {\tiny $\bullet\,$}-th row vector of $\boldM(0)$}\,,     \\
   \boldM_{\mu\nu}(0)
	 &= & \mbox{the $(\mu,\nu)$-entry of $\boldM(0)$}\;\; =\;\;
   	 \sum_{i^{\prime},j^{\prime}}
				 \varphi^{\sharp}(E_{i^{\prime}j^{\prime}})\,
	                  D_{\mu}\varphi^{\sharp}(y^{i^{\prime}})\,
					  D_{\nu}\varphi^{\sharp}(y^{j^{\prime}})\,
                   +\, 2\pi\alpha^{\prime}\,[\nabla_{\mu},\nabla_{\nu}]\,,   \\
  \dot{\boldM}_{\mu\nu}(0)
    & = &   \sum_{i^{\prime},j^{\prime}}	
	             R^{\,E_{i^{\prime}j^{\prime}}}[1]|
			              _{\scriptsizeboldy^{\tinyboldd}
 			                    \rightsquigarrow\,
								   ^{\mbox{\large $\cdot$}}\!
								     (\varphi^{\sharp}(\scriptsizeboldy^{\tinyboldd}))}
                    \cdot									
	                  D_{\mu}\varphi^{\sharp}(y^{i^{\prime}})
					  D_{\nu}\varphi^{\sharp}(y^{j^{\prime}}) \\
    &&      +\; 	
	             \sum_{i^{\prime},j^{\prime}}	
				 \varphi^{\sharp}(E_{i^{\prime}j^{\prime}})
				   \cdot(
				     D_{\mu}\dot{\varphi}^{\sharp}(y^{i^{\prime}})\,
				            -\, [\varphi^{\sharp}(y^{i^{\prime}}), \dot{A}_{\mu}])		
				   \cdot					
					  D_{\nu}\varphi^{\sharp}(y^{j^{\prime}})      \\
    &&      +\;
	             \sum_{i^{\prime},j^{\prime}}	
				 \varphi^{\sharp}(E_{i^{\prime}j^{\prime}})
	               D_{\mu}\varphi^{\sharp}(y^{i^{\prime}})					
					\cdot (D_{\nu}\dot{\varphi}^{\sharp}(y^{j^{\prime}})\,
                                           -\, [\varphi^{\sharp}(y^{j^{\prime}}), \dot{A}_{\nu}])\;
               +\, 2\pi\alpha^{\prime}\,
			        (D_{\mu}\dot{A}_{\nu}-D_{\nu}\dot{A}_{\mu})\,.	
 \end{eqnarray*}
 }

\bigskip

\begin{remark}  $[\,$origin/correction from anomaly equations for open strings$\,]\;$  {\rm
 From the string-theory point of view,
  it is very important to understand further
  how such systems of differential equations on the pair $(\varphi,\nabla)$
  can arise from or be correced/improved by the anomaly-free conditions in open-string theory.
 Cf.\ Issue (7), Sec.\ 1.
 %
}\end{remark}

\medskip

\begin{remark} {\rm [\hspace{.1ex}}the case of Hermitian/unitary D-branes{\rm\hspace{.1ex}]}$\;$
{\rm
 When in addition $E$ is equipped with a Hermitian structure
    and $\varphi$ is Hermitian and $\nabla$ is unitary,
  the Dirac-Born-Infeld action functional
   $S^{(\Phi,g,B)}_{(\varphi,\nabla)}$ and, hence, the resulting equations of motion
   can be simplified.
 The detail should be studied further.
 Cf.\  Remark~2.3.8 and Remark~3.2.5.
}\end{remark}

\bigskip

\section{Remarks on the Chern-Simons/Wess-Zumino term}

In view of Polchinski's realization ([Po1]) that
  a D-brane world-volume can couple to a Ramond-Ramond field in superstring theory
  (cf.\ {\sc Figure}~6-0-1),
 the Chern-Simons/Wess-Zumino term $S_{\CSWZ}$ for D-branes is also an indispensable part
 to understand the dynamics of D-branes.
 %
 %
 \begin{figure} [htbp]
  \bigskip
  \centering

  \includegraphics[width=0.8\textwidth]{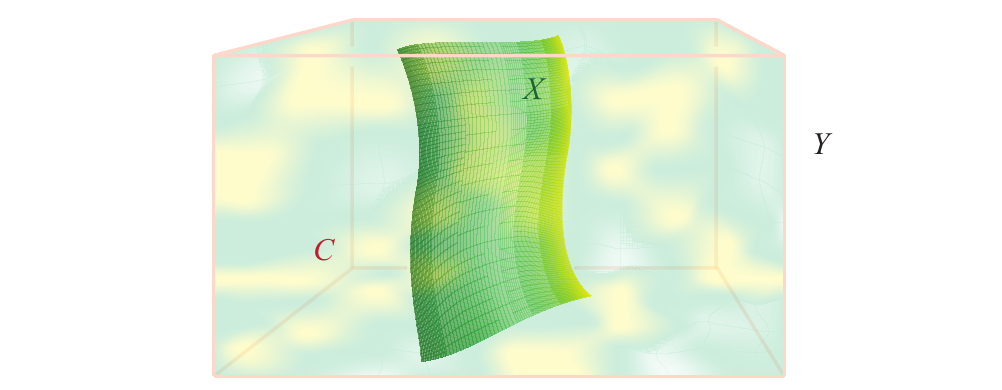}

  \bigskip
  \bigskip
  \centerline{\parbox{13cm}{\small\baselineskip 12pt
   {\sc Figure}~6-0-1.
    In superstring theory, a D-brane world-volume $X$ can couple to a Ramond-Ramond field $C$,
	  created by closed superstrings, on the space-time $Y$.
    Such coupling influences the dynamics of the D-brane as well. 	
    In the {\sc Figure},
 	 the Ramond-Ramond field is indicated by an etherlike foggy background with varying density.
       }}
  \bigskip
 \end{figure}	
 With the same essence as for the construction of $S^{(\Phi, g, B)}_{\DBI}(\varphi,\nabla)$,
 we
 construct in this section
 the Chern-Simons/Wess-Zumino action $S_{\CSWZ}(\varphi,\nabla)$
    for lower-dimensional D-branes, in which cases anomaly issues do not occur,
 derive their first variation formula and, hence,
 obtain their contribution to the equations of motions for D-branes.

To begin, with anomalies taken into account,
the coupling of a simple embedded D-brane
 $$
   f\; :\;  X\;  \hookrightarrow\;  Y
 $$
 with the Ramond-Ramond field $C$ on $Y$ (with a $B$-field background $B$),  is encoded in
  the Chern-Simons/Wess-Zumino action for D-branes,
  which takes the form
 $$
   S^{(C,B)}_{\CSWZ}(f,\nabla)\;  =\;
     T_{m-1}\,\int_X  \left(
	    f^{\ast}C
	    \wedge e^{2\pi\alpha^{\prime}F_{\nabla}+ f^{\ast}B}
	    \wedge \sqrt{\hat{A}(X)/\hat{A}(N_{X/Y})\,}\right)_{(m)}\,,
 $$
  where
   \begin{itemize}
    \item[\LARGE $\cdot$]
	 $m=\dimm X$, $\,T_{m-1}$ the D$(m-1)$-brane tension,
	 $\hat{A}(\,\cdot\,)$ the $\hat{A}$-class of the bundle in question,\\
     $N_{X/Y}$ the normal bundle of $X$ in $Y$ along $f$,
	
	\item[\LARGE $\cdot$]
     $(\,\cdots\,)_{(m)}$ is the degree-$m$ component of a differential form $(\,\cdots\,)$ on $X$.
   \end{itemize}	
The fact that the over coupling strength is identical with the D-brane tension $T_{m-1}$
 is a consequence of supersymmetry.
Readers are referred to, e.g.\ [Bac], [Joh], [Po3:vol.\ II], [Sz] for more details and references.

With the lesson already learned from studying the Dirac-Born-Infeld action,
formally the Chern-Simons/Wess-Zumino action generalizes to the case of coincident D-brane in our setting
 $$
   \varphi:(X^{\!A\!z},E;\nabla)\; \longrightarrow\; Y\,,
 $$	
 as
 {\small
 $$
  S_{\CSWZ}^{(C,B)}(f,\nabla)\;
   \stackrel{\mbox{\tiny{\rm formally}}}{=}\;
     T_{m-1}\,\int_X  \Real\left( \Tr
    	 \left(\varphi^{\diamond}C
		            \wedge e^{2\pi\alpha^{\prime}F_{\nabla}+\varphi^{\diamond}B}
	                \wedge  \sqrt{\hat{A}(X^{\!A\!z})/\hat{A}(N_{X^{\!A\!z}/Y})\,}
					\right) \right)_{(m)}\,.
 $$}
One now has to resolve in addition the following issues:
  \begin{itemize}
   \item[(8)]
    the anomaly factor
	 ``$\,\sqrt{\hat{A}(X^{\!A\!z})/\hat{A}(N_{X^{\!A\!z}/Y})}\,$",
     which presumably is an $\End_{\Bbb C}(E)$-valued differential form on $X$;	
	
   \item[(9)]	
    wedging of of $\End_{\Bbb C}(E)$-valued differential forms on $X\,$:\\
	$\varphi^{\diamond}C\wedge e^{2\pi\alpha^{\prime}F_{\nabla}+\varphi^{\diamond}B}
	  \wedge \sqrt{\hat{A}(X)/\hat{A}(N_{X^{\!A\!z}/Y})\,}$.	
  \end{itemize}

\bigskip

\subsection{Resolution of issues in the Chern-Simons/Wess-Zumino term}

We address in this subsection the resolution of Issue (9) in a way
 that is compatible with how we treat/interpret the Dirac-Born-Infled action in Sec.\ 3.
This gives us a version of the Chern-Simons/Wess-Zumino term $S_{\CSWZ}^{(C,B)}$
 for D-branes of dimension $-1,\, 0,\, 1$, and $2$
 that matches the Dirac-Born-Infeld action $S_{\DBI}^{(\Phi,g,B)}$ constructed in Sec.\ 3.

\vspace{6em}
	
\begin{flushleft}
{\bf From determinant function to wedge product of differential forms}
\end{flushleft}
For an ordinary differentiable manifold $M$,
  the wedge product of differential forms is determined by the wedge product
 of a collection of $1$-forms and the latter is set by the determinat function through the following rule
 $$
   (\omega^1\wedge\,\cdots\,\wedge\omega^s)(e_1\wedge\,\cdots\,\wedge e_s)\;=\;
     \Det(\omega^i(e_j))\,.
 $$
Here,
  $e_1,\,\cdots\,,\, e_s$ are vector fields on $M$,
  $\omega^1,\,\cdots\,,\, \omega^s$ are $1$-forms on $M$,
  $e_1\wedge\,\cdots\,\wedge e_s
      := \sum_{\sigma\in\scriptsizeSym_s}(-1)^{\sigma}
	        e_{\sigma(1)}\otimes\,\cdots\,\otimes e_{\sigma(s)}$, and
  $(\omega^i(e_j))$ is the $s\times s$ matrix with the $(i,j)$-entry $\omega^i(e_j)$.						
When $\omega^1,\,\cdots\,,\,\omega^s$ are enhanced to $1$-forms with value in a noncommutative ring $R$,
 the original determinant function $\Det(\,\cdot\,)$	needs to be enhanced/generalized as well
 to a determinant function for matrices with entries in $R$
  since now $\omega^j(e_i)\in R$, for $i,j=1,\,\ldots\,,\, s$.

Recall that in the study of non-Abelian Dirac-Born-Infeld action for the pair $(\varphi,\nabla)$,
 we ran into the need for such a generalization, too, and
 introduced the notion of symmetrized determinant $\SymDet$; cf.\ Definition~3.1.3.6.
There, we propose an Ansatz that this is the determinant function
  for the construction of the non-Abelian Dirac-Born-Infeld action,
  cf.\ Ansatz 3.1.3.11.
It is very natural to suggest that the same notion of determinant function is applied to
 both the Dirac-Born-Infeld term and the Chern/Simons/Wess-Zumino term  in the full action for D-branes:

\bigskip

\begin{ansatz}  {\bf [wedge product in the Chern-Simons/Wess-Zumino action]}$\;$ {\rm
 We interpret the wedge products that appear in the formal expresion for the Chern-Simons/Wess-Zumino term
  $S_{\CSWZ}^{(C,B)}$
  through the symmetrized determinant that applies to the above defining identities for wedge product;
  namely, we require that
  $$
   (\omega^1\wedge\,\cdots\,\wedge\omega^s)(e_1\wedge\,\cdots\,\wedge e_s)\;=\;
     \SymDet(\omega^i(e_j))
 $$
  for $\End_{\Bbb C}(E)$-valued $1$-forms $\omega^1,\,\cdots\,,\, \omega^s$ on $X$.
 Denote this generalized wedge product by $\odotwedge$.
}\end{ansatz}
				
\medskip

\begin{example} {\bf [$C_{(1)}\odotwedge F\odotwedge F$]}$\;$ {\rm
 Let
   $C_{(1)}=\sum_{\mu} C_{\mu}dx^{\mu}$ and
   $F=\sum_{\mu^{\prime},\nu^{\prime}}F_{\mu^{\prime}\nu^{\prime}}
      dx^{\mu^{\prime}}\wedge dx^{\nu^{\prime}}$
   be an $\End_{\Bbb C}(E)$-valued $1$-form and $2$-form respectively,
  then
  $$
    C_{(1)}\odotwedge F \odotwedge F\;=\;
     \sum_{\mu,\, \mu^{\prime},\, \nu^{\prime},\, \mu^{\prime\prime},\, \nu^{\prime\prime}}
      (C_{\mu}\odot F_{\mu^{\prime}\nu^{\prime}}
	       \odot F_{\mu^{\prime\prime}\nu^{\prime\prime}})\,
		dx^{\mu}\wedge dx^{\mu^{\prime}}\wedge dx^{\nu^{\prime}}
		\wedge dx^{\mu^{\prime\prime}}\wedge dx^{\nu^{\prime\prime}}\,,		
  $$
  where, recall that,
   $C_{\mu}\odot F_{\mu^{\prime}\nu^{\prime}}
      \odot F_{\mu^{\prime\prime}\nu^{\prime\prime}}$ is the symmetrized product of the triple
  $(C_{\mu}, F_{\mu^{\prime}\nu^{\prime}},
     F_{\mu^{\prime\prime}\nu^{\prime\prime}})$.	
}\end{example}	
 
\medskip

\begin{remark} {\it $[\,$on the ring
      $(C^{\infty}(\bigwedge^{\bullet}T^{\ast}X\otimes_{\Bbb R}\End_{\Bbb C}(E)),
           +, \odotwedge)\,]$}$\;$ {\rm (Cf.\ Remark 3.1.3.10.)
 Properties of $\odotwedge$ follow from
   properties of $\odot$ on $C^{\infty}(\End_{\Bbb C}(E))$ and
   properties of $\wedge$ on $C^{\infty}(\bigwedge^{\bullet}T^{\ast}X)$ .
 In particular, for example,
  $C_{(1)}\odotwedge F \odotwedge F$
    is directly defined for the triple $(C_{(1)}, F, F)$
	of $\End_{\Bbb C}(E)$-valued differential forms on $X$,
  rather than through a train of applications of a binary operation.
The three elements in $\bigwedge^5T^{\ast}X\otimes_{\Bbb R} \End_{\Bbb C}(E)$
  $$
      C_{(1)}\odotwedge F \odotwedge F\,, \hspace{2em}
	  (C_{(1)}\odotwedge F) \odotwedge F\,, \hspace{2em}
	  C_{(1)}\odotwedge (F \odotwedge F)
  $$
  in general  are all different.
The ring
  $(C^{\infty}(\bigwedge^{\bullet}T^{\ast}X\otimes_{\Bbb R}\End_{\Bbb C}(E)),
      +, \odotwedge)$
 is ${\Bbb Z}_2$-graded, ${\Bbb Z}_2$-commutative, but not associative.			
}\end{remark}

\medskip

\begin{lemma} {\bf [$\,\varphi^{\diamond}$, $\wedge$, and $\odotwedge\,$]}
 Let
   $\varphi:(X^{\!A\!z},E,\nabla)\rightarrow Y$ be an admissible map  and
   $\zeta_1, \,\cdots\,,\zeta_k$ differential forms on $Y$.
 Then
   $$
      \varphi^{\diamond}\zeta_1\odotwedge\,\cdots\,\odotwedge \varphi^{\diamond}\zeta_k\;
	  =\; \varphi^{\diamond}(\zeta_1\wedge\,\cdots\,\wedge\zeta_k)\,.
   $$
\end{lemma}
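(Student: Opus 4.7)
The plan is to reduce the identity to a local, monomial computation and exploit the admissibility of $(\varphi,\nabla)$, which forces all relevant coefficients of $\varphi^{\diamond}\zeta_j$ to lie in the commutative $C^{\infty}(X)$-subalgebra $A_{\varphi}\subset C^{\infty}(\End_{\Bbb C}(E))$, whereupon $\odotwedge$ collapses to the ordinary $\wedge$ and the identity becomes a classical statement about pullback of differential forms.

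First I would pick coordinate charts $U\subset X$ with coordinates $(x^1,\cdots,x^m)$ on which $E$ is trivialized, and $V\subset Y$ with coordinates $(y^1,\cdots,y^n)$ such that $\varphi(U)\subset V$. By ${\Bbb R}$-multilinearity of both $\odotwedge$ and $\wedge$, and ${\Bbb R}$-linearity of $\varphi^{\diamond}$, it suffices to treat monomials $\zeta_j = f_j\, dy^{i_{j,1}}\wedge\cdots\wedge dy^{i_{j,l_j}}$ with $f_j\in C^{\infty}(V)$. The local formula for pull-push used in the proof of Lemma~3.1.1 expresses the coefficients of $\varphi^{\diamond}\zeta_j$ as sums of products of $\varphi^{\sharp}(f_j)$ with various $D_{\mu}\varphi^{\sharp}(y^i)$. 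By the admissibility hypothesis (Definition~2.2.1), each such $D_{\mu}\varphi^{\sharp}(y^i)$ lies in $A_{\varphi}$, and so does $\varphi^{\sharp}(f_j)$, whence every coefficient of every $\varphi^{\diamond}\zeta_j$ lies in $A_{\varphi}$.

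Since $A_{\varphi}$ is commutative, the symmetrized product $\odot$ of any tuple of its elements equals their ordinary product, and thus $\SymDet$ agrees with $\Det$ on matrices with entries in $A_{\varphi}$. Consequently, restricted to pull-push forms, $\odotwedge$ coincides with the ordinary wedge product of $A_{\varphi}$-valued differential forms on $X$. The identity then reduces to the classical commutative statement: on the right one applies Lemma~3.1.1 to the scalar form $\zeta_1\wedge\cdots\wedge\zeta_k$ and uses that $\varphi^{\sharp}$ is a ring homomorphism (so that $\varphi^{\sharp}(f_1\cdots f_k)=\varphi^{\sharp}(f_1)\cdots\varphi^{\sharp}(f_k)$, which equals $\varphi^{\sharp}(f_1)\odot\cdots\odot\varphi^{\sharp}(f_k)$ by commutativity in $A_{\varphi}$); on the left one expands the $\odotwedge$-product of the $\varphi^{\diamond}\zeta_j$'s using the same Lemma~3.1.1 formula, obtaining the same local expression term by term.

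The main obstacle I expect is careful bookkeeping of two issues: first, $\odotwedge$ is non-associative in general (Remark~6.1.3), so one must verify that the $k$-fold $\odotwedge$-product of $\varphi^{\diamond}\zeta_j$'s is genuinely unambiguous; second, one must match the antisymmetrization built into $\SymDet$ (on the left) with that built into $\Det$ (hidden in the wedge on $Y$ and then transported by $\varphi^{\diamond}$) on the right. Both issues dissolve the moment the relevant arguments are observed to lie in the commutative $A_{\varphi}$: associativity of $\odotwedge$ is automatic on commuting entries, and $\SymDet=\Det$ there. Once these verifications are in place, the identity follows essentially tautologically from Lemma~3.1.1 and the ring-homomorphism property of $\varphi^{\sharp}$.
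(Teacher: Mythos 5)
Your proof is correct, and it rests on the same two observations the paper exploits: admissibility forces every coefficient of $\varphi^{\diamond}\zeta_j$ into the commutative subalgebra $A_{\varphi}$, and on $A_{\varphi}$-valued forms $\odot$ is ordinary multiplication, so $\odotwedge$ coincides with $\wedge$ (settling the non-associativity concern at once). Where you diverge is the mechanism for finishing: you reduce to monomials and match local coordinate expansions using the formula from the proof of Lemma~3.1.1.1 together with the ring-homomorphism property of $\varphi^{\sharp}$, whereas the paper runs the argument structurally through the surrogate factorization $\varphi^{\diamond} = {\pi_{\varphi}}_{\ast}\circ f_{\varphi}^{\ast}$. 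The paper's route is tighter: once $\odotwedge=\wedge$ on $A_{\varphi}$-valued forms, one needs only that $f_{\varphi}^{\ast}$ commutes with $\wedge$ (classical pullback) and that ${\pi_{\varphi}}_{\ast}$ does too, and the identity falls out in one line. Your local expansion achieves the same thing but has to silently reprove the functoriality of $\wedge$ under pullback inside the term-by-term match; the phrase "obtaining the same local expression term by term" is exactly where that bookkeeping lives, and it would be cleaner to note explicitly that the match is guaranteed by $f_{\varphi}^{\ast}(\zeta_1\wedge\cdots\wedge\zeta_k)=f_{\varphi}^{\ast}\zeta_1\wedge\cdots\wedge f_{\varphi}^{\ast}\zeta_k$, which your use of the ring-homomorphism property is implicitly invoking. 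Both arguments are sound; the surrogate viewpoint simply packages the commutative geometry more transparently.
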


\medskip

\begin{proof}
 Recall the surrogate $X_{\varphi}$ of $X^{\!A\!z}$ specified by $\varphi$
 and the built-in maps
 $$
   \xymatrix{
    X_{\varphi}\ar[rr]^-{f_{\varphi}}   \ar[d]^-{\pi_{\varphi}}
	   && Y\\
    X   &&\hspace{1.6em}.
   }
 $$
 Since
   the function-ring $A_{\varphi}:= C^{\infty}(X)\langle\Image\varphi^{\sharp}\rangle$
     of $X_{\varphi}$ is commutative,
   for differential forms $\zeta_1^{\prime},\,\cdots\,,\, \zeta_k^{\prime}$ on $X_{\varphi}$,
   $$
      {\pi_{\varphi}}_{\ast}\zeta_1^{\prime}
         \odotwedge\, \cdots\, \odotwedge
		   {\pi_{\varphi}}_{\ast}\zeta_k^{\prime} \;
		 =\;     {\pi_{\varphi}}_{\ast}\zeta_1^{\prime}
                         \wedge\, \cdots\, \wedge  {\pi_{\varphi}}_{\ast}\zeta_k^{\prime} \;
		 =\; {\pi_{\varphi}}_{\ast} (\zeta_1^{\prime}\wedge\,\cdots\,\wedge \zeta_k^{\prime})\,.
   $$
 It follows that
  \begin{eqnarray*}
   \lefteqn{
       \varphi^{\diamond}\zeta_1\odotwedge\,\cdots\,\odotwedge \varphi^{\diamond}\zeta_k\;\;
	 =\;\;  {\pi_{\varphi}}_{\ast}(f_{\varphi}^{\ast}\zeta_1)
	            \odotwedge\,\cdots\, \odotwedge
				{\pi_{\varphi}}_{\ast}(f_{\varphi}^{\ast}\zeta_k)     }\\[.6ex]
	&& =\;\;
	  {\pi_{\varphi}}_{\ast}
	     (f_{\varphi}^{\ast}\zeta_1\wedge\,\cdots\,\wedge f_{\varphi}^{\ast}\zeta_k)\;\;
	      =\;\;  {\pi_{\varphi}}_{\ast}
		                (f_{\varphi}^{\ast}(\zeta_1\wedge\,\cdots\,\wedge \zeta_k))\;\;
		  =\;\; \varphi^{\diamond}(\zeta_1\wedge\,\cdots\,\wedge\zeta_k)\,.
  \end{eqnarray*}	
\end{proof}

\bigskip

\begin{flushleft}
{\bf The Chern-Simons/Wess-Zumino action for lower dimensional D-branes}
\end{flushleft}
For a simple D-brane  world-volume $f:X\hookrightarrow Y$, the anomaly factor
 $\sqrt{\hat{A}(X)/\hat{A}(N_{X/Y})\,}=1$, for $\dimm X = m\le 3$.
This may not hold for $\varphi$
 since $\varphi(X^{\!A\!z})$ can have fuzzy/nilpotent structure of nilpotency $\le r$
 (the rank of $E$ as a complex vector bundle on $X$), which can be large
 even when the dimension $m$ of $X$ is small.
However, if one formally assume that the same is true,
 then for lower dimensional D-branes
 (i.e.\ D$(-1)$-, D$0$-, D$1$-, D$2$-branes),
 one has:
 (Assuming that $B=\sum_{i,j}B_{ij}dy^i\otimes dy^j$, $B_{ji}=-B_{ij}$)
 
  \begin{itemize}
   \item[\LARGE $\cdot$]
    For {\it D$(-1)$-brane} world-point $(m=0)\,$:
     $$
	   S_{\CSWZ}^{(C_{(0)})}(\varphi)\;
	   =\; T_{-1}\,\cdot\, \Tr(\varphi^{\diamond}C_{(0)})\;
	   =\; T_{-1}\,\cdot\, \Tr(\varphi^{\sharp}(C_{(0)}))\,.
	 $$
   
   \item[\LARGE $\cdot$]
    For {\it D-particle} world-line $(m=1)\,$:
	 Assume that  $C_{(1)}=\sum_{i=1}^nC_i\,dy^i$ locally; then
     $$
	   S_{\CSWZ}^{(C_{(1)})}(\varphi,\nabla)\;
	   =\; T_0 \int_X  \Tr(  \varphi^{\diamond}C_{(1)}) \;\;\;	
	   \stackrel{\mbox{\tiny locally}}{=}\;
	   T_0\int_U\Tr
	      \!\left(\rule{0ex}{1em}\right.
		     \sum_{i=1}^n \varphi^{\sharp}(C_i) \cdot D_x\varphi^{\sharp}(y^i)
			  \!\left.\rule{0ex}{1em}\right) dx\,.	
	 $$
   Here, $D_x:=D_{\partial/\partial x}$.
   
   \item[\LARGE $\cdot$]
    For {\it D-string} world-sheet $(m=2)\,$:
	Assume that $C_{(2)}= \sum_{i,j=1}^nC_{ij}\,dy^i\otimes dy^j$ locally,\\
	 with $C_{ij}=-C_{ji}$;
	then
	{\footnotesize	
    \begin{eqnarray*}
	 \lefteqn{
	 S_{\tinyCSWZ}^{(C_{(0)}, C_{(2)}, B)}(\varphi,\nabla)\;\;
	   = \;\; T_1 \int_X    \Real( \Tr(
	                \varphi^{\diamond}C_{(2)}\, +\,  \varphi^{\diamond}(C_{(0)}B)
			         +\, 2\pi\alpha^{\prime} \varphi^{\sharp}(C_{(0)})\odot F_{\nabla} ))   }\\
     && \hspace{.7em}=\hspace{1.3em}
	           T_1 \int_X    \Real( \Tr(
	                \varphi^{\diamond}(C_{(2)}+C_{(0)}B   )\,
			         +\, \pi\alpha^{\prime} \varphi^{\sharp}(C_{(0)})F_{\nabla}\,
					 +\, \pi\alpha^{\prime}F_{\nabla}\varphi^{\sharp}(C_{(0)})
				      ))    \\
     && \stackrel{\mbox{\tiny locally}}{=}\;\;
	      T_1\int_U
	       \Real    \!\left(\rule{0ex}{1em}\right.\!\!
		      \Tr  \!\left(\rule{0ex}{1em}\right.
		      \sum_{i,j=1}^n\,
			      \varphi^{\sharp}(C_{ij}+C_{(0)}B_{ij})\,
				    D_{\!x^1}\varphi^{\sharp}(y^i)\,D_{\!x^2}\varphi^{\sharp}(y^j)\,  \\
     && \hspace{4em}					
                +\, \pi\alpha^{\prime}
				         \varphi^{\sharp}(C_{(0)})\,[\nabla_{\!x^1},\nabla_{\!x^2}]\,
			    +\, \pi\alpha^{\prime}
					     [\nabla_{\!x^1},\nabla_{\!x^2}]\, \varphi^{\sharp}(C_{(0)})		
			   \left.\rule{0ex}{1em}\right)\!\!
			        \left.\rule{0ex}{1em}\right)
		      d^2\!\boldx\,.		
	 \end{eqnarray*}}
	 Here, $D_{\!x^1}:= D_{\partial/\partial x^1}$, $D_{\!x^2}:=D_{\partial/\partial x^2}$
	   and  $\nabla_{\!x^1}:= \nabla_{\partial/\partial x^1}$,
	           $\nabla_{\!x^2}:= \nabla_{\partial/\partial x^2}$.
	
   \item[\LARGE $\cdot$]
    For {\it D-membrane} world-volume $(m=3)\,$:
	Assume that
	   $C_{(1)}=\sum_{i=1}^nC_i\,dy^i$ and
	   $C_{(3)}=\sum_{i,j,k=1}^nC_{ijk}\,dy^i\otimes dy^j\otimes dy^k$ locally,
	   with $C_{ijk}$ alternating with respect to $ijk$;
	then
  {\footnotesize
	\begin{eqnarray*}
	 \lefteqn{
	  S_{\tinyCSWZ}^{(C_{(1)}, C_{(3)}, B)}(\varphi,\nabla)\;\;
	   =\;\;  T_2 \int_X    \Real( \Tr(
	                \varphi^{\diamond}C_{(3)}\,
					+\,  \varphi^{\diamond}(C_{(1)}\wedge B)\,
	                +\, 2\pi\alpha^{\prime}\,
					            \varphi^{\diamond}C_{(1)}\odotwedge F_{\nabla}\,
			                ))     }   \\
     && \stackrel{\mbox{\tiny locally}}{=}\;\;
	   T_2\int_U
	       \Real    \!\left(\rule{0ex}{1em}\right.\!\!
		      \Tr  \!\left(\rule{0ex}{1em}\right.
			     \sum_{i,j,k=1}^n
				    \varphi^{\sharp}(C_{ijk}+C_iB_{jk}+C_jB_{ki}+C_kB_{ij})\,
					   D_{\!x^1}\varphi^{\sharp}(y^i)\,
					   D_{\!x^2}\varphi^{\sharp}(y^j)\,
					   D_{\!x^3}\varphi^{\sharp}(y^k) \\
     && \hspace{4em}					
		      +\, \pi\alpha^{\prime}
			        \sum_{(\lambda\mu\nu)\in\scriptsizeSym_3}
			          \sum_{i=1}^n\,
			           (-1)^{(\lambda\mu\nu)}
					     \left(\rule{0ex}{1em}\right.
						     \varphi^{\sharp}(C_i)\,D_{\!x^{\lambda}}(\varphi^{\sharp}(y^i))\,
						           [\nabla_{\!x^{\mu}}, \nabla_{\!x^{\nu}}]\,
                            +\, [\nabla_{\!x^{\mu}}, \nabla_{\!x^{\nu}}]\,
							          \varphi^{\sharp}(C_i)\,D_{\!x^{\lambda}}\varphi^{\sharp}(y^i)
                           \!\left.\rule{0ex}{1em}\right)
			   \!\!\left.\rule{0ex}{1em}\right)\!\!
			        \left.\rule{0ex}{1em}\right)
		      d^3\!\boldx\,.		    							
	\end{eqnarray*}}
  \end{itemize}
The technical issue of anomaly is the focus of another work.
For the moment, we will take the above as our {\it working Anzatz}
 for the Chern-Simons/Wess-Zumino action for lower-dimensional D-branes.

\bigskip

\begin{remark} {\it $[$What is Ramond-Ramond field?$\,]$}\; {\rm
 From the way a D-brane couple to a Ramond-Ramond field,
  one learns that a Ramond-Ramond field is to a D-brane as a $B$-field is to a fundamental string.
 In the latter case, while a $B$-field is taken to be a $2$-form on the space-time $Y$ to begin with,
   after years of development one learns
      that the meaning/precise definition of $B$-field goes much beyond just a $2$-form on $Y$.
 It's not yet settled what it really is, but it is known that structures on loop spaces and gerbes are involved
  (e.g.\ [Bry]).
 One expects thus that, in parallel,
   a Ramond-Ramond field go beyond just a differential form on the space-time $Y$.
 Under our setting, the loop space in the case of $B$-field is expected to be replaced by
 a map-space $\Map((Z^{\!A\!z},E;\nabla),Y)$,
 where $Z^{\!A\!z}$ is an Azumaya/matrix manifold represemnting a D-brane (not D-brane world-volume).
 For example, the Ramond-Ramond $2$-field $C_{(2)}$ in the Type IIB superstring theory, when fully developed,
  is expected to be related to a matrix-loop space $\Map((S^{1, A\!z},E,\nabla), Y)$
  and structures thereupon.
 Furthermore,
  when $E$ has rank $>1$,
  one expects also that $C_{(2)}$,  being a field sourced by D-strings, is enhanced to non-Abelian-valued.
 All these issues, and beyond, remain to be understood.
}\end{remark}

\bigskip
 
\subsection{The first variation and the contribution to the equations of motion}

Under the same setup as in Sec.~5.2,
  we derive in this subsection
  the first variation of the Chern-Simons/Wess-Zumino action $S_{\CSWZ}^{(C,B)}$
   for lower-dimensional D-brane world-volumes.
The additional contribution to the equations of motion for such lower-dimensional D-branes
  due to the additional term $S_{\CSWZ}^{(C,B)}$
    in the total action for D-brane world-volume would then follow.

\bigskip

\subsubsection{D$(-1)$-brane world-point $(m=0)$}

For a D$(-1)$-brane world-point,
 $\dimm X = 0$, $\nabla=0$, and
 $S_{\CSWZ}^{(C_{(0)})}(\varphi)
    = T_{-1}\,\cdot\, \Tr(\varphi^{\sharp}(C_{(0)}))$.
It follows that
{\footnotesize
 \begin{eqnarray*}
  \lefteqn{
   \left.\mbox{\Large $\frac{d}{dt}$}\right|_{t=0}\,
	 S_{\tinyCSWZ}^{(C_{(0)})}(\varphi_T)\;
     =\; 	
	 T_{-1}\,
	  \left.\mbox{\Large $\frac{d}{dt}$}\right|_{t=0}\,
	    \Tr(\varphi_T^{\sharp}(C_{(0)}))    \;\;\;
	 =\;\;\;
	  T_{-1}\,	
	    \Tr
		 \left(\rule{0ex}{1em}\right.
		 \left.\mbox{\Large $\frac{d}{dt}$}\right|_{t=0}\,
	      \varphi_T^{\sharp}(C_{(0)})
		 \left.\rule{0ex}{1em}\right)                 }\\
  &&=\;\:
     T_{-1}\,	
	    \Tr
		 \left(\rule{0ex}{1em}\right.
     \sum_{j=1}^n
	  \left(\rule{0ex}{1em}\right.	
	   \sum_{d,\,\scriptsizeboldd,\,\vec{\pi}}\,
	     R^{\,C_{(0)}}[1]_{(\scriptsizeboldd,\,\vec{\pi})}^{\;R}(\varphi^{\sharp}(\boldy))
	     \cdot
 	     R^{\,C_{(0)}}[1]_{(\scriptsizeboldd,\,\vec{\pi})}^{\;L}(\varphi^{\sharp}(\boldy))
	  \left.\rule{0ex}{1em}\right)
	  \cdot  \dot{\varphi}^{\sharp}(y^j)
		 \left.\rule{0ex}{1em}\right)\\
  &&\:=:
     T_{-1}\Tr
	    \left(\rule{0ex}{1em}\right.
      	  \sum_{j=1}^n\,
              \NL^{(C_{(0)});\delta\varphi}_j (\varphi) \cdot   \dot{\varphi}^{\sharp}(y^j)
		 \left.\rule{0ex}{1em}\right)\,.
 \end{eqnarray*}
}
Here, the following identities are employed:
\begin{eqnarray*}
  \lefteqn{
   \left.\mbox{\Large $\frac{d}{dt}$}\right|_{t=0}
          \left(\varphi_T^{\sharp}(C_{(0)})\right)\;\;
   =\;\;  R^{\,C_{(0)}}[1]|
             _{\scriptsizeboldy^{\tinyboldd}
 			                    \rightsquigarrow\,
								   ^{\mbox{\large $\cdot$}}\!
								     (\varphi^{\sharp}(\scriptsizeboldy^{\tinyboldd}))}}\\
  && =\;\;
       \sum_{j=1}^n\,
	   \sum_{d=0}^{\mbox{\tiny $\bullet$}}
	   \sum_{\;\,\scriptsizeboldd,\, |\scriptsizeboldd|=d}
	   \sum_{\;\vec{\pi}
	                         \in \vec{\mbox{\scriptsize\it Ptn}}(1, d),\,
					   i_{(\vec{\pi},\,\tinyboldd)}= j}	                           \\
    && \hspace{7em}	
	  ([\partial^{\vec{\pi}}_{y^j}]
	             R^{\,C_{(0)}}[1]_{(\scriptsizeboldd)})^L(\varphi^{\sharp}(\boldy))
	   \cdot
	   \dot{\varphi}^{\sharp}(y^j)
	   \cdot
	 ([\partial^{\vec{\pi}}_{y^j}]
	             R^{\,C_{(0)}}[1]_{(\scriptsizeboldd)})^R(\varphi^{\sharp}(\boldy)) \\									 
  && =:\:\;
    \sum_{j=1}^n\,
	\sum_{d,\,\scriptsizeboldd,\,\vec{\pi}}\,
	  R^{\,C_{(0)}}[1]_{(\scriptsizeboldd,\,\vec{\pi})}^{\;L}(\varphi^{\sharp}(\boldy))
	  \cdot \dot{\varphi}^{\sharp}(y^j)
	  \cdot
	  R^{\,C_{(0)}}[1]_{(\scriptsizeboldd,\,\vec{\pi})}^{\;R}(\varphi^{\sharp}(\boldy))\,.	
 \end{eqnarray*}

In this case, $S_{\DBI}^{(\Phi,g,B)}(\varphi)=0$ always    and
 the full action
   $S_{\DBI}^{(\Phi,g,B)}+S_{\CSWZ}^{(C_{(0)})}$
    is simply $S_{\CSWZ}^{(C_{(0)})}$.
The full system of equations of motion is thus
 $$
  \NL_j^{(\Phi, g, B, C_{(0)}); \delta\varphi}(\varphi)\;
   :=\;   \NL^{(C_{(0)});\delta\varphi}_j (\varphi)\;\;=\;\;0\,,
 $$			
 $j=1,\,\ldots\,,\,n\,$,
 for D$(-1)$-brane.
Such world-points give rise to instantons in space-time.

\bigskip

\subsubsection{D-particle world-line $(m=1)$}

For a D-particle world-line,  $\dimm X=1$ and
 {\footnotesize
  $$
    S_{\tinyCSWZ}^{(C_{(1)})}(\varphi,\nabla)\;
	 =\;
  	 T_0\int_U\Tr
	      \!\left(\rule{0ex}{1em}\right.
		     \sum_{i=1}^n \varphi^{\sharp}(C_i) \cdot D_x\varphi^{\sharp}(y^i)
			  \!\left.\rule{0ex}{1em}\right) dx
  $$}
  locally over $X$.
It follows that
{\footnotesize
 \begin{eqnarray*}
  \lefteqn{
    \left.\mbox{\Large $\frac{d}{dt}$}\right|_{t=0}\,
	 S_{\tinyCSWZ}^{(C_{(1)})}(\varphi_T,\nabla^T)\;
	  = \;
	  T_0\int_U\Tr
	      \!\left(\rule{0ex}{1em}\right.
		     \sum_{i=1}^n
			    \dot{\varphi}^{\sharp}(C_i) \cdot D_x\varphi^{\sharp}(y^i)\,
				 +\, \varphi^{\sharp}(C_i)
				         \cdot (D_x\dot{\varphi}^{\sharp}(y^i)\,
						     -\, [\varphi^i(\sharp)(y^i),  \dot{A}_x])				
			  \!\left.\rule{0ex}{1em}\right) dx                    }\\
    &&
	  = \;
	   T_0\, \Tr
	       \!\left(\rule{0ex}{1em}\right.\!
		      \sum_{i=1}^n
			   \varphi^{\sharp}(C_i)\,\dot{\varphi}^{\sharp}(y^i)			
		   \!\!\left.\rule{0ex}{1em}\right)\!\!\left.\rule{0ex}{1em}\right|_{\partial U}  \\
    && \hspace{6em}		
	   +\,
	  T_0\int_U\Tr
	      \!\left(\rule{0ex}{1em}\right.
		     \sum_{i=1}^n
			    \dot{\varphi}^{\sharp}(C_i) \cdot D_x\varphi^{\sharp}(y^i)\,
				 -\, D_x\varphi^{\sharp}(C_i)\cdot \dot{\varphi}^{\sharp}(y^i)\,
                 -\, \varphi^{\sharp}(C_i)\cdot [\varphi^i(\sharp)(y^i), \dot{A}_x]				
			  \!\left.\rule{0ex}{1em}\right) dx                    \\			
	&& =\; T_0\,\BT^{(\varphi;C_{(1)})}(\dot{\varphi}^{\sharp}(\boldy))|_{\partial U}\,
	              +\,   T_0 \int_U\,
				    \Tr
				     \left(\rule{0ex}{1em}\right.
					  \sum_{j=1}^n\,
					    \NL^{(C_{(1)});\delta\varphi}_j(\varphi,\nabla)
					    \cdot   \dot{\varphi}^{\sharp}(y^j)\;
						+\; \NL^{(C_{(1)});\delta\nabla}_x(\varphi)\cdot\dot{A}_x
					 \!\left.\rule{0ex}{1em}\right)dx\,,
 \end{eqnarray*}}
 where
{\footnotesize
 \begin{eqnarray*}
  \BT^{(\varphi;C_{(1)})}(\dot{\varphi}^{\sharp}(\boldy))
   & =
   & \Tr
	       \!\left(\rule{0ex}{1em}\right.\!
		      \sum_{i=1}^n
			   \varphi^{\sharp}(C_i)\,\dot{\varphi}^{\sharp}(y^i)			
		   \!\!\left.\rule{0ex}{1em}\right)\,, \\
  \NL^{(C_{(1)});\delta\varphi}_j(\varphi,\nabla) \\[-1ex]
    && \hspace{-6em}=\;\;
	  -\, D_x\varphi^{\sharp}(C_j)\,
	    +\,\sum_{i=1}^n
	          \sum_{d, \scriptsizeboldd,\vec{\pi};\; |\scriptsizeboldd|=d, i_{(\vec{\pi},\tinyboldd)}=j}
			  R^{C_i}[1]_{(\scriptsizeboldd,\vec{\pi})}^R(\varphi^{\sharp}(\boldy))
			    \cdot  D_x\varphi^{\sharp}(y^i)
				\cdot R^{C_i}[1]_{(\tinyboldd,\vec{\pi})}^L(\varphi^{\sharp}(\boldy))\,,    \\
  \NL^{(C_{(1)});\delta\nabla}_x(\varphi)
    & = & \sum_{i=1}^n\,[\varphi^{\sharp}(y^i)\,,\, \varphi^{\sharp}(C_i)]
	           \;\; =\;\; 0 \,.
 \end{eqnarray*}}

The full action
 $S_{\DBI}^{(\Phi,g,B)}(\varphi,\nabla)+S_{\CSWZ}^{(C_{(1)})}(\varphi,\nabla)$
 gives the system of equations of motion for a D-particle moving in $Y$:
 \begin{eqnarray*}
  \NL^{(\Phi, g, B, C_{(1)}); \delta\varphi}_j(\varphi,\nabla)
   & :=  &  \NL^{(\Phi,g,B); \delta\varphi}_j(\varphi,\nabla)\,
                   +\, \NL^{(C_{(1)});\delta\varphi}_j (\varphi,\nabla)\;\;=\;\;0\,,  \\
  \NL^{(\Phi,g,B,C_{(1)});\delta\nabla}_x(\varphi,\nabla)
   & := &  \NL^{(\Phi,g,B);\delta\nabla}_x(\varphi,\nabla)  \,     \hspace{9em}=\;\;0\,,
 \end{eqnarray*}
 $j=1,\,\ldots\,,\,n\,$.
 
For the current case, the curvature $F_{\nabla}$ of $\nabla$ is zero and
 the above system  may still involves $A_x$ but not its differentials with respect to $x$.
I.e.\  it is a system of differential equations on $\varphi$ but non-differential equations on $\nabla$.
 $\nabla$ is thus non-dynamical, as is anticipated.
Thus, after a re-trivialization fo the fundamental module $E$ on $X$, one may assume that $A_x\equiv 0$
 and the above system is reduced to a system
  $$
    \NL^{(\Phi,g,B,C_{(1)});\delta\varphi}(\varphi)\;=\;0\,,
	  \hspace{3em}j=1,\,\ldots\,,\, n\,,
  $$
 of second-order nonlinear differential equations that involve $\varphi$ alone.

\bigskip

\subsubsection{D-string world-sheet $(m=2)$}

Denote
 $$
  \breve{C}_{(2)}\;
   :=\;  C_{(2)}+C_{(0)}B\;
   =\; \sum_{ij}(C_{ij}+C_{(0)}B_{ij})\, dy^i\otimes dy^j\;
   =\; \sum_{i,j}\breve{C}_{ij}dy^i\otimes dy^j
 $$
  in local coordinates of $Y$.
Then,
for a D-string world-sheet, $\dimm X=2$ and
{\footnotesize
  \begin{eqnarray*}
	 S_{\tinyCSWZ}^{(C_{(0)},C_{(2)},B)}(\varphi,\nabla)
	   & =
	   & T_1\int_U
	       \Real    \!\left(\rule{0ex}{1em}\right.\!\!
		      \Tr  \!\left(\rule{0ex}{1em}\right.
		      \sum_{i,j=1}^n\,
			      \varphi^{\sharp}(\breve{C}_{ij})\,
				    D_1\varphi^{\sharp}(y^i)\,D_2\varphi^{\sharp}(y^j)\,  \\
     && \hspace{8em}					
                +\, \pi\alpha^{\prime}
				         \varphi^{\sharp}(C_{(0)})\,F_{12}\,
			    +\, \pi\alpha^{\prime}
					     F_{12}\, \varphi^{\sharp}(C_{(0)})		
			   \left.\rule{0ex}{1em}\right)\!\!
			        \left.\rule{0ex}{1em}\right)
		      d^2\!\boldx		     	
  \end{eqnarray*}}
 
 \noindent
 locally over $X$.
(Here,
     $D_1:=D_{\partial/\partial x^1}$, $D_2:=D_{\partial/\partial x^2}$, and
     $F_{12}:=[\nabla_{\!x^1},\nabla_{\!x^2}]$ is the curvature of $\nabla$.)
It follows then from a straightforward computation that

 {\footnotesize
 \begin{eqnarray*}
  \lefteqn{
    \left.\mbox{\Large $\frac{d}{dt}$}\right|_{t=0}\,
	 S_{\tinyCSWZ}^{(C_{(0)},C_{(2)},B)}(\varphi_T,\nabla^T) }\\
  &&  =
	   T_1\int_U
	       \Real    \!\left(\rule{0ex}{1em}\right.\!\!
		      \Tr  \!\left(\rule{0ex}{1em}\right.
			   \sum_{i,j=1}^n
			        \!\left(\rule{0ex}{1em}\right.\!\!
					\dot{\varphi}^{\sharp}(\breve{C}_{ij})\,
				    D_1\varphi^{\sharp}(y^i)\,D_2\varphi^{\sharp}(y^j)\,
                  +\, \varphi^{\sharp}(\breve{C}_{ij})\,
				    (D_1\dot{\varphi}^{\sharp}(y^i)- [\varphi^{\sharp}(y^i), \dot{A}_1])\,
					  D_2\varphi^{\sharp}(y^j)\,           \\[-2.4ex]
    && \hspace{18em}					
                  +\, \varphi^{\sharp}(\breve{C}_{ij})\, D_1\varphi^{\sharp}(y^i)\,
					(D_2\dot{\varphi}^{\sharp}(y^j) - [\varphi^{\sharp}(y^j), \dot{A}_2] )	
                     \!\!\left.\rule{0ex}{1em}\right)						\\
	&& \hspace{8em}	      	
	    +\, \pi\alpha^{\prime}
				\dot{\varphi}^{\sharp}(C_{(0)})\,F_{12}\,
        +\, \pi\alpha^{\prime}
				 \varphi^{\sharp}(C_{(0)})\, (D_1\dot{A}_2 - D_2\dot{A}_1)\,	\\
    && \hspace{8em}				
	    +\, \pi\alpha^{\prime}
			 (D_1\dot{A}_2 - D_2\dot{A}_1)\, \varphi^{\sharp}(C_{(0)})\,
        +\, \pi\alpha^{\prime}
			   F_{12}\, \dot{\varphi}^{\sharp}(C_{(0)})		\,	
			   \left.\rule{0ex}{1em}\right)
			        \!\!\!\left.\rule{0ex}{1em}\right)
		      d^2\!\boldx	         \\
   &&
    =\; T_1\,\int_{\partial U}
	  \Real(\BT^{(\varphi,\nabla; C_{(0)}, C_{(2)},B)}
	                   (\dot{\varphi}^{\sharp}(\boldy),\dot{\boldA}))\\
   && \hspace{3.7em}
    +\, T_1\,\int_U
       \Real	
	    \!\left(\rule{0ex}{1em}\right.
	    \!\!\Tr
	    \!\left(\rule{0ex}{1em}\right.
	      \sum_{j=1}^n
		  \NL^{(C_{(0)}, C_{(2)}, B);\delta\varphi}_j(\varphi,\nabla)\,
		     \cdot\,  \dot{\varphi}^{\sharp}(y^j)
		 +\;
		 \sum_{\nu=1}^2
           \NL^{(C_{(0)});\delta\nabla}_{\nu}(\varphi,\nabla)\,
		     \cdot\, \dot{A}_{\nu}
		\!\left.\rule{0ex}{1em}\right)		
	    \!\!\!\left.\rule{0ex}{1em}\right)	
	    d^2\!\boldx\,,	 		  			 			
 \end{eqnarray*}}
where

\bigskip

${\LARGE}\cdot$ the {\it boundary term} is given by

{\footnotesize
\begin{eqnarray*}
 \lefteqn{
  \BT^{(\varphi,\nabla; C_{(0)}, C_{(2)}, B)}
	                   (\dot{\varphi}^{\sharp}(\boldy),\dot{\boldA})\,,
      \hspace{1em}\mbox{a $1$-form on $U$}\,,					   }\\
   &&  =\;
     \Tr \left(\rule{0ex}{1em}\right.
	    \sum_{j=1}^n
		   \left(\rule{0ex}{1em}\right.   		
		     \sum_{i=1}^n
			  D_2\varphi^{\sharp}(y^i)\,\varphi^{\sharp}(\breve{C}_{ji})
           \left.\rule{0ex}{1em}  \right) 	
				\cdot  \dot{\varphi}^{\sharp}(y^j)\,
			 +\, 2\pi\alpha^{\prime} \varphi^{\sharp}(C_{(0)})\cdot \dot{A}_2		
	       \left.\rule{0ex}{1em}  \right)
		   dx^2  \\
   &&  \hspace{6em}
     -\, \Tr \left(\rule{0ex}{1em}\right.
	    \sum_{j=1}^n
		   \left(\rule{0ex}{1em}\right.   		
		     \sum_{i=1}^n
			  \varphi^{\sharp}(\breve{C}_{ij})\, D_1\varphi^{\sharp}(y^i)
           \left.\rule{0ex}{1em}  \right) 	
				\cdot  \dot{\varphi}^{\sharp}(y^j)\,
			 -\, 2\pi\alpha^{\prime} \varphi^{\sharp}(C_{(0)})\cdot \dot{A}_1		
	       \left.\rule{0ex}{1em}  \right)
		   dx^1\,,
\end{eqnarray*}}

\bigskip

{\LARGE $\cdot$} the {\it subsystem associated to variations of $\varphi\,$}:

{\footnotesize
\begin{eqnarray*}
 \lefteqn{
  \NL^{(C_{(0)}, C_{(2)},B);\delta\varphi}_j(\varphi,\nabla)\;    }\\
  &&
   =\;
	 \sum_{i^{\prime}, j^{\prime}=1}^n
	  \sum_{\;d,\scriptsizeboldd, \vec{\pi};\,
	                      |\scriptsizeboldd|=d, i_{(\vec{\pi},\tinyboldd)}=j }
            R^{\breve{C}_{i^{\prime}j^{\prime}}}[1]_{(\scriptsizeboldd,\vec{\pi})}^R
			    (\varphi^{\sharp}(\boldy))\,
            D_1\varphi^{\sharp}(y^{i^{\prime}})\,
            D_2\varphi^{\sharp}(y^{j^{\prime}})\,				
			R^{\breve{C}_{i^{\prime}j^{\prime}}}[1]_{(\scriptsizeboldd,\vec{\pi})}^L
                 (\varphi^{\sharp}(\boldy))		    \\
    &&  \hspace{2em}
	  -\, \sum_{i=1}^n
	         \left(\rule{0ex}{1em}\right.
             D_2\varphi^{\sharp}(y^i)\,D_1\varphi^{\sharp}(\breve{C}_{ji})\,
             +\, D_2\varphi^{\sharp}(\breve{C}_{ij})\, D_1\varphi^{\sharp}(y^i)\,
             +\, [F_{12}, \varphi^{\sharp}(y^i)]	\cdot \varphi^{\sharp}(\breve{C}_{ji})
             \left.\rule{0ex}{1em}\right)			 \\			
    && \hspace{2em}
	  +\, 2\pi\alpha^{\prime}
	        \sum_{\;d,\scriptsizeboldd, \vec{\pi};\,
	                      |\scriptsizeboldd|=d, i_{(\vec{\pi},\tinyboldd)}=j }
            R^{C_{(0)}}[1]_{(\scriptsizeboldd,\vec{\pi})}^R
			    (\varphi^{\sharp}(\boldy))\,
            F_{12}\,				
			R^{C_{(0)}}[1]_{(\scriptsizeboldd,\vec{\pi})}^L
                 (\varphi^{\sharp}(\boldy))\,,		    			
\end{eqnarray*}}
 
\bigskip
 
{\LARGE $\cdot$} the {\it subsystem associated to variations of $\nabla\,$}:
  
{\footnotesize
$$
 \NL^{(C_{(0)});\delta\nabla}_1(\varphi,\nabla) \;
    =\;    2\pi\alpha^{\prime}D_2\varphi^{\sharp}(C_{(0)})\,,
  \hspace{2em}
  \NL^{(C_{(0)});\delta\nabla}_2(\varphi,\nabla) \;
    =\;    -\,2\pi\alpha^{\prime}D_1\varphi^{\sharp}(C_{(0)})\,.
$$}

\begin{itemize}
 \item[]
 Note that, as a consequence of Leibniz rule or integration by parts, there are at first summands 
  \begin{eqnarray*}
   -D_2\varphi^{\sharp}(y^j)\,\varphi^{\sharp}(\breve{C}_{ij})\,\varphi^{\sharp}(y^i)\,
    +\, \varphi^{\sharp}(y^i)\, D_2\varphi^{\sharp}(y^j)\, \varphi^{\sharp}(\breve{C}_{ij})
	  &&  \mbox{in $\;\NL^{(C_{(0)});\delta\nabla}_1(\varphi,\nabla)$}\,,  \\
   -\varphi^{\sharp}(\breve{C}_{ij})\, D_1\varphi^{\sharp}(y^i)\, \varphi^{\sharp}(y^j)\,
    +\, \varphi^{\sharp}(y^j)\, \varphi^{\sharp}(\breve{C}_{ij})\, D_1\varphi^{\sharp}(y^i)
	  &&  \mbox{in $\;\NL^{(C_{(0)});\delta\nabla}_2(\varphi,\nabla)$}\,,
  \end{eqnarray*}
  respectively.
 However, they vanish for $(\varphi, \nabla)$ admissible.
 Thus, the $2$-forms $C_{(2)}$ and $B$ has no consequence to
   the variation of $S^{(C_{(0)}, C_{(2)}, B)}_{\CSWZ}$ with respect to $\nabla$.
 This is anticipated since there is no coupling term between $C_{(2)}$, $B$ and $\nabla$
  in $S_{\CSWZ}^{(C_{(0)}, C_{(2)}, B)}$.
\end{itemize}

The contribution of the Chern-Simon/Wess-Zumino term
 $S_{\CSWZ}^{(C_{(0)}, C_{(2)}, B)}$ to the equations of motion for
 a $D$-string follows immediately.

\bigskip

\subsubsection{D-membrane world-volume $(m=3)$}

Denote
 \begin{eqnarray*}
  \breve{C}_{(3)}
   & :=\:   &  C_{(3)}+C_{(1)}\wedge B  \\
   & =
      & \sum_{i,j,k}
	       (C_{ijk}+C_iB_{jk}+C_jB_{ki}+C_kB_{ij})\, dy^i\otimes dy^j\otimes dy^k\;\;
       =\;\; \sum_{i,j,k}\breve{C}_{ijk}dy^i\otimes dy^j\otimes dy^k
 \end{eqnarray*}
 in local coordinates of $Y$.
Then,
for D-membrane world-volume, $\dimm X=3$ and
 {\footnotesize
 \begin{eqnarray*}
  \lefteqn{
  S_{\tinyCSWZ}^{(C_{(1)}, C_{(3)},B)}(\varphi,\nabla)} \\
   &&  =
	   T_2\int_U
	       \Real    \!\left(\rule{0ex}{1em}\right.\!\!
		      \Tr  \!\left(\rule{0ex}{1em}\right.
			     \sum_{i,j,k=1}^n
				    \varphi^{\sharp}(\breve{C}_{ijk})\,
					   D_1\varphi^{\sharp}(y^i)\,
					   D_2\varphi^{\sharp}(y^j)\,
					   D_3\varphi^{\sharp}(y^k) \\
     && \hspace{7em}					
		      +\, 2\pi\alpha^{\prime}
			        \sum_{(\lambda\mu\nu)\in\scriptsizeSym_3}
			          \sum_{i=1}^n\,
			           (-1)^{(\lambda\mu\nu)}
					      \!\left(\rule{0ex}{1em}\right.\!
						     \varphi^{\sharp}(C_i)\,D_{\lambda}\varphi^{\sharp}(y^i)\, F_{\mu\nu}\,							   							
                             \!\!\left.\rule{0ex}{1em}\right)					
			   \!\!\!\left.\rule{0ex}{1em}\right)
			        \!\!\!\left.\rule{0ex}{1em}\right)
		      d^3\!\boldx	
 \end{eqnarray*}}
 
 \noindent
 locally over $X$.
(Here,  $F_{\mu\nu}:=[\nabla_{\!x^{\mu}},\nabla_{\!x^{\nu}}]$ is the curvature of $\nabla$.)
It follows then from a straightforward computation that

{\footnotesize
 \begin{eqnarray*}
  \lefteqn{
    \left.\mbox{\Large $\frac{d}{dt}$}\right|_{t=0}\,
	 S_{\tinyCSWZ}^{(C_{(1)}, C_{(3)}, B)}(\varphi_T,\nabla^T) }\\
  &&  =
	   T_2\int_U
	       \Real    \!\left(\rule{0ex}{1em}\right.\!\!
		      \Tr  \!\left(\rule{0ex}{1em}\right.
			     \sum_{i,j,k=1}^n
                   \left(\rule{0ex}{1em}\right.				
				    \dot{\varphi}^{\sharp}(\breve{C}_{ijk})\,
					   D_1\varphi^{\sharp}(y^i)\,
					   D_2\varphi^{\sharp}(y^j)\,
					   D_3\varphi^{\sharp}(y^k)\,     \\[-2ex]
    && \hspace{11em}					
				 +\, \varphi^{\sharp}(\breve{C}_{ijk}
				          \cdot
						  (D_1\dot{\varphi}^{\sharp}(y^i)
							 - [\varphi^{\sharp}(y^i), \dot{A}_1] )
						  \cdot
                          D_2\varphi^{\sharp}(y^j)\,D_3\varphi^{\sharp}(y^k)\,	\\
    && \hspace{13em}						
				 +\, \varphi^{\sharp}(\breve{C}_{ijk})\, D_1\varphi^{\sharp}(y^i)\,
				        \cdot
						 (D_2\dot{\varphi}^{\sharp}(y^j)
						   - [\varphi^{\sharp}(y^j), \dot{A}_2] )
						\cdot
                        D_3\varphi^{\sharp}(y^k)\,  \\
    && \hspace{15em}
				 +\, \varphi^{\sharp}(\breve{C}_{ijk})\,
				            D_1\varphi^{\sharp}(y^i)\, D_2\varphi^{\sharp}(y^j)
                        \cdot				
						(D_3\dot{\varphi}^{\sharp}(y^k) - [\varphi^{\sharp}(y^k), \dot{A}_3]  )
					   \left.\rule{0ex}{1em}\right)\\
     && \hspace{4em}					
		      +\, 2\pi\alpha^{\prime}
			        \sum_{(\lambda\mu\nu)\in\scriptsizeSym_3}
			          \sum_{i=1}^n\,
			           (-1)^{(\lambda\mu\nu)}
					      \!\left(\rule{0ex}{1em}\right.\!				
						     \dot{\varphi}^{\sharp}(C_i)\,
							     D_{\lambda}\varphi^{\sharp}(y^i)\, F_{\mu\nu}\,	 \\
	 && \hspace{8em}
   	            +\,  \varphi^{\sharp}(C_i)
							       \cdot
							      ( D_{\lambda}\dot{\varphi}^{\sharp}(y^i)
								       - [\varphi^{\sharp}(y^i), \dot{A}_{\lambda}] )
							       \cdot   F_{\mu\nu}\,	
						  +\,  \varphi^{\sharp}(C_i)\,D_{\lambda}\varphi^{\sharp}(y^i)
						          \cdot   (D_{\mu}\dot{A}_{\nu} -  D_{\nu}\dot{A}_{\mu}) \,	
                             \!\!\left.\rule{0ex}{1em}\right)					
			   \left.\rule{0ex}{1em}\right)
			        \!\!\!\left.\rule{0ex}{1em}\right)
		      d^3\!\boldx	         \\
   &&
    =\; T_2\,\int_{\partial U}
	  \Real(\BT^{(\varphi,\nabla; C_{(1)}, C_{(3)},B)}
	                   (\dot{\varphi}^{\sharp}(\boldy),\dot{\boldA}))\\
   && \hspace{3.7em}
    +\, T_2\,\int_U
       \Real	
	    \!\left(\rule{0ex}{1em}\right.
	    \!\!\Tr
	    \!\left(\rule{0ex}{1em}\right.
	      \sum_{j=1}^n
		  \NL^{(C_{(1)}, C_{(3)},B);\delta\varphi}_j(\varphi,\nabla)\,
		     \cdot\,  \dot{\varphi}^{\sharp}(y^j)
		 +\;
		 \sum_{\nu=1}^3
           \NL^{(C_{(1)});\delta\nabla}_{\nu}(\varphi,\nabla)\,
		     \cdot\, \dot{A}_{\nu}
		\!\left.\rule{0ex}{1em}\right)		
	    \!\!\!\left.\rule{0ex}{1em}\right)	
	    d^3\!\boldx\,,	 		  			 			
 \end{eqnarray*}}
where

\bigskip

${\LARGE}\cdot$ the {\it boundary term} is given by

{\footnotesize
\begin{eqnarray*}
 \lefteqn{
  \BT^{(\varphi,\nabla; C_{(1)}, C_{(3)},B)}
	                   (\dot{\varphi}^{\sharp}(\boldy),\dot{\boldA})\,,
      \hspace{1em}\mbox{a $2$-form on $U$}\,,					   }\\
   && =\;
     \Tr \left(\rule{0ex}{1em}\right.
	    \sum_{j=1}^n
		   \left(\rule{0ex}{1em}\right.
		    \sum_{i,k=1}^n
			   D_2\varphi^{\sharp}(y^i)\,D_3\varphi^{\sharp}(y^k)\,
			              \varphi^{\sharp}(\breve{C}_{jik})\,
			   +\, 4\pi\alpha^{\prime}\,F_{23}\,\varphi^{\sharp}(C_j)
		      \left.\rule{0ex}{1em}\right) \cdot \dot{\varphi}^{\sharp}(y^j) \,  \\
    && \hspace{3.6em}			
			 +\, 4\pi\alpha^{\prime}
			       \left(\rule{0ex}{1em}\right.
			     \sum_{i=1}^n
			       \varphi^{\sharp}(C_i)\,D_3\varphi^{\sharp}(y^i)
				   \left.\rule{0ex}{1em}\right)
				   \cdot  \dot{A}_2\,
			 -\, 4\pi\alpha^{\prime}
			    \left(\rule{0ex}{1em}\right.
				  \sum_{i=1}^n
				    \varphi^{\sharp}(C_i)\,D_2\varphi^{\sharp}(y^i)
			     \left.\rule{0ex}{1em}\right)  \cdot \dot{A}_3			
	       \left.\rule{0ex}{1em}  \right) d^2\wedge dx^3\,                   \\
   && \hspace{1.7em}		
       - \, \Tr  \left(\rule{0ex}{1em}\right.
	        \sum_{j=1}^n
			    \left(\rule{0ex}{1em}\right.
                   \sum_{i,k=1}^n
                     D_3\varphi^{\sharp}(y^k)\,
					   \varphi^{\sharp}(\breve{C}_{ijk})\, D_1\varphi^{\sharp}(y^i)\,
					    -\, 4\pi\alpha^{\prime}\, F_{13}\, \varphi^{\sharp}(C_j)
                 \left.\rule{0ex}{1em}  \right) \cdot  \dot{\varphi}^{\sharp}(y^j)\,   \\
    && \hspace{5.2em}				
             -\, 4\pi\alpha^{\prime}
                 \left(\rule{0ex}{1em}\right.
                   \sum_{i=1}^n
                     \varphi^{\sharp}(C_i)\, D_3\varphi^{\sharp}(y^i)				
                 \left.\rule{0ex}{1em}  \right) 	  \cdot \dot{A}_1\,
			 +\, 4\pi\alpha^{\prime}
                     \left(\rule{0ex}{1em}\right.
                      \sum_{i=1}^n
                         \varphi^{\sharp}(C_i)\, D_1\varphi^{\sharp}(y^i)					
                 \left.\rule{0ex}{1em}  \right) 	\cdot \dot{A}_3		
	       \left.\rule{0ex}{1em}  \right)  d^1\wedge dx^3\,                 \\
   && \hspace{1.7em}		
	   +\, \Tr  \left(\rule{0ex}{1em}\right.
	        \sum_{j=1}^n
			    \left(\rule{0ex}{1em}\right.
                  \sum_{i,k=1}^n
                    \varphi^{\sharp}(\breve{C}_{ikj})\,
					    D_1\varphi^{\sharp}(y^i)\,  D_2\varphi^{\sharp}(y^k)\,
						 +\, 4\pi\alpha^{\prime}\, F_{12}\, \varphi^{\sharp}(C_j)
                 \left.\rule{0ex}{1em}  \right) \cdot  \dot{\varphi}^{\sharp}(y^j)\,  \\
    &&  \hspace{5.2em}				
             +\, 4\pi\alpha^{\prime}
                 \left(\rule{0ex}{1em}\right.
                   \sum_{i=1}^n
                     \varphi^{\sharp}(C_i)\, D_2\varphi^{\sharp}(y^i)				
                 \left.\rule{0ex}{1em}  \right) 	  \cdot \dot{A}_1\,
			 -\, 4\pi\alpha^{\prime}
                     \left(\rule{0ex}{1em}\right.
                     \sum_{i=1}^n
					   \varphi^{C_i}\, D_1\varphi^{\sharp}(y^i)
                 \left.\rule{0ex}{1em}  \right) 	\cdot \dot{A}_2		
	       \left.\rule{0ex}{1em}  \right) dx^1\wedge dx^2\,,
\end{eqnarray*}}

\vspace{12em}

{\LARGE $\cdot$} the {\it subsystem associated to variations of $\varphi\,$}:

{\footnotesize
\begin{eqnarray*}
 \lefteqn{
  \NL^{(C_{(1)}, C_{(3)},B);\delta\varphi}_j(\varphi,\nabla)\;    }\\
  &&
   =\;
	 \sum_{i^{\prime}, j^{\prime}, k^{\prime}=1}^n
	  \sum_{\;d,\scriptsizeboldd, \vec{\pi};\,
	                      |\scriptsizeboldd|=d, i_{(\vec{\pi},\tinyboldd)}=j }
            R^{\breve{C}_{i^{\prime}j^{\prime}k^{\prime}}}[1]
			      _{(\scriptsizeboldd,\vec{\pi})}^R
			    (\varphi^{\sharp}(\boldy))\,
            D_1\varphi^{\sharp}(y^{i^{\prime}})\,
            D_2\varphi^{\sharp}(y^{j^{\prime}})\,
            D_3\varphi^{\sharp}(y^{k^{\prime}})						
			R^{\breve{C}_{i^{\prime}j^{\prime}k^{\prime}}}[1]
			       _{(\scriptsizeboldd,\vec{\pi})}^L
                 (\varphi^{\sharp}(\boldy))		    \\
  && \hspace{2em}
      -\, \sum_{i,k=1}^n
              \left(\rule{0ex}{1em}\right.
			   D_2\varphi^{\sharp}(y^i)\, D_3\varphi^{\sharp}(y^k)\,
  			      D_1\varphi^{\sharp}(\breve{C}_{jik})\,      						
			 +\, D_3\varphi^{\sharp}(y^k)\,
			       D_2\varphi^{\sharp}(\breve{C}_{ijk})\, D_1\varphi^{\sharp}(y^i)\,     \\[-2ex]
    && \hspace{7em}				
	         +\, D_3\varphi^{\sharp}(\breve{C}_{ikj})\, D_1\varphi^{\sharp}(y^i)\,
			       D_2\varphi^{\sharp}(y^k)\,
			 +\, [F_{12}, \varphi^{\sharp}(y^i)]
			        \cdot D_3\varphi^{\sharp}(y^k)\, \varphi^{\sharp}(\breve{C}_{jik})\,  \\
    && \hspace{10em}					
             +\, [F_{23}, \varphi^{\sharp}(y^k)]
                    \cdot \varphi^{\sharp}(\breve{C}_{ijk})\, D_1\varphi^{\sharp}(y^i)\,
             +\, [F_{31}, \varphi^{\sharp}(y^i)]
                    \cdot D_2\varphi^{\sharp}(y^k)\, \varphi^{\sharp}(\breve{C}_{ikj})			 			      
			  \left.\rule{0ex}{1em}\right)	    \\
 && \hspace{2em}
    +\, 2\pi\alpha^{\prime}
           \sum_{i=1}^n
		    \sum_{\;(\lambda\mu\nu)\in \scriptsizeSym_3}
			 \sum_{\; d,\scriptsizeboldd, \vec{\pi};
			                  |\scriptsizeboldd|=d, i_{(\vec{\pi},\tinyboldd)}=j}
			 (-1)^{(\lambda\mu\nu)}
			  R^{C_i}[1]_{(\scriptsizeboldd, \vec{\pi})}^R(\varphi^{\sharp}(\boldy))\,
			  D_{\lambda}\varphi^{\sharp}(y^i)\, F_{\mu\nu}\,
			  R^{C_i}[1]_{(\scriptsizeboldd, \vec{\pi})}^L(\varphi^{\sharp}(\boldy))\\
 && \hspace{2em}
     -\, 2\pi\alpha^{\prime}
           \sum_{(\lambda\mu\nu)\in\scriptsizeSym_3}
            (-1)^{(\lambda\mu\nu)}
                \left(\rule{0ex}{1em}\right.			
				  F_{\mu\nu}\,  D_{\lambda}\varphi^{\sharp}(C_j)\,
				  +\, D_{\lambda}F_{\mu\nu} \varphi^{\sharp}(C_j)
				\left.\rule{0ex}{1em}\right)\,,
\end{eqnarray*}}
 
\bigskip
 
{\LARGE $\cdot$} the {\it subsystem associated to variations of $\nabla\,$}:
  
{\footnotesize
\begin{eqnarray*}
 \lefteqn{
   \NL^{(C_{(1)});\delta\nabla}_{\lambda}(\varphi,\nabla) }\\
 && =\;
   2\pi\alpha^{\prime}
     \sum_{i=1}^n\,
      [\varphi^{\sharp}(y^i)\,,\, F_{\mu\nu}\varphi^{\sharp}(C_i)]	 \,  \\
 && \hspace{2em}	
   +\, 4\pi\alpha^{\prime}
        \sum_{i=1}^n\,
		 \left(\rule{0ex}{1em}\right.
		  D_{\mu}\varphi^{\sharp}(C_i)\, D_{\nu}\varphi^{\sharp}(y^i)\,
		  -\, D_{\nu}\varphi^{\sharp}(C_i)\, D_{\mu}\varphi^{\sharp}(y^i)\,
	      +\,\varphi^{\sharp}(C_i)
		       \cdot [F_{\mu\nu}, \varphi^{\sharp}(y^i)]
		 \left.\rule{0ex}{1em}\right),  \\[1ex]
  && \hspace{-3.6em}
    \mbox{where $\;\;(\lambda\mu\nu)\;=\; (123),\, (231),\, (312)$}\,.		
\end{eqnarray*}}

\begin{itemize}
 \item[]
 Note that, as a consequence of Leibniz rule or integration by parts, there are at first summands 
  \begin{eqnarray*}
    \sum_{i,j,k=1}^n
        [\varphi^{\sharp}(y^i),
	     D_2\varphi^{\sharp}(y^j)D_3\varphi^{\sharp}(y^k)\varphi^{\sharp}(\breve{C}_{ijk})]
	  &&  \mbox{in $\;\NL^{(C_{(1)});\delta\nabla}_1(\varphi,\nabla)$}\,,  \\
	 \sum_{i,j,k=1}^n
        [\varphi^{\sharp}(y^j),
	     D_3\varphi^{\sharp}(y^k)\varphi^{\sharp}(\breve{C}_{ijk})D_1\varphi^{\sharp}(y^i)]
	  &&  \mbox{in $\;\NL^{(C_{(1)});\delta\nabla}_2(\varphi,\nabla)$}\,,  \\
    \sum_{i,j,k=1}^n
        [\varphi^{\sharp}(y^k),
	     \varphi^{\sharp}(\breve{C}_{ijk})D_1\varphi^{\sharp}(y^i)D_2\varphi^{\sharp}(y^j)]
	  &&  \mbox{in $\;\NL^{(C_{(1)});\delta\nabla}_3(\varphi,\nabla)$}
  \end{eqnarray*}
  respectively.
 However, they vanish for $(\varphi, \nabla)$ admissible.
  Thus, the $3$-forms $C_{(3)}$ and $C_{(1)}\wedge B$
  have no consequence to the variation of $S_{\CSWZ}^{(C_{(1)},C_{(3)},B)}$
  with respect to $\nabla$.
 This is anticipated since there is no coupling term
   between $C_{(3)}$, $C_{(1)}\wedge B$ and $\nabla$
   in $S_{\CSWZ}^{(C_{(1)},C_{(3)},B)}$.
\end{itemize}

The contribution of the Chern-Simon/Wess-Zumino term
  $S_{\CSWZ}^{(C_{(1)},C_{(3)},B)}$
  to the equations of motion for a $D$-membrane follows immediately.
 
\bigskip

\begin{ssremark} $[$contribution only to first-order terms in EOM$\,]\;$ {\rm
As observed from these examples, for lower dimensional D-branes,
 the  Chern-Simons/Wess-Zumino term $S_{\CSWZ}^{(C,B)}$ in the action contributes
 an additional set of {\it first-order} nonlinear differential-expression terms to the system of equations
 of motion fo D-branes.
In particular,
 they preserve the signature of the original system
 from the Dirac-Born-Infeld term $S_{\DBI}^{(\Phi,g,B)}$ in the action.
}\end{ssremark}

\bigskip
\bigskip

The current notes D(13.1)
 lay down some foundation toward the dynamics of D-branes along the line of our D-project.
Solutions to the system of  equations of motion from the total action
  $S_{\DBI}^{(\Phi, g,B)}(\varphi,\nabla)+S_{\CSWZ}^{(C,B)}(\varphi,\nabla)$
  for a D-brane world-volume
  should be thought of as
   an Azumaya/matrix version of minimal submanifolds or harmonic maps,
     twisted/bent,
	   on  one hand,  by the (dynamical) gauge field $\nabla$ on the domain manifold $X$
	     with a (noncommutative) endomorphism/matrix function-ring
	   and,
      on the other hand, by the background field $(\Phi,g,B,C)$,
	  created by closed (super)strings, on the target space(-time) $Y$.
Further details, issues, and examples are the focus of the sequels.

 \vspace{10em}
\baselineskip 13pt
{\footnotesize

\vspace{1em}

\noindent
chienhao.liu@gmail.com, chienliu@math.harvard.edu; \\
yau@math.harvard.edu

}

\end{document}